\documentclass[a4paper,11pt,onecolumn]{quantumarticle}
\pdfoutput=1

\usepackage{graphicx}
\usepackage{dcolumn}
\usepackage{bm}

\usepackage{appendix}
\usepackage{tikzit}
\usepackage{tikz, circuitikz}
\tikzstyle{env}=[copoint,regular polygon rotate=0,minimum width=0.2cm, fill=black]

\tikzstyle{probs}=[shape=semicircle,fill=white,draw=black,shape border rotate=180,minimum width=1.2cm]

\tikzstyle{nudge}=[yshift=0.6mm]

%
%


\tikzstyle{every picture}=[baseline=-0.25em,scale=0.5]
\tikzstyle{dotpic}=[] 
\tikzstyle{diredges}=[every to/.style={diredge}]
\tikzstyle{math matrix}=[matrix of math nodes,left delimiter=(,right delimiter=),inner sep=2pt,column sep=1em,row sep=0.5em,nodes={inner sep=0pt},text height=1.5ex, text depth=0.25ex]


\tikzstyle{inline text}=[text height=1.5ex, text depth=0.25ex,yshift=0.5mm]
\tikzstyle{label}=[font=\footnotesize,text height=1.5ex, text depth=0.25ex]
\tikzstyle{left label}=[label,anchor=east,xshift=2mm]
\tikzstyle{right label}=[label,anchor=west,xshift=-2mm]


\tikzstyle{braceedge}=[decorate,decoration={brace,amplitude=2mm,raise=-1mm}]
\tikzstyle{small braceedge}=[decorate,decoration={brace,amplitude=1mm,raise=-1mm}]

\tikzstyle{doubled}=[line width=1.6pt] 
\tikzstyle{boldedge}=[doubled,shorten <=-0.17mm,shorten >=-0.17mm]
\tikzstyle{boldedgegray}=[doubled,gray,shorten <=-0.17mm,shorten >=-0.17mm]
\tikzstyle{singleedgegray}=[gray]

\tikzstyle{semidoubled}=[line width=1.4pt] 
\tikzstyle{semiboldedgegray}=[semidoubled,gray,shorten <=-0.17mm,shorten >=-0.17mm]

\tikzstyle{boxedge}=[semiboldedgegray]

\tikzstyle{boldedgedashed}=[very thick,dashed,shorten <=-0.17mm,shorten >=-0.17mm]
\tikzstyle{vboldedgedashed}=[doubled,dashed,shorten <=-0.17mm,shorten >=-0.17mm]
\tikzstyle{left hook arrow}=[left hook-latex]
\tikzstyle{right hook arrow}=[right hook-latex]
\tikzstyle{sembracket}=[line width=0.5pt,shorten <=-0.07mm,shorten >=-0.07mm]

\tikzstyle{causal edge}=[->,thick,gray]
\tikzstyle{causal nondir}=[thick,gray]
\tikzstyle{timeline}=[thick,gray, dashed]

\tikzstyle{cedge}=[<->,thick,gray!70!white]

\tikzstyle{empty diagram}=[draw=gray!40!white,dashed,shape=rectangle,minimum width=1cm,minimum height=1cm]
\tikzstyle{empty diagram small}=[draw=gray!50!white,dashed,shape=rectangle,minimum width=0.6cm,minimum height=0.5cm]


\tikzstyle{dot}=[inner sep=0mm,minimum width=2mm,minimum height=2mm,draw,shape=circle]  
\tikzstyle{Wsquare}=[white dot, shape=regular polygon, rounded corners=0.8 mm, minimum size=3.3 mm, regular polygon sides=3, outer sep=-0.2mm]
\tikzstyle{Wsquareadj}=[white dot, shape=regular polygon, rounded corners=0.8 mm, minimum size=3.3 mm, regular polygon sides=3, outer sep=-0.2mm, regular polygon rotate=180]
\tikzstyle{ddot}=[inner sep=0mm, doubled, minimum width=2.5mm,minimum height=2.5mm,draw,shape=circle]

\tikzstyle{black dot}=[dot,fill=black]
\tikzstyle{white dot}=[dot,fill=white,,text depth=-0.2mm]
\tikzstyle{white Wsquare}=[Wsquare,fill=white,,text depth=-0.2mm]
\tikzstyle{white Wsquareadj}=[Wsquareadj,fill=white,,text depth=-0.2mm]
\tikzstyle{green dot}=[white dot] 
\tikzstyle{gray dot}=[dot,fill=gray!40!white,,text depth=-0.2mm]
\tikzstyle{red dot}=[gray dot] 


\tikzstyle{black ddot}=[ddot,fill=black]
\tikzstyle{white ddot}=[ddot,fill=white]
\tikzstyle{gray ddot}=[ddot,fill=gray!40!white]

\tikzstyle{gray edge}=[gray!60!white]

\tikzstyle{small dot}=[inner sep=0.5mm,minimum width=0pt,minimum height=0pt,draw,shape=circle]

\tikzstyle{small black dot}=[small dot,fill=black]
\tikzstyle{small white dot}=[small dot,fill=white]
\tikzstyle{small gray dot}=[small dot,fill=gray!40!white]

\tikzstyle{very small dot}=[inner sep=0.3mm,minimum width=0pt,minimum height=0pt,draw,shape=circle]

\tikzstyle{very small black dot}=[very small dot,fill=black]
\tikzstyle{very small white dot}=[small dot,fill=white]
\tikzstyle{very small gray dot}=[small dot,fill=gray!40!white]

\tikzstyle{causal dot}=[inner sep=0.4mm,minimum width=0pt,minimum height=0pt,draw=white,shape=circle,fill=gray!40!white]


\tikzstyle{phase dimensions}=[minimum size=5mm,font=\footnotesize,rectangle,rounded corners=2.5mm,inner sep=0.2mm,outer sep=-2mm]
\tikzstyle{dphase dimensions}=[minimum size=5mm,font=\footnotesize,rectangle,rounded corners=2.5mm,inner sep=0.2mm,outer sep=-2mm]

\tikzstyle{white phase dot}=[dot,fill=white,phase dimensions]
\tikzstyle{white phase ddot}=[ddot,fill=white,dphase dimensions]

\tikzstyle{white rect ddot}=[draw=black,fill=white,doubled,minimum size=5mm,font=\footnotesize,rectangle,rounded corners=2.5mm,inner sep=0.2mm]
\tikzstyle{gray rect ddot}=[draw=black,fill=gray!40!white,doubled,minimum size=6mm,font=\footnotesize,rectangle,rounded corners=3mm]

\tikzstyle{gray phase dot}=[dot,fill=gray!40!white,phase dimensions]
\tikzstyle{gray phase ddot}=[ddot,fill=gray!40!white,dphase dimensions]
\tikzstyle{grey phase dot}=[gray phase dot]
\tikzstyle{grey phase ddot}=[gray phase ddot]

\tikzstyle{small phase dimensions}=[minimum size=4mm,font=\tiny,rectangle,rounded corners=2mm,inner sep=0.2mm,outer sep=-2mm]
\tikzstyle{small dphase dimensions}=[minimum size=4mm,font=\tiny,rectangle,rounded corners=2mm,inner sep=0.2mm,outer sep=-2mm]

\tikzstyle{small gray phase dot}=[dot,fill=gray!40!white,small phase dimensions]
\tikzstyle{small gray phase ddot}=[ddot,fill=gray!40!white,small dphase dimensions]


\tikzstyle{small map}=[draw,shape=rectangle,minimum height=4mm,minimum width=4mm,fill=white]

\tikzstyle{cnot}=[fill=white,shape=circle,inner sep=-1.4pt]

\tikzstyle{asym hadamard}=[fill=white,draw,shape=NEbox,inner sep=0.6mm,font=\footnotesize,minimum height=4mm]
\tikzstyle{asym hadamard conj}=[fill=white,draw,shape=NWbox,inner sep=0.6mm,font=\footnotesize,minimum height=4mm]
\tikzstyle{asym hadamard dag}=[fill=white,draw,shape=SEbox,inner sep=0.6mm,font=\footnotesize,minimum height=4mm]

\tikzstyle{hadamard}=[fill=white,draw,inner sep=0.6mm,font=\footnotesize,minimum height=4mm,minimum width=4mm]
\tikzstyle{small hadamard}=[fill=white,draw,inner sep=0.6mm,minimum height=1.5mm,minimum width=1.5mm]
\tikzstyle{small hadamard rotate}=[small hadamard,rotate=45]
\tikzstyle{dhadamard}=[hadamard,doubled]
\tikzstyle{small dhadamard}=[small hadamard,doubled]
\tikzstyle{small dhadamard rotate}=[small hadamard rotate,doubled]
\tikzstyle{antipode}=[white dot,inner sep=0.3mm,font=\footnotesize]

\tikzstyle{scalar}=[diamond,draw,inner sep=0.5pt,font=\small]
\tikzstyle{dscalar}=[diamond,doubled, draw,inner sep=0.5pt,font=\small]

\tikzstyle{small box}=[rectangle,inline text,fill=white,draw,minimum height=5mm,yshift=-0.5mm,minimum width=5mm,font=\small]
\tikzstyle{small gray box}=[small box,fill=gray!30]
\tikzstyle{medium box}=[rectangle,inline text,fill=white,draw,minimum height=5mm,yshift=-0.5mm,minimum width=8mm,font=\small]
\tikzstyle{square box}=[small box] 
\tikzstyle{medium gray box}=[small box,fill=gray!30]
\tikzstyle{semilarge box}=[rectangle,inline text,fill=white,draw,minimum height=5mm,yshift=-0.5mm,minimum width=12.5mm,font=\small]
\tikzstyle{large box}=[rectangle,inline text,fill=white,draw,minimum height=5mm,yshift=-0.5mm,minimum width=15mm,font=\small]
\tikzstyle{large gray box}=[small box,fill=gray!30]

\tikzstyle{Bayes box}=[rectangle,fill=black,draw, minimum height=3mm, minimum width=3mm]

\tikzstyle{gray square point}=[small box,fill=gray!50]

\tikzstyle{dphase box white}=[dhadamard]
\tikzstyle{dphase box gray}=[dhadamard,fill=gray!50!white]
\tikzstyle{phase box white}=[hadamard]
\tikzstyle{phase box gray}=[hadamard,fill=gray!50!white]

\tikzstyle{point}=[regular polygon,regular polygon sides=3,draw,scale=0.75,inner sep=-0.5pt,minimum width=9mm,fill=white,regular polygon rotate=180]
\tikzstyle{copoint}=[regular polygon,regular polygon sides=3,draw,scale=0.75,inner sep=-0.5pt,minimum width=9mm,fill=white]
\tikzstyle{dpoint}=[point,doubled]
\tikzstyle{dcopoint}=[copoint,doubled]

\tikzstyle{wide copoint}=[fill=white,draw,shape=isosceles triangle,shape border rotate=90,isosceles triangle stretches=true,inner sep=0pt,minimum width=1.5cm,minimum height=6.12mm]
\tikzstyle{wide point}=[fill=white,draw,shape=isosceles triangle,shape border rotate=-90,isosceles triangle stretches=true,inner sep=0pt,minimum width=1.5cm,minimum height=6.12mm,yshift=-0.0mm]
\tikzstyle{wide point plus}=[fill=white,draw,shape=isosceles triangle,shape border rotate=-90,isosceles triangle stretches=true,inner sep=0pt,minimum width=1.74cm,minimum height=7mm,yshift=-0.0mm]

\tikzstyle{wide dpoint}=[fill=white,doubled,draw,shape=isosceles triangle,shape border rotate=-90,isosceles triangle stretches=true,inner sep=0pt,minimum width=1.5cm,minimum height=6.12mm,yshift=-0.0mm]

\tikzstyle{tinypoint}=[regular polygon,regular polygon sides=3,draw,scale=0.55,inner sep=-0.15pt,minimum width=6mm,fill=white,regular polygon rotate=180] 

\tikzstyle{white point}=[point]
\tikzstyle{white dpoint}=[dpoint]
\tikzstyle{green point}=[white point] 
\tikzstyle{white copoint}=[copoint]
\tikzstyle{gray point}=[point,fill=gray!40!white]
\tikzstyle{gray dpoint}=[gray point,doubled]
\tikzstyle{red point}=[gray point] 
\tikzstyle{gray copoint}=[copoint,fill=gray!40!white]
\tikzstyle{gray dcopoint}=[gray copoint,doubled]

\tikzstyle{white point guide}=[regular polygon,regular polygon sides=3,font=\scriptsize,draw,scale=0.65,inner sep=-0.5pt,minimum width=9mm,fill=white,regular polygon rotate=180]

\tikzstyle{black point}=[point,fill=black,font=\color{white}]
\tikzstyle{black copoint}=[copoint,fill=black,font=\color{white}]

\tikzstyle{tiny gray point}=[tinypoint,fill=gray!40!white]

\tikzstyle{diredge}=[->]
\tikzstyle{ddiredge}=[<->]
\tikzstyle{rdiredge}=[<-]
\tikzstyle{thickdiredge}=[->, very thick]
\tikzstyle{pointer edge}=[->,very thick,gray]
\tikzstyle{pointer edge part}=[very thick,gray]
\tikzstyle{dashed edge}=[dashed]
\tikzstyle{thick dashed edge}=[very thick,dashed]
\tikzstyle{thick gray dashed edge}=[thick dashed edge,gray!40]
\tikzstyle{thick map edge}=[very thick,|->]


\makeatletter
\newcommand{\boxshape}[3]{%
\pgfdeclareshape{#1}{
\inheritsavedanchors[from=rectangle] 
\inheritanchorborder[from=rectangle]
\inheritanchor[from=rectangle]{center}
\inheritanchor[from=rectangle]{north}
\inheritanchor[from=rectangle]{south}
\inheritanchor[from=rectangle]{west}
\inheritanchor[from=rectangle]{east}
\backgroundpath{
\southwest \pgf@xa=\pgf@x \pgf@ya=\pgf@y
\northeast \pgf@xb=\pgf@x \pgf@yb=\pgf@y

\@tempdima=#2
\@tempdimb=#3

\pgfpathmoveto{\pgfpoint{\pgf@xa - 5pt + \@tempdima}{\pgf@ya}}
\pgfpathlineto{\pgfpoint{\pgf@xa - 5pt - \@tempdima}{\pgf@yb}}
\pgfpathlineto{\pgfpoint{\pgf@xb + 5pt + \@tempdimb}{\pgf@yb}}
\pgfpathlineto{\pgfpoint{\pgf@xb + 5pt - \@tempdimb}{\pgf@ya}}
\pgfpathlineto{\pgfpoint{\pgf@xa - 5pt + \@tempdima}{\pgf@ya}}
\pgfpathclose
}
}}

\boxshape{NEbox}{0pt}{0pt}
\boxshape{SEbox}{0pt}{-5pt}
\boxshape{NWbox}{5pt}{0pt}
\boxshape{SWbox}{-5pt}{0pt}
\boxshape{EBox}{-3pt}{3pt}
\boxshape{WBox}{3pt}{-3pt}
\makeatother

\tikzstyle{cloud}=[shape=cloud,draw,minimum width=1.5cm,minimum height=1.5cm]

\tikzstyle{map}=[draw,shape=NEbox,inner sep=2pt,minimum height=6mm,fill=white]
\tikzstyle{dashedmap}=[draw,dashed,gray,shape=NEbox,inner sep=2pt,minimum height=6mm,fill=white]
\tikzstyle{medium dashedmap}=[draw,dashed,gray,shape=NEbox,inner sep=2pt,minimum height=6mm,fill=white,minimum width=7mm]
\tikzstyle{semilarge dashedmap}=[draw,dashed,gray,shape=NEbox,inner sep=2pt,minimum height=6mm,fill=white,minimum width=9.5mm]
\tikzstyle{large dashedmap}=[draw,dashed,gray,shape=NEbox,inner sep=2pt,minimum height=6mm,fill=white,minimum width=12mm]
\tikzstyle{very large dashedmap}=[draw,dashed,gray,shape=NEbox,inner sep=2pt,minimum height=6mm,fill=white,minimum width=17mm]

\tikzstyle{dashed map}=[fill=white, draw=gray, shape=rectangle, style=map, dashed]

\tikzstyle{mapdag}=[draw,shape=SEbox,inner sep=2pt,minimum height=6mm,fill=white]
\tikzstyle{mapadj}=[draw,shape=SEbox,inner sep=2pt,minimum height=6mm,fill=white]
\tikzstyle{maptrans}=[draw,shape=SWbox,inner sep=2pt,minimum height=6mm,fill=white]
\tikzstyle{mapconj}=[draw,shape=NWbox,inner sep=2pt,minimum height=6mm,fill=white]

\tikzstyle{medium map}=[draw,shape=NEbox,inner sep=2pt,minimum height=6mm,fill=white,minimum width=7mm]
\tikzstyle{medium map dag}=[draw,shape=SEbox,inner sep=2pt,minimum height=6mm,fill=white,minimum width=7mm]
\tikzstyle{medium map adj}=[draw,shape=SEbox,inner sep=2pt,minimum height=6mm,fill=white,minimum width=7mm]
\tikzstyle{medium map trans}=[draw,shape=SWbox,inner sep=2pt,minimum height=6mm,fill=white,minimum width=7mm]
\tikzstyle{medium map conj}=[draw,shape=NWbox,inner sep=2pt,minimum height=6mm,fill=white,minimum width=7mm]
\tikzstyle{semilarge map}=[draw,shape=NEbox,inner sep=2pt,minimum height=6mm,fill=white,minimum width=9.5mm]
\tikzstyle{semilarge map trans}=[draw,shape=SWbox,inner sep=2pt,minimum height=6mm,fill=white,minimum width=9.5mm]
\tikzstyle{semilarge map adj}=[draw,shape=SEbox,inner sep=2pt,minimum height=6mm,fill=white,minimum width=9.5mm]
\tikzstyle{semilarge map dag}=[draw,shape=SEbox,inner sep=2pt,minimum height=6mm,fill=white,minimum width=9.5mm]
\tikzstyle{semilarge map conj}=[draw,shape=NWbox,inner sep=2pt,minimum height=6mm,fill=white,minimum width=9.5mm]
\tikzstyle{large map}=[draw,shape=NEbox,inner sep=2pt,minimum height=6mm,fill=white,minimum width=12mm]
\tikzstyle{large map conj}=[draw,shape=NWbox,inner sep=2pt,minimum height=6mm,fill=white,minimum width=12mm]
\tikzstyle{very large map}=[draw,shape=NEbox,inner sep=2pt,minimum height=6mm,fill=white,minimum width=17mm]
\tikzstyle{very very large map}=[draw,shape=NEbox,inner sep=2pt,minimum height=6mm,fill=white,minimum width=50mm]
\tikzstyle{large map dag}=[draw,shape=SEbox,inner sep=2pt,minimum height=6mm,fill=white,minimum width=12mm]

\tikzstyle{medium dmap}=[draw,doubled,shape=NEbox,inner sep=2pt,minimum height=6mm,fill=white,minimum width=7mm]
\tikzstyle{medium dmap dag}=[draw,doubled,shape=SEbox,inner sep=2pt,minimum height=6mm,fill=white,minimum width=7mm]
\tikzstyle{medium dmap adj}=[draw,doubled,shape=SEbox,inner sep=2pt,minimum height=6mm,fill=white,minimum width=7mm]
\tikzstyle{medium dmap trans}=[draw,doubled,shape=SWbox,inner sep=2pt,minimum height=6mm,fill=white,minimum width=7mm]
\tikzstyle{medium dmap conj}=[draw,doubled,shape=NWbox,inner sep=2pt,minimum height=6mm,fill=white,minimum width=7mm]
\tikzstyle{semilarge dmap}=[draw,doubled,shape=NEbox,inner sep=2pt,minimum height=6mm,fill=white,minimum width=9.5mm]
\tikzstyle{semilarge dmap trans}=[draw,doubled,shape=SWbox,inner sep=2pt,minimum height=6mm,fill=white,minimum width=9.5mm]
\tikzstyle{semilarge dmap adj}=[draw,doubled,shape=SEbox,inner sep=2pt,minimum height=6mm,fill=white,minimum width=9.5mm]
\tikzstyle{semilarge dmap dag}=[draw,doubled,shape=SEbox,inner sep=2pt,minimum height=6mm,fill=white,minimum width=9.5mm]
\tikzstyle{semilarge dmap conj}=[draw,doubled,shape=NWbox,inner sep=2pt,minimum height=6mm,fill=white,minimum width=9.5mm]
\tikzstyle{large dmap}=[draw,doubled,shape=NEbox,inner sep=2pt,minimum height=6mm,fill=white,minimum width=12mm]
\tikzstyle{large dmap conj}=[draw,doubled,shape=NWbox,inner sep=2pt,minimum height=6mm,fill=white,minimum width=12mm]
\tikzstyle{large dmap trans}=[draw,doubled,shape=SWbox,inner sep=2pt,minimum height=6mm,fill=white,minimum width=12mm]
\tikzstyle{large dmap adj}=[draw,doubled,shape=SEbox,inner sep=2pt,minimum height=6mm,fill=white,minimum width=12mm]
\tikzstyle{large dmap dag}=[draw,doubled,shape=SEbox,inner sep=2pt,minimum height=6mm,fill=white,minimum width=12mm]
\tikzstyle{very large dmap}=[draw,doubled,shape=NEbox,inner sep=2pt,minimum height=6mm,fill=white,minimum width=19.5mm]

\tikzstyle{muxbox}=[draw,shape=rectangle,minimum height=3mm,minimum width=3mm,fill=white]
\tikzstyle{dmuxbox}=[muxbox,doubled]

\tikzstyle{box}=[draw,shape=rectangle,inner sep=2pt,minimum height=6mm,minimum width=6mm,fill=white]
\tikzstyle{dbox}=[draw,doubled,shape=rectangle,inner sep=2pt,minimum height=6mm,minimum width=6mm,fill=white]
\tikzstyle{dmap}=[draw,doubled,shape=NEbox,inner sep=2pt,minimum height=6mm,fill=white]
\tikzstyle{dmapdag}=[draw,doubled,shape=SEbox,inner sep=2pt,minimum height=6mm,fill=white]
\tikzstyle{dmapadj}=[draw,doubled,shape=SEbox,inner sep=2pt,minimum height=6mm,fill=white]
\tikzstyle{dmaptrans}=[draw,doubled,shape=SWbox,inner sep=2pt,minimum height=6mm,fill=white]
\tikzstyle{dmapconj}=[draw,doubled,shape=NWbox,inner sep=2pt,minimum height=6mm,fill=white]

\tikzstyle{ddmap}=[draw,doubled,dashed,shape=NEbox,inner sep=2pt,minimum height=6mm,fill=white]
\tikzstyle{ddmapdag}=[draw,doubled,dashed,shape=SEbox,inner sep=2pt,minimum height=6mm,fill=white]
\tikzstyle{ddmapadj}=[draw,doubled,dashed,shape=SEbox,inner sep=2pt,minimum height=6mm,fill=white]
\tikzstyle{ddmaptrans}=[draw,doubled,dashed,shape=SWbox,inner sep=2pt,minimum height=6mm,fill=white]
\tikzstyle{ddmapconj}=[draw,doubled,dashed,shape=NWbox,inner sep=2pt,minimum height=6mm,fill=white]

\boxshape{sNEbox}{0pt}{3pt}
\boxshape{sSEbox}{0pt}{-3pt}
\boxshape{sNWbox}{3pt}{0pt}
\boxshape{sSWbox}{-3pt}{0pt}
\tikzstyle{smap}=[draw,shape=sNEbox,fill=white]
\tikzstyle{smapdag}=[draw,shape=sSEbox,fill=white]
\tikzstyle{smapadj}=[draw,shape=sSEbox,fill=white]
\tikzstyle{smaptrans}=[draw,shape=sSWbox,fill=white]
\tikzstyle{smapconj}=[draw,shape=sNWbox,fill=white]

\tikzstyle{dsmap}=[draw,dashed,shape=sNEbox,fill=white]
\tikzstyle{dsmapdag}=[draw,dashed,shape=sSEbox,fill=white]
\tikzstyle{dsmaptrans}=[draw,dashed,shape=sSWbox,fill=white]
\tikzstyle{dsmapconj}=[draw,dashed,shape=sNWbox,fill=white]

\boxshape{mNEbox}{0pt}{10pt}
\boxshape{mSEbox}{0pt}{-10pt}
\boxshape{mNWbox}{10pt}{0pt}
\boxshape{mSWbox}{-10pt}{0pt}
\tikzstyle{mmap}=[draw,shape=mNEbox]
\tikzstyle{mmapdag}=[draw,shape=mSEbox]
\tikzstyle{mmaptrans}=[draw,shape=mSWbox]
\tikzstyle{mmapconj}=[draw,shape=mNWbox]

\tikzstyle{mmapgray}=[draw,fill=gray!40!white,shape=mNEbox]
\tikzstyle{smapgray}=[draw,fill=gray!40!white,shape=sNEbox]

\makeatletter

\pgfdeclareshape{cornerpoint}{
\inheritsavedanchors[from=rectangle] 
\inheritanchorborder[from=rectangle]
\inheritanchor[from=rectangle]{center}
\inheritanchor[from=rectangle]{north}
\inheritanchor[from=rectangle]{south}
\inheritanchor[from=rectangle]{west}
\inheritanchor[from=rectangle]{east}
\backgroundpath{
\southwest \pgf@xa=\pgf@x \pgf@ya=\pgf@y
\northeast \pgf@xb=\pgf@x \pgf@yb=\pgf@y

\pgfmathsetmacro{\pgf@shorten@left}{\pgfkeysvalueof{/tikz/shorten left}}
\pgfmathsetmacro{\pgf@shorten@right}{\pgfkeysvalueof{/tikz/shorten right}}

\pgfpathmoveto{\pgfpoint{0.5 * (\pgf@xa + \pgf@xb)}{\pgf@ya - 5pt}}
\pgfpathlineto{\pgfpoint{\pgf@xa - 8pt + \pgf@shorten@left}{\pgf@yb - 1.5 * \pgf@shorten@left}}
\pgfpathlineto{\pgfpoint{\pgf@xa - 8pt + \pgf@shorten@left}{\pgf@yb}}
\pgfpathlineto{\pgfpoint{\pgf@xb + 8pt - \pgf@shorten@right}{\pgf@yb}}
\pgfpathlineto{\pgfpoint{\pgf@xb + 8pt - \pgf@shorten@right}{\pgf@yb - 1.5 * \pgf@shorten@right}}
\pgfpathclose
}
}

\pgfdeclareshape{cornercopoint}{
\inheritsavedanchors[from=rectangle] 
\inheritanchorborder[from=rectangle]
\inheritanchor[from=rectangle]{center}
\inheritanchor[from=rectangle]{north}
\inheritanchor[from=rectangle]{south}
\inheritanchor[from=rectangle]{west}
\inheritanchor[from=rectangle]{east}
\backgroundpath{
\southwest \pgf@xa=\pgf@x \pgf@ya=\pgf@y
\northeast \pgf@xb=\pgf@x \pgf@yb=\pgf@y

\pgfmathsetmacro{\pgf@shorten@left}{\pgfkeysvalueof{/tikz/shorten left}}
\pgfmathsetmacro{\pgf@shorten@right}{\pgfkeysvalueof{/tikz/shorten right}}

\pgfpathmoveto{\pgfpoint{0.5 * (\pgf@xa + \pgf@xb)}{\pgf@yb + 5pt}}
\pgfpathlineto{\pgfpoint{\pgf@xa - 8pt + \pgf@shorten@left}{\pgf@ya + 1.5 * \pgf@shorten@left}}
\pgfpathlineto{\pgfpoint{\pgf@xa - 8pt + \pgf@shorten@left}{\pgf@ya}}
\pgfpathlineto{\pgfpoint{\pgf@xb + 8pt - \pgf@shorten@right}{\pgf@ya}}
\pgfpathlineto{\pgfpoint{\pgf@xb + 8pt - \pgf@shorten@right}{\pgf@ya + 1.5 * \pgf@shorten@right}}
\pgfpathclose
}
}

\makeatother

\pgfkeyssetvalue{/tikz/shorten left}{0pt}
\pgfkeyssetvalue{/tikz/shorten right}{0pt}

\tikzstyle{kpoint common}=[draw,fill=white,inner sep=1pt,minimum height=4mm]
\tikzstyle{kpoint sc}=[shape=cornerpoint,kpoint common]
\tikzstyle{kpoint adjoint sc}=[shape=cornercopoint,kpoint common]
\tikzstyle{kpoint}=[shape=cornerpoint,shorten left=5pt,kpoint common]
\tikzstyle{kpoint adjoint}=[shape=cornercopoint,shorten left=5pt,kpoint common]
\tikzstyle{kpoint conjugate}=[shape=cornerpoint,shorten right=5pt,kpoint common]
\tikzstyle{kpoint transpose}=[shape=cornercopoint,shorten right=5pt,kpoint common]
\tikzstyle{kpoint symm}=[shape=cornerpoint,shorten left=5pt,shorten right=5pt,kpoint common]

\tikzstyle{black kpoint}=[shape=cornerpoint,shorten left=5pt,kpoint common,fill=black,font=\color{white}]
\tikzstyle{black kpoint adjoint}=[shape=cornercopoint,shorten left=5pt,kpoint common,fill=black,font=\color{white}]
\tikzstyle{black kpointadj}=[shape=cornercopoint,shorten left=5pt,kpoint common,fill=black,font=\color{white}]

\tikzstyle{black dkpoint}=[shape=cornerpoint,shorten left=5pt,kpoint common,fill=black, doubled,font=\color{white}]
\tikzstyle{black dkpoint adjoint}=[shape=cornercopoint,shorten left=5pt,kpoint common,fill=black, doubled,font=\color{white}]
\tikzstyle{black dkpointadj}=[shape=cornercopoint,shorten left=5pt,kpoint common,fill=black, doubled,font=\color{white}] 

\tikzstyle{kpointdag}=[kpoint adjoint]
\tikzstyle{kpointadj}=[kpoint adjoint]
\tikzstyle{kpointconj}=[kpoint conjugate]
\tikzstyle{kpointtrans}=[kpoint transpose]

\tikzstyle{big kpoint}=[kpoint, minimum width=1.2 cm, minimum height=8mm, inner sep=4pt, text depth=3mm]

\tikzstyle{wide kpoint}=[kpoint, minimum width=1 cm, inner sep=2pt]
\tikzstyle{wide kpointdag}=[kpointdag, minimum width=1 cm, inner sep=2pt]
\tikzstyle{wide kpointconj}=[kpointconj, minimum width=1 cm, inner sep=2pt]
\tikzstyle{wide kpointtrans}=[kpointtrans, minimum width=1 cm, inner sep=2pt]

\tikzstyle{gray kpoint}=[kpoint,fill=gray!50!white]
\tikzstyle{gray kpointdag}=[kpointdag,fill=gray!50!white]
\tikzstyle{gray kpointadj}=[kpointadj,fill=gray!50!white]
\tikzstyle{gray kpointconj}=[kpointconj,fill=gray!50!white]
\tikzstyle{gray kpointtrans}=[kpointtrans,fill=gray!50!white]

\tikzstyle{gray dkpoint}=[kpoint,fill=gray!50!white,doubled]
\tikzstyle{gray dkpointdag}=[kpointdag,fill=gray!50!white,doubled]
\tikzstyle{gray dkpointadj}=[kpointadj,fill=gray!50!white,doubled]
\tikzstyle{gray dkpointconj}=[kpointconj,fill=gray!50!white,doubled]
\tikzstyle{gray dkpointtrans}=[kpointtrans,fill=gray!50!white,doubled]

\tikzstyle{white label}=[draw,fill=white,rectangle,inner sep=0.7 mm]
\tikzstyle{gray label}=[draw,fill=gray!50!white,rectangle,inner sep=0.7 mm]
\tikzstyle{black label}=[draw,fill=black,rectangle,inner sep=0.7 mm]

\tikzstyle{dkpoint}=[kpoint,doubled]
\tikzstyle{wide dkpoint}=[wide kpoint,doubled]
\tikzstyle{dkpointdag}=[kpoint adjoint,doubled]
\tikzstyle{wide dkpointdag}=[wide kpointdag,doubled]
\tikzstyle{dkcopoint}=[kpoint adjoint,doubled]
\tikzstyle{dkpointadj}=[kpoint adjoint,doubled]
\tikzstyle{dkpointconj}=[kpoint conjugate,doubled]
\tikzstyle{dkpointtrans}=[kpoint transpose,doubled]

\tikzstyle{kscalar}=[kpoint common, shape=EBox, inner xsep=-1pt, inner ysep=3pt,font=\small]
\tikzstyle{kscalarconj}=[kpoint common, shape=WBox, inner xsep=-1pt, inner ysep=3pt,font=\small]

\tikzstyle{spekpoint}=[kpoint sc,minimum height=5mm,inner sep=3pt]
\tikzstyle{spekcopoint}=[kpoint adjoint sc,minimum height=5mm,inner sep=3pt]

\tikzstyle{dspekpoint}=[spekpoint,doubled]
\tikzstyle{dspekcopoint}=[spekcopoint,doubled]


 \tikzstyle{discard}=[circuit ee IEC, ground,rotate=180,scale=1.5,inner sep=-2mm]
 \tikzstyle{downground}=[circuit ee IEC,thick,ground,rotate=-90,scale=1.5,inner sep=-2mm]

\tikzstyle{maxmix}=[regular polygon,regular polygon sides=3,draw=black,xscale=0.4,yscale=0.3,inner sep=-0.5pt,minimum width=10mm,fill=gray,regular polygon rotate=180]

 \tikzstyle{bigground}=[regular polygon,regular polygon sides=3,draw=gray,scale=0.50,inner sep=-0.5pt,minimum width=10mm,fill=gray]


\tikzstyle{arrs}=[-latex,font=\small,auto]
\tikzstyle{arrow plain}=[arrs]
\tikzstyle{arrow dashed}=[dashed,arrs]
\tikzstyle{arrow bold}=[very thick,arrs]
\tikzstyle{arrow hide}=[draw=white!0,-]
\tikzstyle{arrow reverse}=[latex-]
\tikzstyle{cdnode}=[]


\tikzstyle{green dashed arrow}=[green, arrow dashed]
\tikzstyle{dashed blue}=[blue, dashed]
\tikzstyle{red dashed arrow}=[red, arrow dashed]
\tikzstyle{orange arrow}=[orange, arrs]
\tikzstyle{blue arrow}=[blue, arrs]
\tikzstyle{magenta arrow}=[magenta, arrs]



\tikzstyle{dotted line}=[-, style=dotted, tikzit draw=brown]
\tikzstyle{dashed line}=[-, style=dashed, tikzit draw=cyan]
\tikzstyle{green fill line}=[-, fill={green!90!black}, tikzit draw=green]
\tikzstyle{blue fill}=[-, fill=blue, tikzit fill=blue, tikzit draw={rgb,255: red,102; green,117; blue,255}]
\tikzstyle{red}=[-, draw=red, tikzit draw=red]
\tikzstyle{blue}=[-, draw=blue, tikzit draw=blue]
\tikzstyle{thick black}=[-, draw=black, tikzit draw=black, line width=1pt]
\tikzstyle{dotted red}=[-, draw=red, style=dotted, tikzit draw=red]
\tikzstyle{dotted blue}=[-, draw=blue, tikzit draw=blue, style=dotted]
\tikzstyle{dashed thick blue}=[-, draw={rgb,255: red,28; green,176; blue,255}, tikzit draw={rgb,255: red,83; green,19; blue,156}, line width=1pt, style=dashed]
\tikzstyle{dashed thick red}=[-, draw=red, tikzit draw={rgb,255: red,255; green,100; blue,10}, line width=1pt, style=dashed]
\tikzstyle{green}=[-, draw=green, tikzit draw=green]
\tikzstyle{dotted green}=[-, draw=green, tikzit draw=green, style=dotted]
\tikzstyle{arrow}=[->]
\tikzstyle{arrow green dashed}=[draw=green, ->, tikzit draw=green, style=dashed]
\tikzstyle{arrow dashed red}=[draw=red, ->, style=dashed, tikzit draw=red]
\tikzstyle{dashed green}=[-, tikzit draw=green, draw=green, style=dashed]

\usepackage{physics}
\usepackage{amssymb}
\usepackage{amsmath}
\usepackage{amsfonts}
\usepackage{braket}
\usepackage{mathtools}
\usepackage{amsthm}
\usepackage{ulem}
\usepackage{bbold}
\usepackage{subcaption}
\usepackage{hyperref}
\usepackage{stmaryrd}
\usepackage{xcolor}
\usepackage[symbol]{footmisc}

\def\be{\begin{equation}}
\def\ee{\end{equation}}
\def\ba{\begin{align}}
\def\ea{\end{align}}
\newcommand{\cat}[1]{\ensuremath{\mathbf{#1}}}

\newcommand{\Rel}{\cat{Rel}}

\newcommand{\id}[1][]{\ensuremath{1_{#1}}}

\DeclareMathOperator{\Lin}{Lin}

\newtheorem{theorem}{Theorem}[section]

\newtheorem{corollary}{Corollary}[section]
\newtheorem{definition}{Definition}[section]
\newtheorem{lemma}{Lemma}[section]

\newtheorem{hyp}{Induction Hypothesis}
\newtheorem{principle}{Principle}
\newtheorem{conjecture}{Conjecture}

\DeclareTextFontCommand{\texttt}{\ttfamily\upshape}
\DeclareTextFontCommand{\textrm}{\rmfamily\upshape}

\newcommand{\changes}[1]{#1}

\renewcommand{\id}{\mathbb{1}}

\newcommand{\cc}{\mathcal C}

\newcommand{\cf}{\mathcal F}

\newcommand{\ch}{\mathcal H}
\newcommand{\ci}{\mathcal I}

\newcommand{\cl}{\mathcal L}
\newcommand{\cm}{\mathcal M}
\newcommand{\cn}{\mathcal N}

\renewcommand{\cp}{\mathcal P}

\newcommand{\cs}{\mathcal S}

\newcommand{\cu}{\mathcal U}
\newcommand{\cv}{\mathcal V}

\newcommand{\al}{\alpha}
\newcommand{\bet}{\beta}
\newcommand{\ga}{\gamma}
\newcommand{\Ga}{\Gamma}
\newcommand{\Gatop}{{\Gamma^\top}}
\newcommand{\la}{\lambda}
\newcommand{\laN}{{\la_N}}
\newcommand{\laNaug}{\la_N^\aug}
\newcommand{\laNsec}{\la_N^\sec}

\newcommand{\GalaN}{{\left( \Ga, {(\laN)}_N \right)}}
\newcommand{\GaBran}{\Ga^\Bran}
\newcommand{\La}{\Lambda}
\newcommand{\ze}{\zeta}
\newcommand{\Th}{\Theta}

\newcommand{\veck}{{\Vec{k}}}
\newcommand{\vecl}{{\Vec{l}}}
\newcommand{\muNveck}{{\mu_{N} \left(\veck \right)}}
\newcommand{\muNvecl}{{\mu_{N} \left(\vecl \right)}}

\newcommand{\kA}{{k_A}}

\newcommand{\inn}{\textrm{\upshape in}}
\newcommand{\out}{\textrm{\upshape out}}
\renewcommand{\int}{\textrm{\upshape int}}
\newcommand{\pad}{\textrm{\upshape pad}}
\newcommand{\inv}{^{-1}}
\newcommand{\aug}{\textrm{\upshape aug}}
\renewcommand{\sec}{\textrm{\upshape sec}}
\newcommand{\anc}{\textrm{\upshape anc}}
\newcommand{\str}{\textrm{\upshape str}}

\newcommand{\cpst}{{\cp^\str}}
\newcommand{\cfst}{{\cf^\str}}
\newcommand{\cpi}{{\cp_i}}
\newcommand{\cfi}{{\cf_i}}

\newcommand{\cpist}{{\cp_i^\str}}
\newcommand{\cfist}{{\cf_i^\str}}

\newcommand{\Nodes}{\texttt{Nodes}_\Ga}
\newcommand{\Ind}{\texttt{Ind}}
\newcommand{\Indin}{\Ind^\inn}
\newcommand{\Indout}{\Ind^\out}
\newcommand{\IndinNal}{\Indin_{\Nal}}
\newcommand{\IndoutNal}{\Indout_{\Nal}}
\newcommand{\IndinMbe}{\Indin_{\Mbe}}

\newcommand{\Bran}{\texttt{Bran}}
\newcommand{\BranGa}{\Bran_\Ga}
\newcommand{\Arr}{\texttt{Arr}_\Ga}
\newcommand{\head}{\texttt{head}}
\newcommand{\tail}{\texttt{tail}}
\renewcommand{\dim}{\texttt{dim}}
\newcommand{\Link}{\texttt{Link}}
\newcommand{\LinkVal}{\texttt{LinkVal}}
\newcommand{\PossVal}{\texttt{PossVal}_\Ga}

\newcommand{\chin}{\ch^\inn}
\newcommand{\chout}{\ch^\out}
\newcommand{\chinN}{\chin_N}
\newcommand{\choutN}{\chout_N}
\newcommand{\chinNal}{\chin_\Nal}
\newcommand{\choutNal}{\chout_\Nal}
\newcommand{\Happ}{\texttt{Happens}}
\newcommand{\HappNal}{{\texttt{Happens}_\Nal}}
\newcommand{\HappMbe}{{\texttt{Happens}_\Mbe}}

\newcommand{\Nal}{{N^\al}}
\newcommand{\Nbe}{{N^\bet}}
\newcommand{\Nalin}{{N^\al_\inn}}
\newcommand{\Nalout}{{N^\al_\out}}
\newcommand{\Mbe}{{M^\bet}}
\newcommand{\Mbein}{{M^\bet_\inn}}
\newcommand{\Mbeout}{{M^\bet_\out}}
\newcommand{\Oga}{{O^\ga}}
\newcommand{\Ogain}{{O^\ga_\inn}}
\newcommand{\Ogaout}{{O^\ga_\out}}

\newcommand{\zeiin}{{\ze_i^\inn}}
\newcommand{\zeipadin}{{\ze_{i,\pad}^\inn}}
\newcommand{\zeiout}{{\ze_i^\out}}
\newcommand{\zeipadout}{{\ze_{i,\pad}^\out}}
\newcommand{\zejin}{{\ze_j^\inn}}

\newcommand{\barzeiin}{{\Bar{\ze}_i^\inn}}
\newcommand{\barzeipadin}{{\Bar{\ze}_{i,\pad}^\inn}}
\newcommand{\barzeiout}{{\Bar{\ze}_i^\out}}
\newcommand{\barzeipadout}{{\Bar{\ze}_{i,\pad}^\out}}
\newcommand{\zeiplin}{{\ze_{i+1}^\inn}}
\newcommand{\zeiplpadin}{{\ze_{i+1,\pad}^\inn}}
\newcommand{\zeiplout}{{\ze_{i+1}^\out}}
\newcommand{\zeiplpadout}{{\ze_{i+1,\pad}^\out}}

\newcommand{\barzeiplpadout}{{\Bar{\ze}_{i+1,\pad}^\out}}
\newcommand{\zein}{{\ze^\inn}}
\newcommand{\zepadin}{{\ze_{\pad}^\inn}}
\newcommand{\zeout}{{\ze^\out}}
\newcommand{\zepadout}{{\ze_{\pad}^\out}}

\newcommand{\barzepadout}{{\Bar{\ze}_{\pad}^\out}}

\newcommand{\Spad}{\cs_\pad}
\newcommand{\SGa}{\cs_{\GalaN}}
\newcommand{\SGalaN}{\cs_{\GalaN}}

\newcommand{\SRelGa}{\cs^\Rel_{\Ga}}
\newcommand{\SRelGapad}{\cs^\Rel_{\Ga,\pad}}
\newcommand{\ex}{\textsc{exch}}
\newcommand{\swap}{\textsc{swap}}
\newcommand{\copyy}{\textsc{copy}}
\newcommand{\witness}{\textsc{witness}}

\newcommand{\iNalpad}{i_{N,\pad}^\al}
\newcommand{\pNalpad}{p_{N,\pad}^\al}
\newcommand{\UNal}{{U_N^\al}}

\newcommand{\av}[1]{{\color{olive}#1}}

\begin{document}

\renewcommand{\thefootnote}{\fnsymbol{footnote}}
\makeatletter
    \renewcommand\@makefnmark{\hbox{\@textsuperscript{\normalfont\color{white}\@thefnmark}}}
    \renewcommand\@makefntext[1]{%
    \parindent 1em\noindent
            \hb@xt@1.8em{%
    \hss\@textsuperscript{\normalfont\@thefnmark}}#1}
\makeatother
  \title{Consistent circuits for indefinite causal order}


\author{Augustin Vanrietvelde}
\orcid{0000-0001-9022-8655}
\email{vanrietvelde@telecom-paris.fr}
\thanks{\\ Current address: Télécom Paris, Palaiseau}
\affiliation{Quantum Group, Department of Computer Science, University of Oxford}
\affiliation{Department of Physics, Imperial College London}
\affiliation{HKU-Oxford Joint Laboratory for Quantum Information and Computation}

\author{Nick Ormrod$^*$}
\orcid{0000-0003-2717-8709}
\affiliation{Quantum Group, Department of Computer Science, University of Oxford}
\affiliation{HKU-Oxford Joint Laboratory for Quantum Information and Computation}
\email{normrod@perimeterinstitute.ca}
\thanks{\\ Current address: Perimeter Institute for Theoretical Physics, Waterloo}

\author{Hl\'er Kristj\'ansson$^*$}
\orcid{0000-0003-4465-2863}
\email{hler.kristjansson@umontreal.ca}
\thanks{\\ Current address: Department of Computer Science and Operations Research,  Universit\'e de Montr\'eal}
\affiliation{Quantum Group, Department of Computer Science, University of Oxford}
\affiliation{HKU-Oxford Joint Laboratory for Quantum Information and Computation}

\author{Jonathan Barrett}
\orcid{0000-0002-2222-0579}
\email{jonathan.barrett@cs.ox.ac.uk}
\affiliation{Quantum Group, Department of Computer Science, University of Oxford}

\begin{abstract}
    
    Over the past decade, a number of quantum processes have been proposed which are logically consistent,  yet feature a cyclic causal structure.
    However, there is no general formal method to construct a process with an exotic causal structure in a way that ensures, and makes clear why, it is consistent. 
    Here we provide such a method, given by an extended circuit formalism.
    This only requires directed graphs endowed with Boolean matrices, which encode basic constraints on operations.
    Our framework (a) defines a set of elementary rules for checking the validity of any such graph, (b) provides a way of constructing consistent processes as a circuit from valid graphs, and (c) yields an intuitive interpretation of the causal relations within a process and an explanation of why they do not lead to inconsistencies.
    We display how several standard examples of exotic processes, including ones that violate causal inequalities, are among the class of processes that can be generated in this way.
    We conjecture that this class in fact includes all unitarily extendible processes. {\footnote{These two authors made equal contributions to this work.}}
\end{abstract}

\maketitle

\tableofcontents
\renewcommand{\thefootnote}{\arabic{footnote}}
\makeatletter
    \renewcommand\@makefnmark{\hbox{\@textsuperscript{\normalfont\color{black}\@thefnmark}}}
\makeatother

\section{Introduction}

\changes{As traditionally understood, quantum theory is \textit{compositional}. On the one hand, one can apply two channels, one after the other, resulting in an overall process which is itself a quantum channel. On the other hand, two quantum channels can be applied at the same time, each to a different system, defining a quantum channel on the bipartite system. In other words, two quantum channels can be \textit{composed}, in sequence or in parallel, to form another quantum channel. This notion of compositionality is generalized by the quantum circuit formalism \cite{deutsch1989quantum,aharonov1998quantum,abramsky2004categorical,nielsen2000quantum,coecke_kissinger_2017}, in which arbitrary quantum channels can composed by `wiring them up' into a circuit, and any circuit defines a valid quantum channel. 

All of this is very convenient, because compositionality turns out to be a very useful and illuminating feature of the theory. Most obviously, compositionality provides a convenient way of discovering new quantum channels: begin with a set of known quantum channels, and compose them together to form a circuit of your choosing. Once this is done, there is no need to check that the circuit defines a valid quantum channel; the circuit construction itself is a proof that it does. It follows that circuit constructions can be used not only to discover new channels, but also to verify that a given process is in fact a valid quantum channel.

A circuit can be `tweaked' by replacing the individual quantum channels in the circuit without changing the way in which they are composed to form a circuit. Given a circuit construction of a channel, a vast class of related processes is obtained by tweaking, and every one of these process must be a quantum channel. Thus a circuit construction provides a \textit{compositional explanation} of why a process is a valid quantum channel: it is valid not because of any specific facts about the individual channels in the circuit, but solely as a result of the way in which they are composed.

Thus compositionality helps one discover, verify, and explain quantum channels. In addition, the circuit formalism facilitates the development of graphical calculi that help speed up calculations, prove theorems, and better appreciate symmetries of processes that are suppressed by more conventional, one-dimensional notation \cite{coecke_kissinger_2017, chiribella2008transforming}.

However, in the last two decades, a class of quantum processes have been proposed that call into question whether compositionality is a completely general feature of quantum theory \cite{hardy2005probability,chiribella2009beyond,chiribella2013quantum,oreshkov2012quantum, baumeler2014maximal, baumeler2016space, wechs2021quantum}. These \textit{higher-order} processes act not on states, but on quantum channels. They are typically represented by supermaps \cite{chiribella2009beyond, chiribella2013quantum}, or, equivalently, process matrices \cite{oreshkov2012quantum}. Many of them place their input channels in an `indefinite causal order', meaning that they cannot be constructed as a circuit without \textit{feedback loops}. Feedback loops — forbidden by the standard circuit formalism — are formed when a wire coming out of the circuit is plugged into a wire going into the circuit. They are problematic because when quantum channels are wired together to form a `loopy circuit' it is \textit{not} guaranteed that the circuit itself defines a quantum channel. Introducing loops into the circuit formalism means sacrificing one of its most appealing features: that a circuit construction is a proof of the validity of the constructed process.

Without this crucial feature, the circuit framework loses much of its appeal. Circuit construction is no longer a safe way of discovering new processes, since a loopy circuit of channels might fail to define a valid process. A circuit that does define a valid process cannot be safely tweaked, and thus does not provide a compositional explanation of the process it defines. It is doubtful that the loopy circuit formalism can facilitate the development of useful graphical calculi.

Since a circuit construction of a process with indefinite causal order is not a proof of its consistency, one needs other proof methods. Existing methods for demonstrating the consistency of a process often involve brute-force, numerical analysis on its \textit{process matrix} \cite{oreshkov2012quantum}. Such methods become impractical as the numbers of dimensions and parties increase. Even when these methods work, they provide little or no conceptual insight. 

We are therefore forced to confront the question of whether compositionality is a general feature of quantum theory, or a notion that applies only to those quantum processes that lack indefinite causal order. Should we infer from the issues with feedback loops that processes with indefinite causal order are not compositional? Or do we simply lack the right formalism to see the compositional properties that these processes have? The stakes are high, since our answer to this question will determine whether the many advantages that compositionality brings to quantum processes with definite causal order carry to the indefinite case.

This paper shows that the notion of compositionality in quantum theory generalizes beyond processes with definite causal order. The main result is a theorem showing that whenever a circuit construction follows certian rules, it defines a valid process, even if the construction uses feedback loops and the circuit has indefinite causal order. This leads to a refined notion of compositionality, according to which processes can be composed only when these conditions are met. A circuit formalism based on this notion of compositionality recovers the key feature that is lost when loops are added to standard quantum circuits: that a circuit construction of a processes is a proof of its validity. As a result, other benefits of compositionality mentioned above are also restored, including the possibility of `tweaking' processes and providing compositional explanations of the validity of a process. 

While we do not show that all higher-order processes can be constructed using the framework, we do construct a number of prominent examples in the literature, including the quantum switch \cite{chiribella2009beyond, chiribella2013quantum} and 3-switch \cite{colnaghi2012quantum}, the recently proposed Grenoble process \cite{wechs2021quantum}, and the Lugano process (also called Baumeler-Wolf or Ara\'ujo-Feix) \cite{baumeler2014maximal, baumeler2016space, araujo2017purification}.
We are led to conjecture that all `unitarily extendible' \cite{araujo2017purification} processes can be constructed using the framework. We thus show that many significant processes with indefinite causal order admit a compositional explanation, and conjecture that all unitarily extendible processes do too.

Let us now describe our circuit framework in a little more detail. The circuits in the framework provide more fine-grained information about the causal structure than is available using standard circuits with loops. They are based on an extension to the standard quantum circuit framework known as routed quantum circuits, introduced in Ref.\ \cite{vanrietvelde2021routed} (see also Refs.\ \cite{barrett2019, lorenz2020, vanrietvelde2021coherent, wilson2021composable}), in which not only tensor product but also direct sum structures are graphically represented. This matters for us because our rules of compositionality refer to those direct sum structures.

The rules are articulated with the help of a \textit{routed graph}, which provides an abstract description of how processes are to be composed. Not all routed graphs are valid, in the sense that when one composes quantum channels in accordance with the routed graph, the resulting circuit defines a consistent process. Our main theorem is that any routed graph that has two properties, called \textit{bi-univocality} and \textit{weak loops}, is valid. We thus propose that in the context of indefinite causal order, `composing quantum channels' is taken to mean combining them in accordance with a routed graph that satisfies bi-univocality and weak loops. 

We emphasise that this paper does not aim to address the question of the \textit{physical realisability} of processes (though its ideas might help tackle it in the future). Rather, we are interested in understanding the abstract, logical structure lying at the heart of valid processes with indefinite causal order. Given a valid process, we do not ask whether it could be implemented in practice, or even in principle, given the laws of physics that govern our particular universe. We ask only whether a purported quantum process is \textit{logically consistent} (i.e.\ whether it implies any contradiction), what makes it so, and how it can be mathematically constructed in a way that makes this obvious. 

In summary, this paper restores a notion of compositionality, along with many of its benefits, to a significant class of processes with indefinite causal order. The notion of compositionality is subtler than the one familiar from the definite order case, due to the additional rules that must be followed when composing processes. Note, however, that even in the definite order case there is a rule constraining composition: that individual compositions between processes do not amount to forming a global loop.
Therefore, the introduction of restrictions on composition in this paper is not a novelty; we only propose a refinement of the commonly admitted ones.

The paper is structured as follows. We begin by introducing our framework using the example of a reconstruction of the quantum switch, in order to provide a pedagogical introduction to its main notions with a simple example. We then present our framework in full generality, which describes how to construct processes from elementary operations and their connectivity, in a way that guarantees logical consistency. Following this, we display how our framework allows us to reconstruct other processes prominent in the literature. We explain how the route structure displays the core behaviour of the processes in a compact way. We embed these examples into larger families of similar processes, and thus highlight the conceptual intuitions for their validity. We conclude with a short discussion and outlook, in which we spell out a conjecture that all unitarily extendible processes can be built using our method.}
\section{Reconstructing the quantum switch} \label{sec:2switch}

The quantum switch \cite{chiribella2009beyond, chiribella2013quantum} is a process with indefinite causal order whose logical consistency is relatively easy to understand. However, the obvious ways of grasping its consistency -- such as recognising it as the coherent control of processes with definite causal orders -- do not generalise to more exotic instances of indefinite causal order, such as the Grenoble \cite{wechs2021quantum} or Lugano \cite{baumeler2014maximal, baumeler2016space, araujo2017purification} processes (which we will discuss later in Section \ref{sec: examples}). In this section, we reconstruct the quantum switch in a way that guarantees that it is a valid supermap\footnote{Note that in this paper, we will adopt the \textit{supermap} representation of higher-order processes \cite{chiribella2008transforming, chiribella2009beyond, chiribella2013quantum}, as opposed to the (mathematically equivalent) \textit{process matrix} representation \cite{oreshkov2012quantum, araujo2017purification} also used in the literature. Because the relationship and equivalence between the two pictures, both at the conceptual and mathematical levels, can be a source of confusion, and in the interest of readers more accustomed to process matrices, we spell out the connection between the representations in detail in Appendix \ref{app: process matrices and supermaps}.}, using a method that does generalise to more exotic unitarily extendible processes. In so doing, we sketch out the main ingredients of our framework before it is formally introduced in Section \ref{sec: theorem}.

We start this section by defining supermaps, and, in particular, the supermap describing the quantum switch. We then briefly summarise some key concepts of the routed circuits framework \cite{vanrietvelde2021routed, vanrietvelde2021coherent, wilson2021composable}, which we will use for the reconstruction. Finally, we perform the reconstruction in a pedagogical way. 

\subsection{Supermaps and the quantum switch}

The quantum switch is a process in which the order of application of two transformations is coherently controlled.
Mathematically, it can be represented by  a \textit{supermap}, called \texttt{SWITCH}.
In the literature, a supermap is typically defined as a linear map that transforms quantum channels to quantum channels \cite{chiribella2008transforming, chiribella2009beyond, chiribella2013quantum}.
However, since we are only interested in unitarily extendible supermaps in this paper, we shall define a supermap as a linear map that transforms linear operators on a Hilbert space to linear operators on a Hilbert space.
Given two input operators $U$ and $V$ of the same dimension $d$, \texttt{SWITCH} returns an operator of the form
\begin{equation} \label{switch def}
    \textnormal{\texttt{SWITCH}}(U, V) = \ket{0}\bra{0} \otimes VU + \ket{1}\bra{1} \otimes UV \, .
\end{equation}
More generally, $U$ and $V$ could be acting on their own local ancillary systems, $X$ and $Y$ respectively. Then \texttt{SWITCH} is defined as follows:
\begin{equation} \label{switch def 2}
    \texttt{SWITCH}(U, V) = \ket{0}\bra{0} \otimes (I_X \otimes V) (U \otimes I_Y)  + \ket{1}\bra{1} \otimes (U \otimes I_Y)  (I_X \otimes V)
\end{equation}

If $U$ and $V$ are both unitary operators, then $\texttt{SWITCH}(U, V)$ is also a unitary operator. We call supermaps like this, that always map unitary operators to unitary operators, \textit{superunitaries}. A little reflection reveals that any superunitary \changes{(defined for all input unitaries that can act on their own local ancillary systems)} uniquely defines a supermap in the traditional sense as a map on channels via the Stinespring dilation of the channels. 


An equally formal but more intuitive representation of supermaps can be provided using diagrams \cite{coecke_kissinger_2017, Kissinger2019}. The idea, illustrated for a monopartite supermap in Figure \ref{fig:monopartite smap}, is to represent unitary operators using boxes, and supermaps as shapes that give another box once one inserts a box into each of its `nodes'.
\begin{figure*}
    \centering
    $\tikzfig{monopartite_smap}:  \ \ \    \tikzfig{unitary_operator} \mapsto \ \ 
       \tikzfig{monopartite_smap_output_box} := \tikzfig{monopartite_smap_output}$

    \caption{Diagrammatic representation of a monopartite superunitary. $\cs$ is a linear map from unitaries of the form $U: \ch_{X^\inn} \otimes \ch_{A^\inn} \rightarrow \ch_{X^\out} \otimes \ch_{A^\out}$ that maps the ingoing space for any pair of ancillary systems $X^\inn$ and $X^\out$ to unitaries of the form $(\ci \otimes \cs)(U): \ch_{X^\inn} \otimes \ch_P \rightarrow \ch_{X^\out} \otimes \ch_F$. This definition can be extended in an obvious way for supermaps acting on an arbitrary finite number of unitary operators.}
    \label{fig:monopartite smap}
\end{figure*}

In general, one can use a circuit decomposition with sums to represent a supermap. Such a decomposition is provided for the switch in Figure \ref{fig:switch pic}. The meaning of this circuit decomposition is that the  action of the supermap is given by Figure \ref{fig:switch output}, the right-hand side of which has precisely the same formal meaning as (\ref{switch def 2}).
\begin{figure*}
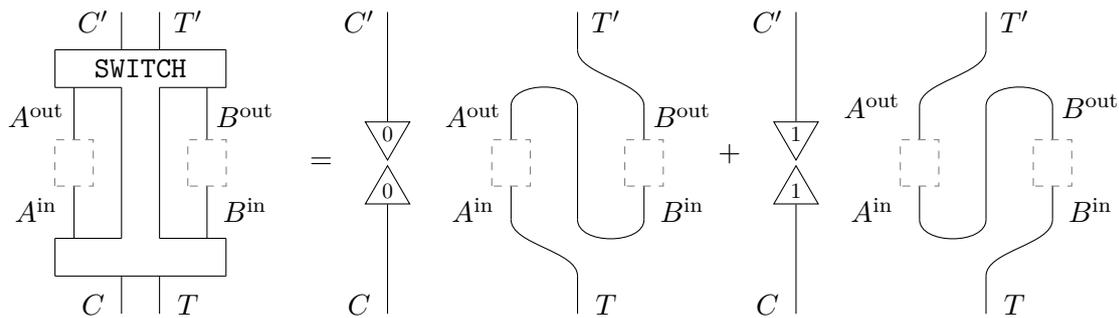
 
    \tikzfig{switch3}
\caption{Circuit decomposition of the switch. The left term in the sum projects onto the $\ket{0}$ state of the control and implements Alice's transformation before Bob's. The right term has a similar interpretation. Formally, the wire bent into a `U' shape can be interpreted as the unnormalised Bell ket $\ket{00}+\ket{11}$, and the upside-down `U' as the corresponding bra.} \label{fig:switch pic}
\end{figure*}
\begin{figure*}
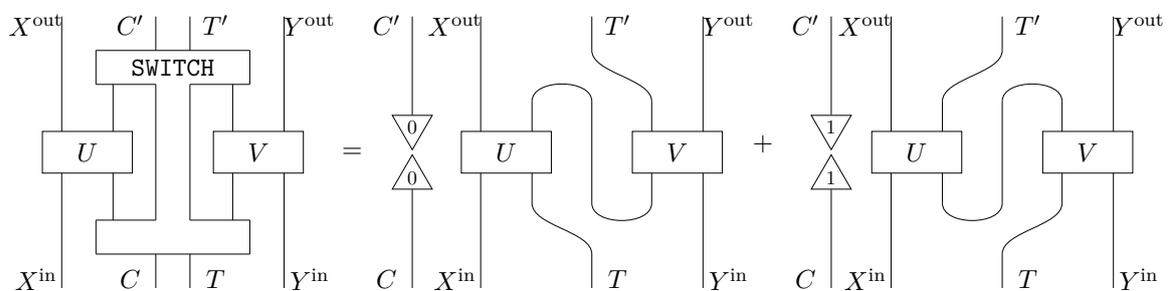
 
\scalebox{.9}{\tikzfig{switch_with_boxes_2} }
\caption{The action of the switch on a pair of unitary operators. The order of implementation of unitaries on the target system and ancillas is coherently controlled.}
\label{fig:switch output}
\end{figure*}
However, if we do not allow for sums,\footnote{This is good practice because 1) sums lead to an exponential multiplication of the number of diagrams to consider, and 2) an intuitive presentation as a sum will not be available at all in more involved cases, like the Grenoble process or the Lugano process.} then it is impossible to draw $\texttt{SWITCH}(U, V)$ as a standard circuit in which both $U$ and $V$ appear exactly once, unless we allow feedback loops \cite{chiribella2009beyond, chiribella2013quantum, Oreshkov_2019}. This means that we cannot write $\texttt{SWITCH}(U, V)$ as a standard circuit whose form makes it immediate that the switch is a consistent process.

This feature of the switch, while somewhat inconvenient, is also a hallmark of indefinite causal order. 
Such processes combine their input operations in a way that cannot be understood as wiring them up to form a circuit, while avoiding feedback loops. 
While this is precisely what makes these processes interesting, it also can leave us guessing as to what makes them consistent.
If their consistency is not guaranteed by the possibility of constructing them as a standard, acyclic circuit, then what is it guaranteed by?

We will show that the consistency of the switch is guaranteed through a presentation of it as a \textit{routed} quantum circuit \cite{vanrietvelde2021routed, vanrietvelde2021coherent, wilson2021composable}. Our method involves writing down a decorated directed graph, called the 
\textit{routed graph}, which captures the basic compositional structure of a large class of circuits.
To check that all circuits that display this compositional structure are valid ones, it suffices to show that the graph satisfies two principles.
Moreover, this method will generalise to other prominent unitarily extendible processes, such as Grenoble and Lugano.

All of this is in spite of the fact that the routed graph may contain feedback loops. Thus while the usual framework of quantum circuits shows how one can validly combine quantum processes in a definite causal order, our framework shows how quantum processes can be validly combined in an indefinite order.

At the heart of what makes a routed graph succeed or fail to generate consistent processes are the \textit{sectorial constraints} it enforces at its nodes. These restrict the transformations that they accept, and, in valid graphs, they can play a vital role in outlawing those transformations that would lead to inconsistencies. But these constraints are not captured by standard quantum circuits. To handle sectorial constraints properly, we need to understand the basics of routed quantum circuits, to which we now turn.

\subsection{Routed quantum circuits}

We start our summary of the framework of routed quantum circuits by considering a simple physical scenario that is conveniently represented with routed circuits. Suppose a photon with an internal degree of freedom is sent to a superposition of two paths, either going to the left or to the right depending on the logical value of a control qubit $C$.

This can be represented with the following transformation $W: \ch_C \otimes \ch_T \rightarrow \ch_L \otimes \ch_R$, where $T$ is the internal degree of freedom of the photon.
\begin{equation}
    \begin{split} \label{sup channels def}
       & W(\ket{0} \otimes\ket{\psi}) = \ket{\psi} \otimes \ket{\textnormal{vac}} \\
        &  W(\ket{1} \otimes\ket{\psi}) = \ket{\textnormal{vac}} \otimes \ket{\psi}
    \end{split}
\end{equation}
Each of the output spaces $\ch_L$ and $\ch_R$ is the direct sum of a vacuum sector spanned by $\ket{\textnormal{vac}}$ and a single-particle sector with the dimension of the target space $\ch_T$ \cite{chiribella2019shannon}: $\ch_L = \ch_L^\textnormal{vac} \oplus \ch_L^\textnormal{par}$ and $\ch_R = \ch_R^\textnormal{vac} \oplus \ch_R^\textnormal{par}$ .

The physical evolution here is clearly reversible; yet, $W$ is not a unitary operator, because its output space, $\ch_L \otimes \ch_R = \bigoplus_{i, j \in \{\textrm{vac}, \textrm{par}\}} \ch_L^i \otimes \ch_R^j$, is `too large'. To fix this, we could consider an extension of $W$ to a unitary acting on a larger input space. However, this would require us to consider additional, possibly physically irrelevant degrees of freedom just to represent our original scenario, even though this original scenario was already reversible. 

Another solution is to restrict the definition of $W$ to the subspace of its output space that can actually be populated -- this is the one-particle subspace  $\ch^{\rm prac}:=(\ch_L^{\rm par} \otimes \ch_R^\textnormal{vac}) \oplus (\ch_L^\textnormal{vac} \otimes \ch_R^\textnormal{par})$.  The resulting operator $\tilde{W} :  \ch_C \otimes \ch_T \rightarrow \ch^\textnormal{prac}$ is indeed unitary, but it cannot be represented as a standard circuit with one output for $L$ and another for $R$. The reason is that the formal meaning of putting two wires next to each other in a standard circuit is taking the tensor product of the associated Hilbert spaces, while our $\ch^\textnormal{prac}$ is composed from $\ch_L$ and $\ch_R$ via direct sums as well as tensor products.


In order to achieve a better circuit representation, we can label the output subspaces in the following way.
\begin{equation}
    \begin{split}
        & \ch_{L^\textnormal{par}} := \ch_L^0 \\
        & \ch_{L^\textnormal{vac}} := \ch_L^1 \\
        & \ch_{R^\textnormal {par}} := \ch_R^1  \\
        & \ch_{R^\textnormal{vac}} := \ch_R^0
    \end{split}
\end{equation}
Then the physically relevant output space $\ch^\textit{\rm prac}$ is the direct sum of the subspaces $\ch_L^i \otimes \ch_R^j$ for which the indices match. This is equivalent to saying that $W$ \textit{follows the route} given by the Kroenecker delta $\delta^{ij}$. This just means that it respects the equation
\begin{equation} \label{wroute}
    W = \sum_{ij} \delta^{ij} \cdot (\pi_L^i \otimes \pi_R^j) \circ W \, ,
\end{equation}
where the $\pi_L^i$ and $\pi_R^j$ are projectors onto $\ch_L^i$ and $\ch_R^j$ respectively. Now, we can formally represent our transformation as a unitary by adding indices to the outputs that represent the sectorisation of the Hilbert spaces, and decorating $W$ with the route matrix $\delta^{ij}$. This gives the following diagram.

\begin{equation} \label{sup channels 1}
    \tikzfig{sup_of_channels}
\end{equation}
The interpretation of this diagram is that we have a transformation $W$ which follows the route $\delta$. The route matrix tells us that $W$ only maps states to its so-called \textit{practical output space}, which is now defined via the route as $\ch_{\textrm{prac}} := \bigoplus_{ij} \delta^{ij} \, \ch_L^i \otimes \ch_R^j$. $W$ is unitary with respect to this output subspace, so the routed map $(\delta, W)$ is called a \textit{routed unitary} transformation, even though $W$ is not strictly speaking a unitary operator.

We can simplify the diagram (\ref{sup channels 1}) by introducing a shorthand called \textit{index-matching}. Since the effect of $\delta^{ij}$ is just to `match up' the value of the output indices in the practical output space, we can avoid writing the matrix explicitly and instead just match the output indices directly:
\begin{equation} \label{sup channels 2}
    \tikzfig{sup_of_channels_chan} \, .
\end{equation}

We now explain the notion of a routed linear map in full generality. Given a transformation $U: \ch_A \rightarrow \ch_B$, we can construct a routed transformation $(\lambda, U)$ by first sectorising $U$'s input and output Hilbert spaces into a set of orthogonal subspaces:
\begin{equation}
    \begin{split}
        & \ch_A = \bigoplus_i \ch_A^i \\
        & \ch_B = \bigoplus_j \ch_B^j
    \end{split}
\end{equation}
We then specify a Boolean matrix $\lambda$, which is there to tell us which $\ch_A^i$ may be mapped to which $\ch_B^j$. We call this sort of restriction on the form of $U$ a \textit{sectorial constraint}.
Specifically, $U$ \textit{follows} $\lambda$ if it satisfies
\begin{equation} \label{route}
    U = \sum \lambda_i^j \cdot \pi^j_B \circ U \circ \mu_A^i
\end{equation}
where the $\pi_B^j$'s and $\mu_A^i$'s project onto sectors $\ch_B^j$ and $\ch_A^i$ respectively. For $(\lambda, U)$ to be a valid routed transformation, we need $U$ to follow $\lambda$. 

The routes can be composed in parallel and in sequence by the Cartesian product and matrix multiplication, respectively. Routed maps can be composed by composing their elements pairwise: the composition of the linear maps will necessarily follow the composition of the routes. This enables us to build up large routed circuits using elementary routed maps as building blocks.

The route $\lambda$ defines $U$'s practical input and output spaces $\bigoplus_i (\sum_j \lambda_i^j) \ch_A^i$ and \\ $\bigoplus_j (\sum_i \lambda_i^j) \ch_B^j$ respectively. $U$ is a routed unitary if the transformation is unitary when we restrict its definition to these spaces.

The last thing to introduce is the notion of a \textit{routed supermap} \cite{vanrietvelde2021coherent}. In a standard superunitary, any unitary operator that maps the ingoing space $\ch_{A^\inn}$ to the outgoing space $\ch_{A^\out}$ of one of the nodes is considered a valid input to that node. For a routed supermap, a node might be equipped with some route $\lambda$ given a sectorisation of its input and output spaces. Then, the only valid unitary transformations for that node are those that follow $\lambda$. Formally, the valid inputs are those that respect $U = \sum_{ij} \lambda_i^j \cdot \mu^j_{B} \circ U \circ \pi^i_{A}$, or  $U=\sum_{ij} \lambda_i^j \cdot (\mu^j_{B} \otimes I_X) \circ U \circ (\pi^i_{A} \otimes  I_X)$ if the unitary also acts on some ancillary system $X$.

\subsection{Extracting the relevant structure: routed circuit decomposition, skeletal supermap, routed graph}

Armed with an understanding of routed maps, we can now give the promised routed circuit decomposition of the switch. Luckily, all we really need is a routed unitary of the form $(\delta, W)$, represented in (\ref{sup channels 2}). 

\begin{figure*}
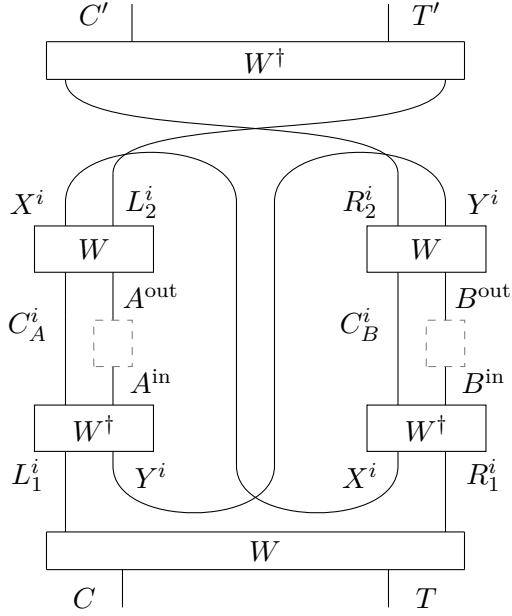

    \centering
    \tikzfig{switch_routed_circuit}
    \caption{Routed circuit decomposition of the switch, using index matching. $W$ is the routed unitary defined in (\ref{sup channels def}). The wires bent into `cup' or `cap' shapes represent the (unnormalised) perfectly correlated entangled states \cite{coecke_kissinger_2017}. Overall, the `$i=0$' sectors correspond to the branch where Alice's intervention is implemented before Bob's, while `$i=1$' corresponds to the branch where Bob's intervention is implemented before Alice's. Thus the cycle is constructed from two acyclic components corresponding to definite orders of implementation.}
    \label{fig:switch routed circuit}
\end{figure*}

Our decomposition of the switch is presented in Figure \ref{fig:switch routed circuit}. The basic intuition behind the diagram lies in the following interpretation. When we prepare the control qubit in $\ket{0}$, the target system may only enter the sectors of the wires inside the diagram corresponding to $i=0$. Recalling that $\ch_L^0$ was a $d$-dimensional sector, and that $\ch_R^0$ was a trivial, one-dimensional `dummy sector', this means that the particle will exit via the left output port of every $W$ it enters. Meanwhile, the right output will receive a one-dimensional dummy system, analogous to the  vacuum in the interferometric example above (although this is merely a formal analogy – we are not committing to any
particular physical interpretation of this dummy system).
 This means that the particle will go through Alice's node first, then Bob's, then out to the future. The opposite is true when we prepare the control in $\ket{1}$.


We now want to nail down how the route structure in Figure \ref{fig:switch routed circuit} can be leveraged to certify that the supermap is a valid one. To do this, we first need to to consider a further pruned version of the circuit, in which only the essential information appears. This is given by what we call a \textit{skeletal supermap}: a supermap that includes nothing other than wires, without any boxes representing non-identity unitary transformations.
The idea is that we can obtain the original supermap from the skeletal supermap by `fleshing it out', i.e.\ inserting some unitary transformations into the nodes.
If we can show that this skeletal supermap is a valid superunitary, then it follows immediately that our original supermap is a valid superunitary.

\begin{figure}
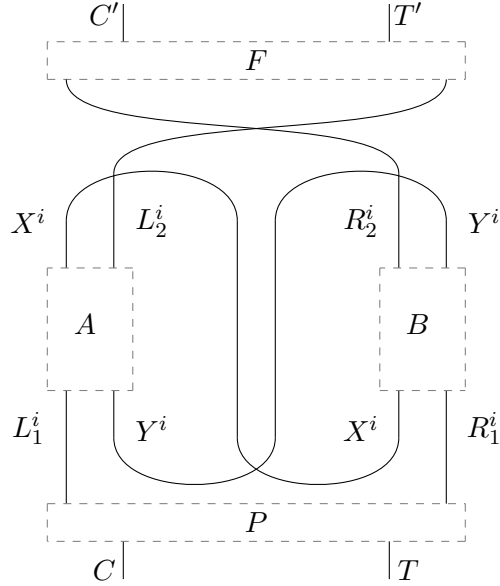

    \centering
    \tikzfig{switch_skeletal}
    \caption{Skeletal supermap for the switch. The nodes suffer sectorial constraints represented by the index-matching of the input and output wires, making this a \textit{routed} supermap. The routed circuit for the switch in Figure \ref{fig:switch routed circuit} is obtained by inserting unitary transformations into the nodes $P$ and $F$, and the monopartite superunitary (\ref{comb}) into $A$ and $B$.}
    \label{fig:switch skeletal}
\end{figure}

A suitable skeletal supermap for the switch is represented in Figure \ref{fig:switch skeletal}. Inserting $W$ and $W^\dag$ into the bottom and top nodes respectively, and inserting the superunitary 
\begin{equation}\label{comb}
    \tikzfig{comb}
\end{equation}
into each of the middle nodes yields \texttt{SWITCH}.

We can represent the skeletal supermap using an even simpler graph.
All we need to consider is the connectivity between the nodes, the routes and indices, and the specific index values that represent one-dimensional sectors. 
A representation of all this information, and nothing more, is provided by the \textit{routed graph}. This consists of 
\begin{itemize}
\item a vertex for each node in the skeletal supermap, decorated with its route;
\item arrows representing the wires connecting the nodes in the skeletal supermap;
\item next to each arrow, the index of the corresponding wire;
\item next to each arrow, the specific values of its index that corresponds to a one-dimensional sector.
\end{itemize}

When a node has a `delta route' -- that is, a route that is equal to 1 if and only if all the indices take the same value -- we can adopt the shorthand index-matching representation where we decorate each of its ingoing and outgoing arrows with the same index. 

The routed graph for the switch's skeletal counterpart is given in Figure \ref{fig:rswitch routed graph}, with and without the index-matching shorthand. Remarkably, this elementary object contains all the information we need to confirm that the switch is a valid superunitary, or in other words, that it is consistent.

\begin{figure}
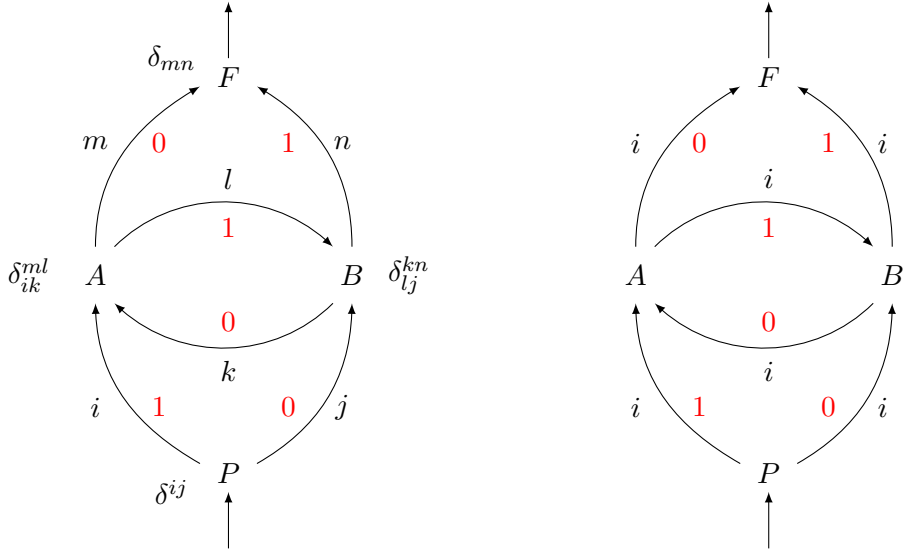
 
\centering
    \subfloat{\tikzfig{figures/RoutedGraphSwitch}}
    \qquad
    \qquad
    \qquad
    \subfloat{\tikzfig{figures/RoutedGraphSwitchMatching}}

\caption{Routed graph for the switch, with and without index-matching. The vertices represent nodes from the skeletal supermap. On the left, each wire is equipped with its own index, and the numbers in red denote the values of the index that correspond to a one-dimensional sector. Each node is decorated with a $\delta$ matrix representing the route, which is equal to 1 if and only if all of its arguments are equal. Lower indices refer to input wires, upper indices refer to output wires. Since all the routes are `delta' routes, we can use the convenient shorthand of index-matching to produce a simpler diagram with the same meaning, as on the right.}
\label{fig:rswitch routed graph}
\end{figure}

\subsection{Checking for validity}\label{sec:checking_validity}

In our framework, one can just consider the routed graph depicting the connectivity of the supermap, and infer from it that the supermap is valid. This amounts to checking that the routed graph conforms to a couple of principles. Here we shall present these principles and the way to check them in a pedagogical manner, taking advantage of the relative simplicity of the switch's case.

To motivate these principles, a good place to start is with the intuition that in a self-consistent protocol, information should not genuinely be able to flow in a circle. This is because, if it did, then at any point on the circle we could control the outgoing information on incoming information that is inconsistent with it. This happens in the grandfather paradox, where Alice's grandfather is killed if Alice exists, even though Alice's existence is incompatible with his murder.

Yet from the present routed graph it seems as if information does flow in a circle between $A$ and $B$. What we need to do is use the information in the graph to obtain a more fine-grained perspective from which the cycle disappears (or is at least shown to be harmless).

We start by fine-graining each node into a number of \textit{branches}. If the route of a node dictates that there are exactly $n$ disjoint subspaces of the input space that must be mapped one-to-one to $n$ disjoint regions of the output space, we say that there are $n$ branches. 

To make this clear, we can represent the route matrices as diagrams with arrows from the input sectors $\ch^i_\inn$ to the output sectors $\ch^j_\out$ being present when the corresponding route matrix element $\lambda_i^j$ is equal to $1$. For the node $A$, we have a route of the form $\delta_i^j$, which is represented in Figure \ref{fig:branches}. In this sort of diagram, each disconnected `island', circled in red, corresponds to a distinct branch. Thus $A$ has two distinct branches, which we label $A^i$ in correspondence with the value of $i$. On the other hand, although the node $P$'s outgoing space has two sectors, $P$ only has one branch, since its graph is fully connected, as represented in Figure \ref{fig:branches2}.

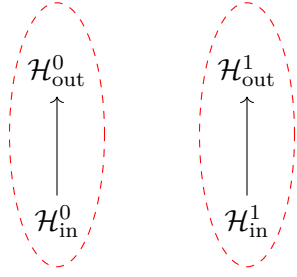
\begin{figure}
    \centering
\begin{tikzpicture} 
\node (1) at (0, 0) {$\ch_\inn^0$}; 
\node (2) at (5, 0) {$\ch_\inn^1$};
\node (3) at (0, 4) {$\ch_\out^0$}; 
\node (4) at (5, 4) {$\ch_\out^1$}; 
\draw[->] (1) to (3);
\draw[->] (2) to (4);
\draw[red, dashed] (0cm,2.3cm) ellipse[x radius=1.25,y radius=3.5];
\draw[red, dashed] (5cm,2.3cm) ellipse[x radius=1.25,y radius=3.5];
\end{tikzpicture}
    \caption{The route for the $A$ node of the skeletal supermap.}
    \label{fig:branches}
\end{figure}

 \begin{figure}
    \centering
\begin{tikzpicture} 
\node (1) at (2.5, 0) {$\ch_\inn$}; 
\node (3) at (0, 4) {$\ch_\out^0$}; 
\node (4) at (5, 4) {$\ch_\out^1$}; 
\draw[->] (1) to (3);
\draw[->] (1) to (4);
\draw[red, dashed] (2.5, 2.7) ellipse[x radius=3.9, y radius=3.9];
\end{tikzpicture}
    \caption{The route for the $P$ node of the skeletal supermap.}
    \label{fig:branches2}
\end{figure}
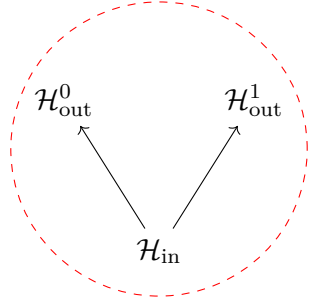

Intuitively, branches correspond to \textit{alternatives} in a node: for example, in node $A$, either the branch $A^0$ or the branch $A^1$ will happen.\footnote{Of course, because we are in quantum theory, both could happen in a superposition. But a remarkable feature of our framework is that, in order to check the validity of the routed graph, we do not have to consider superpositions: it is sufficient to reason as if the branches were mutually exclusive. Therefore this is what we will do through this section.} It is these branches, rather than the original nodes, that we will study. In particular, we will check whether the branches themselves form  informational loops; this will be the subject of our second principle.

\subsubsection{Bifurcation choices and bi-univocality}
Before proceeding to investigate informational loops, we first need to check another, more fundamental, property. As we remarked, branches correspond to alternatives. Thus, roughly speaking, we should make sure that the route structure is detailed enough to specify that exactly one branch happens at each node. This is captured by the notion of a \textit{bifurcation choice}.

Let us introduce it with an example, resorting to agents for intuition. In the route for the $P$ node in Figure \ref{fig:branches2}, the ingoing space may be mapped to two different sectors of the outgoing space, meaning that an agent can `choose' to send information to just one of these sectors. More generally, an agent at a node can make a `bifurcation choice' for each branch of that node (for the branches that contain only one output value, the choice is trivial). 

In the routed graph of Figure \ref{fig:rswitch routed graph}, only $P$ features a bifurcation choice. Furthermore, this bifurcation choice amounts to picking the value of the index $i$ through the graph; thus, if the agent at $P$ picks, say, $i=0$, this leads (through the other routes) to that value getting instantiated through the graph, and therefore to the branches $A^0$ and $B^0$ `happening'. The symmetric situation happens for the $i=1$ choice. In other words, each possible bifurcation choice determines exactly one branch to happen at every node.

It is this behaviour, rather elementary in the case of the switch, that we want to ask for in general. This leads to a principle that we will call \textit{univocality}:\footnote{This is a shameless gallicism. `Univocal' means `speaking with one voice', i.e., yielding exactly one output. For instance, functions are univocal, while relations (represented here by Boolean matrices) are generically equivocal.} any tuple of choices made at every branch leads to exactly one branch happening at every node. In other words, once the agents at the nodes of our skeletal supermap make all their bifurcation choices, there is a determinate fact, for each branch, about whether the quantum state will pass through it. More formally, this will be defined as the fact that the routed graph defines a function (as opposed to a relation) from bifurcation choices to `branch statuses', where branch statuses are bits representing whether a given branch has happened or not. (Section \ref{sec: theorem} describes how this function is defined.) This can be seen as forbidding situations where bifurcation choices would either underdetermine branch statuses (i.e.\ lead to several possible branch assignments) or overdetermine them (i.e.\ lead to no possible assignment at all).\footnote{On the relationship between underdetermination and overdetermination in cyclic processes, see Ref.\ \cite{Baumeler_2021}.}

For the switch, this is satisfied because the bifurcation choice at the $P$ node of the skeletal supermap determines which branches of $A$ and $B$ we end up in. This corresponds to the fact that in \texttt{SWITCH}, the logical state of the control qubit fixes the causal order (recalling the fact that the causal order is what defined the different branches of $A$ and $B$).

We also require that the `time-reversed' routed graph, obtained by reversing the direction of the arrows on the original routed graph, satisfies univocality as well. 
This is satisfied by the switch, corresponding to the fact that the information about which causal order took place ends up recorded in the control qubit at the end of the protocol. 
If both the routed graph and its time-reversed version satisfy univocality, we say that the routed graph satisfies \textbf{bi-univocality}.
Thus the entire bi-univocality condition is satisfied by \texttt{SWITCH}. We summarise the condition as follows:
\begin{quote}
\textit{Bi-univocality:} The routed graph and the time-reversed routed graph define functions from bifurcation choices to branch statuses.
\end{quote}

\subsubsection{The branch graph and weak loops} \label{sec: switch branch graph}

We now turn to our second principle, which deals with whether influences between branches flow in a circle. To check this, we construct a directed `branch' graph representing the flow of information between different branches in the routed graph, depicted in Figure \ref{fig:branch graph}. Causal/informational loops among the branches will be understood as loops in this graph.

The branch graph contains solid, dashed green, and dashed red arrows. The solid arrows represent the flow of quantum information along `paths' in the routed graph permitted by the routes, while the dashed arrows represent the flow of information via choices of which path to follow, when multiple paths are permitted by the routes. We explain each of these in turn at an intuitive level; the general formal procedure for constructing the branch graph from a routed graph is described in Section \ref{sec: theorem}.

To understand the solid arrows, note that there are two possible joint value assignments to all of the indices in the routed graph: either $i=0$ everywhere, or $i=1.$ For the $i=0$ assignment, the arrows $P \rightarrow B$ and $B \rightarrow A$ in Figure \ref{fig:rswitch routed graph} correspond to one-dimensional sectors, as indicated by the red zeroes. What this shows is that no quantum information flows from $P$ to the branch $B^0$ or from $B^0$ to $A^0$. For this reason, there are no solid arrows $P \rightarrow B^0$ or $B^0 \rightarrow A^0$ in the branch graph. On the other hand, quantum information does flow from $P$ into the branch $A^0$, then into $B^0$, and then finally into $F$. Thus we have the path $P \rightarrow A^0 \rightarrow B^0 \rightarrow F$ of solid arrows in the branch graph.  By following precisely analogous reasoning for the $i=1$ assignment, we arrive at the solid arrows in Figure \ref{fig:branch graph}. Evidently, the solid arrows in the branch graph do not form a loop\footnote{We note that this corresponds to an observation from Ref.\ \cite{barrett2021cyclic}, that, although the switch has a cyclic causal structure, it can still be written as a direct sum of (pure) processes with a definite causal order. We want to stress however that such an observation is in general \textit{not} sufficient to ensure the consistency of the process, as it overlooks the need to 1) check bi-univocality, and 2) also represent dashed arrows in the branch graph.}.

\begin{figure}
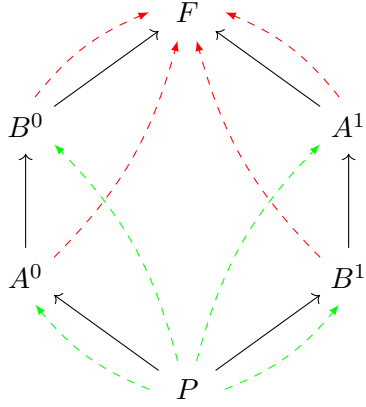

    \centering
\tikzfig{figures/SwitchBranchGraph}
    \caption{The branch graph for the routed graph in Figure \ref{fig:rswitch routed graph}. Each vertex represents the branch of some node in the routed graph. The branches of $A$ and $B$ are labelled with superscripts  corresponding to the relevant value of $i$ in the index-matching routed graph. The other nodes have only one branch, and we denote this branch with the same letter we used for the original nodes. The solid arrows are attributed by considering the connection between the branches encoded in the routed graph; the dashed green and red arrows represent relations of functional dependence in the functions from bifurcation choices to branch statuses required by bi-univocality. The graph contains no cycles of any kind, so it trivially satisfies the weak loops condition.}
    \label{fig:branch graph}
\end{figure}

To rule out informational loops, it is necessary that the solid arrows do not form a loop, but it is not sufficient. What the lack of this kind of loop shows is that the quantum information confined within particular branches by the routes does not flow in a circle. But there is another type of information flowing in the routed circuit: the information that determines \textit{which branch happens}. This information is represented by the dashed arrows in Figure \ref{fig:branch graph}. It is possibilistic in nature, and can therefore be captured entirely using routes, based on the theory of finite relations (i.e.\ Boolean matrices).

Fortunately, if univocality is satisfied, then we already know that the routed graph defines a function from bifurcation choices to the statuses of the branches (i.e.\ the binary variables encoding whether each branch happened). We can thus define the green dashed arrows as representing the functional dependencies within this function. Namely, there is a green dashed arrow from $X^\alpha$ to $Y^\beta$ just in case the branch status at $Y^\beta$ depends on the bifurcation choice at $X^\alpha$. For example, there is a green dashed arrow from $P$ to $A^0$ because one can choose whether or not $A^0$ happens by choosing which logical state to prepare the control qubit in at $P$.  If a similar influence relation holds from $Y^\beta$ to $X^\alpha$ in the time-reversed version of the protocol, then we draw a red dashed arrow from $X^\alpha$ to $Y^\beta$.  Doing this for all the branches gives us the dashed arrows in Figure \ref{fig:branch graph}. 

The full branch graph gives a complete account of the flow of information in the skeletal supermap of the switch. It represents both the quantum information that flows within the branches with the solid arrows, and the `which-branch' information that is affected by bifurcation choices. This second sort of information can be thought of classically, since it corresponds to preferred sectorisations of the state spaces. We also call it possibilistic, since it is purely about the binary question of whether a branch does or does not happen given certain bifurcation choices, and can accordingly be represented by the routes using the theory of relations, represented by Boolean matrices. 

From this fine-grained perspective, it is clear that no information actually flows in a loop in \texttt{SWITCH}, since the branch graph of Figure \ref{fig:branch graph} satisfies

\begin{quote}
    \textit{No loops:} There are no directed loops in the branch graph.
\end{quote}

According to the upcoming Theorem \ref{thm: Main} -- the main theorem of this paper --, the satisfaction of bi-univocality and no loops is enough to demonstrate the validity of the skeletal supermap, and hence of the switch itself. Thus the logical consistency of the switch is a consequence of the satisfaction of these principles.

Remarkably though, a principle logically weaker than no loops is enough to ensure the validity of the supermap. We did not need to show that the routed graph contains no loops at all, but only loops of a weak, and harmless, type. Specifically, we needed to show that the graph satisfies the following principle, which we call \textbf{weak loops}.
\begin{quote}
\textit{Weak loops:} Any given loop in the branch graph is entirely made up of dashed arrows of a single colour.
\end{quote}

Our main theorem states that bi-univocality together with weak loops implies  that a skeletal supermap is valid, and hence that any associated protocol is self-consistent. While all protocols we have studied that do not violate causal inequalities satisfy no loops, Section \ref{sec:lugano} will show that the Lugano process has green loops (see Figure \ref{fig: Lugano branch graph}). This will lead us to conjecture that  \textit{the presence of monochromatic loops is precisely what enables the violation of causal inequalities}.




\changes{
Some readers might be familiar with another routed circuit for the quantum switch, namely the one in Figure 7 of \cite{barrett2021cyclic}. Let us briefly explain why we have used a different circuit here. Suppose we started with Figure 7 of \cite{barrett2021cyclic} and followed the same procedure that we have followed in this section, removing the unitaries in the circuit to obtain a skeletal graph, and then using the skeletal supermap to write down a routed graph. That routed graph is \textit{not} a valid one. 

The fastest way to see this is to note that in Figure 7 of \cite{barrett2021cyclic}, there is a loop made up entirely of wires that have no index, and hence a trivial sectorization. This will result a loop made up entirely of black wires in the routed graph, and thus a violation of the weak loops condition. Of course, the routed circuit in \cite{barrett2021cyclic} describes a perfectly valid process -- namely the switch -- but the process is only valid because of the \textit{specific} unitaries used in the circuit, and not as an inevitable consequence of the way that those unitaries are combined.}

\subsection{Why do we need bi-univocality?}


Naively, one might imagine that the lack of a causal/informational loop among the branches is enough to guarantee that a protocol is consistent. In this subsection, we explain why this intuition fails. 

\changes{Let us first explain why univocality is important.}
To this end, consider the supermap in Figure \ref{fig:grandfather smap}. A single wire is bent round in a loop, and serves both as input and as output to a node. The wire represents a qubit partitioned into sectors spanned by $\ket{0}$ and $\ket{1}$ respectively. We impose a delta-route on the node so that the transformations we insert must map each sector to itself.

\begin{figure}
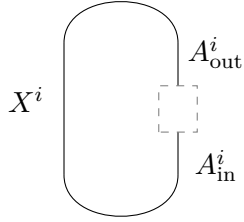

    \centering
    \tikzfig{grandfather_smap}
    \caption{A routed skeletal supermap that leads to a paradox, where a qubit is sent back to itself. Formally, the wires bent into `cup' and `cap' shapes can be thought of as the unnormalised perfectly correlated entangled ket and bra respectively. The qubit is partitioned into sectors spanned by the logical states $\ket{0}$ and $\ket{1}$. The index-matching means that the agent at the node must map the logical states $\ket{i}$ to themselves, up to dephasing.}
    \label{fig:grandfather smap}
\end{figure}

The node has two branches, each corresponding to a one-dimensional sector. Hence, no information flows between the branches.  Clearly, then, there can be no question of an informational loop within a branch. Nevertheless, the supermap is invalid, and fails to represent any logically consistent process. This is a consequence of the fact that \textit{nothing can determine the value of the index} $i$. For if it did, then univocality would be respected, and our main theorem would imply the validity of the supermap.

The key problem here is one of underdetermination. The fact that $i$ is not fixed means that there is no point in the circuit where an agent could make a bifurcation choice determining which branch happens. So, even though there is no \textit{over}determination \textit{within} a branch (resulting in a grandfather paradox), there is still \textit{under}determination \textit{of} the branch. To see how this leads to concrete problems, one needs only to notice that inserting a CNOT gate into the supermap (where the NOT part acts on an ancillary system)  leads to a trace-decreasing output transformation, and that this is a direct consequence of the underdetermination of $i$.\footnote{The reader might also notice that inserting a $Z$ Pauli matrix into the node results in a sort of grandfather paradox associated with the Fourier basis. But the foregoing considerations show that even in the classical case, where there is no Fourier basis, we still need bi-univocality.} 

\begin{figure}
    \centering
    \tikzfig{grandfather_smap2}
    \caption{\changes{A non-valid routed skeletal supermap, leading to a paradox. This supermap is similar to that of Figure \ref{fig:grandfather smap}, but now includes an $S$ qubit system, partitioned into two one-dimensional sectors, and an unsectorised $P$ qubit system.}}
    \label{fig:grandfather smap 2}
\end{figure}

\changes{Let us now illustrate why not only univocality, but \textit{bi}-univocality is necessary for the consistency of a protocol. To this end, consider the supermap in Figure \ref{fig:grandfather smap 2}. In contrast with the previous example, this supermap includes an additional node in which `$i$ is created'; the corresponding routed graph therefore satisfies the univocality principle. However, this supermap is not a valid one: plugging an identity in the $P \to S^i$ node and the routed unitary $\ket{ii} \mapsto \ket{i}$ in the other node yields the $\bra{0} + \bra{1}$ effect, which is of course not a unitary. This failure can be ascribed to the fact that the routed graph corresponding to this supermap does not satisfy \textit{bi}-univocality: its adjoint does not satisfy univocality. In broad terms, not only should the indices `come from somewhere', they should also `go somewhere'.\footnote{\changes{One might naively believe that only requiring univocality would lead to super\textit{isometries} that would not necessarily be superunitaries, but the above example also shows this is not the case, since the supermap in Figure \ref{fig:grandfather smap 2} is not even a superisometry.}}
}




Before moving on to lay out our framework in the next section, let us first summarize this one, in which we have shown how to represent the switch as a routed circuit and how to certify the validity of this circuit despite its feedback loops.

We wrote the switch as a routed circuit (Figure \ref{fig:switch routed circuit}). We captured this circuit's basic structure by trimming it down to a `skeletal' routed supermap (Figure \ref{fig:switch skeletal}), from which the switch can be constructed, by inserting unitary transformations into the nodes. We represented the structure of the skeletal supermap as an equivalent routed graph (Figure \ref{fig:rswitch routed graph}). We then showed that this routed graph satisfies two conditions, bi-univocality and weak loops, which, by our main theorem, imply that the skeletal supermap is valid (i.e.\ takes unitaries to unitaries), which in turn implies that any routed circuit with its connectivity is valid as well.

Bi-univocality requires that choices of bifurcation in the routed graph lead to a definite fact about the branch that happens at each node. It also requires a similar statement to hold about the time-reversed version of the routed graph, obtained by reversing the direction of the arrows.

If bi-univocality holds, then we can ask whether the routed graph satisfies the weak loops condition. To evaluate this condition, we form a branch graph, in which solid arrows represent the ability of quantum information to flow between the different branches. Green dashed arrows indicate that bifurcation choices at one branch can influence whether another branch happens in the routed graph. Red dashed arrows represent the same thing for a time-reversed version of the routed graph. The weak loops condition states that any given loop in the branch graph must be formed entirely of dashed arrows of the same colour. The switch satisfies this trivially since its branch graph contains no loops all.

Our constructions do not only provide a technical way to certify the consistency of a process, but also make its inner structure evident. Indeed, the routed graph gives an intuition of the crucial structural features of the switch. Furthermore, its branch graph displays the order in which its branches happen, and tells us which branches control which other branches happen. This will be particularly valuable when we perform the same reconstruction for more elaborate processes in Section \ref{sec: examples}.

\section{The framework} \label{sec: theorem}

In this section, we present our framework in detail and state our main theorem, which says that any routed graph satisfying bi-univocality and weak loops defines a valid superunitary. To keep things readable, we will give definitions at a semi-formal level; a fully formal account is given in Appendix \ref{app: Theorem}.

\begin{figure}
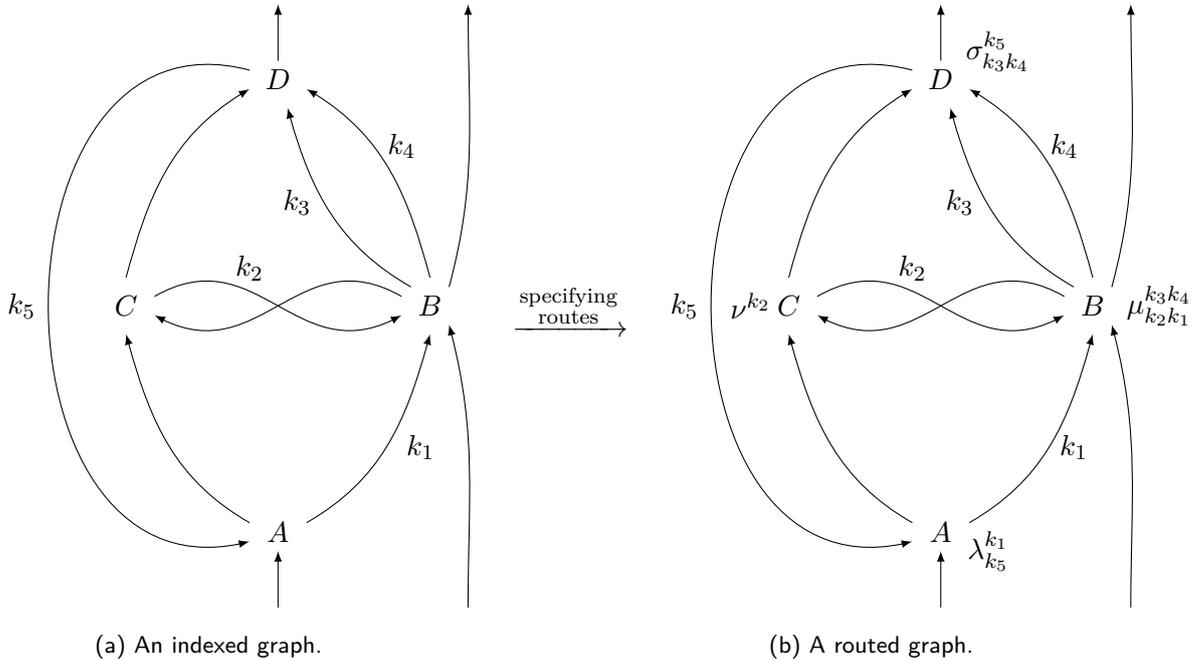

    \centering
    \begin{subfigure}[c]{0.4\textwidth}
         \centering
         \tikzfig{IndexedGraph}
         \caption{An indexed graph.}
         \label{fig:Indexed Graph}
     \end{subfigure}%
     \hspace{1cm}$\xrightarrow{\substack{\textrm{specifying}\\ \text{routes}}}$\hspace{0.2cm}%
     \begin{subfigure}[c]{0.4\textwidth}
         \centering
         \tikzfig{RoutedGraph}
         \caption{A routed graph.}
         \label{fig:Routed Graph}
     \end{subfigure}
    \caption{Examples of an indexed graph and of a routed graph; the latter is obtained from the former by specifying a branched route at every node. The arrows not bearing indices have a trivial (i.e.\ a singleton) set of index values.}
    \label{fig:Indexed and Routed Graphs}
\end{figure}

The most basic notion we need is that of a routed graph: this is a directed multi\footnote{A multigraph is a graph in which there can be several arrows between two given nodes. In the interest of generality, we will allow them, even though for the purposes of the certification of supermaps' validity, any multigraph could just be turned into an equivalent graph by merging wires.}-graph with decorated nodes and arrows. The nodes are decorated with routes, and the arrows are decorated with indices that are in turn equipped with a `dimension' for each index value.
A routed graph with its routes still unspecified will be called an indexed graph. Examples are given in Figure \ref{fig:Indexed and Routed Graphs}.

\begin{definition}[Indexed and routed graphs]
An \emph{indexed graph} $\Gamma$ is a directed multigraph in which each arrow is attributed a non-empty set of index values. Each of these values is furthermore attributed a non-zero natural number, called its dimension\footnote{This will be the dimension of the corresponding sector in the interpretation of the graph as a supermap. Note that for our theorem, all we need to know is which sectors are one-dimensional.}.

A \emph{routed graph} $(\Ga, (\laN)_{N \in \Nodes})$ is an indexed graph for which a relation (or `route') has been specified at every node. The route $\laN$ at node $N$ goes from the Cartesian product of the sets of indices of the arrows going into $N$, to that of the sets of indices of the arrows going out of $N$.
\end{definition}

We also allow these graphs to feature arrows `coming from nowhere' (resp.\ `going nowhere'): these will be interpreted as global inputs (resp.\ global outputs) of the supermap. We ask for these not to be indexed, that is, to have trivial (i.e.\ singleton) sets of index values.\footnote{This requirement is there only to make the statement of univocality simpler, as otherwise one would have to distinguish several cases. Any routed graph with indexed input and output arrows can be turned into one without, by adjoining to it a global input node and a global output node.}

\begin{figure}
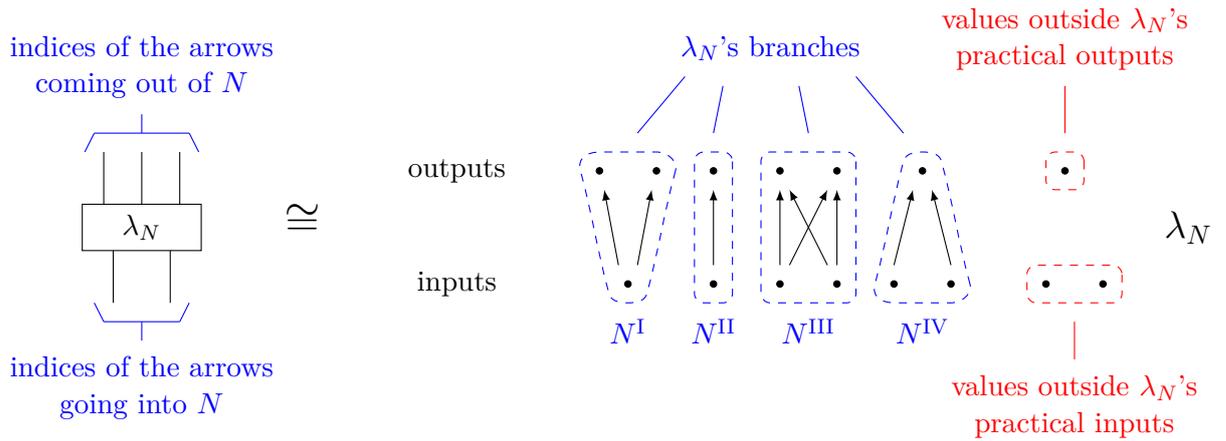

    \centering
    $\tikzfig{Lambda} {\LARGE \cong} \quad \quad \quad  \tikzfig{branchedrel}$
    \caption{`Looking inside' a branched route $\lambda_N$ (which is to be used for the node $N$ of an indexed graph). On the left, we see $\lambda_N$ as a box: each of its input (resp.\ output) wires is the set of index values of one of the arrows going into $N$ (resp.\ coming out of it). On the right, we see its `unfolded' structure, specifying how $\lambda_N$ connects input values to output values; each of the input black dots corresponds to a possible value (more precisely, a possible tuple of values) of the input indices, and similarly for the outputs.
    }
    \label{fig: Looking inside}
\end{figure}

We will in fact not need all types of relations: we will restrict ourselves to considering \textit{branched} ones.

\begin{definition}[Branched routes]
A route $\laN$ is \emph{branched} if any two of its input values are either connected to the exact same output values, or have no output values in common.
\end{definition}

An example is given in Figure \ref{fig: Looking inside}. As seen in this figure, a branched relation $\laN$  defines compatible (partial) partitions of its input and output sets, which we call $\laN$'s \textit{branches} (or, in a slight abuse of notation, $N$'s branches, which will be called the $\Nal$ with $\al$ varying), with each input value of a branch being connected to all output values of this branch and vice versa.

There can also be input (resp.\ output) values that are not connected to anything by $\lambda_N$; these will be said to be outside its practical inputs (resp.\ outputs), and are considered to be part of no branch at all. These values correspond to sectors which are just there for formal purposes and will never be used in practice -- part of the role of bi-univocality will be to ensure that this does not lead to any inconsistencies.

A skeletal routed supermap can be naturally defined from a routed graph.

\begin{definition}[Skeletal supermap associated to a routed graph]
Given a routed graph $\GalaN$, its associated skeletal (routed) supermap is obtained by interpreting each wire as a sectorised Hilbert space, whose sectors are labelled by the set of index values of this wire, with each sector having the dimension that was assigned to its corresponding index value; and interpreting each node as a slot for a linear map, going from the tensor product of the Hilbert spaces associated to its incoming arrows, to that of the Hilbert spaces associated to its outgoing arrows, and following the route associated to that node. The supermap acts on linear maps by connecting them along the graph of $\Ga$\footnote{Note that this procedure has an unambiguous meaning, despite the cycles in $\Ga$, due to the fact that finite-dimensional complex linear maps form a traced monoidal category \cite{joyal_street_verity_1996}.}. 
\end{definition}

Our goal is to define structural requirements on routed graphs ensuring that their associated supermap is a (routed) \textit{superunitary}; i.e.,\ that it yields a unitary map when arbitrary unitary maps, following the routes, are plugged at each of its nodes. Note that a map being unitary, in this context, means that it is unitary when restricted to act only on its practical input space, consisting of the input sectors whose indices are practical inputs of the route, and to map to its similarly defined practical output space. 

Our first principle will be univocality. The idea is that some branches feature \textit{bifurcations}, i.e.\ include several output values (e.g.\ branches $N^\textrm{I}$ and $N^\textrm{III}$ in Figure \ref{fig: Looking inside}). `Bifurcation choices', in a branch at a node -- i.e.\ choosing a single output value for this branch, and erasing the arrows to the other output values -- will in general lead to some branches at other nodes `not happening' -- i.e.\ to none of their input values being instantiated. Univocality tells us that \textit{any tuple of bifurcation choices} throughout the graph should lead to \textit{one and exactly one} branch happening at every node. To make this requirement formal, we will `augment' our relations, i.e.\ supplement them with ancillary wires: ancillary input wires with which bifurcation choices in each branch can be specified; and ancillary output wires which record, in a binary variable, whether each branch happened or not.

\begin{figure}
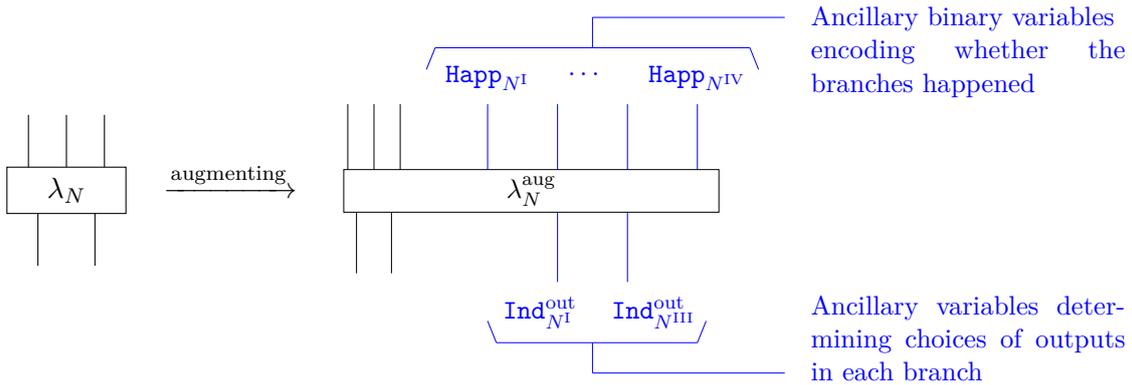

    \centering
    $ \tikzfig{LambdaNoComment} \quad \xrightarrow{\textrm{augmenting}} \quad \scalebox{.92}{\tikzfig{LambdaAug}}$
    \caption{The `augmented' version of the branched route $\lambda_N$ described in Figure \ref{fig: Looking inside}. The ancillary input wires for branches $N^\mathrm{II}$ and $N^\mathrm{IV}$ are not written, as they are trivial: each of these branches has only one output value.}
    \label{fig: Augmenting}
\end{figure}

\begin{definition}[Augmenting]
We take a branched route $\laN$. For each of its branches $\Nal$ we denote the set of output values of this branch as $\Indout_\Nal$, and define a binary set $\Happ_\Nal \cong \{0,1\}$.

The \emph{augmented version} $\la_N^\aug$ of $\laN$ is the partial function going from $\laN$'s input values and from the $\Indout_\Nal$'s, to $\laN$'s output values and the $\Happ_\Nal$'s, defined in the following way:  
\begin{itemize}
    \item if its argument from $\laN$'s input values is among the input values of a branch $\Nal$, then it returns its $\Indout_\Nal$'s argument, value $1$ in $\Happ_\Nal$, and value $0$ in $\Happ_{N^{\al'}}$ for $\al' \neq \al$;
    \item if its argument from $\laN$'s input values is not among the input values of any branch -- i.e.\ if it is outside of $\laN$'s practical input values --, then the output is undefined. 
\end{itemize}
\end{definition}

\begin{figure}
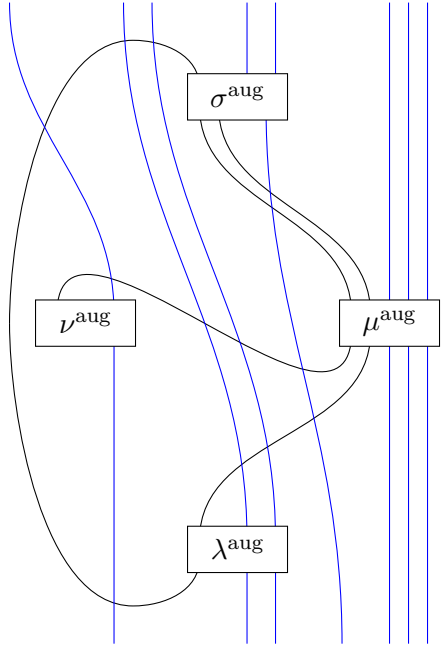

    \centering
    \tikzfig{Univocality}
    \caption{The `choice relation' for the routed graph of Figure \ref{fig:Routed Graph}. Here, we are assuming that $\nu$ has one branch, that $\lambda$ and $\sigma$ have two (with one of $\sigma$'s branches having trivial bifurcation choices), and $\mu$ has three. For better readability, trivial wires are left implicit and ancillary wires are written in blue.
    }
    \label{fig: Univocality}
\end{figure}

To illustrate this definition, let us display what it gives in the case of the routed graph of the switch, depicted in Figure \ref{fig:rswitch routed graph}. The augmented version of $P$'s route has one extra binary input (encoding the bifurcation choice that can be made at this node), and no extra output (as $P$ only features one branch); this augmented version is just the identity function from the extra input to the original output (remembering that the original input of the route was trivial). As for the route of node $A$, its augmented version features no extra input (as $A$ features no bifurcation choice), and two extra outputs: the first one encodes whether branch $A^0$ happened, the second one encodes whether branch $A^1$ happened. This augmented version (which has two inputs and four outputs) is the partial function defined by $(0,0) \mapsto (0,0,1,0)$ and $(1,1) \mapsto (1,1,0,1)$ and undefined on inputs $(0,1)$ and $(1,0)$ (which are outside the practical input values). The same goes for $B$. Finally, the augmented version of $F$'s route is identical to the original version (there are no bifurcation choices and only one branch).

As represented in Figure \ref{fig: Augmenting}, the augmented version of a route features extra ancillary wires. One can then form a relation by connecting the non-ancillary wires of the $\la_N^\aug$'s according to the indexed graph $\Ga$ (see Figure \ref{fig: Univocality} for an example)\footnote{Similarly to before, this procedure makes sense because relations form a traced monoidal category. The $\la_N^\aug$ are here viewed as relations, as any partial function can be.}.  We call this the `choice relation', which we will write $\Lambda_{\GalaN}$. $\Lambda_{\GalaN}$ goes from the bifurcation choices to the $\Happ$ binary variables that tell us whether each branch happened. The requirement that the former unambiguously determine the latter then takes a natural form.

\begin{principle}[Univocality and bi-univocality]
A routed graph $\GalaN$ satisfies the principle of \emph{univocality} if its choice relation $\Lambda_{\GalaN}$ is a function.

$\GalaN$ satisfies the principle of \emph{bi-univocality} if both it and its adjoint $(\Ga^\top, (\la_N^\top)_N)$ satisfy univocality.
\end{principle}

The adjoint of a routed graph is simply the routed graph obtained by reversing the direction of its arrows, and taking the adjoints of its routes: it can be interpreted as its time-reversed version. Being bi-univocal thus means being `univocal both ways'.

When univocality is satisfied, the choice relation -- which is then a choice function -- plays another role: its causal structure (defined by functional dependence) tells us which bifurcation choices can affect the status of which branch. This will define the green dashed arrows in the branch graph, whereas the analogous information in the choice function of the adjoint graph will define the (reverse of the) red dashed arrows.

Our last job is to define the solid arrows in the branch graph. The idea is that the `$N^\al$' branch of node $N$ has a direct influence on the `$\Mbe$' branch of node $M$ if there is an arrow from $N$ to $M$ that doesn't become either inconsistent or trivial (i.e.\ reduce to either zero sectors or to a single one-dimensional one) when one fixes $N$ to be in branch $\al$ and $M$ to be in branch $\beta$. To capture this, we will have to talk about \textit{consistent assignments of values} to the indices of all arrows in the graph.

\begin{definition}[Consistent assignment]
A \emph{consistent assignment} of values to $\GalaN$'s indices is an assignment of a value to the arrows' indices, such that for any node $N$, the tuple of values for $N$'s inputs is related by $\laN$ to the tuple of values for $N$'s outputs.
\end{definition}

Note that (as proven in Appendix \ref{app: Theorem}) an assignment is consistent if and only if at every node, the tuple of input values and the tuple of output values that it yields are in the same branch (and in particular are not outside the practical inputs/outputs). In that sense, one can talk about this consistent assignment of values as, in particular, assigning a given branch to every node.

The idea of solid arrows, embodied by the following definition, is then that one draws a solid arrow from $\Nal$ to $\Mbe$ if there is an arrow  $A$ joining $N$ to $M$, except if $\Nal$ and $\Mbe$ can never happen jointly, or if there is a single value of $A$'s index compatible with both of them happening, and this value makes $A$ trivial.

\begin{definition}
Taking a branch $\Nal$ of node $N$ and a branch $\Mbe$ of node $M$, we say that there is a solid arrow $\Nal \to \Mbe$ if there exists an arrow from $N$ to $M$, except if:
\begin{itemize}
    \item there are no consistent assignment of values that assign branch $\al$ to $N$ and branch $\bet$ to $M$;
    \item or if all such assignments assign the same value to the index of the arrow $N \to M$, and this value has dimension 1 (i.e.\ corresponds to a one-dimensional sector).
\end{itemize}
If there are several arrows from $N$ to $M$, then we say that there is a solid arrow $\Nal \to \Mbe$ unless the above applies to all of them.
\end{definition}

With this in our toolbox, we can define the branch graph.\footnote{\changes{Note that the branch graph is not defined if bi-univocality is not satisfied. Indeed, if either $\Lambda_{\GalaN}$ or $\Lambda_{\left(\Ga^\top, \left( \la_N^\top \right)_N \right)}$ is not a function but merely a relation, we cannot talk about its causal structure and therefore we cannot define either the green or the red dashed arrows. This has the important consequence that bi-univocality and weak loops are not logically independent principles; making sense of the second requires the first to hold.}}

\begin{definition}[Branch graph]
The \emph{branch graph} of a routed graph $\GalaN$ that satisfies bi-univocality is the graph in which:

\begin{itemize}
    \item the nodes are given by the branches of $\GalaN$'s nodes;
    \item solid arrows are given by the previous definition;
    \item there is a green dashed arrow $N^\al \to \Mbe$ if the choice function $\Lambda_{\GalaN}$ features causal influence (i.e.\ functional dependence) from $\IndoutNal$ to $\HappMbe$;
    \item there is a red dashed arrow $N^\al \to \Mbe$ if the choice function of the adjoint graph, $\Lambda_{\left(\Ga^\top, \left( \la_N^\top \right)_N \right)}$, features causal influence (i.e.\ functional dependence) from $\IndinMbe$ to $\HappNal$.
\end{itemize}
\end{definition}

\begin{figure}
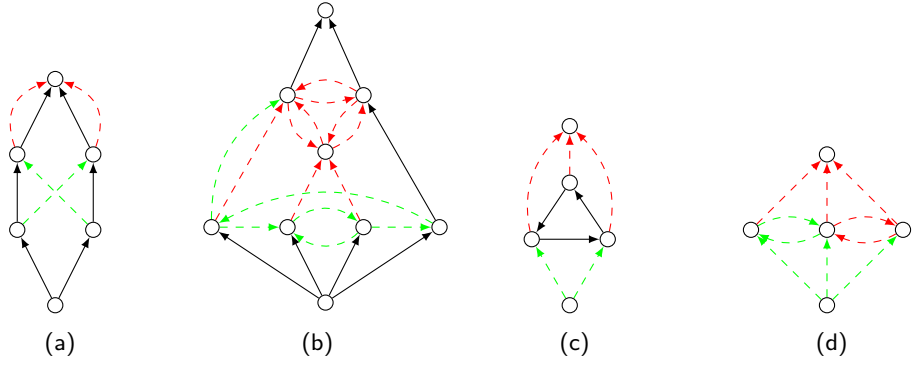

    \centering
    \begin{subfigure}[b]{0.2\textwidth}
         \centering
         \tikzfig{branchgraph1}
         \caption{}
     \end{subfigure}%
     \quad%
     \begin{subfigure}[b]{0.2\textwidth}
         \centering
         \tikzfig{branchgraph2}
         \caption{}
     \end{subfigure}%
     \quad%
     \begin{subfigure}[b]{0.2\textwidth}
         \centering
         \tikzfig{branchgraph3}
         \caption{}
     \end{subfigure}%
     \quad%
     \begin{subfigure}[b]{0.2\textwidth}
         \centering
         \tikzfig{branchgraph4}
         \caption{}
     \end{subfigure}%
    \caption{Examples of branch graphs. (a) and (b) satisfy the weak loops principle, but (c) and (d) do not. For (d), this is due to the presence of a bi-coloured $\infty$-shaped loop in the central layer.}
    \label{fig: Branch Graphs}
\end{figure}

Examples of branch graphs are shown in Figure \ref{fig: Branch Graphs}. Now that the branch graph is defined, we can check whether it satisfies our second principle.

\begin{principle}[Weak loops]
We say that a loop in a branch graph is \emph{weak} if it is entirely made of dashed arrows of the same colour.

A routed graph satisfies the principle of weak loops if every loop in its branch graph is weak.
\end{principle}

Note that, as a particular case, any routed graph whose branch graph features no loops trivially satisfies this principle. This will be sufficient to check the consistency of processes featuring (possibly dynamical) coherent control of causal order. We will conjecture that the more exotic processes, which violate causal inequalities, are characterised by the existence of weak loops in their branch graph.

Finally, we can display our main theorem.

\begin{theorem} \label{thm: Main}
Let $\GalaN$ be a routed graph satisfying the principles of bi-univocality and weak loops. (We then say that it is valid.) Then its associated skeletal supermap is a routed superunitary.
\end{theorem}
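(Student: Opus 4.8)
The plan is to reduce the statement, via a time-reversal symmetry, to a single claim that a certain map is an isometry, and then to prove that claim by decomposing the supermap into manifestly unitary pieces indexed by the \emph{sections} of the routed graph --- using bi-univocality for the bookkeeping and weak loops for the acyclicity of each piece. For the reduction, I would first check that the hypotheses are stable under passing to the adjoint graph $(\Ga^\top,(\la_N^\top)_N)$: it is bi-univocal by definition, and its branch graph is that of $\GalaN$ with every arrow reversed and the colours green and red exchanged (solid arrows reverse because ``consistent assignment'' is a symmetric notion and the dimension-one condition is unchanged, and the green/red arrows swap because the roles of the choice function and its adjoint swap), so it also satisfies weak loops. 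Since finite-dimensional complex linear maps form a dagger compact (hence dagger traced) category, daggering the whole contracted diagram reverses every wire and daggers every box, so that $\cs_{(\Ga^\top,(\la_N^\top)_N)}\big((U_N^\dagger)_N\big)=\cs_{\GalaN}\big((U_N)_N\big)^\dagger$, and $U_N^\dagger$ follows $\la_N^\top$ whenever $U_N$ follows $\la_N$. Hence it suffices to prove, for every valid routed graph: plugging routed unitaries into its nodes yields a map whose restriction to the practical input space is an isometry (equivalently $\cs^\dagger\cs=P_{\mathrm{in}}^{\mathrm{prac}}$); applying this to $\GalaN$ and to its adjoint then yields $\cs^\dagger\cs=P_{\mathrm{in}}^{\mathrm{prac}}$ and $\cs\cs^\dagger=P_{\mathrm{out}}^{\mathrm{prac}}$, i.e.\ a routed superunitary.

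\emph{Block-decomposition into sections.} Because each $\la_N$ is branched and each plugged-in $U_N$ follows it and is a routed unitary, $U_N$ is block-diagonal over its branches, $U_N=\bigoplus_\al U_N^\al$, with each $U_N^\al$ a genuine unitary between the branch spaces $\ch_{\Nalin}$ and $\ch_{\Nalout}$ (the local ancillas simply ride along). Substituting this into the contraction along $\Ga$ and expanding, one obtains $\cs_{\GalaN}\big((U_N)_N\big)=\sum_S\cs_S$, summed over globally consistent branch assignments $S$ --- the \emph{sections}. The two properties to establish here, both consequences of bi-univocality, are: (i) within a section the branches ``fit perfectly'' along every wire, so that $\cs_S$ is honestly a contraction of the unitaries $U_N^{S(N)}$ with no hidden projections; and (ii) the $\cs_S$ have mutually orthogonal domains inside the practical input space and mutually orthogonal ranges inside the practical output space, and together exhaust those spaces. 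Intuitively, univocality of $\GalaN$ --- the choice relation $\Lambda_{\GalaN}$ being a total, single-valued function from bifurcation choices to branch statuses --- is exactly what forbids a state from being ``lost'' (totality) or ``double-counted'' (single-valuedness) by the decomposition on the input side, and univocality of the adjoint gives the same on the output side. This is where the augmented routes $\la_N^\aug$ and the choice relation earn their keep.

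\emph{Each section is unitary, via weak loops.} Fix a section $S$. In the sub-network defining $\cs_S$, any wire of $\Ga$ not carrying a solid arrow between the branches picked by $S$ is pinned to a single index value of dimension one, so the genuinely information-carrying sub-network is precisely the solid part of the branch graph restricted to $S$. A solid loop there would be a solid loop in the full branch graph, violating weak loops; hence this solid sub-network is acyclic, and ordering its nodes topologically exhibits $\cs_S$ as an ordinary sequential composition of the unitaries $U_N^{S(N)}$ along a DAG. The remaining, dimension-one, wires either compose away or --- if they lie on a feedback loop of $\Ga$ --- contribute an overall scalar factor, which univocality (in both time directions) guarantees is nonzero and, as one must check, of unit modulus, so that each $\cs_S$ is unitary from its domain onto its range. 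Combined with (i)--(ii), this makes $\cs_{\GalaN}\big((U_N)_N\big)=\bigoplus_S\cs_S$ an isometry on the practical input space, which by the reduction above completes the proof.

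\emph{The main obstacle.} I expect the real difficulty to be the combinatorics of the decomposition step --- pinning down the correct notion of ``section'' and deriving the perfect-fit and orthogonality-plus-completeness properties from bi-univocality --- intertwined with the delicate point in the last step that a monochromatic (green or red) loop, which carries only purely possibilistic ``which-branch'' information, closes up consistently at the Hilbert-space level (a unit-modulus scalar) rather than collapsing to zero as in a genuine grandfather paradox. This is exactly the place where it matters that the choice relation is a \emph{function}, and in both time directions; making it rigorous is what forces the appendix through the machinery of augmented and sectioned routes. By contrast, the acyclicity-implies-unitarity part and the symmetry reduction are routine.
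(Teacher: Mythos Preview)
Your time-reversal reduction is sound, and you are right that the solid arrows restricted to any fixed section are acyclic. The gap is in claim (i): it is \emph{not} true that within a consistent branch assignment $S$ the branches ``fit perfectly along every wire''. Whenever a branch $N^{S(N)}$ has a nontrivial bifurcation (several output index-values), those different output values are, by design, routed to \emph{different} branches of the downstream node $M$; only one of those is $S(M)$. Hence $\cs_S$ is not a clean contraction of the unitary blocks $U_N^{S(N)}$, but that contraction composed with a genuine projection of the output of $U_N^{S(N)}$ onto the single wire-sector compatible with $S(M)$. Concretely, in the Lugano graph the branch $A^0$ has two output values on the wire $A\to Z$ (Alice's two votes $i_1\in\{0,1\}$), whereas each branch $Z^{i_1 j_2}$ fixes $i_1$; so every section forces a proper projection on the output of $U_A^0$. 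The map $\cs_S$ is then a DAG composition of unitaries interleaved with such projections, and there is no reason for it to be a partial isometry, nor for the putative ``domains'' of different $\cs_S$ to be orthogonal in the global input. Bi-univocality says that bifurcation choices determine branch statuses, but the bifurcation data lives on the \emph{output} side of each node's unitary; pulling it back to an orthogonal decomposition of the input space is exactly the cyclic problem you are trying to solve, and it does not come for free. (Your separate worry about one-dimensional loop wires producing a non-unit scalar is, by contrast, not a real issue: a one-dimensional tensor factor is canonically trivial.)

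The paper's proof avoids this obstruction by a different mechanism: it never decomposes $\cs[(U_N)_N]$ over sections. Instead it runs an induction over branches, totally ordered via the layer structure supplied by weak loops. In the base case every node is filled by an artificial \emph{exchange} gate $\ex_N$, carrying one extra ancillary leg per branch; these gates are sectorial enough that $\cs_\pad[(\ex_N)_N]$ \emph{does} split as an orthogonal direct sum of unitaries, indexed not by branch assignments but by consistent \emph{value} assignments $\vec k\in\PossVal$ (this is where univocality is cashed in, via the function $\Lambda^\sec$). The induction step then replaces the exchange on a single branch $B(i{+}1)=\Mbe$ by the genuine block $U_M^\beta$ and shows that unitarity is preserved. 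The bookkeeping is carried by commuting ``branch-happens'' projectors $\zeta_i^{\inn/\out}(\Nal)$ (hypotheses H2--H7) whose supports are confined to the past or future of $\Nal$ in the branch graph; weak loops is precisely what guarantees that these projectors never act on the branch currently being refilled and can be transported across the step. So the argument is an interpolation from exchange gates to the actual unitaries, one branch at a time, rather than a one-shot section decomposition of the final object.
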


The proof of Theorem \ref{thm: Main} is given in Appendix \ref{app: Theorem}.

The next corollary, which is direct, stresses the fact that there are then many supermaps which can be obtained from this skeletal supermap, and that the validity of the latter implies that they are valid as well.

\begin{corollary}
Let $\GalaN$ be a valid routed graph. Then, any supermap built from its associated skeletal supermap by plugging in unitaries at some of its nodes and unitary monopartite supermaps at other nodes is a superunitary.
\end{corollary}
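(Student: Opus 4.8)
The plan is to deduce the corollary directly from Theorem~\ref{thm: Main} by unwinding the definition of a routed superunitary — a supermap that sends every tuple of route-following unitaries (each allowed, if one wishes, to act together with a local ancilla) plugged at its nodes to a unitary operator between the practical input and output spaces. By Theorem~\ref{thm: Main}, the skeletal supermap $\cs$ associated with the valid routed graph $\GalaN$ is such a superunitary. It then suffices to show that the two operations in question — plugging a fixed route-following unitary at a node, and plugging a unitary monopartite supermap at a node — both preserve the property of being a (routed) superunitary; the point is simply that partial evaluation and composition of maps of the form ``(route-following unitaries) $\mapsto$ unitary'' again yield maps of that form.

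First I would handle the nodes at which a fixed unitary is inserted. Let $S \subseteq \Nodes$ and, for each $N \in S$, let $U_N$ be a unitary following $\laN$ (possibly already acting on a local ancilla). Plugging the $U_N$ into $\cs$ produces a supermap $\cs'$ whose remaining nodes are those of $\Nodes \setminus S$, carrying the routes they had in $\Ga$. For any family of route-following unitaries $\{U_M\}_{M \notin S}$ one has, directly from how a supermap acts by contraction along $\Ga$ (which is unambiguous despite the cycles, since finite-dimensional linear maps form a traced monoidal category), $\cs'(\{U_M\}_{M \notin S}) = \cs(\{U_N\}_{N \in \Nodes})$; the right-hand side is unitary because $\cs$ is a superunitary, and since the global input and output arrows of $\Ga$ are required to be trivially indexed, $\cs'$ has the same practical input and output spaces as $\cs$. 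Hence $\cs'$ is again a routed superunitary.

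Next I would handle the nodes at which a unitary monopartite supermap is inserted. Let $T \subseteq \Nodes \setminus S$ and, for each $N \in T$, let $\cs_N$ be a monopartite superunitary whose domain and codomain Hilbert spaces match $\ch_{N^\inn}$ and $\ch_{N^\out}$, and which is compatible with $\laN$ in the sense that, for every route-following unitary $V_N$ at its own node (with ancilla $Y_N$), the unitary $(\ci_{Y_N} \otimes \cs_N)(V_N)$ again follows $\laN$ (with $Y_N$ riding freely, as permitted by the routed-supermap definition) — this compatibility is precisely what makes $\cs_N$ a legitimate thing to plug at $N$. Plugging the $\cs_N$ into $\cs'$ yields a supermap $\cs''$ whose nodes are $(\Nodes \setminus S) \setminus T$ together with the holes of the $\cs_N$'s. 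For route-following unitaries at all these nodes, $\cs''(\cdots) = \cs'\!\big(\cdots, (\ci \otimes \cs_N)(V_N), \cdots\big)$, and each $(\ci \otimes \cs_N)(V_N)$ is a route-following unitary, so the right-hand side is unitary because $\cs'$ is a superunitary. Therefore $\cs''$ is a routed superunitary, which is exactly the claim (the subcases $S = \varnothing$ or $T = \varnothing$ being covered along the way).

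Since the argument is nothing more than repeated appeal to the definitions, there is no real obstacle. The only place that needs a little care is the bookkeeping: making the ``compatibility'' clause for the monopartite supermaps precise, i.e.\ stating exactly in what sense plugging $\cs_N$ at $N$ replaces the old node-slot by a new one all of whose route-following inputs are sent to route-following inputs of the old slot. Once that is spelled out, the fact that a composition of superunitaries is a superunitary — and hence the corollary — is immediate.
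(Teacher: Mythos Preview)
Your proposal is correct and is essentially the argument the paper has in mind: the paper states the corollary as ``direct'' from Theorem~\ref{thm: Main} and gives no further proof, so your unpacking—partial evaluation at the fixed-unitary nodes followed by composition with the monopartite superunitaries, each step preserving the ``route-following unitaries in $\Rightarrow$ unitary out'' property—is exactly the intended reasoning. Your care in spelling out the compatibility condition on the monopartite supermaps (that their outputs must follow $\laN$) is appropriate and is implicit in the paper's examples (e.g.\ the comb of~(\ref{comb}) for the switch).
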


\section{Examples of constructing processes with indefinite causal order} \label{sec: examples}

In this section, we reconstruct three further examples of processes with indefinite causal order from valid routed graphs, namely the quantum 3-switch, the Grenoble process and the Lugano process. 
This will enable us to see each of these processes as a member of a large family of processes that can be constructed `in the same way' -- i.e. from the same routed graph.
This in turn will allow us to distinguish between those features of the process that are `accidental', and those that are essential for the consistency of the process.

What results is reminiscent of the situation for processes without indefinite causal order. 
Such processes can be represented as circuits, in which it is immediate that changing the particular transformations will preserve the consistency of the process, so long as the connectivity of the circuit is maintained.
In our reconstructions of processes with indefinite causal order, it is immediate that changing the particular transformations in a routed circuit decomposition of a process with indefinite causal order preserves consistency, so long as the resulting routed circuit is still `fleshing out' the same (valid) routed graph.

\bigskip

Before turning to the examples, we briefly explain a shorthand way of presenting the routed graphs. 
Rather than directly stating the route associated with each node, it is sometimes simpler to specify a \textit{global index constraint}, from which the individual routes can be derived. 
This global index constraint specifies the allowed joint value-assignments for all of the indices in the graph.
Formally, it can be represented as a Boolean tensor $G$ over the Cartesian product of all the indices in the graph. We set the coefficients $G_i=1$ to be equal to 1 for allowed joint value assignments $i$ (note that $i$ here denotes a list of indices), and $G_i=0$ for those that are not allowed.
Then, we can calculate the route $\lambda_N$ at some specific node $N$ as the least restrictive route consistent with the global index constraint. 
Assuming that the set of indices $\texttt{Ind}_N^\textnormal{in}$ going into the node and the set of outgoing indices $\texttt{Ind}_N^\textnormal{out}$ are disjoint, so that no arrows start and finish at that same node,\footnote{Note that the requirements of bi-univocality and weak loops imply that any  indices on these `self-loops' would have to have values that either never happen, or else correspond to one-dimensional branches, and that the value of the indices are fixed by the other ingoing and outgoing arrows. This means that one gets exactly the same unitary processes from the routed graph if one removes the self-loops. Accordingly, the assumption of no self-loops does not sacrifice any generality.} we can calculate this by marginalising over the indices $\texttt{Ind}\backslash \{\texttt{Ind}_N^\textnormal{in} \sqcup \texttt{Ind}_N^\textnormal{out}\}$ that do not come out of or go into the node.
Writing the indices as $i=(i_\textnormal{in}i_\textnormal{out}i')$, where the $i' \in \{\texttt{Ind}_N^\textnormal{in} \sqcup \texttt{Ind}_N^\textnormal{out}\}$ denote the joint value-assignments of those `irrelevant indices', the marginalisation is performed by taking the Boolean sum over $i'$,
\begin{equation}
    (\lambda_N)_{i_\textnormal{in}}^{i_\textnormal{out}} := \sum_{i'} G_{i_\textnormal{in}i_\textnormal{out}i'} \, .
\end{equation}

In the examples below, we represent the global index by using a combination of index-matching on the routed graph and `floating' equations relating the indices written beside the graph. The idea with index-matching is that when indices on two different arrows are matched, the global index constraint must be 0 for all joint value assignments in which they are not equal. Similarly, the global index constraint is 0 for all joint value assignments not satisfying the floating equations. The global index constraint of the routed graph is then the most general Boolean matrix compatible with the index-matching and the equations.

Similarly, we will also present routed \textit{circuits} using a global index constraint. In that case, we derive the routes associated with the individual \textit{transformations} (i.e.\ the boxes) that make up the circuit by marginalising over the global index constraint of the circuit.

We want to stress, once again, that the use of index-matching and global index constraints is only a graphical shorthand: in order to study the graphs and check the principles, they have to be formally translated into routes for the nodes.

\subsection{The quantum 3-switch}

The quantum 3-switch \cite{colnaghi2012quantum} is a unitary process defined analogously to the quantum switch, but with three intermediate agents: Alice ($A$), Bob ($B$) and Charlie ($C$). The Past ($P$) consists of a 6-dimensional control qudit $P_C$ and a $d$-dimensional target qudit $P_T$. Depending on the initial state of the control qudit, the three agents receive the target qudit in a different order, outlined in Table \ref{tab:3switch}. At the end, the target qudit is sent to the Future ($F$).
\begin{table}[ht]
    \centering
    \begin{tabular}{c|c}
    Control state     & Order \\
    \hline
      $\ket{1}$   & $A-B-C$ \\
      $\ket{2}$   & $A-C-B$ \\
      $\ket{3}$   & $B-C-A$ \\
      $\ket{4}$   & $B-A-C$ \\
      $\ket{5}$   & $C-A-B$ \\
      $\ket{6}$   & $C-B-A$ \\
    \end{tabular}
    \caption{The relative order of the agents Alice ($A$), Bob ($B$) and Charlie ($C$) depending on the value of the control state.}
    \label{tab:3switch}
\end{table}

\subsubsection{The routed graph}

We start by drawing a routed graph from which the quantum 3-switch, amongst other processes, can be constructed. This routed graph is given in Figure \ref{fig:3switch_routedgraph}. The global index constraint is represented by matching the indices on different arrows, and by the floating equation $l+m+n+p+q+r=1$. This equation enforces that precisely one of the six summed over indices is equal to one. Thus the global index constraint is the Boolean matrix that ensures that matched indices take the same value, and that exactly one of the six distinct values is 1.

\begin{figure}[ht]
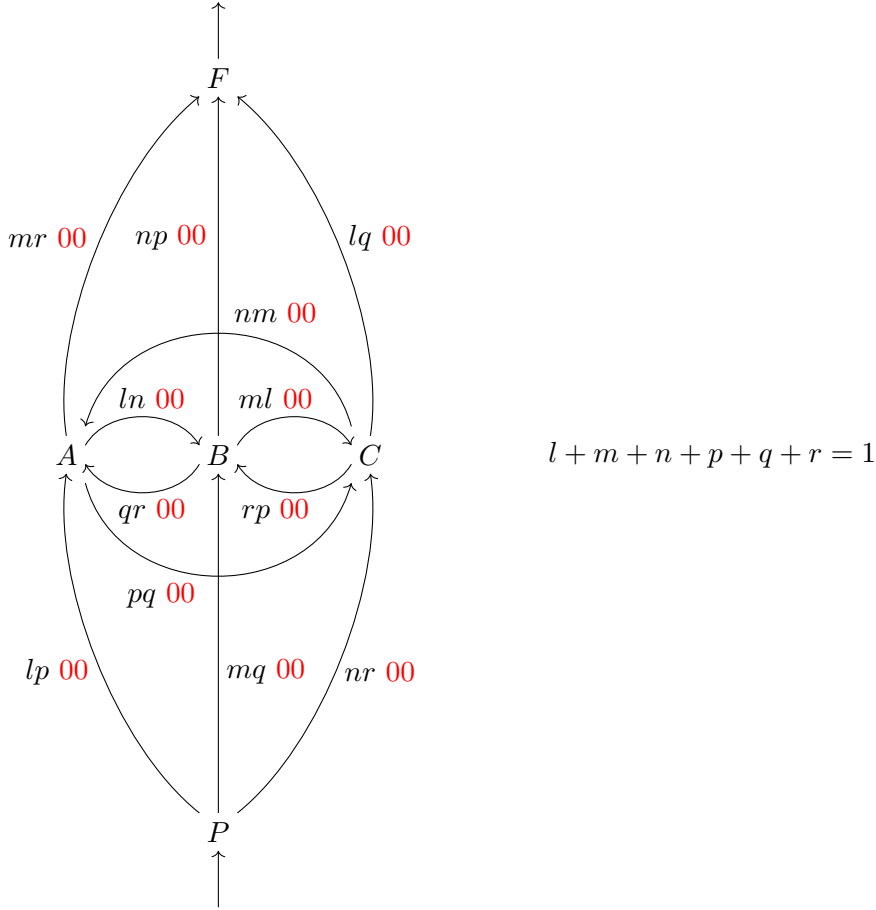

	\centering
	\tikzfig{figures/3switch_routedgraph}
    	\caption{The routed graph for the quantum 3-switch, using a global index constraint.}
	\label{fig:3switch_routedgraph}
\end{figure} 

The route at node $P$ (which we denote $\eta$) is, by definition, the most liberal route compatible with the global index constraint. This is the route that forces exactly one of its indices to be equal to $1$:
\begin{equation}
	\begin{cases}
		\eta^{100000} = \eta^{010000} = \eta^{001000}=\eta^{000100} =\eta^{000010}=\eta^{000001} = 1 \, ; \\
		 \eta^{lmnpqr} =0 \quad {\rm otherwise}.
	\end{cases}
\end{equation}
The route $\eta$ also has a convenient graphical representation, depicted in Figure \ref{fig:3switchpast}. $\eta$ has a single branch with a bifurcation choice between six options, each corresponding to one of the indices $lmnpqr$ being equal to 1. 
Each option enforces one of the six possible causal orders.

\begin{figure}
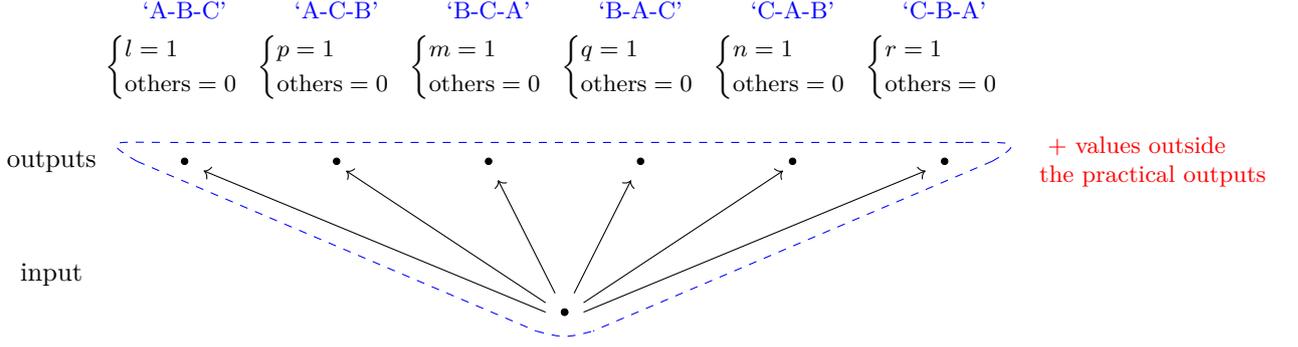

    \centering
    \tikzfig{3switchPastRoute}
    \caption{The route and branch structure of the node $P$ of the quantum $3$-switch. There is one unique branch with a bifurcation choice between six options, each of which enforces a causal order.}
    \label{fig:3switchpast}
\end{figure}


Let us explain how this works in detail. In the routed graph for the standard switch, the arrow $P\rightarrow A$ came with two index values, corresponding to whether or not Alice received the message first. But for the 3-switch, if Alice does receive the message first, then there are two further possibilities: either she comes first and the causal order is clockwise ($A-B-C$), or she comes first and the order is anticlockwise ($A-C-B$). For this reason, the arrow from $P$ to $A$ has three index values overall. The sectors where she gets the message first correspond to $(l=1, p=0)$ and $(l=0, p=1)$; while $(l=0, p=0)$ corresponds to a one-dimensional `dummy' sector. Likewise, all internal wires are associated with three sectors; two non-`dummy' sectors for when one of their indices equals one, and a `dummy' sector for when both are equal to zero\footnote{Note that the sectors with both indices equal to $1$, although formally present, are irrelevant: they correspond to impossible joint assignments of values.}.

\begin{figure}
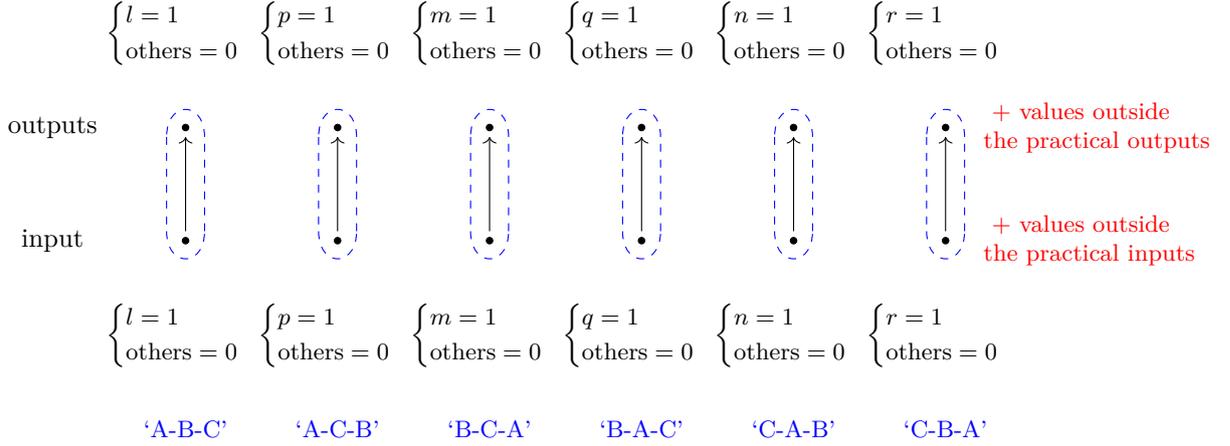

    \centering
    \tikzfig{figures/3switchaliceroute}
    \caption{The route and branch structure for the intermediate agents in the $3$-switch. There are six branches, each corresponding to a causal order.}
    \label{fig:3switchalice}
\end{figure}

Now suppose that the agent at $P$ makes the `$l=1$' bifurcation choice, so that the message is sent to Alice.
The global index constraint then enforces the route at the nodes $A$, $B$, $C$ depicted in Figure \ref{fig:3switchalice}.
Thus Alice's route implies that she has no choice but to preserve the value of $l$, meaning that she must send the message along the arrow from $A$ to $B$, since this is her only outgoing arrow that does not correspond to a dummy sector when $l=1$. Then Bob similarly has no choice but to pass the message to Charlie, and finally Charlie is forced to send the message into the Future. The net result is that the message moves inexorably along the path $P \rightarrow A \rightarrow B \rightarrow C \rightarrow F$ of arrows decorated with an $l$ index, giving the causal order $A-B-C$. Thus, if an agent at $P$ makes the bifurcation choice that $l=1$, they pick out this causal order. 

Similarly, any option from the bifurcation choice enforces one of the six possible causal orders. In this sense, the bifurcation choice at $P$ is a choice between causal orders, just as in the case of the original quantum switch. This state of affairs -- that the causal order is determined by a bifurcation choice at the Past node -- is characteristic of the (non-dynamical) coherent control of causal orders.

Now let us show that the routed graph satisfies our two principles. It is clear that the bifurcation choice at $P$, picking which index is equal to 1, determines the status of all branches of the intermediate nodes, since these branches are all defined by a certain index equalling 1 (see Figure \ref{fig:3switchalice}). This bifurcation choice is the only one in the routed graph, and $P$ and $F$ each have just one branch (the route at $F$ is just the time-reversed version of the one at $P$, obtained by reversing the direction of the arrows in Figure \ref{fig:3switchpast}). Thus the sole bifurcation choice in the routed graph leads to a single branch happening at each node; formally speaking, we have a function from bifurcation choices to branch statuses.
That is, the routed graph satisfies univocality. 

\changes{

Recall that bi-univocality requires that the time-reverse of the routed graph also satisfies univocality, where the time-reverse is obtained by reversing the arrows. Taking the time-reverse of Figure \ref{fig:3switch_routedgraph} and then relabelling
\begin{equation}
    \begin{split}
        P &\longleftrightarrow F \\
         l &\longleftrightarrow r \\
         m &\longleftrightarrow p \\
         n &\longleftrightarrow q \\
    \end{split}
\end{equation}
results in exactly the same routed graph as Figure \ref{fig:3switch_routedgraph} itself. In other words, the routed graph is \textit{time-symmetric} (up to relabelling). If a routed graph is univocal and time-symmetric, then it must also satisfy bi-univocality. Hence Figure \ref{fig:3switch_routedgraph} satisfies bi-univocality.

Since the routed graph satisfies bi-univocality, we can draw its branch graph, following the rules in Section \ref{sec:checking_validity}: we display it in Figure \ref{fig:3switch_branchgraph}.} In this graph, the six branches for each of the nodes $A$, $B$ and $C$ are denoted by the specification of which index is equal to 1 (with all the others equal to 0), e.g.\ $A^{l=1}, A^{p=1}$, etc. There are no loops in the branch graph, meaning that the routed graph trivially satisfies weak loops. We can thus invoke Theorem \ref{thm: Main} to conclude that any process that can be obtained from the routed graph of Figure \ref{fig:3switch_routedgraph}, including the quantum 3-switch, is a consistent quantum process.

\begin{figure}[ht]
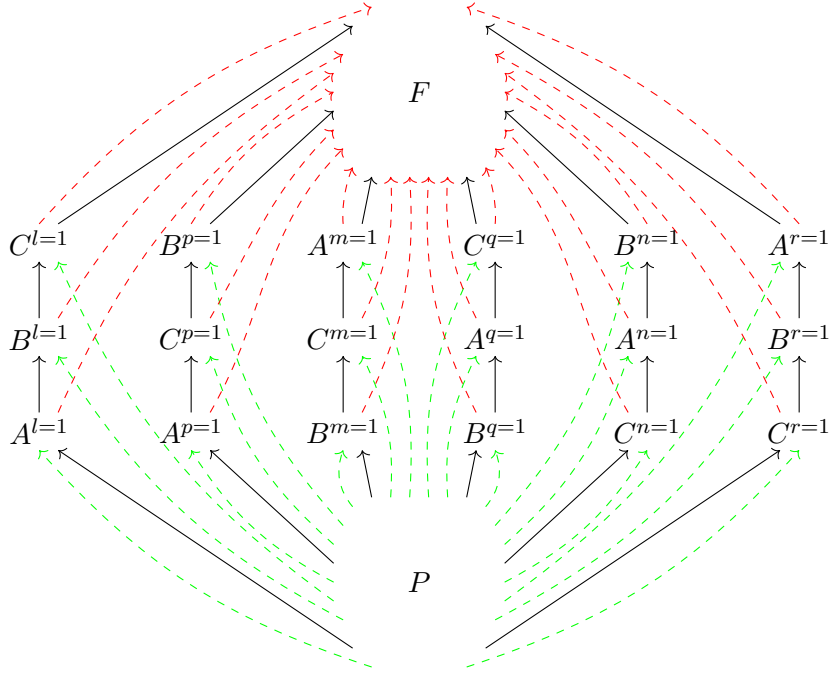

	\centering
	\tikzfig{figures/3switch_branchgraph}
	\caption{The branch graph for the quantum 3-switch.}
	\label{fig:3switch_branchgraph}
\end{figure}

\subsubsection{The routed circuit}

\begin{figure}[ht]
	\centering
	\tikzfig{figures/3switch}
	\caption{A routed circuit diagram for the quantum 3-switch, using a global index constraint. To avoid too much clutter, instead of explicitly drawing loops, output lines that end with dots are to be interpreted as being looped back to join the corresponding input lines with dots (with the same system labels, including indices). Some wires are coloured for readability.}
	\label{fig:3switch}
\end{figure} 

The routed circuit for the quantum 3-switch can be constructed from the routed graph in Figure \ref{fig:3switch_routedgraph} by inserting unitary transformations into the corresponding skeletal supermap. This is displayed in Figure \ref{fig:3switch}, where we have again used the shorthand of global index constraints. The routes of the transformations can be derived from the global index constraint, just like the routes of the nodes in the routed graph. 

The systems in the routed circuit have the following properties:
\begin{itemize}
	\item 
	The systems $P_T,F_T,A_{\rm in}, A_{\rm out}, B_{\rm in}, B_{\rm out}, C_{\rm in}, C_{\rm out}$ are all isomorphic, and correspond to a $d$-dimensional space. 
	\item $P_C, F_C$ are 6-dimensional control systems.
	\item The routed system $C_k$ is also a 6-dimensional control system,  with an explicit partition into six one-dimensional sectors. 
	\item The routed systems $R^{lp}, S^{mq}, T^{nr}, X^{ln}, Y^{pq}$ are all $(2d+1)$-dimensional systems, this time partitioned into \textit{two} $d$-dimensional sectors and a single 1-dimensional `dummy' sector. For example, $R^{lp} = R^{00}\oplus R^{10} \oplus R^{01}$, where $R^{00}$ is the 1-dimensional sector. The presence of two separate $d$-dimensional sectors corresponds to the fact that each of these wires can carry the message in two separate causal orders. We denote the unique state in the 1-dimensional sectors by $\ket{\rm dum}$. 
\end{itemize}

The unitary $U_P$ at the bottom of the diagram is given by  the isomorphism:
\begin{equation}
	U_P :
	\begin{cases}
		\ket{1}_{P_C} \otimes \ket{\psi}_{P_T} \mapsto \ket{\psi}_{R^{10}} \otimes \ket{\rm dum}_{S^{00}} \otimes \ket{\rm dum}_{T^{00}}  \\
		 \ket{2}_{P_C} \otimes \ket{\psi}_{P_T} \mapsto\ket{\psi}_{R^{01}}  \otimes \ket{\rm dum}_{S^{00}} \otimes \ket{\rm dum}_{T^{00} } \\
		 \ket{3}_{P_C} \otimes \ket{\psi}_{P_T} \mapsto\ket{\rm dum}_{R^{00}} \otimes \ket{\psi}_{S^{10}} \otimes \ket{\rm dum}_{T^{00}}  \\
		 \ket{4}_{P_C} \otimes \ket{\psi}_{P_T} \mapsto \ket{\rm dum}_{R^{00}} \otimes\ket{\psi}_{S^{01}}  \otimes  \ket{\rm dum}_{T^{00}}  \\
		 \ket{5}_{P_C} \otimes \ket{\psi}_{P_T} \mapsto \ket{\rm dum}_{R^{00}} \otimes \ket{\rm dum}_{S^{00}} \otimes \ket{\psi}_{T^{10}}   \\
		 \ket{6}_{P_C} \otimes \ket{\psi}_{P_T} \mapsto \ket{\rm dum}_{R^{00}} \otimes \ket{\rm dum}_{S^{00}} \otimes \ket{\psi}_{T^{01}}   \\
	\end{cases}
\label{eq:U_3switch}
\end{equation}
between the non-routed system $P_C \otimes P_T$ ($6 d$-dimensional) and the routed system \\ $\bigoplus_{lmnpqr} \eta^{lmnpqr} R^{lp} \otimes S^{mq} \otimes T^{nr}$ [also of dimension $2(d\times1\times 1) +2(1\times d\times 1) +2(1\times 1\times d)=6d$].

$U_F$ has the same form as $U_P$, where the $\ket{1}_{F_C}$ state of the control qubit is again mapped to the $l=1$ sector,  $\ket{2}_{F_C}$ is again mapped to $p=1$ sector, and so on. The other unitaries denoted by $U$ are the unique unitaries of the form above that respect the index-matching.



\subsection{The Grenoble process}

In their 2021 paper \cite{wechs2021quantum}, Wechs and co-authors from Grenoble presented a new tripartite process with \textit{dynamical} indefinite causal order, that is, where the causal order is not predetermined at the start of the process, but can be influenced by the intermediate agents themselves. In the present work, we shall call this process the \textit{Grenoble process}. 

Like the 3-switch, the Grenoble process involves three intermediate agents, who receive information from the global Past and ultimately send information into the global Future.\footnote{Note, that in the original formulation in Ref.\ \cite{wechs2021quantum}, the Future is split into more than one party, whilst in this work, to simplify the presentation we consider only one Future party.}
The Past (P) consists of a 3-dimensional control qutrit $P_C$ and a 2-dimensional target qubit $P_T$.
As with the previous processes we have studied, the logical state of the control system determines which of the intermediate agents will receive the message first. 
However, unlike the previous processes, this control system does not enforce a single causal order.
This is because the agent who receives the message first is free to choose which agent will receive it second. 
In particular, the logical state of the target qubit after it passes through the first node will determine who gets it second: $\ket{0}$ means it will be sent in clockwise order (for example, to Bob if Alice was first), while $\ket{1}$ means it will be sent in anticlockwise order (for example, to Charlie if Alice was first). Finally, before the action of the third and final agent, the information about the relative order of the first two agents is scrambled onto an ancillary qubit, which is transferred directly to the Future (F).

In the Grenoble process, the emergent causal order depends not only on the global Past, but also on the actions of the intermediate agents. This is the hallmark of \textit{dynamical} coherent control of causal order. In our terms, this will correspond to the fact that a causal order (and the branch statuses that fix it) is determined not only by a bifuraction choice at the Past, but also by bifurcation choices of the intermediate agents.


\subsubsection{The routed graph}

To begin with, we write down a routed graph from which the Grenoble process, amongst others, can be constructed. This graph is given in Figure \ref{fig:grenoble_routedgraph}, again using global index constraints. 

\begin{figure}[ht]
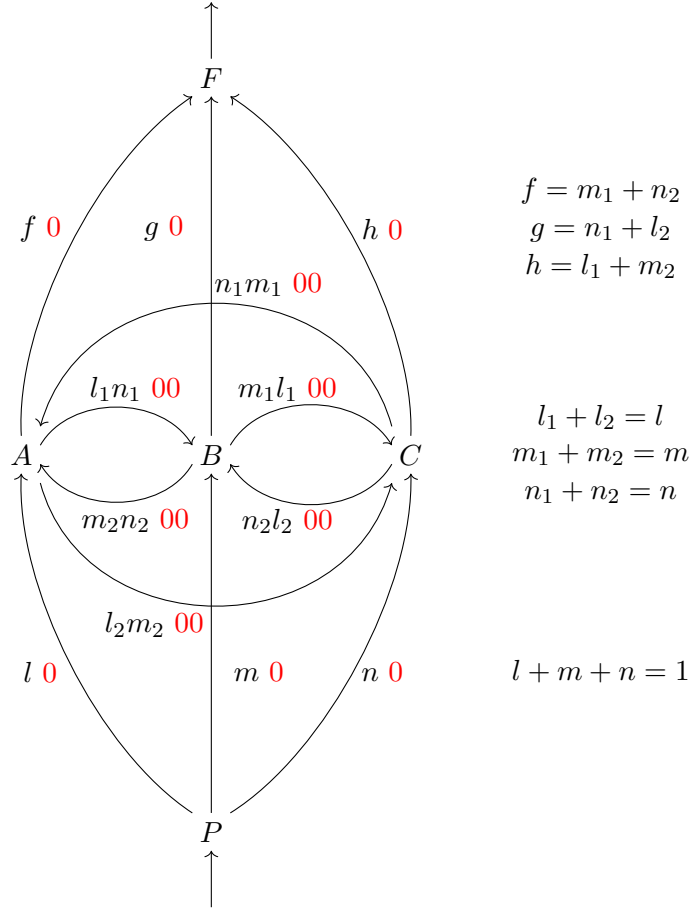

	\centering
\scalebox{1}{\tikzfig{figures/grenoble_routedgraph}}
\caption{The routed graph for the Grenoble process, using a global index constraint.}
	\label{fig:grenoble_routedgraph}
\end{figure} 

For each arrow, the sector corresponding to all of its indices being equal to zero is a one-dimensional sector. The global index constraint (in particular, the floating equation $l+m+n=1$), imposes a route at the node $P$ that forces exactly one of the outgoing indices to equal 1, depicted in Figure \ref{fig:grenoblepast}. The route at the $F$ node is just the time-reverse of the route at the Past. The global index constraint also gives rise to a route at $A$ depicted in Figure \ref{fig:grenoble route}. The routes at $B$ and $C$ are closely analogous.

\begin{figure}
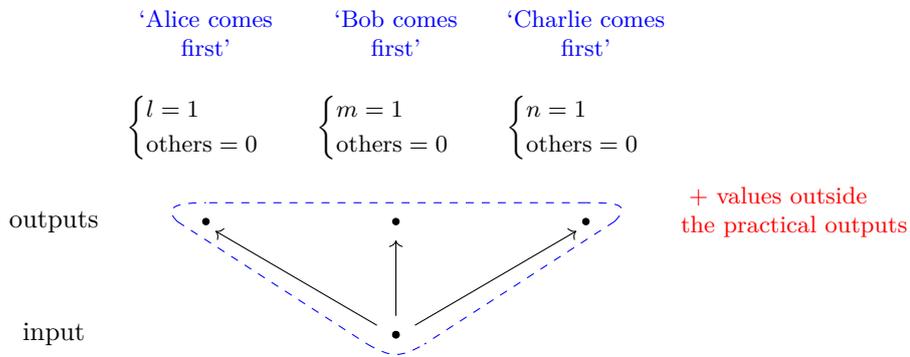

    \centering
    \tikzfig{GrenoblePast}
    \caption{The branch structure for the node $P$ of the Grenoble process. The bifurcation choice in the unique branch determines which agent comes first.}
    \label{fig:grenoblepast}
\end{figure}

\begin{figure}
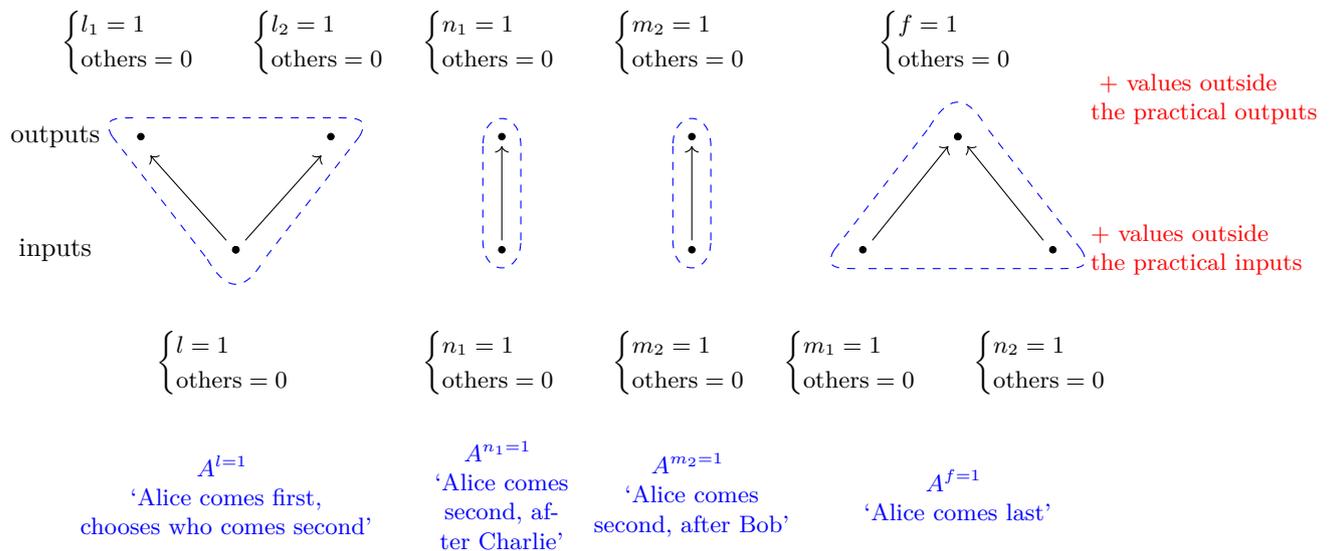

    \centering
    \tikzfig{GrenobleRoutes2}
    \caption{The branch structure for the node $A$ of the Grenoble process.}
    \label{fig:grenoble route}
\end{figure}

Just like the routed graph for the 3-switch, the bifurcation choice at $P$ determines which agent comes first. 
But unlike the 3-switch, this bifurcation does not enforce an entire causal order. 
Rather, it is left to the first intermediate agent to decide which one should come second.
For example, suppose that an agent at $P$ makes the bifurcation choice $l=1$.
This sends the message to Alice (since the only outgoing arrow from $P$ that is not associated with a trivial dummy sector in this case is $P \rightarrow A$).
This leads to Alice having the binary bifurcation choice associated with the branch $A^{l=1}$, depicted in Figure \ref{fig:grenoble route}.
This bifurcation choice determines which agent comes second.
For example, suppose Alice chooses the bifurcation option $l_1=1$. 
Then the message is passed along the `$l_1n_2$'-indexed arrow to Bob. 
Then Bob's route implies he has no such choice: he is forced to preserve the value of $l_1=1$, and is thereby compelled to send the message along the $m_1l_1$ arrow to Charlie (he is confined to a branch $B^{l_1=1}$, analogous to $A^{n_1=1}$ in Figure \ref{fig:grenoble route}). 
Finally, Charlie, confined to a branch $C^{h=1}$ analogous to $A^{f=1}$ in Figure 1, is forced to send the message off into the Future.
Thus Alice's choice $l_1=1$ enforces the clockwise causal order $A-B-C$.
On the other hand, choosing $l_2=1$ leads to the anticlockwise order $A-C-B$.

The situation is analogous if another one of the agents comes first. 
If Bob comes first, he makes a bifurcation choice between $m_1=1$ and $m_2=1$ that enforces either the clockwise order $B-C-A$ or the anticlockiwse order $B-A-C$, respectively.
Finally, if Charlie comes first, he chooses between $n_1=1$ and the clockwise order order $C-A-B$, or $n_2=1$ and the anticlockwise order $C-B-A$ . 

This scenario also allows for the \changes{potential} disappearance of the information about the order of agents that acted already. Indeed, suppose that Alice comes last. 
This means she has either received the message coming clockwise from Charlie, or anticlockwise from Bob: i.e., either $m_1=1$ or $n_2=1$, respectively.
In both cases, the floating equation $f=m_1+n_2$ guarantees that $f=1$, meaning that the information about which agent came first and which came second \changes{may be deleted.}\footnote{\changes{This does not mean, however, that this information is \textit{necessarily} lost; Alice might plug in an operation that carries it into the future. The point is that that the route structure makes no requirement for this information to be preserved in a specific structural way. Alice might therefore also plug in an operation that unitarily rotates, degrades, or outright discards this information.

This point may be better understood by appealing to the symmetry between this reverse, `backward-facing' bifurcation (visible in Figure \ref{fig:grenoble route} ,in the $A^{f=1}$ branch in which Alice comes last), and a `forward-facing' bifurcation (as present e.g.\ in the branch $A^{l=1}$ in which Alice comes first). In the latter, we colloquially say that Alice `chooses' who comes second, which she could do e.g.\ by controlling her output state on an ancilla; but she might as well just plug in a unitary, so that this information corresponds to some information present in the state of her main input, coming from the past. Either way, what the route structure indicates is that \textit{from this point on, this information is tracked by the route structure}. The situation with a backwards-facing bifurcation, as present in the $A^{f=1}$ branch, is to be understood symmetrically: from that point on, the information about who came first might be processed in an arbitrary way (including being deleted) by Alice or by later agents, depending on their choices of operation.}} This can be seen in the structure of the `$f=1$' branch in Figure \ref{fig:grenoble route}.
Again, the situation is analogous if another agent comes last.

To  construct the branch graph, consider the following. The node $P$ consists of a single branch with a bifurcation choice between three options, each corresponding to the case when one of the three indices $l,m,n$   equals 1. In the time-reversed version of the routed graph, the node $F$ has a bifurcation with three options, each corresponding to the case when one of the three indices $f,g,h$   equals 1. 
 
\begin{figure}[ht]
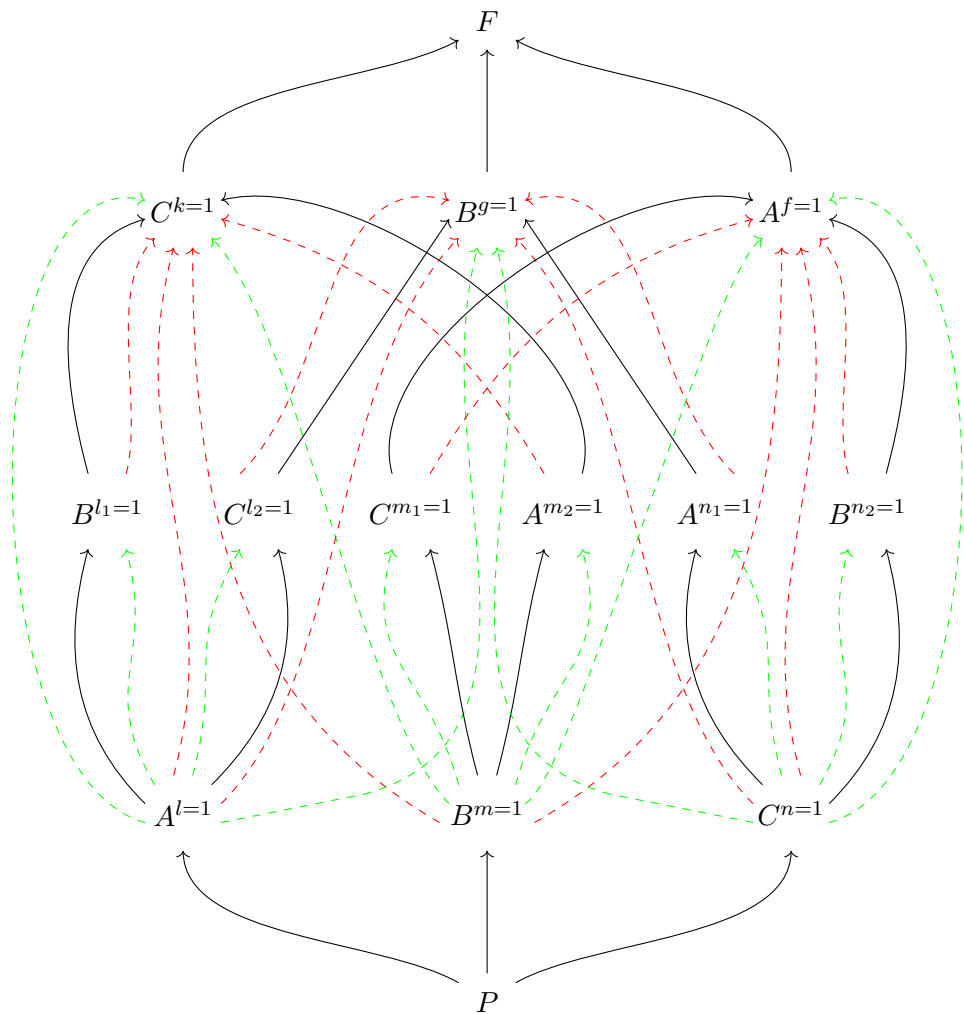

	\centering
	\tikzfig{figures/grenoble_branchgraph}
	\caption{The branch graph for the Grenoble process. For clarity, we have omitted the green and red dashed arrows from $P$ and to $F$, respectively: they simply point upwards in the diagram to/from all of the other branched nodes.}
	\label{fig:grenoble_branchgraph}
\end{figure} 

The routes at the nodes $A,B,C$ consist of four branches, as illustrated in Figure \ref{fig:grenoble route}. One of these branches corresponds to the case when the index on the wire coming directly from $P$ equals 1, with a bifurcation between two options splitting this index into an index of the same name with subscript either 1 or 2 (corresponding to whether the message is sent clockwise or anticlockwise). Another branch corresponds to the case when the index on the wire going directly to $F$ equals 1, with a  bifurcation in the time-reversed routed graph combining the second index of each of the two wires coming in from the other agents (corresponding to whether the message came from the clockwise or anticlockwise direction). The final two branches correspond to the cases when one of the two indices that appear on both the input and output wires of the nodes are equal to 1. Following Figure \ref{fig:grenoble route}, we shall denote these branches by superscripts labelling which index is equal to 1.

\changes{
Like the one for the 3-switch, the routed graph for the Grenoble process is time symmetric. To see this, note that reversing the direction of the arrows and relabelling
\begin{equation}
    \begin{split}
        P &\longleftrightarrow F \\
        l &\longleftrightarrow f \\
        m &\longleftrightarrow g  \\
          n  &\longleftrightarrow h  \\
            l_1 &\longleftrightarrow n_2  \\
              l_2 &\longleftrightarrow  m_1 \\
                  m_2 &\longleftrightarrow n_1  \\
    \end{split}
\end{equation}
results in exactly the same graph. 

The fact that the routed graph satisfies univocality is implicit in the above explanation of how the bifurcation choices pick out a causal order. For they do so precisely by determining which branch happens at each node. For example, the bifurcation choice $l=1$ at $P$ leads to the branch $A^{l=1}$ happening, corresponding to Alice coming first. Then Alice's bifurcation choice determines which branches happen at $B$ and $C$: $B^{l_1=1}$ and $C^{h=1}$ if she chooses $l_1=1$; $C^{l_2=1}$ and $B^{g=1}$ if she chooses $l_2=1$. In general, the bifurcation choices at $P$ and at the resulting first intermediate node always determine which branches happen. Thus, the routed graph satisfies univocality. Since it is time-symmetric, it immediately follows that it also satisfies bi-univocality.}

This allows us to draw the branch graph, which is shown in Figure \ref{fig:grenoble_branchgraph}. Since this branch graph has no loops, it trivially satisfies our weak loops condition. Thus the routed graph is valid. It follows that the Grenoble process -- and any other process constructed from this routed graph -- is consistent.

\subsubsection{The routed circuit}

\begin{figure}[ht]
	\centering
	\tikzfig{figures/grenoble}
	\caption{A routed circuit diagram for the Grenoble process, using a global index constraint. To avoid graphical clutter, we have avoided explicitly drawing loops. Instead, output lines that end with dots are to be interpreted as being looped back to join the corresponding input lines with dots (with the same system labels, including indices). Wires are coloured for better readability.}
	\label{fig:grenoble}
\end{figure} 

By inserting suitable unitary transformations into the skeletal supermap associated with Figure \ref{fig:grenoble_routedgraph}, we can now construct the Grenoble process. A routed circuit for the Grenoble process is given in Figure \ref{fig:grenoble}:
\begin{itemize}
\item The systems $P_T,F_T,A_{\rm in}, A_{\rm out}, B_{\rm in}, B_{\rm out}, C_{\rm in}, C_{\rm out}$ are all isomorphic, and correspond to a 2-dimensional target Hilbert space (encoding the message). 
\item $P_C$ is a 3-dimensional control system; $F_C$ is a 3-dimensional control system, $F_A$ is a 2-dimensional ancillary system. 
\item The routed systems $Q^{fl n_1 m_2}, Q^{gm l_1 n_2}, Q^{hn m_1 l_2}$  are 4-dimensional control systems,  with an explicit partition into four 1-dimensional sectors, each corresponding to exactly one of their four indices being equal to one.
\item The routed systems $R^{l}, R^{f}, S^m, S^g, T^n, T^h$  are 3-dimensional systems,  with an explicit partition into one 1-dimensional `dummy' sector and one 2-dimensional `message' sector, for example, $R^l = R^0 \oplus R^1$, where $R^{0}$ is the 1-dimensional sector.
\item The routed systems $D^f, E^g, K^h$  are  similarly 3-dimensional systems,  with an explicit partition into one 1-dimensional `dummy' sector and one 2-dimensional `ancillary' sector, for example, $D^l = D^0 \oplus D^1$, where $D^{0}$ is the 1-dimensional sector. The 2-dimensional `ancillary' system will be used to store the information about whether the message was sent clockwise or anticlockwise (or in a superposition of the two) after the first agent, conditional on the state of the qubit before the action of the third agent.
\item The routed systems $X^{n_1 m_1}, Y^{m_2 n_2}, X^{l_1 n_1}, Y^{n_2 l_1}, X^{m_1 l_1}, Y^{l_2 m_2}$ are all $4$-dimensional systems, partitioned into \textit{one} $2$-dimensional `message' sector (corresponding to the message travelling from the second to the third agent), \textit{one} 1-dimensional `message' sector (corresponding to the message travelling from the first to the second agent, in which case the space is only one-dimensional because the state of the message itself determines to whom it is sent next), and \textit{one} 1-dimensional `dummy' sector. For example, $X^{n_1 m_1} = X^{00}\oplus X^{10} \oplus X^{01}$, where $X^{00}$ is the 1-dimensional `dummy' sector and $X^{10}$ is the 1-dimensional `message' sector.
\end{itemize}

The global index imposes a route $\delta^{(l+m+n), 1}$ on $U_P$ that forces exactly one of its output indices to be equal to 1. In other words, its practical output space is $\bigoplus_{lmn} \delta^{(l+m+n), 1} R^{l} \otimes S^{m} \otimes T^{n}$. 

$U_P$ is a three-party generalisation of the superposition-of-paths unitary (\ref{sup channels def}) from Section \ref{sec:2switch}. Its action is given by the following, where
we label the kets by individual sectors, rather than by systems:
\begin{equation}
	U_P :
	\begin{cases}
		\ket{1}_{P_C} \otimes \ket{\psi}_{P_T} \mapsto \ket{\psi}_{R^{1}} \otimes \ket{\rm dum}_{S^{0}} \otimes \ket{\rm dum}_{T^{0}}  \\
		\ket{2}_{P_C} \otimes \ket{\psi}_{P_T} \mapsto\ket{\rm dum}_{R^{0}} \otimes \ket{\psi}_{S^{1}} \otimes \ket{\rm dum}_{T^{0}}  \\
		\ket{3}_{P_C} \otimes \ket{\psi}_{P_T} \mapsto \ket{\rm dum}_{R^{0}} \otimes \ket{\rm dum}_{S^{0}} \otimes \ket{\psi}_{T^{1}}   \\
	\end{cases}
	\label{eq:U_grenoble}
\end{equation}
Thus $U_P$ defines a unitary transformation from $P_C \otimes P_T$ to $\bigoplus_{lmn} \delta^{(l+m+n), 1} R^{l} \otimes S^{m} \otimes T^{n}$. In fact, the global index constraint (in particular, the floating equation $l+m+n=1$) restricts $U_P$'s practical output space to $\bigoplus_{lmn} \delta^{(l+m+n), 1} R^{l} \otimes S^{m} \otimes T^{n}$, meaning that it defines a routed unitary transformation.

$V_A$ is defined below.
Note that here the labelling by sectors is necessary to distinguish between states belonging to different sectors that we label with the same ket, e.g.\ $\ket{0}_{X^{01}}$ and $\ket{0}_{X^{10}}$.

\begin{equation} \label{eq:v_grenoble}
V_A:
\begin{cases}
  \ket{0}_{R^1} \otimes \ket{{\rm dum}}_{X^{00}} \otimes \ket{{\rm dum}}_{Y^{00}} \mapsto \ket{0}_{A_{\rm in}} \otimes \ket{0}_Q \otimes \ket{{\rm dum}}_{D^0} \\  
  \ket{1}_{R^1} \otimes \ket{{\rm dum}}_{X^{00}} \otimes \ket{{\rm dum}}_{Y^{00}} \mapsto \ket{1}_{A_{\rm in}} \otimes \ket{0}_Q \otimes \ket{{\rm dum}}_{D^0} \\  
  \ket{{\rm dum}}_{R^0} \otimes \ket{0}_{X^{10}} \otimes \ket{{\rm dum}}_{Y^{00}} \mapsto \ket{0}_{A_{\rm in}} \otimes \ket{1}_Q \otimes \ket{{\rm dum}}_{D^0} \\  
  \ket{{\rm dum}}_{R^0} \otimes \ket{{\rm dum}}_{X^{00}} \otimes  \ket{1}_{Y^{10}} \mapsto \ket{1}_{A_{\rm in}} \otimes \ket{2}_Q \otimes \ket{{\rm dum}}_{D^0} \\  
  
  \ket{{\rm dum}}_{R^0} \otimes \ket{0}_{X^{01}} \otimes  \ket{{\rm dum}}_{Y^{00}} \mapsto \ket{0}_{A_{\rm in}} \otimes \ket{3}_Q \otimes \ket{0}_{D^1} \\  
  \ket{{\rm dum}}_{R^0} \otimes \ket{1}_{X^{01}} \otimes  \ket{\rm{dum}}_{Y^{00}} \mapsto \ket{1}_{A_{\rm in}} \otimes \ket{3}_Q \otimes \ket{1}_{D^1} \\  
  \ket{{\rm dum}}_{R^0} \otimes \ket{{\rm dum}}_{X^{00}} \otimes  \ket{0}_{Y^{01}} \mapsto \ket{0}_{A_{\rm in}} \otimes \ket{3}_Q \otimes \ket{1}_{D^1} \\  
  \ket{{\rm dum}}_{R^0} \otimes \ket{{\rm dum}}_{X^{00}} \otimes  \ket{1}_{Y^{01}} \mapsto \ket{1}_{A_{\rm in}} \otimes \ket{3}_Q \otimes \ket{0}_{D^1} \\  
\end{cases}
\end{equation}
Since the global index constraint restricts $V_A$'s practical input and output spaces to those sectors where exactly one index is equal to 1, it also defines a routed isometry\footnote{The notion of a routed isometry is defined similarly to that of a routed unitary \cite{vanrietvelde2021routed}.}. $V_B$ and $V_C$ are defined similarly. 

The routed unitary $W_A$ is defined as follows:

\begin{equation} \label{eq:w_grenoble}
    W_A:
    \begin{cases}
        \ket{0}_{A_{\rm out}} \otimes \ket{0}_Q \mapsto \ket{{\rm dum}}_{R^0} \otimes \ket{0}_{X^{10}} \otimes \ket{{\rm dum}}_{Y^{00}} \\
        \ket{1}_{A_{\rm out}} \otimes \ket{0}_Q \mapsto \ket{{\rm dum}}_{R^0} \otimes \ket{{\rm dum}}_{X^{00}} \otimes \ket{{\rm 1}}_{Y^{10}} \\
        \ket{0}_{A_{\rm out}} \otimes \ket{1}_Q \mapsto \ket{{\rm dum}}_{R^0} \otimes \ket{0}_{X^{01}} \otimes \ket{{\rm dum}}_{Y^{00}} \\
        \ket{1}_{A_{\rm out}} \otimes \ket{1}_Q \mapsto \ket{{\rm dum}}_{R^0} \otimes \ket{1}_{X^{01}} \otimes \ket{{\rm dum}}_{Y^{00}} \\

        \ket{0}_{A_{\rm out}} \otimes \ket{2}_Q \mapsto \ket{{\rm dum}}_{R^0} \otimes \ket{{\rm dum}}_{X^{00}} \otimes \ket{0}_{Y^{01}} \\
        \ket{1}_{A_{\rm out}} \otimes \ket{2}_Q \mapsto \ket{{\rm dum}}_{R^0} \otimes \ket{{\rm dum}}_{X^{00}} \otimes \ket{1}_{Y^{01}} \\
        \ket{0}_{A_{\rm out}} \otimes \ket{3}_Q \mapsto \ket{0}_{R^1} \otimes \ket{{\rm dum}}_{X^{00}} \otimes \ket{{\rm dum}}_{Y^{00}} \\
        \ket{0}_{A_{\rm out}} \otimes \ket{3}_Q \mapsto \ket{1}_{R^1} \otimes \ket{{\rm dum}}_{X^{00}} \otimes \ket{{\rm dum}}_{Y^{00}} \\
    \end{cases}
\end{equation}
The routed unitaries $W_B$ and $W_C$ are defined in a similar way. Finally, the routed unitary \textit{\AE} is given by the following: 
\begin{equation}    \label{eq:ae_grenoble}
\textit{\AE}_F:
    \begin{cases}
        \ket{\psi}_{R^1} \otimes \ket{\xi}_{D^1} \otimes \ket{{\rm dum}}_{S^0} \otimes \ket{{\rm dum}}_{E^0} \otimes \ket{{\rm dum}}_{T^0} \otimes \ket{{\rm dum}}_{K^0} \mapsto \ket{\xi}_{F_A} \otimes \ket{1}_{F_C} \otimes \ket{\psi}_{F_T}  \\
        \ket{{\rm dum}}_{R^0} \otimes \ket{{\rm dum}}_{D^0} \otimes \ket{\psi}_{S^1} \otimes \ket{\xi}_{E^1} \otimes \ket{{\rm dum}}_{T^0} \otimes \ket{{\rm dum}}_{K^0} \mapsto \ket{\xi}_{F_A} \otimes \ket{2}_{F_C} \otimes \ket{\psi}_{F_T}  \\
        \ket{{\rm dum}}_{R^0} \otimes \ket{{\rm dum}}_{D^0} \otimes \ket{{\rm dum}}_{S^0} \otimes \ket{{\rm dum}}_{E^0} \otimes \ket{\psi}_{T^1} \otimes \ket{\xi}_{K^1} \mapsto \ket{\xi}_{F_A} \otimes \ket{3}_{F_C} \otimes \ket{\psi}_{F_T}  \\
    \end{cases}
\end{equation}

Note that the Grenoble process is an isometric process, with the overall output dimension greater than the overall input dimension 
(in particular, $V$ is a routed isometry). The process can be made unitary in a natural way, by adding an extra 2-dimensional ancillary qubit to the input of the Past and adding routed wires of dimension $1+2$ from the Past to  each of the routed unitaries $W$, bearing the same index as the wire from the Past to the corresponding $V $. This makes the process symmetric in time. As a result, this increases the dimension of the Hilbert space of the sector carrying the message between the first and second agents from 1 to 2. In turn, this increases the dimensionality of the input space to the unitaries $V$, making the entire process unitary. 

Note also that the Future cannot necessarily determine the relative order of the first two agents from their control and ancillary qubits $F_C, F_A$, if the third agent performs a non-unitary operation (because the order information encoded in the ancillary qubit relied on knowledge of the state of the message before the action of the third agent).

One peculiar feature of the Grenoble process is that the qubit that we have called the `target qubit' -- that is, the system that passes between the intermediate agents -- plays a dual role. On the one hand, it is the `message' that the agents receive. On the other hand, it also plays a role in determining the causal order. In particular, after it passes through the first agent, its logical state determines which agent receives it next. Thus if Alice comes first and wants to send the target qubit to Bob, she must send him the $\ket{0}$ state, but if she wants to send it to Charlie, she must send him $\ket{1}$. 

Our reconstruction of the Grenoble process makes it obvious that this feature is not necessary to make the process consistent. Starting from the same routed graph, one can easily define a variation on the Grenoble process, in which Alice is also given a second, `control' qubit. This control qubit determines which agent comes second, leaving Alice free to send that agent whatever state on the target qubit she likes. Bob and Charlie can also be given their own qubits. Since this process can be obtained by fleshing out a routed graph whose validity we have already checked, it is immediate that this new process is also consistent. This illustrates a useful feature of our framework for constructing processes with indefinite causal order; namely, that variations on a process can be defined in a straightforward way, leading to a clearer understanding of which features of the original process were essential for its logical consistency, and which other features can be changed at will.

\subsection{The Lugano process} \label{sec:lugano}

The Lugano process, discovered by Ara\'ujo and Feix and then published and further studied by Baumeler and Wolf \cite{baumeler2014maximal, baumeler2016space} (and therefore sometimes also called BW or AF/BW), is the seminal example of a unitary process violating causal inequalities. It was first presented as a classical process, whose unitary extension to quantum theory can be derived in a straightforward way \cite{araujo2017purification}. As we place ourselves in a general quantum framework here, we will primarily focus on this quantum version of the process; we note that the classical version can be obtained from the quantum one by feeding it specific input states and introducing decoherence in each of the wires of the circuit. This shows, more generally, that at least some exotic classical processes are also part of the class of processes that can be built through our procedure.

Indeed, we will show here how the (unitary) Lugano process can be constructed from a valid routed graph; this will provide an example of a process violating causal inequalities that can also be accommodated by our framework. In fact, we will derive a larger family of processes, defined by a same valid routed graph, and display how the Lugano process can be obtained as the simplest instance of this family. The other processes in this family share the basic behaviour of the Lugano process, but can feature, on top of it, arbitrarily large dimensions and arbitrarily complex operations.

\subsubsection{The logical structure}

Before we present the routed graph for the construction of the Lugano process, let us start with an intuitive account of the logical structure lying at the heart of it. This logical structure can be presented as a voting protocol involving three agents, in which each of the agents receives part of the result of the vote before having even cast their vote. Why this is possible without leading to any logical paradox, of the grandfather type, is the central point to understand.

In this voting protocol, each agent casts a vote for which of the other two agents they would like to see come last in the causal order. Alice, for instance, can either vote for Bob or for Charlie to come last.
If there is a majority, then the winning agent can both i) learn that they won the vote, and ii) receive (arbitrarily large) messages from each of the two losing agents. As for the two losers, each of them can only learn that they lost (i.e.\ no majority was obtained in favour of them), and they cannot receive any messages from the other agents. If no majority is obtained, then all agents learn that they lost, and none of them can signal to any other.


This voting protocol would have nothing surprising if it assumed that the winner learns of their victory and receives messages from the losers `after' all the votes are cast.
Yet in the Lugano process, the crucial fact is that Alice, for instance, learns whether she won, and (if she won) receives Bob and Charlie's messages, \textit{before} she casts her own vote; and the same goes for Bob and Charlie.
This sounds dangerously close to a grandfather paradox, since each agent contributes to an outcome that they might become aware of before they make their contribution. It seems likely that the agents could somehow take advantage of this system to send messages back to their own past, and decide what they do based on those messages, leading to logical inconsistencies.


Why this never happens -- why, more precisely, the agents still have no way to send information back to themselves -- can be figured out with a bit of analysis of the voting system. Indeed, Alice, for instance, finds herself in either of two cases. The first one is that she won: a majority `was' obtained in favour of her. This means that she cannot send messages to either of the agents, since only the winner can receive messages. Nor can she signal to other agents by casting her vote: her victory implies that both Bob and Charlie voted for her, in which case her own vote is irrelevant to the outcome. Therefore, if Alice wins, then she cannot send any information back to herself via the other agents.

Alternatively, Alice could lose the vote. If so, then she cannot receive any messages from the other agents, so she has no hope of sending information back to herself through their messages to her. Therefore, if she wants to send information to herself, she will have to try to change the outcome of the vote in her favour (thus creating a grandfather-type paradox). But she cannot do this by simply changing her own vote, as there being a majority in her favour only depends on how the other agents vote. Nor can she make herself win by encouraging the other agents to vote differently. Alice can only send a message to (say) Bob if Charlie voted for Bob as well. So, whatever Bob does, there will never be a majority in favour of Alice. Therefore, if Alice loses, she cannot send any information back to herself. For this reason, the Lugano process, despite conflicting with intuitions about causal and temporal structure, does not lead to any logical paradoxes, after all.

\subsubsection{The routed graph}

\begin{figure}
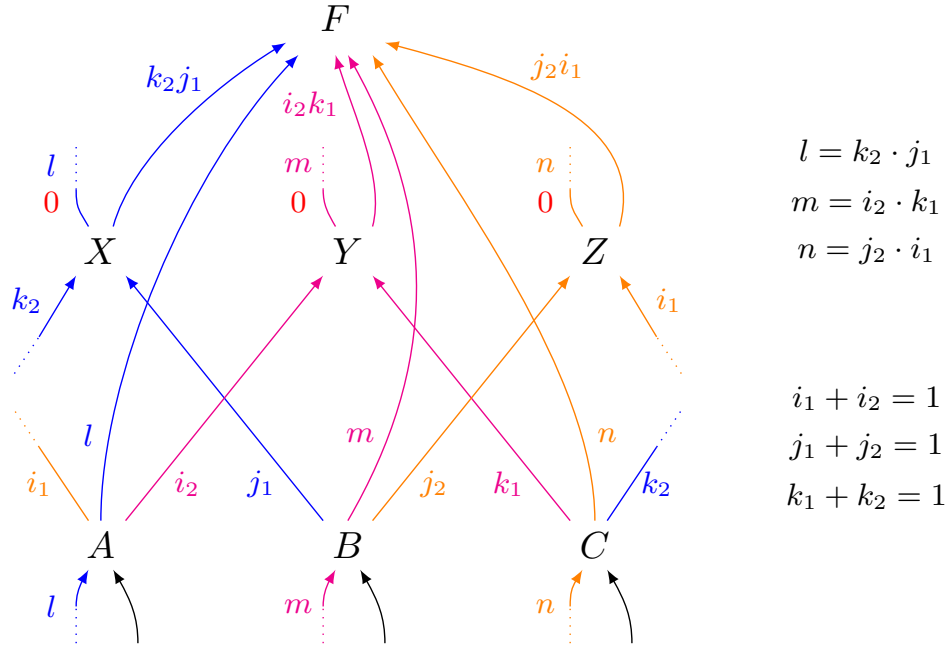

	\centering
\scalebox{1.3}{\tikzfig{LuganoNew}}
\caption{\changes{The routed graph for the Lugano process. To help intuition, we used different colours to denote the arrows that pertain to some particular agent (i.e.\ the ones whose indices encode the `votes' or the `vote result' for that agent). Each of the indices has only two possible values, $0$ or $1$. To reduce clutter, we have used arrows with dotted ends to avoid drawing all the arrows explicitly; pairs of dotted arrows with the same index are shorthand for a single unbroken arrow. Conditions in red next to an arrow indicate which sectors of that arrow have to be trivial; for instance, in the blue arrow from $A$ to $X$, the sector $(i=0, l =1)$ is trivial.}}
	\label{fig: Lugano routed graph}
\end{figure}

Of course, our description of the Lugano process so far has only been pitched at an intuitive level. The point of the routed graph that we will now present is precisely to formalise this intuitive description; while the validity of this graph -- defined as the satisfaction of our two principles -- will provide a formal counterpart to our argument that no logical paradoxes should arise from this protocol. Our routed graph is depicted in Figure \ref{fig: Lugano routed graph}.

In this routed graph, the nodes $A$, $B$, and $C$, representing the three agents, \changes{are supplemented with an additional node $X$ that can be thought of -- continuing with our metaphor -- as a `vote-counting station', in which the votes for each of the agents are centralised and counted.}\footnote{\changes{We stress that the inclusion of this auxiliary node is an \textit{ad hoc} procedure, since our framework provides no way to infer the routed circuit decomposition of a given process. An informal intuition for this inclusion is that, when framed as a skeletal supermap, the Lugano process seems to require a node at which the information is processed and dispatched to other nodes. As we elaborate upon in the conclusion, deriving formally the routed circuit decomposition of a given process is an important subject for future research.}}


Let us determine the routes in the graph and explain their meaning, starting with node $A$. The index $l$ of the arrow going into $A$ from its vote-counting station $X$ indicates whether Alice won the vote (it has value $1$ if she wins, and $0$ if she loses). \changes{The index $i$ encodes whether Alice votes for Bob ($i=1$) or Charlie ($i=0$).  The constraint $i \cdot l = 0$ enforces that, if Alice wins, her vote (which is then irrelevant anyway) is recorded as a vote for Charlie.\footnote{This quirk is necessary to avoid Alice sending information through the vote-counting station if she wins.}} This leads to the route for $A$ depicted in Figure \ref{fig: LuganoNodes}. The route consists of two branches corresponding to Alice's victory or defeat, with a binary bifurcation choice representing her own vote in the losing branch. The routes for $B$ and $C$ are fully analogous.

\begin{figure}
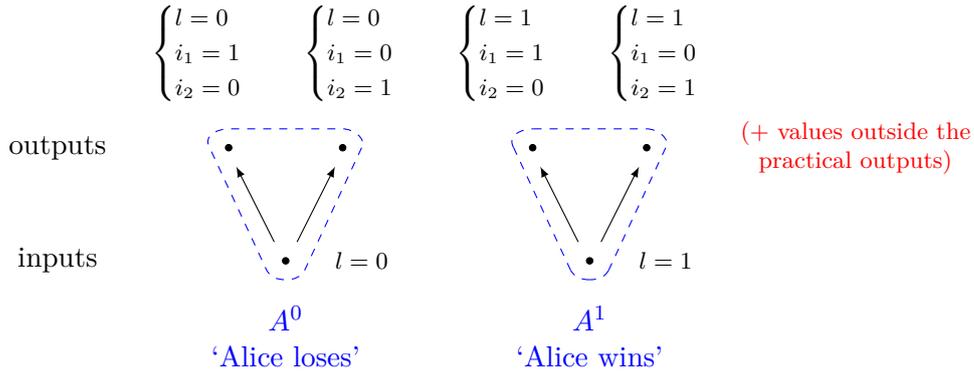

	\centering
	\tikzfig{LuganoNodesNew}
	\caption{\changes{The branch structure of the route for node $A$. $B$ and $C$ are fully analogous.}}
	\label{fig: LuganoNodes}
\end{figure}

\changes{Now let us consider $X$, the vote-counting station. This node features all of the graph's indices in both its inputs and its outputs. The values of $l,m,n$ are uniquely determined by those of $i,j,k$ through the global index constraints, and it can be checked that there are 5 allowed triples of values for the latter ($000$, $111$, $001$, $100$, and $010$), so $X$ features $5$ branches and none of these branches feature bifurcations.}

Finally, the `global Future' node $F$ just serves to channel out the remaining information. Since it receives all the distinct indices in the graph, its route is just given by the global index constraint. In other words, its practical input set of values is just the set of values permitted by the index-matching and the floating equations.

\changes{For the arrows $X \to A$, $X \to B$, and $X \to C$, the $0$ value of their index corresponds to a one-dimensional `dummy' sector.
The interpretation of this is once again natural: the messages are sent to an agent only if this agent won. For the arrows $A \to X$, $B \to X$, and $C \to X$, the sectors for which $l=1$ (resp.\ $m=1$, $n=1$) are trivial as well; this serves to prevent the winner from sending messages to the other agents through the counting station.}


We can now check that the routed graph of Figure \ref{fig: Lugano routed graph} satisfies our two principles. We start with univocality. The choice relation for this graph can be checked to be a function from the six binary bifurcation choices to the statuses of the branches. This function can be meaningfully presented in the following algorithmic way:
\begin{itemize}
    \item \changes{Look at the votes of the losing branches ($A^0$, $B^0$ and $C^0$). If a majority is found in these votes (say, in favour of Alice), then set the `result' indices accordingly (in this case, $l=1$, $m=n=0$) and use the bifurcation choices of the losing branches ($B^0$ and $C^0$) to set the value of the votes of `losers' ($j, k$); set the value of the winner's vote to $0$ ($i=0$);}
    \item If no majority is found, define $l=m=n=0$ and use the bifurcation choices of the losing branches to set all votes.
    \item Now that the values of all indices in the graph have been fixed, derive which branches happened and which didn't.
\end{itemize}

\begin{figure}
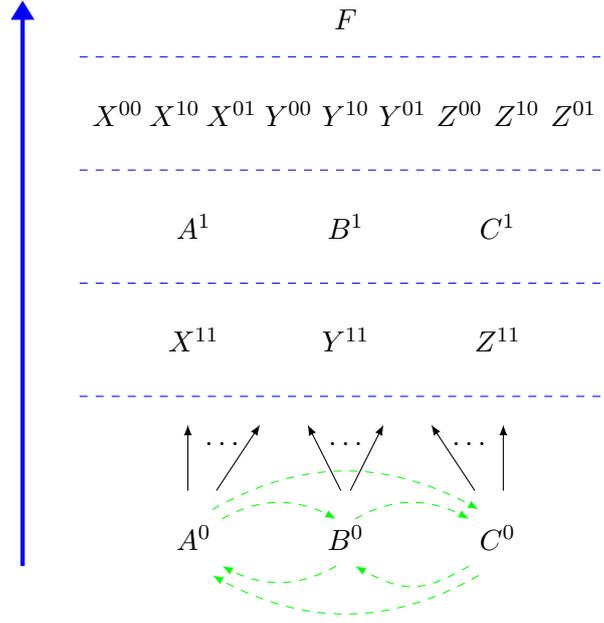

	\centering
	\tikzfig{LuganoBranchNew}
	\caption{\changes{A simplified version of the branch graph for the Lugano process. We do not draw all the arrows, as this would create a lot of clutter and would be superfluous for our purposes of checking for cycles; we rather just organise the branches in layers, such that all unspecified  arrows only ever go `up' with respect to this partition.}}
	\label{fig: Lugano branch graph}
\end{figure} 

Univocality is thus satisfied. Its time-reversed version can be checked to be satisfied as well: all bifurcation choices in the reverse graph are located in $F$, and they have the effect of fixing all indices to consistent joint values.


A simplified version of the branch graph is presented in Figure \ref{fig: Lugano branch graph}. We see that there \textit{are} loops in the branch graph, specifically in its bottom layer; yet they are only composed of green dashed arrows. This entails that the routed graph of Figure \ref{fig: Lugano routed graph} satisfies the weak loops principle, and is thus valid. In Section \ref{sec: conclusion}, we will formulate the conjecture that this presence of weak loops is a signature of its causal inequalities violating nature.

\subsubsection{The routed circuit}

\begin{figure}[ht]
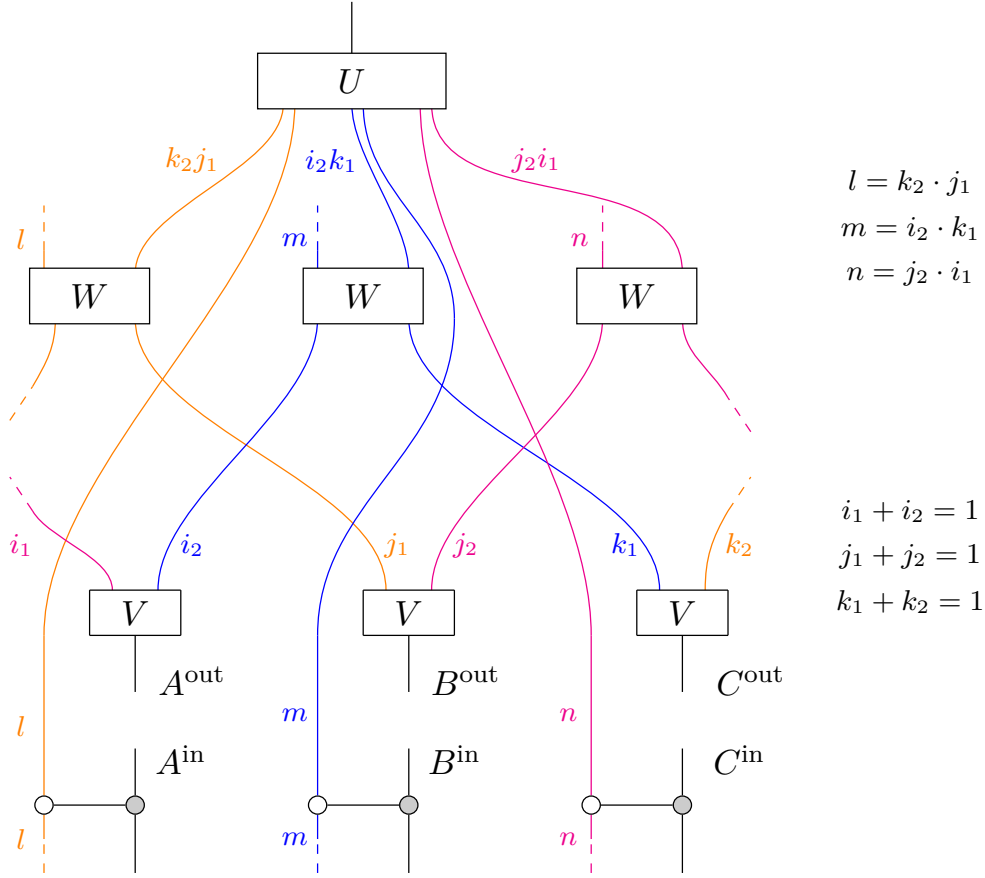

	\centering
	\scalebox{1.2}{\tikzfig{LuganoFilledNew}}
	\caption{A routed circuit diagram for the Lugano process. We follow the same graphical conventions as in Figure \ref{fig: Lugano routed graph}. The gates at the bottom are CNOTs, controlled on the coloured wires.}
	\label{fig: Lugano routed circuit}
\end{figure} 

\changes{We proved that our routed graph is a valid one, and therefore that any routed unitary circuit built from it defines a valid process. In particular, the Lugano process (as defined e.g.\ in equation (27) of Ref.\ \cite{araujo2017quantum}) is obtained by taking all sectors in all wires to be one-dimensional, except for the $l=1$ sector of the $A \to F$ wire, the $m=1$ sector of the $B \to F$ wire, and the $n=1$ sector of the $C \to F$ wire which we take to have dimension 2; and by fleshing out the circuit as depicted in Figure \ref{fig: Lugano routed circuit}. In this figure, $V$ serves to encode a losing agent's vote in the values of the outgoing indices; for example, the $V$ above Alice's node can be written as\footnote{We denote $z$ for the value of an arbitrary basis of the $l=1$ two-dimensional sector of $A \to F$.}
\begin{equation}
\begin{split}
    &V:= \ket{l=0}\ket{i=0,l=0} \bra{l=0} \bra{0} + \ket{l=0}\ket{i=1,l=0}\bra{l=0} \bra{1} \\
    &+ \ket{l=1, z=0}\ket{i=0,l=1}\bra{l=1} \bra{0} 
    + \ket{l=1, z=1}\ket{i=0,l=1}\bra{l=1} \bra{1}
    \, ,
\end{split}
\end{equation}
$W$ sends the information about the value of its incoming indices to the Future, while also sending the information about the product of those values to the wire that loops back around to the Past:\footnote{We denote $C$ for the set of tuples $(i,j,k,l,m,n)$ satisfying the graph's global index constraints.}

\begin{equation}
    W := \sum_{(i,j,k,l,m,n) \in C} \ket{l} \ket{m} \ket{ijk} \ket{n} \bra{il} \bra{jm} \bra{kn} \, .
\end{equation}
}
Finally, $U$ simply embeds its practical input space (defined by the global index constraint) into the global Future. Its precise form is irrelevant to our concerns, so we leave it out.

This shows how a paradigmatic unitary process that violates causal inequalities can be rebuilt using our method. We emphasise, however, that the Lugano process is merely the simplest example of a process obtained from fleshing out a routed circuit of the form of Figure \ref{fig: Lugano routed graph}; one could instead take this routed circuit to feature arbitrarily large dimensions (as long as the crucial sectors we specified remain one-dimensional), and fill it up with arbitrary operations (as long as they follow the routes). In other words, we have in fact defined a large \textit{family} of processes that all rely on the same core behaviour as the Lugano process.

It is particularly worth noting that, while in the Lugano process the message sent to the winner is trivial (it is necessarily the $\ket{1}$ state), this family of consistent processes includes those where each losing agent can send arbitrarily large messages to the winner. Thus, the routed graph makes clear that the triviality of the messages in the original Lugano process is an arbitrary feature, that is not essential to its consistency.

\changes{
Readers familiar with \cite{barrett2021cyclic} might note that the routed circuit we have constructed for the Lugano process is different from the one in Figure 10 of that work. This comes as no surprise, since the most obvious routed graph from which Figure 10 of \cite{barrett2021cyclic} can be obtained is not a valid one (c.f.\ the remark at the end of Section \ref{sec: switch branch graph}).}

\section{Discussion and outlook} \label{sec: conclusion}

In this paper, we presented a method for constructing processes with indefinite causal order, based on a decorated directed graph called the routed graph. 
The method of construction ensures, and hence explains, the consistency of the process, and makes its logical structure evident. 
In particular, we proved that any process constructed from a routed graph satisfying our two principles, bi-univocality and weak loops, is necessarily consistent.

Our method can be used to construct a number of unitarily extendible processes. We explicitly constructed the quantum switch, the 3-switch, the Grenoble process, and the Lugano process.  For each of these processes, our method can also construct a large family of processes obtainable from the same routed graph.  
We expect that the other currently known examples of unitary processes that are built from classical processes analogous to Lugano can also be constructed using our method. 
Ultimately, we are led to the following conjecture.

\begin{conjecture}
Any unitary process -- and therefore any unitarily extendible process -- can be obtained by `fleshing out' a valid routed graph.
\end{conjecture}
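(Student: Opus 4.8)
The plan is to split the proof into an easy reduction and a hard core. For the reduction: the ``therefore'' clause is essentially free, since a unitarily extendible process is by definition the channel-level supermap induced (via Stinespring) by some unitary process, so a presentation of the unitary process as the fleshing-out of a valid routed graph immediately yields one for the unitarily extendible process (append a discard node to the global Future, or restrict practical spaces appropriately). The real content is therefore: every unitary process $S$, with intermediate nodes $\{A_i\}_i$ and global Past $P$ and Future $F$, arises by fleshing out some valid routed graph.

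For that core statement, the first step is to extract from $S$ a sectorial skeleton. Concretely, one must equip each node $A_i$ with direct-sum decompositions $\ch_{A_i^\inn} = \bigoplus_\alpha \ch_{A_i^\inn}^\alpha$ and $\ch_{A_i^\out} = \bigoplus_\alpha \ch_{A_i^\out}^\alpha$ --- the candidate branches --- together with the data of which tuples of sectors across all nodes can be jointly populated. The natural source of this data is the fine-grained signalling structure of $S$: for each pair of nodes and each choice of localised interventions at the others, one reads off which outputs can influence which inputs, in the spirit of the causal-decomposition theorems for unitary channels and of the cyclic quantum causal models of \cite{barrett2021cyclic}. From this one builds a routed graph with a vertex for $P$, for $F$, and for each $A_i$; arrows whose indices label the sectors; and, where the sectorial bookkeeping at a node requires combining several incoming bits into a new one (as the vote-counting stations $X, Y, Z$ do in the Lugano construction), auxiliary ``hub'' vertices whose routes implement the relevant Boolean functions. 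The routes at the $A_i$ are taken as liberal as the extracted constraints allow, and one checks that $S$ is recovered by fleshing the graph out --- plugging the local pieces of $S$ (which now follow the routes, by construction of the sectorisation) into the corresponding nodes.

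The crux is to prove that this routed graph is valid. Bi-univocality should be a consequence of unitarity: because $S$ is reversible, which-branch information can be neither destroyed (forward univocality) nor created ex nihilo (reverse univocality), so the choice relation and its adjoint are functions. The genuinely hard part --- and the main obstacle --- is weak loops, i.e.\ that every loop in the branch graph is monochromatic. The intuition is that a bi-chromatic loop would encode a two-way flow of which-branch information, i.e.\ a grandfather-type feedback; one would want to turn this into an explicit assignment of localised unitaries on which $S$ fails to be trace-preserving, contradicting the hypothesis that $S$ is a valid process. But making this work in full generality requires first arguing that the chosen sectorisation is ``fine enough'' that spurious loops do not appear --- so the real theorem to prove is that some good choice of sectorisation simultaneously makes bi-univocality hold and confines all loops to a single colour. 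Establishing this uniformly, rather than case-by-case as for the four worked examples, is exactly the gap between the present state of the art and the conjecture.

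A possibly more tractable route is strong induction on the number of intermediate parties, using known structural results. For processes realising (possibly dynamical) coherent control of causal order one can translate the explicit construction of \cite{wechs2021quantum} into a routed graph, whose branch graph one checks to be loop-free. For the remaining, causal-inequality-violating processes, one can appeal to the purification of classical processes \cite{araujo2017purification} and to the combinatorial ``process-function'' description of classical reversible processes, which should map onto routed graphs with green loops. If every unitary process can, after restriction to suitable sectors, be decomposed into pieces of these two types glued along a branching structure, the induction closes; but proving that such a decomposition always exists is itself a substantial open problem, and is arguably equivalent to the conjecture.
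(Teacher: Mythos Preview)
This statement is a \emph{conjecture}, not a theorem: the paper explicitly leaves it open and offers no proof. There is therefore nothing to compare your attempt against. The paper only gestures at supporting evidence --- the bipartite case (via \cite{barrett2021cyclic, yokojima2021consequences}) and the link to the causal-decomposition conjecture of \cite{lorenz2020} --- without claiming these close the gap.

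To your credit, your write-up is honest about this: you repeatedly flag that the hard step (finding a sectorisation that simultaneously yields bi-univocality and weak loops) is ``exactly the gap between the present state of the art and the conjecture'', and that your inductive fallback is ``arguably equivalent to the conjecture''. So what you have submitted is not a proof but a research outline, and you say so. That outline is broadly reasonable and overlaps with the paper's own remarks: your appeal to causal-decomposition results is precisely what the paper points to as the likely route forward, and your ``easy reduction'' for the unitarily-extendible clause is correct. But nothing here constitutes a proof, and a couple of your heuristic claims are shakier than you suggest --- in particular, the assertion that bi-univocality ``should be a consequence of unitarity'' is not obviously true for an \emph{arbitrary} sectorisation (the paper's own counterexample in Section~2.5 shows a unitary-looking routed setup that fails univocality), so the real work is in choosing the sectorisation, not in invoking reversibility after the fact.

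In short: there is no proof in the paper to match, your proposal is not a proof either, and you correctly identify where the open problem lies.
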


Another fact pointing towards this conjecture is that bipartite unitarily extendible processes were recently proven \cite{barrett2021cyclic, yokojima2021consequences} to reduce to the coherent control of causal orders analogous to the quantum switch, which can therefore be written as valid routed circuits. Our conjecture can be thought of as a tentative generalisation of this result to $\geq 3$-partite processes. We expect that significant progress in this direction could be obtained if one were to prove another conjecture, that of the existence of causal decompositions of unitary channels in the general case \cite{lorenz2020}. 

This leads us to a limitation of our current results: they offer no systematic way to \textit{decompose} a known process into a routed circuit (except, in some cases, through a careful conceptual analysis of it). An important subject for future work, deeply related to the above conjecture, would be to come up with ways to supplement the bottom-up procedure presented in this paper with a top-down procedure, in which one would start with a `black-box' unknown process and extract a way of writing it as the fleshing-out of a valid routed graph.

Another limitation is that this paper had no concern for the \textit{physicality} of processes, i.e.\ for the question of whether and how they could be implemented in practice, using either standard or exotic physics. This was a conscious choice on our part, as we wanted to rather focus on the question of their logical \textit{conceivability}. However, we expect that our way of dealing with the latter question might, through the clarifications and the diagrammatic method it provides, pave the way for work on possible implementations or on physical principles constraining them.

An important consequence of our work is that it shows how at least a large class of valid quantum processes can be derived from the sole study of \textit{possibilistic} structures, encapsulated by routes. These possibilistic structures impose constraints on quantum operations, but there is nothing specifically quantum about them; they could be interpreted as constraints on classical operations as well\footnote{\changes{Note that \textit{possibilistic} structures also occur in the works of Abramsky and Brandenburger \cite{abramsky2011sheaf}, where they can be viewed as the support of \textit{probabilistic} structures concerning measurement outcomes -- see also \cite{fritz2009possibilistic}. In these works, the possibilistic structures pertain to fundamentally quantum phenomena, namely the possibilities for joint measurement outcomes of multipartite quantum states. This is in contrast to the possibilistic structures in our work, which simply denote classical possibilities in the flow of information.}}. This adds to the idea, already conveyed by the discovery of classical exotic processes, that the logical possibility for indefinite causal order does not always arise from the specifics of quantum structures. If our above conjecture turned out to be true, this would warrant this conclusion for any unitary and unitarily extendible process, whose quantum nature is nothing more than coherence between the branches of an equally admissible classical process.

By contrast, some non-unitarily extendible processes, such as the OCB process \cite{oreshkov2012quantum, araujo2017purification}, appear to feature a more starkly quantum behaviour in their display of indefinite causal order. This can be seen for example in the fact that the violation of causal inequalities by the OCB process relies on a choice between the use of maximally incompatible bases on the part of one agent. A more quantitative clue is the fact that the OCB process saturates a Tsirelson-like bound on non-causal correlations \cite{Brukner2014, Liu2024}. It is therefore unlikely that such processes could be built using our method, as routes do not capture any specifically quantum (i.e.\ linear algebraic) behaviour. In particular, the display of a unitary process with OCB-like features would probably provide a counter-example to our conjecture.

In the course of the presentation of the framework and of the main examples, we commented on the fact that the presence of (necessarily weak) loops in the branch graph were associated with the violation of causal inequalities: processes showcasing (possibly dynamical) coherent control of causal order, and therefore incapable of violating causal inequalities \cite{wechs2021quantum} -- such as the switch, the 3-switch and the Grenoble processes -- featured no such loops; while the Lugano process, which does violate causal inequalities, had loops in its branch graph. This leads us to the following conjecture.

\begin{conjecture}
The skeletal supermap corresponding to a routed graph violates causal inequalities if and only if its branch graph features (necessarily weak) loops.
\end{conjecture}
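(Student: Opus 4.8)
The plan is to prove the two implications separately, exploiting the fact --- emphasised throughout the paper --- that the branch graph records only \emph{possibilistic} (relational, Boolean) data, so that both directions reduce to statements about classical processes which can then be lifted to the quantum skeletal supermap. For the ``only if'' direction (no loops $\Rightarrow$ no causal-inequality violation), I would first show that if the branch graph of a valid routed graph $\GalaN$ has no directed loop at all, then its skeletal supermap is a (possibly dynamical) coherent control of causal orders in the sense of \cite{wechs2021quantum}: an acyclic branch graph admits a topological ordering of its branches, and restricting this order to the branches of $A, B, C, \dots$ for each \emph{consistent assignment} yields a definite order of the agents, while the green dashed arrows exhibit \emph{which} bifurcation choices (at the Past or at the agents) select the order --- precisely the data of a circuit with (dynamical) control of causal order. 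One then invokes the theorem of \cite{branciard2015simplest, wechs2021quantum} that such processes cannot violate causal inequalities. The residual work here is bookkeeping: verifying that the decomposition extracted from the acyclic branch graph meets the precise axioms of \cite{wechs2021quantum} (in particular the compatibility conditions on the successive control spaces), which should follow from bi-univocality together with Theorem~\ref{thm: Main}.

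The ``if'' direction (a weak loop $\Rightarrow$ causal-inequality violation) is the substantive content, and I would model the argument on the Lugano analysis of Section~\ref{sec:lugano}. By the adjoint (time-reversal) symmetry of the framework it suffices to treat a monochromatic \emph{green} loop $N_1^{\alpha_1} \to \cdots \to N_k^{\alpha_k} \to N_1^{\alpha_1}$; a red loop is handled by passing to $(\Gatop, (\la_N^\top)_N)$. Such a loop says that, running around the cycle, the \emph{status} of each branch functionally depends on a bifurcation choice made in the previous one. I would (i) extract from the routes along the cycle the induced Boolean ``voting''-type process on the agents the cycle passes through; (ii) argue it is ``genuinely cyclic'', i.e.\ not a convex mixture of fixed-order processes, using that the green loop prevents any agent on it from being unconditionally ``last''; (iii) invoke or reconstruct, \`a la Baumeler--Wolf \cite{baumeler2014maximal, baumeler2016space, Baumeler_2021}, the fact that a genuinely cyclic deterministic classical process of this shape admits a causal-inequality-violating strategy; and (iv) lift the violation to the quantum skeletal supermap by fleshing out the routed graph with the corresponding permutation unitaries and feeding computational-basis states, so that the classical violation becomes a special case of the quantum behaviour.

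The main obstacle is steps (ii)--(iii) above: showing that \emph{every} weak loop --- not just the one appearing in the Lugano graph --- yields a genuinely cyclic, hence causal-inequality-violating, process. A weak loop could a priori be ``inert'': it might pass only through branches that never carry nontrivial signalling, or it might close up among the branches of only one or two agents, in which case there is no multipartite causal inequality to violate (bipartite unitarily extendible processes being harmless by \cite{barrett2021cyclic, yokojima2021consequences}). What is needed is a lemma that a monochromatic loop in a bi-univocal routed graph always induces a nontrivial cyclic \emph{signalling} structure among at least three distinct agents, of a type covered by the classical classification of \cite{baumeler2016space, Baumeler_2021}; establishing that bi-univocality together with a weak loop excludes all the degenerate ``inert'' configurations is where I expect the real difficulty to lie --- and it is also where a counterexample to the conjecture, if one exists, would most plausibly be found.
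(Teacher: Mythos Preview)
The statement you are attempting to prove is a \emph{conjecture} in the paper, not a theorem: the paper does not prove it and explicitly presents it as an open problem whose resolution ``would unlock a remarkable correspondence between, on the one hand, the structural features of processes, and, on the other hand, their operational properties.'' There is therefore no proof in the paper to compare your proposal against.

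Viewed as a proof strategy for an open problem, your ``only if'' direction is plausible but not complete: showing that an acyclic branch graph forces the skeletal supermap into the class of quantum circuits with (dynamical) control of causal order in the precise sense of \cite{wechs2021quantum} is itself a nontrivial structural claim, and the paper gives no indication that this follows straightforwardly from bi-univocality and Theorem~\ref{thm: Main}. For the ``if'' direction, you have correctly identified the crux and, in effect, conceded that you do not know how to carry it out: the existence of a monochromatic loop in the branch graph is a purely possibilistic fact about the choice function $\Lambda_\GalaN$, and there is no general argument in the paper (nor in the references you cite) guaranteeing that such a loop always translates into a causal-inequality-violating strategy. Your worry about ``inert'' loops is exactly right --- a green loop might involve branches that never all happen together, or might close among branches of fewer than three distinct nodes --- and until one proves a lemma ruling these out (or finds the counterexample you mention), the conjecture remains open. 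In short, your proposal is a reasonable outline of where a proof \emph{might} go, but it is not a proof, and the paper does not claim to have one either.
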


Proving this conjecture would unlock a remarkable correspondence between, on the one hand, the structural features of processes, and, on the other hand, their operational properties. An interesting question is how this would connect to (partial) characterisations of causal inequalities-violating processes via their causal structure \cite{TselentisInPrep}.

Our work facilitates a transition from a paradigm of defining processes with indefinite causal order one by one and checking their consistency by hand, to one of generating large classes of such processes from the study of elementary graphs, with their consistency baked in. In that, it follows the spirit of Ref.\ \cite{wechs2021quantum}, with more emphasis on the connectivity of processes and on the formal language with which one can describe the consistent ones. Another difference is that the framework presented here also allows us to build at least some of the unitary processes that violate causal inequalities \cite{branciard2015simplest}. \changes{It is important to note that of course, given the matrix form of a purported process in a particular basis, it may or may not be easier to verify if it is a valid process compared to being given the routed graph. However, it is the ability to check the validity for large classes of related processes at once, including processes with arbitrary large dimensions, which lies at the heart of the advantages provided by our method.}

A natural application would be to build and study new exotic processes using our framework; we leave this for future work.\footnote{This has in fact been achieved since the first appearance of this paper as a preprint: in \cite{mejdoub2025unilateraldeterminationcausalorder}, a process built using this paper's methods was displayed, featuring a novel unilateral determination of causal order.} More generally, the fact that our rules for validity only rely on the study of graphs decorated with Boolean matrices opens the way for a systematic algorithmic search for instances, using numerical methods.

A final feature of our framework is how, through the use of graphical methods and meaningful principles, it makes more intelligible, and more amenable to intuition, the reasons why a process can be both cyclic and consistent -- a notoriously obscure behaviour, especially in the case of processes violating causal inequalities. Our two rules for validity, however, are still high-level; further work is needed to investigate their structural implications. This could eventually lead to a reasoned classification of the graphs that satisfy them, and therefore of (at least a large class of) exotic processes.

\begin{acknowledgments}
It is a pleasure to thank Alastair Abbott, Giulio Chiribella, Ognyan Oreshkov, Nicola Pinzani, Eleftherios Tselentis, Julian Wechs, Matt Wilson, and Wataru Yokojima for helpful discussions and comments. AV also thanks Alexandra Elbakyan for her help to access the scientific literature. AV is supported by the EPSRC Centre for Doctoral Training in Controlled Quantum Dynamics. HK acknowledges funding from the UK Engineering and Physical Sciences Research Council (EPSRC) through grant EP/R513295/1. This publication was made possible through the
support of the grant 61466 `The Quantum Information Structure of Spacetime (QISS)' (qiss.fr) from the John Templeton
Foundation. The opinions expressed in this publication
are those of the authors and do not necessarily reflect the views of the John Templeton Foundation.
\end{acknowledgments}

\bibliographystyle{utphys}
\bibliography{refs}

\appendix
\section{The relationship between the supermap and process matrix representations} \label{app: process matrices and supermaps}

Two equivalent but distinct mathematical frameworks are in use in the indefinite causal order literature to represent higher-order processes, stemming from two independent lines of work: one is that of supermaps \cite{chiribella2008transforming,chiribella2013quantum} (also called superchannels), and the other is that of process matrices \cite{oreshkov2012quantum, araujo2017purification} (also called W-matrices). This can lead to some confusion. In this Appendix, we spell out the equivalence between the two pictures, in order to help readers more accustomed to the process matrix picture to translate our results and concepts from the supermap picture, that we use in this paper.

In broad terms, supermaps and process matrices are equivalent mathematical representations of a same higher-order process, connected by the Choi-Jamiołkowski (CJ) isomorphism. What can add to the confusion is also that they stem from different conceptual points of view on the situations being modelled, and that the equivalence between these points of view might not be obvious at first sight. We will thus start with a conceptual discussion, before spelling out the mathematical equivalence. We will then further comment on how super\textit{unitaries}, which are the focus of this paper, can be translated to super\textit{channels}: the jump is simply the standard one between the linear representation of pure quantum theory and the completely positive representation of mixed quantum theory.

\subsection{At the conceptual level}
The point of superchannels is to model higher-order transformations, mapping channels to channels, in the same way that channels map states to states. More precisely, in analogy with the fact that channels can be characterised as the only linear mappings $\cc \in \Lin \left[ \Lin(\ch_A^\inn) \to \Lin(\ch_A^\out) \right]$ that preserve all quantum states -- including quantum states on an extended system $\rho \in \Lin[\ch_A^\inn \otimes \ch_X]$ --, superchannels are characterised \cite{chiribella2008transforming} as the linear mappings 

\be \cs \in \Lin \left[ \Lin \left[ \Lin(\ch_A^\inn) \to \Lin(\ch_A^\out) \right] \to \Lin \left[ \Lin(\ch_P) \to \Lin(\ch_F) \right] \right] \ee
that preserve all quantum channels -- including quantum channels on an extended system $\cc \in \Lin \left[ \Lin(\ch_A^\inn \otimes \ch_X) \to \Lin(\ch_A^\out \otimes \ch_Y) \right]$. Moreover, \textit{multipartite} superchannels \cite{chiribella2009beyond, chiribella2013quantum} can act on pairs, or generally tuples, of channels, mapping them to one `global' channel\footnote{More precisely, bipartite superchannels were originally defined as acting on the larger space of all non-signalling channels on the tensor product of their two slots. However, it was proven at the same time that the well-defined superchannels on pairs of channels are exactly the same ones as superchannels on non-signalling channels, so we can overlook this difference.}. The conceptual idea is thus to combine `Alice's channel' and `Bob's channel' into a larger channel; it stems from an emphasis on a \textit{computational} picture, focused on the study of architectures for quantum computation.

Another line of research, developed independently, adopts an \textit{operational} picture, insisting instead on the idea of local agents performing quantum measurements -- and crucially, getting  classical outcomes \cite{oreshkov2012quantum}. Therefore, rather than on a notion of combining operations, it focuses on the task of computing joint probability distributions for these local outcomes. This is where process matrices come in naturally: taking $\cm_i \in \Lin[\Lin[\ch_A^\inn] \to \Lin[\ch_A^\out]]$ as the CP map corresponding to Alice obtaining outcome $i$, and $M_i \in \Lin[\ch_A^\inn \otimes \ch_A^\out]$ as its CJ representation (see below for its mathematical definition) -- and similarly taking $N_j$ for Bob obtaining the outcome $j$ --, one can write the joint probability compactly as

\be \label{eq: proba process matrix} \cp(i,j) = \Tr \left[ (M_i^T \otimes N_j^T) \circ W  \right] \, , \ee
where $W \in \Lin[\ch_A^\inn \otimes \ch_A^\out \otimes \ch_B^\inn \otimes \ch_B^\out] $ is the \textit{process matrix}, which one asks to yield well-defined probabilities, through (\ref{eq: proba process matrix}), for any choice of measurements on Alice and Bob's parts.

In order to allow for a notion of purification, the process matrix formalism was then extended \cite{araujo2017purification} to model general higher-order operations, with $W$ now also acting on a global past $P$ and a global future $F$, and the RHS of (\ref{eq: proba process matrix}) being taken to be a partial trace on all other systems, so that the LHS yields (the CJ representation of) a quantum operation $P \to F$. One can now see how this gets us closer, at least conceptually, to the notion of a superchannel. The original process matrices as defined in \cite{oreshkov2012quantum} can then be understood as akin to superchannels with a trivial output.

One might, however, worry that this overlooks the key conceptual difference: that process matrices relate not only channels, but also the probabilities of measurement outcomes. Shouldn't that be more general than superchannels? The key idea to understand why this worry is, in fact, unwarranted, is the fact that the obtaining of any measurement outcome can equivalently be modelled as the implementation of a deterministic channel acting on the system at hand and on an ancilla, together with a measurement outcome having been obtained by subsequently measuring the ancilla. In this way, one can recover the probabilities for measurements in the superchannel picture as well.

Before we turn to the mathematical equivalence, let us briefly comment on the respective strengths and weaknesses of the two representations. The main advantage of process matrices is the ability, in an operational picture, to compute joint probabilities in a straightforward and compact way, via the Hilbert-Schmidt inner product of (\ref{eq: proba process matrix}). However, this strength becomes a weakness once one is interested in higher-order processes with a non-trivial output: because the process matrix is then a `flattened' CJ representation of the process -- i.e., an operator acting indistinctly on all the input and output Hilbert spaces involved, e.g.\ on $A^\inn \otimes A^\out \otimes B^\inn \otimes B^\out \otimes P \otimes F$ --, it smears out the distinction i) between the inputs and outputs of local operations (e.g.\ between $A^\inn$ and $A^\out$), and ii) between the inputs and the output of the higher-order transformation (e.g.\ between $A^\inn$ and $A^\out$, which correspond to one of the input channels, and $P$ and $F$, which correspond to the output channel).

This flattening is the reason why the connectivity of a higher-order process is particularly difficult to parse in a process matrix: identities between systems, for instance, have to be represented not by identity operators but by maximally entangled states. Similarly, the tracing-out move of (\ref{eq: proba process matrix}) lends itself badly to intuition and generally messes up the input/output structure. By contrast, the superchannel's type, as a map ${\rm Chan}(A^\inn \to A^\out) \times {\rm Chan}(B^\inn \to B^\out) \to {\rm Chan}(P \to F)$, neatly encodes the different roles of the different spaces, and facilitates intuitions about the connectivity. In particular, as we are especially interested in a faithful and direct representation of the connectivity of processes, we found the superchannel picture more practical for the needs of the present paper.

\subsection{At the mathematical level}

The process matrix picture relies on the Choi-Jamiolkowski (CJ) representation of CP maps \cite{jamiolkowski1972linear, choi1975completely}, which can be defined as follows. Consider a CP map $\cm_A: A^{\rm{in}} \rightarrow A^{\rm{out}}$. We make a copy of the input system $A^{\rm{in}}$, and consider the (unnormalised) maximally entangled state $\ket{\Phi^+}=\sum_i{\ket{ii}}$ on $A^{\rm{in}} \otimes A^{\rm{in}}$. The CJ representation $M_A$ of $\cm_A$ is then the positive operator on $A^{\rm{in}} \otimes A^{\rm{out}}$ obtained by feeding one half of this entangled state into $\cm$:
\begin{equation}
    M_A := (\ci \otimes \cm_A)\ket{\Phi^+}\bra{\Phi^+}
\end{equation}

Process matrices were originally defined as operators mapping CJ representations of CP maps to probabilities via (\ref{eq: proba process matrix}).
In the bipartite case, they were therefore required to satisfy:
\begin{equation} \label{originalpms}
\begin{split}
     W \in \Lin[\ch_A^\inn \otimes \ch_A^\out \otimes \ch_B^\inn \otimes \ch_B^\out] \\
     W \geq 0 \\
     {\rm Tr} \left[(M^T  \otimes N^T) \circ W \right] =1  \ \ \ \ \ \forall M\forall N
\end{split} 
\end{equation}
where $M$ and $N$ can be CJ matrices for any pair of \textit{channels} for Alice and Bob. The positivity requirement buys us positive probabilities; the last requirement ensures that our probability distributions are normalised.

In \cite{araujo2017purification}, the definition of process matrices was extended so that they output a CJ matrix for a CP map from a `past' system $P$ to a `future' system $F$. In the bipartite case, the extended process matrix $W \in \Lin[\ch_P \otimes \ch_A^\inn \otimes \ch_A^\out \otimes \ch_B^\inn \otimes \ch_B^\out \otimes \ch_F]$ maps CJ matrices $M$ and $N$ to a CJ matrix $G := {\rm Tr}_{A^\inn A^\out B^\inn B^\out}\left[(M^T  \otimes N^T) \circ W \right] \in \Lin[\ch_P \otimes \ch_F]$. Now, rather than requiring that we map CP maps to positive and normalised probabilities, we need to require that we map CP maps to CP maps, and channels to channels. This is guaranteed by the following conditions:
\begin{equation} \label{extendedpms}
\begin{split}
     W \in \Lin[\ch_P \otimes \ch_A^\inn \otimes \ch_A^\out \otimes \ch_B^\inn \otimes \ch_B^\out \otimes \ch_F] \\
     W \geq 0 \\
     {\rm Tr}_{A^\inn A^\out B^\inn B^\out} \left[(M^T  \otimes N^T) \circ W \right] =G  \ \ \ \ \ \forall M\forall N
\end{split} 
\end{equation}
where $M$ and $N$ represent any channels for Alice and Bob, and we require that $G$ represents a channel from $P$ to $F$. The definitions for the original and the extended process matrices generalise in an obvious way to the multipartite case.

The original process matrices are special cases of these more general process matrices, in which the global `past' and `future' systems $P$ and $F$ are one-dimensional, since in this case probabilities are CP maps and the number 1 is a channel. On the other hand, when one considers any particular state-preparation at $P$, and traces out $F$, any of these new, extended process matrices gives rise to a process matrix as originally defined in (\ref{originalpms})\cite{wechs2021quantum}.

We now demonstrate the equivalence of the extended process matrices and superchannels. More precisely, we show that every extended process matrix uniquely defines a valid superchannel, and vice versa. A bipartite process matrix $W$ defines a superchannel $\cs$ in the following way:
\begin{equation}
\cs(\cm, \cn) := {\rm Choi}^{-1} \left( {\rm Tr}_{A^\inn A^\out B^\inn B^\out}\left[(M^T \otimes N^T) \circ W  \right]  \right)
\end{equation}
where $\cm: \tilde{A}^\inn \otimes A^\inn \rightarrow \tilde{A}^\out \otimes A^\out$ is Alice's channel, which also acts on ancillas, and $M$ is its CJ representation, and similarly for $\cn$ and $N$. 

One might initially worry that $\cs$ need not always be a superchannel, since the extended process matrices were only defined with respect to input CP maps without ancillas, but a superchannel must also preserve channels with ancillas. However, the positivity of $W$ ensures that $G \geq 0$ where
\begin{equation}
\begin{split}
    G:&={\rm Tr}_{A^\inn A^\out B^\inn B^\out}\left[(M^T \otimes N^T) \circ W \right] \\
    & \in \Lin[\ch_P \otimes \ch_{\tilde{A}}^\inn \otimes \ch_{\tilde{B}}^\inn  \otimes \ch_F \otimes \ch_{\tilde{A}}^\out \otimes \ch_{\tilde{B}}^\out]
\end{split}
\end{equation}
A positive CJ matrix always represents a CP map, meaning that ${\rm Choi}^{-1}(G)$ is indeed a CP map from the past and ancillary inputs, to the future and ancillary outputs. Then the last condition in (\ref{extendedpms}) ensures that $\cs$ maps channels with ancillas to other channels, meaning that $\cs$ is indeed a superchannel.

To see how a bipartite superchannel $\cs$ defines a process matrix $W$, we suppose that Alice and Bob both insert swap channels into the superchannel. The process matrix is then the CJ representation of the resulting channel:
\begin{equation}
    W := {\rm Choi}\left(\cs(\texttt{SWAP}, \texttt{SWAP})\right)
\end{equation}
The positivity of $W$ follows from the complete positivity of the channel $\cs(\texttt{SWAP}, \texttt{SWAP})$. One can show that the mapping on (CJ representations of) channels provided by $W$ is the same as the mapping provided by $\cs$
\begin{equation}
    {\rm Tr}_{A^\inn A^\out B^\inn B^\out}\left[(M^T \otimes N^T) \circ W \right]= G = {\rm Choi} (\cs(\cm, \cn))
\end{equation}
meaning the last requirement of (\ref{extendedpms}) will also be satisfied.

Rather than superchannels, this work concentrates on what we call \textit{superunitaries} -- that is, linear mappings from a set of unitary operators to another unitary operator. To connect these to process matrices, we note that every superunitary uniquely defines a `unitary superchannel' -- that is, a superchannel that always returns a unitary channel when you feed it a set of unitary channels. Given a superunitary $S$, we define a unitary superchannel in an obvious way:
\begin{equation}
    \cs(\cu, \cv) := \left[S(U, V) \right](\cdot) \left[S(U, V) \right]^\dag
\end{equation}
That is, the action of $\cs$ on the \changes{unitary} channels  $\cu:=U(\cdot)U^\dag$ and $\cv:=V(\cdot)V^\dag$ \changes{(including unitary channels that additionally act on local ancillary systems)} is just the channel corresponding to the action of $S$ on $U$ and $V$.
The action of this superchannel on more general channels can then be calculated using the Stinespring dilation. Conversely, any unitary superchannel defines a superunitary, up to an irrelevant global phase.

We have shown that i) superchannels are equivalent to process matrices and ii) superunitaries are equivalent to unitary superchannels (up to phase). Now it follows trivially that superunitaries are equivalent to unitary process matrices -- that is, the process matrices that map unitary channels (possibly with ancillas) to unitary channels. As proven in \cite{araujo2017purification}, these are precisely the process matrices that can be written in the form $W=\ket{U_W}\bra{U_W}$ where 
\begin{equation}
    \ket{U_W}:= (I \otimes U_W)\ket{\Phi^+}
\end{equation}
is a CJ \textit{vector} representing a unitary operator $U_W$.

\onecolumn

\section{Technical definitions and proofs} \label{app: Theorem}

\subsection{Notations}

In this appendix, we spell out our framework in fully technical terms, and prove its central theorem, ensuring that the routed graphs that satisfy the two principles define routed superunitaries.

To do this, and in particular to prove the theorem, we will have to deal with complicated operators, often acting on arbitrary numbers of factored spaces. We therefore introduced some notational techniques to avoid unnecessary clutter, which we will present and motivate here.

The first of these techniques is \textit{padding}. The idea is to have operators act on spaces larger than the ones they were originally defined to act on, simply by tensoring them with identity operators. For instance, we can have $f \in \cl(\ch_A)$ act on $\ch_A \otimes \ch_B$, by considering $f \otimes I_B$. However, when -- as will often be the case in what follows -- there is an arbitrarily large number of factors, and $f$ formally only acts on an arbitrary subset of them, it becomes very heavy notationally, and of limited mathematical interest, to keep track explicitly of which identity operators we should tensor $f$ with. We will thus allow ourselves to make this procedure implicit.

The idea will then be the following: for an operator $f$, its padded version $f_\pad$ will be defined as `$f$ tensored with the identity operators required to make its action meaningful, in the context of the expression at hand'. For instance, taking $f$ to act on $\ch_A \otimes \ch_B$, and $g$ to act on $\ch_B \otimes \ch_C $, $g_\pad \circ f_\pad$ will be taken to mean $(I_A \otimes g) \circ (f \otimes I_C)$, an operator acting on $\ch_A \otimes \ch_B \otimes \ch_C$. For another example, taking $h$ acting on $\ch_C \otimes \ch_D$, the equation $g_\pad \circ f_\pad = h_\pad$ will be taken to mean $(I_A \otimes g \otimes I_D) \circ (f \otimes I_C \otimes I_D) = I_A \otimes I_B \otimes I_C \otimes h$. This notation will extend to supermaps as well: for instance, if $\cs$ is a supermap of type $(\ch_A^\inn \to \ch_A^\out) \to (\ch_P \to \ch_F)$, we will define its action on a map $f: \ch_A^\inn \otimes \ch_B^\inn \to \ch_A^\out \otimes \ch_B^\out$ as $\cs_\pad[f]$, which should be understood as $(\cs \otimes \ci)[f]$, where $\ci$ is the identity supermap on $\cl(\ch_B^\inn \to \ch_B^\out)$.

Another related technique we will use in order to avoid clutter is to disregard the ordering of factors. Indeed, factors in a given tensor product are usually regarded as being labelled by ordered lists, rather than by sets, of indices. For instance, $\ch_A \otimes \ch_B$ and $\ch_B \otimes \ch_A$ are usually regarded as different (albeit isomorphic) spaces. Accordingly, suppose we take a map $h$ over $\ch_A \otimes \ch_B \otimes \ch_C$ which decomposes as a tensor product of a map $f$ on $\ch_A \otimes \ch_C$ and a map $g$ on $\ch_B$. This fact, in the usual picture, could not be expressed as $h = f \otimes g$, as the RHS there acts on $\ch_A \otimes \ch_C \otimes \ch_B$. Rather, one should write $h = (I \otimes \texttt{swap}_{C,B}) \circ (f \otimes g) \circ (I \otimes \texttt{swap}_{B,C})$. Another feature of the standard view is that it is not possible to write $\bigotimes_{X \in \{A, B\}} \ch_X$, as this expression would leave the order of the factors ambiguous.

For the expressions we will consider, keeping with this use would force us to 1) explicitly introduce arbitrary orderings of all the sets of indices we use to label factors in tensor products, and 2) overload our expressions with swaps, in order to always place next to each other the spaces on which a given map acts. This would once again create a lot of clutter with little relevance. We will therefore abstain ourselves from that constraint, and take the view that tensor products are labelled with sets, rather than ordered lists, of indices. This will allow us to write $h = f \otimes g$ in the case described above, or to write Hilbert spaces of the form $\bigotimes_{X \in \{A, B\}} \ch_X$. The expressions we will write in this way could always be recast in the standard view, using arbitrary orderings of the sets at hand and large amounts of swaps.

One might wonder whether either padding or disregarding the ordering of factors might lead to ambiguities. In fact, such ambiguities only arise  if some Hilbert spaces in a tensor product are labelled with the same index, for instance if one is dealing with a Hilbert space like $\ch_A \otimes \ch_B \otimes \ch_A$. This is why we will carefully avoid such situations, by only ever tensoring different -- although possibly isomorphic -- Hilbert spaces. For instance, if we need a tensor product of $\ch_A$ with itself, we will write it as $\ch_A \otimes \ch_{A'}$, with $\ch_A \cong \ch_{A'}$.

Note that the same techniques and notations will be used for relations.

\subsection{Technical definitions on supermaps}

\begin{definition}[Superrelation]
A \emph{superrelation} of type $\bigtimes_{N} (K_N \to L_N) \to (P \to F)$, where $P$, $F$, the $K_N$'s and the $L_N$'s are all sets, is a relation $\cs^\Rel:  \Rel(\bigtimes_N K_N, \bigtimes_{N'} L_{N'}) \to \Rel(P,F)$. With a slight abuse of notation, we denote $\cs^\Rel[(\laN)_N]:= \cs^\Rel[\bigotimes_N \laN]$.
\end{definition}

\begin{definition}[Supermap]
A \emph{supermap} of type $\bigtimes_{N} (\ch^\inn_N \to \ch^\out_N) \to (\ch_P \to \ch_F)$, where $\ch_P$, $\ch_F$, the $\ch^\inn_N$'s and the $\ch^\out_N$'s are all finite-dimensional Hilbert spaces, is a linear map $\cs:  \cl(\bigotimes_N \ch^\inn_N, \bigotimes_{N'} \ch^\out_{N'}) \to \cl(\ch_P,\ch_F)$. With a slight abuse of notation, we denote $\cs[(f_N)_N]:= \cs[\bigotimes_N f_N]$.
\end{definition}

\begin{definition}[Superunitary]
A \emph{superunitary} of type $\bigtimes_{N} (\ch^\inn_N \to \ch^\out_N) \to (\ch_P \to \ch_F)$ is a supermap of the same type such that, for any choice of ancillary input and output spaces $\ch^{\inn,\anc}_N$ and $\ch^{\out,\anc}_N$ at every $N$, and any choice of unitary maps $U_N : \ch^\inn_N \otimes \ch^{\inn,\anc}_N \to \ch^\out_N \otimes \ch^{\out,\anc}_N$ at every $N$, one has:

\be \label{eq:superunitary} \cs_\pad[(U_N)_N] \textrm{ is a unitary from } \ch_P \otimes \left(\bigotimes_N \ch^{\inn,\anc}_N \right) \textrm{ to } \ch_F \otimes \left(\bigotimes_N \ch^{\out,\anc}_N \right) \, . \ee
\end{definition}

\begin{definition}[Routed Supermap]
A \emph{routed supermap} of type $\bigtimes_{N} (\ch^\inn_N \overset{\laN}{\to} \ch^\out_N) \to (\ch_P \overset{\mu}{\to} \ch_F)$, where the $\ch^\inn_N$'s and the $\ch^\out_N$'s are sectorised finite-dimensional Hilbert spaces and the $\laN$'s are relations $\Indin_N \to \Indout_N$, is a supermap: i) which is restricted to only act on the maps of $\cl(\bigotimes_N \ch^\inn_N, \bigotimes_{N'} \ch^\out_{N'})$ that follow the route $\bigtimes_N \lambda_N$; and ii) whose output always follows the route $\mu$.

We say it is superunitary if it satisfies (\ref{eq:superunitary}) when acting on routed unitaries $U_N : \ch^\inn_N \otimes \ch^{\inn,\anc}_N \overset{\laN}{\to} \ch^\out_N \otimes \ch^{\out,\anc}_N$ that follow the routes.
\end{definition}

\subsection{Technical presentation of the framework}

\begin{definition}[Indexed graph]
An \emph{indexed graph} $\Ga$ consists of

\begin{itemize}
    \item a finite set of nodes (or vertices) $\Nodes$;
    \item a finite set of arrows (or edges) $\Arr = \Arr^\inn \sqcup \Arr^\int \sqcup \Arr^\out $;
    \item functions $\head : \Arr^\inn \sqcup \Arr^\int \to \Nodes$ and $\tail : \Arr^\int \sqcup \Arr^\out \to \Nodes$;
    \item for each arrow $A \in \Arr$, a finite set of indices $\Ind_A$, satisfying: $A \not\in \Arr^\int \implies \Ind_A$ is trivial (i.e. is a singleton);
    \item a function $\dim: \bigsqcup_{A \in \Arr} \Ind_A \to \mathbb{N}^*$.
\end{itemize}
\end{definition}

We further define $\Indin_\Ga := \bigtimes_{A \in \Arr^\inn} \Ind_A$, $\Indout_\Ga := \bigtimes_{A \in \Arr^\out} \Ind_A$, and for any $N \in \Nodes$: $\inn(N) := \head\inv (N)$, $\out(N) := \tail\inv (N)$, $\Indin_N := \bigtimes_{A \in \inn(N)} \Ind_A$ and $\Indout_N := \bigtimes_{A \in \out(N)} \Ind_A$.

To prepare for the interpretation of this graph in terms of complex linear maps, we also define the following sectorised Hilbert spaces: for all $A \in \Arr$, $\ch_A := \bigoplus_{k \in \Ind_A} \ch_A^k$, where $\ch_A^k \cong \mathbb{C}^{\dim(k)}$; $\ch_P := \bigotimes_{A \in \Arr^\inn} \ch_A = \bigoplus_{\Vec{k} \in \Indin_\Ga} \bigotimes \ch_A^{k_A}$ and $\ch_F := \bigotimes_{A \in \Arr^\out} \ch_A = \bigoplus_{\Vec{k} \in \Indout_\Ga} \bigotimes \ch_A^{k_A}$; and for all $N \in \Nodes$, $\chinN := \bigotimes_{A \in \inn(N)} \ch_A = \bigoplus_{\Vec{k} \in \Indin_N} \bigotimes \ch_A^{k_A}$ and $\choutN := \bigotimes_{A \in \out(N)} \ch_A = \bigoplus_{\Vec{k} \in \Indout_N} \bigotimes \ch_A^{k_A}$.

\begin{definition}[Branched relation]
A relation $\la: K \to L$ is said to be \emph{branched} if, when seen as a function $K \to \cp(L)$, it satisfies

\be \forall k,k' \in K, \la(k) \cap \la(k') = \la(k) \textrm{ or } \emptyset \, , \ee

i.e. $\la(k)$ and $\la(k')$ are disjoint or the same.
\end{definition}

Note that $\la$ is branched if and only if $\la^\top$ is branched. Branched relations define compatible, non-complete partitions of their domain and codomain, corresponding to \textit{branches}. Formally, a branch $\al$ of the branched relation $\la: K \to L$ is a pair of non-empty sets $K^\al \subseteq K$ and $L^\al \subseteq L$ such that $\la(K^\al) = L^\al$ and $\la^\top(L^\al) = K^\al$. We denote the set of branches of $\la$ as $\Bran(\la)$. Note that the partitions are not complete, i.e. $\bigsqcup_{\al \in\Bran(\la)} K^\al$ might not be equal to $K$ (and the same goes for the outputs); the discrepancy corresponds to the indices that are sent by $\la$ to the empty set, as we consider these indices to be part of no branch at all.

\begin{definition}[Routed graph]
A \emph{routed graph} $(\Ga, (\laN)_{N \in \Nodes})$ consists of an indexed graph $\Ga$ and, for every node $N$, of a branched relation $\la : \Indin_N \to \Indout_N$, called the route for node $N$.
\end{definition}

We will write routed graphs as $(\Ga, (\laN)_{N} )$ for brevity. We denote elements of $\Bran(\laN)$ as $\Nal$, and denote the set of input (resp. output) indices of $\Nal$ as $\IndinNal \subseteq \Indin_N$ (resp. $\IndoutNal \subseteq \Indout_N$). We also define $\Bran_{(\Ga,(\laN)_N)} := \bigsqcup_{N \in \Nodes} \Bran(\laN)$, the set of all branches in the whole routed graph.

We will now define the notion of a branch being a \textit{strong parent} of another: this will correspond to solid arrows in the branch graph. First, we introduce the set of possible tuple of values of indices, in order to exclude inconsistent assignments of values.

\begin{definition}
We define $\PossVal$ as the subset of $\bigtimes_{A \in \Arr} \Ind_A$ defined by

\be \forall (k_A)_{A \in \Arr} \in \PossVal, \, \forall N \in \Nodes, \,\,  (k_A)_{A \in \inn(N)} \overset{\laN}{\sim} (k_A)_{A \in \out(N)} \,.\ee
\end{definition}

A tuple of values is possible if and only if for every node, it yields a input and an output values that are in the same branch.

\begin{lemma} \label{lem: mu}
Let $\veck = (k_A)_{A \in \Arr} \in \bigtimes_{A \in \Arr} \Ind_A$. $\veck \in \PossVal$ if and only if it meets the following two conditions:

\begin{itemize}
    \item $\forall N \in \Nodes$, $(k_A)_{A \in \inn(N)}$ is  in $\laN$'s practical inputs and $(k_A)_{A \in \out(N)}$ is in $\laN$'s practical outputs;
    \item denoting, for every $N$, $\mu_N^\inn(\veck)$ as the element of $\Bran_N$ such that $(k_A)_{A \in \inn(N)} \in \Indin_{N^{\mu_N^\inn(\veck)}}$, and $\mu_N^\out(\veck)$ similarly, we have: $\forall N, \mu_N^\inn(\veck) = \mu_N^\out(\veck)$.
\end{itemize}
\end{lemma}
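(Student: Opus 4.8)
The statement in question is \textbf{Lemma \ref{lem: mu}}, which characterizes membership in $\PossVal$ via two conditions: (i) every node's input and output tuples lie in the practical inputs/outputs of its route, and (ii) for every node, the branch $\mu_N^\inn(\veck)$ determined by the input values coincides with the branch $\mu_N^\out(\veck)$ determined by the output values.

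The plan is to unwind the definition of $\PossVal$ node by node, using the branched structure of each route $\laN$. Fix $\veck = (k_A)_{A \in \Arr}$ and a node $N$, and write $\veck^\inn_N := (k_A)_{A \in \inn(N)} \in \Indin_N$ and $\veck^\out_N := (k_A)_{A \in \out(N)} \in \Indout_N$. By definition, $\veck \in \PossVal$ iff for every $N$ we have $\veck^\inn_N \overset{\laN}{\sim} \veck^\out_N$, i.e.\ $\veck^\out_N \in \laN(\veck^\inn_N)$. So it suffices to prove, for a single fixed branched relation $\lambda := \laN$ and single values $x := \veck^\inn_N$, $y := \veck^\out_N$, that $y \in \lambda(x)$ holds if and only if ($x$ is a practical input of $\lambda$, $y$ is a practical output, and the branch of $\lambda$ containing $x$ equals the branch containing $y$). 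Then Lemma \ref{lem: mu} follows by conjoining this equivalence over all nodes $N$.

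First I would prove the forward direction: suppose $y \in \lambda(x)$. Then $\lambda(x) \neq \emptyset$, so $x$ is a practical input; and since $\lambda^\top(y) \ni x$, also $\lambda^\top(y) \neq \emptyset$, so $y$ is a practical output. Now I invoke the branched property: since $x$ is a practical input, it lies in some branch, meaning there is a branch $\al$ with $x \in \Indin_{N^\al}$ and $\lambda(\Indin_{N^\al}) = \Indout_{N^\al}$; by the definition of a branch (each input value of a branch is connected to exactly the output values of that branch), $\lambda(x) = \Indout_{N^\al}$, hence $y \in \Indout_{N^\al}$, so $\mu_N^\out(\veck) = \al = \mu_N^\inn(\veck)$. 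For the converse: if $x, y$ are practical and $\mu_N^\inn(\veck) = \mu_N^\out(\veck) =: \al$, then $x \in \Indin_{N^\al}$ and $y \in \Indout_{N^\al}$; again using that within a branch every input is related to every output, $y \in \lambda(x)$, i.e.\ $x \overset{\lambda}{\sim} y$. Conjoining over all $N$ gives the lemma.

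The main subtlety — rather than a real obstacle — is making precise the claim that for a branched relation, a practical input value $x$ lying in branch $\al$ satisfies $\lambda(x) = \Indout_{N^\al}$ exactly (not merely $\subseteq$), and dually $\lambda^\top(y) = \Indin_{N^\al}$ for a practical output in branch $\al$. This is essentially the content of the branchedness condition $\lambda(k)\cap\lambda(k') \in \{\lambda(k),\emptyset\}$: it forces the sets $\lambda(k)$ for $k$ ranging over practical inputs to form a partition of the practical outputs, and the branches are precisely the resulting blocks together with their preimages; so each practical input value is mapped onto the \emph{entire} output block of its branch, and well-definedness of $\mu_N^\inn$ and $\mu_N^\out$ (i.e.\ that $x$ lies in a \emph{unique} branch) is exactly the disjointness half. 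Since these facts about branched relations were already established when branches were introduced (the partition structure of $\Bran(\lambda)$), I would simply cite them; the proof of Lemma \ref{lem: mu} itself is then a short formal manipulation with no genuine difficulty.
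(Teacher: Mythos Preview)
Your proposal is correct and follows exactly the same approach as the paper, which gives only a one-line proof citing the structure of branched routes (an input value is related to an output value iff both lie in the same branch, and in particular neither is outside the practical inputs/outputs). Your write-up is simply the detailed unpacking of that sentence.
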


\begin{proof}
This derives directly from the structure of the routes: a branched route relates an input value to an output value if and only if they are both in the same branch (and in particular, are not outside of its practical in/outputs).
\end{proof}

For $\veck \in \PossVal$, we can therefore denote for every $N$ the branch $\muNveck$ equal to both $\mu^\inn_N(\veck)$ and $\mu^\out_N(\veck)$.

\begin{definition}[Strong parents]
Let $(\Ga, (\laN)_{N \in \Nodes})$ be a routed graph, and $\Nal$ and $\Mbe$ two branches in it. We define the set of arrows from $N$ to $M$ as $\Link(N,M):= \out(N) \cap \inn(M)$. We define $\LinkVal(\Nal,\Mbe)$, the set of values linking $\Nal$ to $\Mbe$, as

\be \LinkVal(\Nal,\Mbe) := \left\{ (k_A)_{A \in \Link(N,M)} | \exists (k_A)_{A \in \Arr \setminus \Link(N,M)} \textrm{ such that } \begin{cases} \mu_N\left((k_A)_{A \in \Arr} \right) = \al \\
 \mu_M\left((k_A)_{A \in \Arr} \right) = \bet \end{cases} \right\}\ee

We say that the branch $\Nal$ is \emph{not a strong parent} of the branch $\Mbe$ if at least one of the following holds:

\begin{itemize}
\item  $\Link(N,M) = \emptyset$;
\item $\LinkVal(\Nal,\Mbe) = \emptyset$;
\item $\LinkVal(\Nal,\Mbe)$ is a singleton and its unique element $(k_A)_{A \in \Link(N,M)}$ satisfies $\forall A \in \Link(N,M), \, \dim(k_A) =1 $.
\end{itemize}
\end{definition}

\begin{definition}[Adjoint of a graph] \label{def:adjoint of a graph}
If $\Ga$ is an indexed graph, its \emph{adjoint} $\Gatop$ is the indexed graph given by swapping the roles of $\Arr^\inn$ and $\Arr^\out$ and those of $\head$ and $\tail$, and leaving the rest invariant.

If $\GalaN$ is a routed graph, its adjoint is the routed graph $(\Gatop, (\laN^\top)_N)$.
\end{definition}

\begin{definition}[Skeletal superrelation of an indexed graph]
Given an indexed graph $\Ga$, its associated \emph{skeletal superrelation} is the superrelation $\SRelGa: \bigtimes_{N} (\Indin_N \to \Indout_N) \to (\Indin_\Ga \to \Indout_\Ga)$ defined by

\be \label{eq:skeletal superrelation}\SRelGa[(\laN)_N] := \Tr_{\Ind_A, A \in \Arr^\int} \left[ \bigotimes_N \laN \right] \, .\ee

Note that this is well-typed because $\bigtimes_N \Indin_N = \bigtimes_{A \in \Arr^\inn \sqcup \Arr^\int} \Ind_A$ and $\bigtimes_N \Indout_N = \bigtimes_{A \in \Arr^\out \sqcup \Arr^\int} \Ind_A$.
\end{definition}

\begin{definition}[Skeletal supermap of a routed graph]
Given a routed graph $\GalaN$, its associated (routed) \emph{skeletal supermap} is the supermap $\SGalaN$ of type $\bigtimes_{N} (\ch^\inn_N \overset{\laN}{\to} \ch^\out_N) \to (\ch_P \overset{\SRelGa[(\laN)_N]}{\to} \ch_F)$ defined by

\be \label{eq:skeletal supermap}\SGalaN[(f_N)_N] := \Tr_{\ch_A, A \in \Arr^\int} \left[ \bigotimes_N f_N \right] \, .\ee

Note that the fact that $\SGalaN[(f_N)_N]$ follows the route $\SRelGa[(\laN)_N]$ when the $f_N$'s follow the $\laN$'s is ensured by the fact that routed maps form a compact closed category \cite{vanrietvelde2021routed,wilson2021composable}.

\end{definition}

\begin{definition}[Augmented relation]
Given a relation $\laN : \Indin_N \to \Indout_N$ serving as a route for node $N$, its \emph{augmented} version is the partial function (encoded as a relation) $\laNaug : \Indin_N \times \left( \bigtimes_{\al \in \Bran(\laN)} \IndoutNal \right) \to \Indout_N \times \left( \bigtimes_{\al \in \Bran(\laN)} \HappNal \right)$ -- where $\forall \al, \HappNal \cong \{0,1\}$ -- given by

\be \label{eq:augmented relation} \laNaug(k, (l^\al)_{\al \in \Bran(\laN)}) = \begin{cases} \{ (l^\al, (\delta^\al_{\al'})_{\al \in \Bran(\laN)} ) \} &\textrm{ if } k \in \IndinNal \\ \emptyset &\textrm{ if } \forall \al, k \not\in \IndinNal \,. \end{cases}\ee

\end{definition}

\begin{definition}[Univocality] \label{def:univocality}
A routed graph $\GalaN$ is \emph{univocal} if

\be \label{eq:univocality} \SRelGapad \left[{(\laNaug)}_N \right] \textrm{ is a function.} \ee

We then note this function as $\Lambda_\GalaN$.

$\GalaN$ is \emph{bi-univocal} if both it and its adjoint $\left( \Gatop, (\laN^\top)_N \right)$ are univocal.
\end{definition}

\begin{definition}[Branch graph]
If $\GalaN$ is a bi-univocal routed graph, its \emph{branch graph} $\GaBran$ is the graph in which

\begin{itemize}
    \item the nodes are the branches of $\GalaN$, i.e. the elements of $\Bran_\GalaN$;
    \item there is a green dashed arrow from $\Nal$ to $\Mbe$ if $\Lambda_\GalaN$ features influence from $\IndoutNal$ to $\HappMbe$;
    \item there is a red dashed arrow from $\Nal$ to $\Mbe$ if $\Lambda_{\left( \Gatop, (\laN^\top)_N \right)}$ features influence from $\IndinMbe$ to $\HappNal$;
    \item there is a solid arrow from $\Nal$ to $\Mbe$ if $\Nal$ is a strong parent of $\Mbe$.
\end{itemize}
\end{definition}

\begin{definition}[Weak loops]
Let $\GalaN$ be a bi-univocal routed graph. We say that a loop in $\GaBran$ is \emph{weak} if it only contains green dashed arrows, or if it only contains red dashed arrows.
\end{definition}

\begin{theorem}[Main theorem]\label{thm:Main Theorem}
Let $\GalaN$ be a routed graph which is bi-univocal, and whose branch graph $\GaBran$ only features weak loops. Then its associated skeletal supermap $\SGalaN$ is a superunitary.
\end{theorem}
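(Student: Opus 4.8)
The plan is to show that for any choice of unitaries $U_N : \ch^\inn_N \otimes \ch^{\inn,\anc}_N \overset{\laN}{\to} \ch^\out_N \otimes \ch^{\out,\anc}_N$ following the routes, the operator $\SGalaN_\pad[(U_N)_N]$ is unitary from its practical input to its practical output space. The key conceptual move is to decompose every object according to the branch structure. Using bi-univocality, I would first establish a \emph{branch decomposition lemma}: the choice function $\Lambda_\GalaN$ partitions the set of all consistent assignments $\PossVal$ (equivalently, the set of jointly-occurring tuples of branches $(\muNveck)_N$) into \emph{sections} — maximal sets of branch-tuples that are mutually compatible — and dually for the adjoint graph. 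Concretely, fixing a tuple of bifurcation choices determines (by univocality) exactly one branch at each node, and running this over all bifurcation choices enumerates the sections; each section carries its own tensor-product-of-sectors Hilbert space, and $\ch_P$ (restricted to practical inputs) decomposes as the direct sum over sections of these spaces, and likewise $\ch_F$. The first main step is to prove that $\SGalaN_\pad[(U_N)_N]$ is block-diagonal with respect to this pair of decompositions — i.e.\ it maps the section-$s$ input space into the section-$s$ output space — which follows because the $U_N$ follow the routes (so they respect branches) and the trace in \eqref{eq:skeletal supermap} only contracts internal wires, preserving section membership.

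The second, and I expect \emph{main}, step is to show that the restriction of $\SGalaN_\pad[(U_N)_N]$ to a single section $s$ is unitary. Within a section, each node $N$ sits in a definite branch $\al_N(s)$, and the relevant piece of $U_N$ is its restriction $U_N^{\al_N(s)}$ to the sectors of that branch, which is an isometry (indeed unitary onto the branch's practical output sectors, by the routed-unitarity hypothesis combined with the trivial-sector bookkeeping — one must be careful that one-dimensional `dummy' sectors contribute only scalars of modulus one). The contraction pattern \eqref{eq:skeletal supermap} restricted to section $s$ is a trace over the internal wires of a tensor network of these branch-restricted unitaries. The plan here is to use the \emph{weak loops} hypothesis: the branch graph restricted to section $s$, with only the solid arrows retained, must be acyclic. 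The reason is that a solid-arrow cycle within a section would be a loop in $\GaBran$ that is not monochromatic-dashed, contradicting weak loops; so weak loops forces the solid part of each section's sub-branch-graph to be a DAG. Given this acyclicity, one orders the nodes of section $s$ topologically and composes the branch-restricted unitaries $U_N^{\al_N(s)}$ one at a time along the solid arrows; at each step the composition of (co)isometries along a feedforward network is again a (co)isometry, and because every wire is eventually either fed into the next node or routed to the global future $F$ (by univocality of the adjoint graph, which guarantees that no output sector is `left dangling' inconsistently), the final composite is unitary onto the section-$s$ output space. The dashed arrows, green and red, do not carry quantum information — they only record possibilistic which-branch dependence — so they are irrelevant to this linear-algebraic argument; their only role was to license, via bi-univocality, the section decomposition in the first place.

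The third step is assembly: since $\SGalaN_\pad[(U_N)_N]$ is block-diagonal over sections with each block unitary between matching section subspaces, and since the section subspaces exhaust the practical input and output spaces (here one invokes univocality to know that every practical input sector belongs to exactly one section, and adjoint-univocality for the output side), the full operator is unitary on $\ch_P \otimes \bigotimes_N \ch^{\inn,\anc}_N$ onto $\ch_F \otimes \bigotimes_N \ch^{\out,\anc}_N$ restricted to practical spaces. Finally one checks that the ancillary spaces $\ch^{\inn,\anc}_N, \ch^{\out,\anc}_N$ go along for the ride: they are not contracted by \eqref{eq:skeletal supermap}, they attach tensorially to a definite branch at $N$, and the padding convention handles them uniformly, so the argument above is unaffected. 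The hardest part is genuinely the second step — in particular, making precise the claim that weak loops implies the solid-arrow part of each section is acyclic, and handling the edge cases where a `loop' passes through one-dimensional sectors (which is exactly why the definition of strong parent excludes single trivial linking values), and then carrying out the topological-order induction cleanly in the presence of multigraph edges and of wires shared between several nodes. A careful bookkeeping of practical-input/practical-output sectors versus `formally present but impossible' sectors (the ones $\laN$ sends to $\emptyset$) will be needed throughout; this is where bi-univocality does its real work, ensuring those spurious sectors never spoil unitarity.
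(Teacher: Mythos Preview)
Your proposal has a genuine gap at its first step, and the whole argument collapses from there.

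\textbf{The section decomposition of $\ch_P$ and $\ch_F$ does not exist.} By the definition of an indexed graph, the global input and output arrows carry trivial (singleton) index sets, so $\ch_P$ and $\ch_F$ have only one sector each. There is no direct-sum decomposition of $\ch_P$ into ``section subspaces''. The sections you describe are determined by the bifurcation choices, and bifurcation choices are made \emph{by the inserted unitaries} $U_N$, not by the global input state: a generic $U_N$ following $\laN$ will map a branch's input sector into a \emph{coherent superposition} of the several output values within that branch, so the resulting global operator $\SGalaN[(U_N)_N]$ mixes sections coherently and is certainly not block-diagonal in your sense. (Think of the switch: if the unitary at the $P$ node puts the control into $\ket{0}+\ket{1}$, the two ``sections'' are superposed, and nothing in $\ch_P$ distinguishes them.) So step~1 fails, and with it steps~2 and~3.

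\textbf{How the paper actually proceeds.} The paper's proof does use a block decomposition of roughly your kind, but only as a \emph{base case}, for a very particular choice of fillings: the \emph{exchange} gates $\ex_N$, which are equipped with extra input wires $\bigotimes_\al \ch_{\Nalout}$ that literally carry the bifurcation choices. For that enlarged input space, your section decomposition does exist and the block-diagonality holds; this is where univocality is first cashed in. The real work is then an \emph{induction over branches}: the paper orders all branches using the layer structure implied by weak loops, and ``refills'' them one at a time with the actual $U_N^\al$, maintaining a battery of induction hypotheses (H1--H7) about projectors $\zeiout,\zeiin$ that track which branches ``happen'' and how they are causally controlled. Your intuition that ``solid arrows within a section are acyclic'' is correct and is part of what makes the order definable, but it is not enough: at each induction step one must show that filling a branch does not spoil unitarity, and this requires delicate causal-structure statements (H6/H7) about where $\Nal$'s inputs can be influenced from, proved separately for branches in green versus red layers --- precisely because the which-branch information \emph{is} quantum and cannot be dismissed as ``only possibilistic''. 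Your second step's topological composition cannot be carried out, because the underlying graph $\Ga$ is genuinely cyclic and the one-dimensional ``cuts'' are only valid section-by-section while the state is a superposition over sections.
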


The rest of this Appendix is dedicated to the proof of this theorem.

\subsection{Proof}

\subsubsection{Preliminary lemmas and definitions} \label{sec: preliminary}

\begin{lemma}\label{lem: simplification}
To prove Theorem \ref{thm:Main Theorem}, it is sufficient to prove that, for any valid routed graph $\GalaN$, $\SGalaN$ preserves unitarity when acting on input operations without ancillas.
\end{lemma}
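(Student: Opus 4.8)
The plan is to absorb all ancillary systems into the graph, thereby reducing the full (routed) superunitarity condition to the ancilla-free one that the lemma grants us. Fix a valid routed graph $\GalaN$, a choice of ancillary Hilbert spaces $\ch^{\inn,\anc}_N, \ch^{\out,\anc}_N$ at each node $N$, and routed unitaries $U_N : \chinN \otimes \ch^{\inn,\anc}_N \overset{\laN}{\to} \choutN \otimes \ch^{\out,\anc}_N$ following the routes (acting trivially on the ancillary factors); we must show that $(\SGalaN)_\pad[(U_N)_N]$ satisfies (\ref{eq:superunitary}). First I would build an auxiliary indexed graph $\Ga'$: take $\Ga$ and, for each $N$, adjoin one extra global-input arrow into $N$ and one extra global-output arrow out of $N$, each carrying a \emph{singleton} index set whose unique value is assigned the dimension of $\ch^{\inn,\anc}_N$, resp.\ of $\ch^{\out,\anc}_N$ (treated as a single sector). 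These new arrows are un-indexed, as required of global inputs and outputs. Equip $\Ga'$ with the same routes $\laN$, now read as relations $\Indin_N \times \{\ast\} \to \Indout_N \times \{\ast\}$; since tensoring a relation with a singleton changes neither branchedness nor the branch decomposition, $\GalaN' := (\Ga', (\laN)_N)$ is a genuine routed graph whose branches $\Nal$, together with the value-sets $\IndinNal$ and $\IndoutNal$, are canonically those of $\GalaN$.

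Next I would verify that $\GalaN'$ is valid. The new arrows are dangling at one end, so they never belong to any $\Link(N,M)$; hence $\LinkVal$ and the strong-parent relation are literally unchanged, and so are the solid arrows of the branch graph. The augmented routes of $\Ga'$ differ from those of $\Ga$ only by the trivial factors $\{\ast\}$ carried along on the non-ancillary side, so the choice relation $\SRelGapad[(\laNaug)_N]$ of $\Ga'$ equals that of $\Ga$ up to those trivial factors; in particular it is a function exactly when $\Ga$'s is, with the same pattern of functional dependence. The identical argument applies to the adjoint of $\Ga'$, which is just $(\Gatop, (\laN^\top)_N)$ equipped with analogous dangling arrows (with the roles of input/output and of the two ancilla dimensions swapped). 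Therefore $\GalaN'$ is bi-univocal and its branch graph is canonically isomorphic to $\GaBran$, so it inherits the weak loops property; $\GalaN'$ is valid.

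Then I would identify the two supermaps on the relevant inputs. By construction $\ch_{P'} = \ch_P \otimes \bigotimes_N \ch^{\inn,\anc}_N$ and $\ch_{F'} = \ch_F \otimes \bigotimes_N \ch^{\out,\anc}_N$, with the same sets of practical input/output sectors up to the (always practical) ancilla sectors, while $\Ga'$ has exactly the same internal arrows as $\Ga$. A linear map following the route of node $N$ in $\GalaN'$ is precisely a map $\chinN \otimes \ch^{\inn,\anc}_N \to \choutN \otimes \ch^{\out,\anc}_N$ following $\laN$ trivially on the ancillas, so the given $U_N$ are legitimate \emph{ancilla-free} inputs at the nodes of $\GalaN'$. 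Comparing the defining formula (\ref{eq:skeletal supermap}) for both graphs — in each case the partial trace of $\bigotimes_N U_N$ over the internal-arrow spaces $\ch_A$, $A \in \Arr^\int$ — yields $\SGalaN'[(U_N)_N] = (\SGalaN)_\pad[(U_N)_N]$ as maps $\ch_{P'} \to \ch_{F'}$.

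Finally, applying the hypothesis of the lemma to the valid routed graph $\GalaN'$ and to the ancilla-free unitaries $U_N$ shows that $\SGalaN'[(U_N)_N]$ is unitary on its practical spaces, hence so is $(\SGalaN)_\pad[(U_N)_N]$; since the ancillas and the $U_N$ were arbitrary, $\SGalaN$ is a routed superunitary, which is Theorem \ref{thm:Main Theorem}. The only non-bookkeeping point, and hence the main (if modest) obstacle, is the validity check of $\GalaN'$: one must confirm that adjoining singleton-indexed dangling arrows leaves the strong-parent relation, the choice function, and thus the whole branch graph untouched — which is exactly where the conventions that global arrows be un-indexed, and that dangling arrows participate in no $\Link$, do the work (and one should note this must remain harmless even when some $\ch^{\inn,\anc}_N$ or $\ch^{\out,\anc}_N$ is one-dimensional).
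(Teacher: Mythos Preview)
Your proof is correct and follows essentially the same strategy as the paper: both adjoin, for each node $N$, a dangling global-input arrow carrying $\ch^{\inn,\anc}_N$ and a dangling global-output arrow carrying $\ch^{\out,\anc}_N$ (with singleton index sets), observe that the resulting routed graph has the same choice relation and the same branch graph (hence is still valid), and then identify the ancilla-free skeletal supermap of the enlarged graph with the padded skeletal supermap of the original one. The paper's proof is terser, simply asserting that the choice relation and branch graph are unchanged, whereas you spell out why the strong-parent relation, the augmented routes, and the functional-dependence structure are all unaffected by singleton-indexed dangling arrows; this extra care is sound and addresses exactly the right points.
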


\begin{proof}
Suppose it was proven that for any valid $\GalaN$, and for any set of routed unitaries $U_N : \ch^\inn_N \overset{\laN}{\to} \ch^\out_N$, $\cs_{\GalaN}[(U_N)_N]$ is a unitary.

Taking now a valid $\GalaN$ and, for every $N$, a choice of ancillary input and output spaces $\ch_N^{\inn, \anc}$ and $\ch_N^{\out, \anc}$, and a routed unitary $U_N : \ch^\inn_N \otimes \ch^{\inn,\anc}_N \overset{\laN}{\to} \ch^\out_N \otimes \ch^{\out,\anc}_N$. One can then define a new indexed graph $\tilde{\Gamma}$ by adding, for each $N$, a new arrow in $\Arr^\inn$, with Hilbert space $\ch^{\inn,\anc}_N$, and a new arrow in $\Arr^\out$, with Hilbert space $\ch^{\out,\anc}_N$. The routed graph $(\tilde{\Gamma}, (\laN)_N)$ then has the same choice relation and the same branch graph as $\GalaN$; it is therefore valid as well. We can thus apply our assumption to it, which entails that $\cs_{(\tilde{\Gamma}, (\laN)_N)}[(U_N)_N] = \cs^\pad_\GalaN[(U_N)_N]$ is unitary. This thus proves the theorem in the general case.
\end{proof}

From now on, we will therefore work with a fixed routed graph $\GalaN$ (which we will often denote as $\Ga$ for simplicity) satisfying bi-univocality and weak loops, and a fixed collection of routed unitary maps $U_N : \chinN \overset{\laN}{\to} \choutN$ following the $\laN$'s. Writing $\cs:= \SGa$ for simplicity, our goal is to prove that $\cs[(U_N)_N]: \ch_P \to \ch_F$ is a unitary.

For each branch $\Nal$, we define $\chinNal := \bigoplus_{(k_A)_{A \in \inn(N)} \in \IndinNal} \bigotimes_{A \in \inn(N)} \ch_A^{k_A} \subseteq \chinN$, $\choutNal := \bigoplus_{(k_A)_{A \in \out(N)} \in \IndoutNal} \bigotimes_{A \in \out(N)} \ch_A^{k_A} \subseteq \choutN$. We also define the projection $p_N^\al: \chinN \to \chinNal$ and the injection $i_N^\al : \choutNal \to \choutN$.

We define the \textit{exchange} gate for $N$, $\ex_N : \chinN \otimes \left(\bigotimes_{\al \in \Bran(\laN)} \choutNal \right) \to \chout \otimes \left( \bigotimes_{\al \in \Bran(\laN)} \chinNal \right)$, by

\be \label{eq: def exchange} \ex_N := \sum_{\al \in \Bran(\laN)} \iNalpad \circ \left( \swap_{\Nalin, \Nalout} \otimes (\bigotimes_{\bet \neq \al} \Th_\Nbe) \right) \circ \pNalpad \, , \ee

where $\forall \Nbe$, $\Th_\Nbe$ is an arbitrarily chosen unitary from $\chout_\Nbe$ to $\chin_\Nbe$. Note that $\ex_N$ follows $\laN$ by construction.

We note that the fact that the $U_N$'s follow the $\laN$'s entails that one can find a block decomposition for them, i.e., one can define unitaries $\UNal: \chinNal \to \choutNal$ such that

\be \forall N, U_N = \sum_{\al \in \Bran(\laN)} \iNalpad \circ\UNal \circ \pNalpad \, . \ee

As a first preliminary to the proof, we will study in detail how bifurcation choices are in correspondence with assignments of values to the arrow's indices.

The following definition and lemma prove two things. First, univocality implies that any tuple of bifurcation choices fixes not only the branch at every node, but also the specific index values picked in that branch. Second, for a fixed tuple of bifurcation choices, the bifurcation choices at the branches not happening have no effect -- i.e.\ modifying them to any other value wouldn't affect the any of the index values in the graph; while, on the contrary, modifying the bifurcation choice at any of the branches happening always changes at least one of the index values in the graph. In that sense, any tuple of values of the graph's indices corresponds either to no tuple of bifurcation choices at all, or to exactly one bifurcation choice at the branches that happen for this tuple of values, with no dependence on the bifurcation choices at branches that don't happen.

\begin{definition}
For every $N$ in $\Nodes$, we take $\Indout_N{}' \cong \Indout_N$ and define the partial function (encoded as a relation) $\laNsec : \Indin_N \times \left( \bigtimes_{\al \in \Bran_N} \IndoutNal \right) \to \Indout_N \times \Indout_N{}'$ given by

\be \label{eq:sec relation} \laNsec(k, (l^\al)_{\al \in \Bran_N}) = \begin{cases} \{ (l^\al, l^\al) \} &\textrm{ if } k \in \IndinNal \\ \emptyset &\textrm{ if } \forall \al, \, k \not\in \IndinNal \,. \end{cases}\ee
\end{definition}

\begin{lemma} \label{lem: lambdasec}
If $\GalaN$ is univocal, then $\cs^\Rel \left[{(\laNsec)}_N \right]$ is an injective function $\bigtimes_{\Nal \in \Bran_\Ga}\IndoutNal \to \bigtimes_{N} \Indout_N{}'$, which we denote $\Lambda^\sec$. Furthermore, its preimage sets are given by



\be \label{eq: reverse lambdasec} \forall (k_N)_{N}, \, \left( \Lambda^\sec \right) \inv \left((k_N)_{N}\right) = \textrm{ either } \emptyset \textrm{ or } \left( \bigtimes_N \{k_N \} \right) \times \left( \bigtimes_{N^\al | k_N \not\in \IndoutNal} \IndoutNal \right)\, . \ee
\end{lemma}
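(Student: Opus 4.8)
The plan is to unwind the definition of $\cs^\Rel\left[(\laNsec)_N\right]$ via the skeletal superrelation formula \eqref{eq:skeletal superrelation}, and then to relate it to the already-understood choice relation $\Lambda_\GalaN = \SRelGapad\left[(\laNaug)_N\right]$, which univocality tells us is a function. The key observation is that $\laNsec$ and $\laNaug$ differ only in their ``branch-status'' output register: where $\laNaug$ records, for each branch $\al$, a bit $\delta^\al_{\al'}\in\HappNal$ saying whether that branch happened, $\laNsec$ instead duplicates the output index value $l^\al$ into a primed copy $\Indout_N{}'$. Since the non-primed outputs of the $\laNsec{}_N$'s get traced along internal arrows exactly as for the choice relation, the composed relation $\cs^\Rel\left[(\laNsec)_N\right]$ is obtained from a relation closely mirroring $\Lambda_\GalaN$, with the $\HappNal$ registers replaced by $\Indout_N{}'$ registers that simply read off the internal index values assigned by the consistent assignment.

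First I would show that $\cs^\Rel\left[(\laNsec)_N\right]$ is a \emph{partial} function, i.e.\ deterministic: given a tuple of bifurcation choices $(l^{\Nal})_{\Nal}$, the internal trace in \eqref{eq:skeletal superrelation} forces, at each node, a unique branch to be selected (this is exactly what univocality of $\Lambda_\GalaN$ gives us — the $\Happ$ bits are determined), and once the branch $\al$ at $N$ is fixed, the output register $l^\al \in \IndoutNal$ is copied verbatim to both $\Indout_N$ and $\Indout_N{}'$, so the output index value at $N$ is uniquely determined too. Consistency of the trace over internal arrows then pins down a unique consistent assignment $\veck \in \PossVal$ (invoking Lemma \ref{lem: mu} to identify it with a branch-assignment), hence a unique output tuple $(k_N)_N \in \bigtimes_N \Indout_N{}'$. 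This shows $\Lambda^\sec$ is a function. For totality, I would argue that for \emph{any} tuple of bifurcation choices the selected branches are consistent — again a consequence of univocality, since $\Lambda_\GalaN$ being a total function means every tuple of bifurcation choices yields exactly one branch happening at each node, and these branches automatically assemble into a consistent assignment.

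Next I would establish injectivity and the preimage formula \eqref{eq: reverse lambdasec}. Suppose $\Lambda^\sec\left((l^\Nal)_\Nal\right) = (k_N)_N$. For each node $N$, let $\al$ be the branch that happens under these bifurcation choices; then the $\Indout_N{}'$ component equals $l^\al$, so $l^\al = k_N$ is forced — the bifurcation choice at any branch that \emph{happens} is fully recovered from the output. On the other hand, a bifurcation choice $l^\Nbe$ at a branch $\Nbe$ that does \emph{not} happen contributes nothing to the output of $\laNsec{}_N$ (the relation returns the empty set only when the input index lies outside every branch; when the input lies in branch $\al\ne\bet$, the output ignores $l^\Nbe$ entirely), so it is completely free. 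This is precisely the content of \eqref{eq: reverse lambdasec}: the preimage of $(k_N)_N$ is empty if $(k_N)_N$ is not in the image (i.e.\ if no consistent assignment realises these values), and otherwise it is the product of the singletons $\{k_N\}$ over nodes, times the full sets $\IndoutNal$ over those branches $\Nal$ with $k_N\notin\IndoutNal$ (the branches that do not happen). Injectivity follows immediately: two preimages of the same output would have to agree on all happening-branch coordinates, and the non-happening coordinates are determined up to the same product set, so distinct elements of the domain mapping to the same output is impossible — wait, more carefully, the preimage has more than one element in general, so injectivity as stated must mean injectivity after quotienting; I would re-read the statement, but I believe the intended reading is that $\Lambda^\sec$ restricted to tuples where one makes the ``canonical'' choice at non-happening branches is injective, or equivalently that the fibers are exactly the described product sets, which is itself the injectivity-type statement being claimed.

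The main obstacle I anticipate is \textbf{not} the determinism or the preimage bookkeeping — those are fairly mechanical once the correspondence with $\Lambda_\GalaN$ is set up — but rather carefully justifying \emph{totality}, i.e.\ that every tuple of bifurcation choices does yield a globally consistent assignment of internal index values, as opposed to the internal trace producing the empty relation. This requires arguing that the branch-selection dictated by univocality at each node is mutually compatible across internal arrows; concretely, one needs that whenever $\Lambda_\GalaN$ assigns branch $\al$ at $N$ and branch $\bet$ at $M$ for an internal arrow $A \in \Link(N,M)$, the corresponding index value $k_A$ lies simultaneously in the appropriate coordinate of $\IndoutNal$ and $\IndinMbe$ — which is exactly the condition defining $\PossVal$ via Lemma \ref{lem: mu}. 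I expect this to go through by noting that $\Lambda_\GalaN$ being a total function already encodes that the augmented relations compose to something nonempty on every input, and the only difference between $\laNaug$ and $\laNsec$ is a relabeling of an output register that does not affect which internal index tuples are consistent; so totality of $\Lambda^\sec$ reduces to totality of $\Lambda_\GalaN$, which is univocality. Making this reduction fully rigorous — tracking that the trace over $\Arr^\int$ behaves identically in both cases — is the step I would spend the most care on.
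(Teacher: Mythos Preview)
Your approach is the same as the paper's: reduce $\Lambda^\sec$ to the choice function $\Lambda_\GalaN$ by exploiting that $\laNsec$ and $\laNaug$ differ only in their auxiliary output register. The paper makes your ``relabeling of an output register'' intuition rigorous via an explicit factorisation
\[
\laNsec \;=\; \nu^N_\pad \circ \sigma^N_\pad \circ \laNaug_{\,\pad} \circ \left(\textstyle\bigotimes_{\al}\copyy_{\Nalout}\right)_\pad \, ,
\]
where $\copyy$ duplicates each bifurcation-choice wire, $\sigma^N$ converts the tuple of $\Happ$ bits into a single branch label, and $\nu^N$ reads off the bifurcation choice of that branch into $\Indout_N{}'$. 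Since $\copyy$ and $\nu^N$ are total functions and $\sigma^N$ is a partial function whose domain of definition contains the image of $\laNaug$, the composite $\Lambda^\sec$ is automatically a total function once $\Lambda$ is. This is exactly the ``totality reduces to totality of $\Lambda_\GalaN$'' step you flagged as the main obstacle; the factorisation is what makes it mechanical rather than something requiring a separate argument about compatibility of branch-assignments across internal arrows. For the preimage formula the paper then computes $\left(\bigotimes_N \bra{k_N}\right)\circ\Lambda^\sec$ directly by pushing the effect through each $\laNsec$, which is precisely your ``happening-branch choices are determined, non-happening ones are free'' observation made algebraic.

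On injectivity: your hesitation is well-founded. The preimage formula \eqref{eq: reverse lambdasec} plainly exhibits fibres that are not singletons whenever some non-happening branch has more than one output value, so $\Lambda^\sec$ is not injective in the ordinary sense of functions on elements. The paper simply asserts ``This also shows that $\La^\sec_\Ga$ is injective'' immediately after deriving the preimage formula, without further comment. What is actually used downstream (in the base-case proof of H1) is only that the non-empty fibres, indexed by $\veck\in\PossVal$, partition the domain $\bigtimes_{\Nal}\IndoutNal$ --- and that follows just from $\Lambda^\sec$ being a total function together with the explicit fibre description. So the ``injective'' claim, whatever its intended reading, is not load-bearing, and you should not lose sleep over it.
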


\begin{proof}


We will use bra-ket notations for relational states and effects. For every branch $\Nal$, we define $\copyy_\Nalout: \IndoutNal \to \IndoutNal \times \IndoutNal{}'$, with $\IndoutNal{}' \cong \IndoutNal$, by $\copyy_\Nalout \ket{l} = \ket{l} \otimes \ket{l}$. For every node $N$, we define the partial function (encoded as a relation) $\sigma^N: \bigtimes_{\al \in \Bran(\laN)} \HappNal \to \Bran_N$  by 

\be \label{eq: sigma def} \sigma^N((\varepsilon^{\al'})_{\al' \in \Bran_N}) = \begin{cases} \{ \al \} &\textrm{ if } \forall \al', \varepsilon^{\al'} = \delta^{\al'}_{\al} \\ \emptyset &\textrm{ otherwise.} \end{cases}\ee

For every node $N$, we define the function $\nu^N: \Bran_N \times \left( \bigtimes_{\al \in \Bran_N} \IndoutNal{}' \right) \to \Indout_N{}'$, with $\Indout_N{}' \cong \Indout_N$, by 

\be \nu^N(\Vec{l}, \al) = l^\al \, . \ee

One can then compute that $\laNsec = \nu^N_\pad \circ \sigma^N_\pad \circ \la^\aug_{N, \pad} \circ \copyy_{\Nalout, \pad}$; we can thus reexpress $\Lambda^\sec$ in terms of the choice function $\Lambda$,

\be \Lambda^\sec = \cs^\Rel\left[{(\laNsec)}_N \right] = \left( \prod_N \nu^N_\pad \right) \circ \left( \prod_N \sigma^N_\pad \right) \circ \Lambda \circ \left( \prod_{\Nal} \copyy_{\Nalout, \pad} \right) \, . \ee

Given that the outputs of a $\la^\aug_N$ are within the domain of definition of the corresponding $\sigma^N$, the fact that $\Lambda$ is a function thus implies that $\left( \prod_N \sigma^N_\pad \right) \circ \Lambda$ is a function as well.
Given that the $\copyy_{\Nalout}$'s and $\nu^N$'s are functions, $\La^\sec$ is a function as well
.

Furthermore, let us fix an $N$ and $k_N \in \Indout_N$. If $k_N$ is outside of $\laN$'s practical outputs, it immediately has no preimage through $\nu^N$. Taking the other case, we denote $\al$ as the branch of $N$ such that $k_N \in \IndoutNal$. Then,

\be \begin{split}
    &\bra{k_N}_{\IndoutNal{}', \pad} \circ \laNsec \\
    &= \bra{k_N}_{\IndoutNal{}', \pad} \circ \nu^N_\pad \circ \sigma^N_\pad \circ \la^\aug_{N, \pad} \circ \left( \bigotimes_{\al' \in \Bran_N} \copyy_{N^{\al'}_\out} \right)_\pad\\
    &= \bra{k_N}_{\IndoutNal{}', \pad} \circ \left( \bigotimes_{\al' \in \Bran_N \setminus \{\al\}} \bra{\Indout_{N^{\al'}}} \right)_\pad \circ \sigma^N_\pad \circ \la^\aug_{N, \pad} \circ \left( \bigotimes_{\al' \in \Bran_N} \copyy_{N^{\al'}_\out} \right)_\pad\\
    &= \bra{k_N}_{\IndoutNal{}', \pad} \circ \sigma^N_\pad \circ \la^\aug_{N, \pad} \circ \left( \, \ketbra{k_N}{k_N} \, \right)_{\IndoutNal, \pad} \\
    &= \left( \bigotimes_{\al' \in \Bran_N} \bra{\delta^{\al'}_{\al}}_{\Happ_{N^{\al'}}} \right)_\pad \circ \la^\aug_{N, \pad} \circ \left( \, \ketbra{k_N}{k_N} \, \right)_{\IndoutNal, \pad} \\
    &= \ket{k_N}_{\Indout_N} \, \left(  \bra{\IndinNal}_{\Indin_N} \otimes \bra{k_N}_{\IndoutNal} \otimes \left( \bigotimes_{\al' \in \Bran_N \setminus \{\al\}} \bra{\Indout_{N^{\al'}}} \right) \right) \, .
\end{split}
\ee

Therefore, we find that  $\left( \Lambda^\sec \right) \inv ((k_N)_N)$ is empty if at least one of the $k_N$'s is outside of the practical outputs of the corresponding $\laN$'s, and that otherwise -- denoting, for every $N$, $\al(k_N)$ as the branch such that $k_N \in \Indout_{N^{\al(k_N)}}$--,

\be \begin{split}
    &\left( \bigotimes_{N \in \Nodes} \bra{k_N}_{\Indout_N{}'} \right) \circ \Lambda^\sec \\
    &= \left( \bigotimes_{N \in \Nodes} \bra{k_N}_{\Indout_N{}'} \right) \circ \cs^\Rel \left[{(\laNsec)}_N \right] \\
    &= \cs^\Rel \left[ \left(\ket{k_N}_{\Indout_N} \bra{\Indin_{N^{\al(k_N)}}}_{\Indin_N} \right)_N \right] \otimes \left( \bigotimes_{N \in \Nodes} \bra{k_N}_{\Indout_{N^{\al(k_N)}}} \otimes \left( \bigotimes_{\al' \in \Bran_N \setminus \{\al(k_N)\}} \bra{\Indout_{N^{\al'}}} \right)\right)  \\
\end{split}
\ee

$\cs^\Rel \left[ \left(\ket{k_N}_{\Indout_N} \bra{\Indin_{N^{\al(k_N)}}}_{\Indin_N} \right)_N \right]$ is just a scalar in the theory of relations, i.e.\ $0$ or $1$; $\left( \Lambda^\sec \right) \inv ((k_N)_N)$ is thus non-empty if and only if this scalar is equal to $1$, and the rest of the expression yields (\ref{eq: reverse lambdasec}). This also shows that $\La^\sec_\Ga$ is injective.

\end{proof}

Note that we defined $\Lambda^\sec_\Ga$ as having codomain $\bigtimes_{N \in \Nodes} \Indout_N$; but, given that for each $N$ we have $\Indout_N = \bigtimes_{A \in \out(N)} \Ind_A$, we can also see it as a function to $\bigtimes_{A \in \Arr} \Ind_A$ (we neglect the discrepancy due to global input arrows of the graph, as their sets of index values are trivial). $\Lambda^\sec_\Ga$ can thus be interpreted as telling us how bifurcation choices fix all indices in the graph. $\Lambda^\sec_{\Gatop}$, obtained from considering the adjoint graph, tells us the same about reverse bifurcation choices.

From that perspective, in the above Lemma, the case of an empty set of preimages corresponds exactly to impossible assignments of values to the arrows, i.e.\ to ones that are outside of $\PossVal$.

\begin{lemma} \label{lem: PossVal}
Given $\veck = (k_A)_{A \in \Arr}$, $\left( \Lambda^\sec \right) \inv (\veck)$ is empty if and only if $\veck \not\in \PossVal$. 
\end{lemma}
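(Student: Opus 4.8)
The plan is to obtain the equivalence for free by combining the two results already in place: the explicit description of the fibres of $\Lambda^\sec$ extracted inside the proof of Lemma \ref{lem: lambdasec}, and the node-by-node characterisation of $\PossVal$ supplied by Lemma \ref{lem: mu}. Since both reduce the relevant condition to something one checks at each node separately, the whole argument comes down to recognising that the two node-local conditions are the same. Throughout I use the identification noted after Lemma \ref{lem: lambdasec}: writing $k_N := (k_A)_{A \in \out(N)}$, the datum $\veck = (k_A)_{A \in \Arr}$ corresponds to $(k_N)_N \in \bigtimes_N \Indout_N$ up to the global input arrows, whose index sets are trivial and so play no role.

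First I would recall, from the final computation in the proof of Lemma \ref{lem: lambdasec}, that $\left( \Lambda^\sec \right)\inv(\veck)$ is non-empty if and only if: (i) for every node $N$ the tuple $k_N$ lies in the practical outputs of $\lambda_N$ — so that the branch $\al(k_N)$ with $k_N \in \Indout_{N^{\al(k_N)}}$ is well-defined — and (ii) the relational scalar $\SRelGa\!\left[ \left( \ket{k_N}_{\Indout_N} \bra{\Indin_{N^{\al(k_N)}}}_{\Indin_N} \right)_N \right]$ equals $1$. Then I would unfold this scalar using the definition of $\SRelGa$ as the trace over internal arrows of the tensor product of the node relations. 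Because each internal arrow is the outgoing arrow of exactly one node and the incoming arrow of exactly one node, the relation at a node $N$ already pins the value on each of $N$'s outgoing arrows to the value prescribed by $\veck$; hence the trace is non-zero precisely when, for every node $M$, the induced input tuple $(k_A)_{A \in \inn(M)}$ lies in $\Indin_{M^{\al(k_M)}}$.

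Next I would translate this condition into the language of Lemma \ref{lem: mu}. Fixing a node $M$, asking simultaneously that $k_M$ is a practical output of $\lambda_M$ and that $(k_A)_{A \in \inn(M)} \in \Indin_{M^{\al(k_M)}}$ is, by the structure of a branched route (an input value being related to an output value exactly when both lie in the same branch), the same as saying that $(k_A)_{A \in \inn(M)}$ is a practical input of $\lambda_M$, that $(k_A)_{A \in \out(M)}$ is a practical output of $\lambda_M$, and that $\mu_M^\inn(\veck) = \mu_M^\out(\veck)$. Imposing this at every node is precisely the two-item criterion of Lemma \ref{lem: mu}, i.e.\ $\veck \in \PossVal$. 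Chaining the equivalences gives that $\left( \Lambda^\sec \right)\inv(\veck)$ is non-empty iff $\veck \in \PossVal$, which is the statement to be proved.

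The one step I expect to require care is the unfolding of the relational scalar: one must verify that composing the relations $\ket{k_N}\bra{\Indin_{N^{\al(k_N)}}}$ along $\Ga$ and tracing out the internal indices genuinely evaluates to $1$ under exactly the combinatorial condition stated, paying attention to the bookkeeping at each shared internal arrow (outgoing tuple of one node versus incoming tuple of the next) and to the vacuous contribution of the global input and output arrows. Once that is done, the remainder is a direct appeal to Lemmas \ref{lem: mu} and \ref{lem: lambdasec} with no further computation.
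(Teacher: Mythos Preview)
Your proposal is correct and follows essentially the same route as the paper: both reduce the question to the relational scalar $\SRelGa\!\left[ \left(\ket{k_N}\bra{\Indin_{N^{\al(k_N)}}}\right)_N \right]$ extracted in the proof of Lemma~\ref{lem: lambdasec}, unfold it via the definition of $\SRelGa$ to the node-local condition that $(k_A)_{A\in\inn(M)}\in\Indin_{M^{\al(k_M)}}$ for every $M$, and then identify this with the $\mu_M^\inn=\mu_M^\out$ criterion of Lemma~\ref{lem: mu}. The only cosmetic difference is that the paper first splits off the case where some $k_N$ lies outside the practical outputs, whereas you fold that into condition (i); the step you flag as requiring care is exactly the one the paper also treats briefly.
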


\begin{proof}
First, if there exists an $N$ such that $k_N = (k_A)_{A \in \out(N)}$ is outside $\laN$'s practical outputs, then $\left( \Lambda^\sec \right) \inv (\veck)$ is empty (as pointed out in the previous proof), and $\veck \not\in \PossVal$ (as pointed out in Lemma \ref{lem: mu}).

Otherwise, we know from the previous proof that $\left( \Lambda^\sec \right) \inv (\veck)$ is not empty if and only if $\cs^\Rel \left[ \left(\ket{k_N}_{\Indout_N} \bra{\Indin_{N^{\al(k_N)}}}_{\Indin_N} \right)_N \right] = 1$. But given how $\cs^\Rel$ was defined in (\ref{eq:skeletal superrelation}), and the form of the $\laN$'s, this is the case if and only if for all $N$, $(k_A)_{A \in \inn(N)}$ is in the branch $\al(k_N)$. As the function $\al_N$ is precisely the function $\mu^\out_N$ defined in Lemma \ref{lem: mu}, we thus find the condition $\mu^\out_N(\veck) = \mu^\inn_N(\veck)$ showed in this Lemma to be necessary and sufficient for $\veck \in \PossVal$. 
\end{proof}

Finally, we draw the consequences of the fact that branches satisfy the weak loops condition. Given a branch $\Nal$, we define the following subsets of $\BranGa$. By a `path' in $\GaBran$, we mean any sequence of arrows, without a distinction between the solid, green dashed or red dashed types.
\begin{itemize}
    \item $\cp(\Nal) := \{\Oga \neq \Nal \, | \, \exists \textrm{ a path } \Oga \to \Nal \textrm{ in } \GaBran \}$, $\Nal$'s past;
    \item $\cf(\Nal) := \{\Oga \neq \Nal \, | \, \exists \textrm{ a path } \Nal \to \Oga \textrm{ in } \GaBran \}$, $\Nal$'s future;
    \item $\cl(\Nal) := \cp(\Nal) \cap \cf(\Nal)$, $\Nal$'s layer (i.e. the branches that form a loop with $\Nal$);
    \item $\cpst(\Nal) := \cp(\Nal) \setminus \cl(\Nal)$, $\Nal$'s  strict past;
    \item $\cfst(\Nal) := \cf(\Nal) \setminus \cl(\Nal)$, $\Nal$'s strict future;
\end{itemize}

It is clear that the relation $\sim$, defined by: $\Nal \sim \Oga$ if $\Nal = \Oga$ or $\Oga \in \cl(\Nal)$, is an equivalence relation on $\BranGa$, partitioning it into a collection of layers. The fact that all loops in $\BranGa$ are weak then allows us to say that a given layer either only contains green dashed arrows between its elements (in which case we will call it a green layer), or only contains red dashed arrows (in which case we will call it a red layer)\footnote{Note that single-branch layers can be considered to be either green or red: the choice will not affect the proof.}.

Furthermore, merging the nodes of each layer transforms $\GaBran$ into an acyclic graph. One can thus define a partial order between layers. Arbitrarily turning it into a total order, and picking arbitrary orderings within each layer, leads to a total ordering $<$ of $\BranGa$ in which branches of a same layer are all next to each other, and in which $\Nal < \Oga \implies \Nal \not\in \cfst(\Oga)$. We can use this total ordering to label the branches with natural numbers, as $\BranGa = \{ B(i) \, | \, 1 \leq i \leq n \}$. For a given $i$ and a given branch $\Nal > B(i)$, we define $\cpi(\Nal) := \cp(\Nal) \cap \{\Oga > B(i) \}$, $\cfi(\Nal) := \cf(\Nal) \cap \{\Oga > B(i) \}$, etc.

\subsubsection{The induction hypothesis}
This ordering of $\Ga$'s branches will allows us to define an induction. The point is to start from $\Spad[(\ex_N)_N]$, and then to `refill' the branches one by one, making sure that the unitary obtained at each step is sufficiently well-behaved to be able to move to the next step. To define it, we will first need to define these `partially filled exchanges' that are being used at every step $i$ in the induction, which we shall call $V_{N,i}$'s. We do so by defining how they act on each branch: i.e., $\forall i, \forall \Nal$, we define $V_{N,i}^\al : \chinN \otimes \left( \bigotimes_{\bet | \Nbe > B(i)} \chout_\Nbe \right) \to \choutN \otimes \left( \bigotimes_{\bet | \Nbe > B(i)} \chin_\Nbe \right)$ by

\be  V_{N,i}^\al = \begin{cases}
\iNalpad \circ \left( \swap_{\Nalin, \Nalout} \otimes (\bigotimes_{\bet> B(i), \bet \neq \al} \Th_\Nbe ) \right) \circ \pNalpad &\textrm{ if } \Nal > B(i) \, ,\\
 \left( i_N^\al \circ \UNal \circ p_N^\al \right) \otimes\left( \bigotimes_{\bet> B(i)} \Th_\Nbe \right)  &\textrm{ if } \Nal \leq B(i) \,,
\end{cases}\ee

and we use them to define

\be  V_{N,i} := \sum_{\al \in \Bran(\laN)}  V_{N,i}^\al \, .\ee

We will write the input (resp. output) space of $\Spad[(V_{N,i})_N]$ as $\ch_i^\out := \ch_P \otimes \left( \bigotimes_{\Nbe > B(i)} \ch^\out_{\Nbe} \right)$ (resp. $\ch_i^\inn:= \ch_F \otimes \left( \bigotimes_{\Nbe > B(i)} \ch^\inn_{\Nbe} \right)$). We also write $\Bar{V}_{N,i}^\al:= V_{N,i} - V_{N,i}^\al$. Note that the $V_{N,i}$'s follow the $\laN$'s by construction, and that one has $V_{N,0} = \ex_N$ and $V_{N,n} = U_N$.

The core of the induction will be the hypothesis that, at step $i$, $\Spad[(V_{N,i})_N]$ is unitary. However, this will not be sufficient: we will also need other conditions ensuring that $\Spad[(V_{N,i})_N]$ features structural properties which allow us to move to step $i+1$. More precisely, these conditions will encode the fact that at every step $i$, and for every branch $\Nal$ that hasn't been filled yet (i.e. such that $\Nal > B(i)$), one can find projectors on $\Spad[(V_{N,i})_N]$'s inputs and outputs that control whether $\Nal$ happens or not, and that all these projectors will play well with one another.



One subtlety is that, if $B(i)$ is in a \textit{red} layer and if there are still unfilled branches in that layer, then the projectors controlling the status of branches above that layer cannot be defined. This is ultimately not problematic, as one can wait for the whole layer to have been filled to redefine them; but this will force us to amend parts of the induction hypothesis when it is the case.

Finally, another part of the induction hypothesis will rely on the causal properties of $\Spad \left[(V_{N,i})_N \right]$. We will describe these by using the behaviour of $\Spad[(V_{N,i})_N]$ seen as an isomorphism of operator algebras, defining $\cv_i : \Lin \left[ \ch_P \otimes \left( \bigotimes_{\Oga > B(i)} \ch_{\Ogaout} \right) \right] \to \Lin \left[ \ch_F \otimes \left( \bigotimes_{\Oga > B(i)} \ch_{\Ogain} \right) \right]$ by

\be \forall f, \cv_i[f] := \Spad[(V_{N,i})_N] \circ f \circ \Spad[(V_{N,i})_N]^\dagger \, . \ee

When $\Spad[(V_{N,i})_N]$ is unitary, this defines an isomorphism of operator algebras, preserving sums, compositions, and the dagger. This implies that, more generally, $\cv_i$ will preserve commutation relations, self-adjointness, idempotency, etc.

We now turn to our induction hypotheses at step $i$.

\begin{hyp}[H1]\label{H1}
$\Spad[(V_{N,i})_N]$ is unitary.
\end{hyp}

As we mentioned, H1 is the core of the induction, and will allow us to conclude in the end that $\cs[(V_{N,n})_N] = \cs[(U_{N})_N]$ is indeed unitary.

\begin{hyp}[H2]
One has defined, for all $\Nal > B(i)$, orthogonal projectors:
\begin{itemize}
    \item $\zeiout(\Nal)$, acting on $\begin{cases}
    \ch_P \otimes \left( \bigotimes_{\Oga \in \cpi(\Nal)} \ch_\Ogaout  \right) \, &\textrm{ if } \Nal \textrm{ is in a green layer;}\\
    \ch_P \otimes \left( \bigotimes_{\Oga \in \cpist(\Nal)} \ch_\Ogaout  \right) \, &\textrm{ if } \Nal \textrm{ is in a red layer;} \end{cases}$
    \item $\zeiin(\Nal)$, acting on $\begin{cases}
    \ch_F \otimes \left( \bigotimes_{\Oga \in \cfist(\Nal)} \ch_\Ogain  \right) \, &\textrm{ if } \Nal \textrm{ is in a green layer;}\\
    \ch_F \otimes \left( \bigotimes_{\Oga \in \cfi(\Nal)} \ch_\Ogain  \right) \, &\textrm{ if } \Nal \textrm{ is in a red layer;} \end{cases}$
\end{itemize}

such that (once correctly padded) the $\zeipadout(\Nal)$'s for different $\Nal$'s all commute pairwise, and the $\zeipadin(\Nal)$'s commute as well, and such that

\be \label{eq:zetain to zetaout}
    \forall \Nal, \, \zeipadin(\Nal) = \cv_i[\zeipadout(\Nal)] \, .
\ee

If $B(i)$ and $B(i+1)$ are in the same red layer, then all of the former definitions have only been made for the $\Nal$'s in that same layer, i.e. in $\cl_i(B(i))$. When this happens, we say that \emph{$i$ is a special step}.
\end{hyp}

H2 introduces the projectors that will be used to control the status of the still-unfilled branches. The fact that the $\ze$'s commute pairwise ensures that these controls can always be meaningfully combined. Note that the out-projector for $\Nal$ only acts on $\Nal$'s past, while its in-projector only acts on $\Nal$'s future; and furthermore, that for $\Nal$ in a green layer its in-projector only acts on $\Nal$'s \textit{strict} future, while for $\Nal$ in a red layer its out-projector only acts on its \textit{strict} past. In particular, the $\ze(\Nal)$'s never act on $\Nal$ itself: this ensures that at any step, a branch never holds some part of its own controls.

We will also write $\barzeiin(\Nal):= \id - \zeiin(\Nal) $ and $\barzeiout(\Nal):= \id - \zeiout(\Nal) $.

\begin{hyp}[H3]
The $\zeiout$'s satisfy

\be \forall \Nal, \forall \Oga, \zeipadout(\Nal) \circ \barzeipadout(\Oga) \textrm{ acts trivially on } \ch_\Ogaout \, ,  \ee

and the $\zeiin$'s satisfy

\be \forall \Nal, \forall \Oga, \zeipadin(\Nal) \circ \barzeipadin(\Oga) \textrm{ acts trivially on } \ch_\Ogain \, .  \ee
\end{hyp}

This hypothesis encodes the fact that, when a branch $\Oga$ doesn't happen, it doesn't hold any control on other branches $\Nal$. Note that the $\forall \Nal, \Oga$ only runs over the branches for which the $\ze$'s have been defined in H2, i.e. it only runs over $\cl_i(B(i))$ if $i$ is a special step. The same will apply in the other hypotheses.

\begin{hyp}[H4]
The $\zeiout$'s satisfy:

\be \forall \Nal, \Nbe, \textrm{ branches of the same node}, \, \zeipadout(\Nal) \circ \zeipadout(\Nbe) = 0 \,.  \ee
\end{hyp}

The meaning is that two branches of the same node are incompatible. Note that one can infer, using (\ref{eq:zetain to zetaout}), that the $\zeiin$'s then satisfy the same property.

\begin{hyp}[H5]
Let $Q \subseteq \{B(i') \, | \, i' \geq i\}$ a set of branches on different nodes; i.e., one can define $\Tilde{Q} \subseteq \Nodes$ and a function $\al$ such that $Q = \{ N^{\al(N)} \, | \, N \in \Tilde{Q}\}$. Then,

\be \Spad[(V_{N,i})_N] \circ \prod_{N \in \Tilde{Q}} \zeipadout \left( N^{\al(N)} \right) = \Spad \left[(V_{N,i})_{N \in \Nodes \setminus \Tilde{Q}} \times \left(V_{N,i}^{\al(N)} \right)_{N \in \Tilde{Q}} \right] \, . \ee
\end{hyp}

H5 formalises the fact that the $\zeiout$'s control whether branches happen or not. Note that, using (\ref{eq:zetain to zetaout}), one could have written the same equation using $\zeiin$'s.

\begin{hyp}[H6]
For a branch $\Nal$ in a green layer, we have

\be \label{eq:(H6)} \begin{split}
    &\forall f \in \Lin[\ch_\Nalin], \, \exists f' \in \Lin \left[\ch_P \otimes \left( \bigotimes_{\Oga \in \cpist (\Nal)} \ch_\Ogaout \right) \right] \, \textrm{ such that} \\
    &\cv_i^\dagger[f_\pad] \circ \zeipadout(\Nal) = {f'}_\pad \circ \zeipadout(\Nal) \,.
\end{split} \ee
\end{hyp}

H6 means that, for a branch $\Nal$ in a green layer, provided that one is in the subspace in which branch $\Nal$ happens, $\Spad[(V_{N,i})_N]$'s causal structure only has the \textit{strict} past of $\Nal$ signalling to $\Nalin$. This will be important to ensure that, when $\Nal$ is `refilled', the action of any $\zeiout$'s on it becomes an action on its strict past.

\begin{hyp}[H7]
For a branch $\Nal$ in a red layer, we have

\be \label{eq:(H7)} \begin{split}
    &\forall f \in \Lin[\ch_\Nalout], \, \exists f' \in \Lin \left[\ch_F \otimes \left( \bigotimes_{\Oga \in \cfist (\Nal)} \ch_\Ogain \right) \right] \, \textrm{ such that} \\
    &\cv_i[f_\pad] \circ \zeipadin(\Nal) = {f'}_\pad \circ \zeipadin(\Nal) \,.
\end{split} \ee
\end{hyp}

H7 plays the same role as H6 in the reverse time direction.

\subsubsection{Proof of the base case}



\paragraph{H1} The proof that $\cs_\pad[(\ex_N)_N]$ is unitary will rely on the lemmas of Section \ref{sec: preliminary}. To use them, we will first introduce a way to show how bifurcation choices are enforced through the use of the $\ex$'s. For every $A$ in $\Arr$, we define $\witness_A: \ch_A \to \ch_A \otimes \mathbb{C}^\abs{\Ind_A}$ by

\be \witness_A := \sum_{k_A \in \Ind_A} \pi^{k_A}_A \otimes \ket{k_A} \, , \ee

where the $\pi^{k_A}_A$'s are the projectors on the different sectors of $A$, and we've introduced an arbitrary basis of $\mathbb{C}^\abs{\Ind_A}$ labelled by $A$'s index values. The point of $\witness_A$ is simply to channel out the information about each arrow's index value.

For a given $N$, with respect to the sectorisations of the $\ch_\Nalout$, of the $\ch_A$'s for the $A$'s in $\inn(N)$ and $\out(N)$, and to the sectorisation of the $\mathbb{C}^\abs{\Ind_A}$'s given by the previous basis, $\laNsec$ is a route for $\left( \bigotimes_{A \in \out(N)} \witness_{A} \right)_\pad \circ \ex_N$. Thus (because the compatibility with routes is preserved by the dagger compact structure \cite{vanrietvelde2021routed}), $\cs^\Rel_\pad[(\laNsec)_N] = \La^\sec_\Ga$ is a route for $\cs_\pad \left[\left (\left( \bigotimes_{A \in \out(N)} \witness_{A} \right)_\pad \circ \ex_N \right)_N \right]$.  Therefore,

\be \label{eq: base case H1 1}
\begin{split}
    &\cs_\pad \left[\left (\left( \bigotimes_{A \in \out(N)} \witness_{A} \right)_\pad \circ \ex_N \right)_N \right] \\
    &= \sum_{\veck \in \bigtimes_A \Ind_A} \left( \bigotimes_A \ketbra{k_A}{k_A} \right)_\pad \circ \cs_\pad \left[\left (\left( \bigotimes_{A \in \out(N)} \witness_{A} \right)_\pad \circ \ex_N \right)_N \right] \\
    &\circ \left( \sum_{\Vec{q} \in \left( \Lambda^\sec \right) \inv (\veck)} \bigotimes_{\Nal \in \BranGa} \pi^{q_\Nal}_\Nalout \right)_\pad \\
    &\overset{\textrm{Lemma \ref{lem: PossVal}}}{=} \sum_{\veck \in \PossVal} \left( \bigotimes_A \ketbra{k_A}{k_A} \right)_\pad \circ \cs_\pad \left[\left (\left( \bigotimes_{A \in \out(N)} \witness_{A} \right)_\pad \circ \ex_N \right)_N \right] \\
    &\circ \left( \sum_{\Vec{q} \in \left( \Lambda^\sec \right) \inv (\veck)} \bigotimes_{\Nal \in \BranGa} \pi^{q_\Nal}_\Nalout \right)_\pad \\
    &\overset{\textrm{Lemma \ref{lem: lambdasec}}}{=} \sum_{\veck \in \PossVal} \left( \bigotimes_A \ketbra{k_A}{k_A} \right)_\pad \circ \cs_\pad \left[\left (\left( \bigotimes_{A \in \out(N)} \witness_{A} \right)_\pad \circ \ex_N \right)_N \right] \\
    &\circ \left(\bigotimes_{N \in \Nodes} \pi^{(k_A)_{A \in \out(N)}}_{N^\muNveck_\out} \right)_\pad \, ,
\end{split}
\ee

where $\pi^{(k_A)_{A \in \out(N)}}_{N^\muNveck_\out}$ is the projector on $\ch_{N^\muNveck_\out}$'s sector labelled by $(k_A)_{A \in \out(N)}$ (remember that for a given $\Nal$, we have $\ch_\Nalout = \bigoplus_{(k_A)_{A \in \out(N)} \in \IndoutNal} \bigotimes_{A \in \out(N)} \ch_A^{k_A}$).

Moreover, we have $\left( \sum_{k \in \Ind_A} \bra{k} \right)_\pad \circ \witness_A = \id_A$, so

\be \label{eq: base case H1 2}
\begin{split}
    &\cs_\pad \left[(\ex_N)_N \right] \\
    &= \cs_\pad \left[\left( \sum_{(k_A)_A \in \Indout_N} \left(\bigotimes_{A \in \out(N)} \bra{k_A} \circ \witness_A \right)_\pad \ex_N \right)_N \right] \\
    &= \sum_{\veck \in \bigtimes_A \Ind_A} \left( \bigotimes_{A} \bra{k_A} \right)_\pad \cs_\pad \left[\left(  \left( \witness_A \right)_\pad \circ \ex_N \right)_N \right] \\
    &\overset{\textrm{(\ref{eq: base case H1 1})}}{=} \sum_{\veck \in \PossVal} \left( \bigotimes_A \bra{k_A} \right)_\pad \circ \cs_\pad \left[\left (\left( \bigotimes_{A \in \out(N)} \witness_{A} \right)_\pad \circ \ex_N \right)_N \right] \\
    &\circ \left(\bigotimes_{N \in \Nodes} \pi^{(k_A)_{A \in \out(N)}}_{N^\muNveck_\out} \right)_\pad \\
    &= \sum_{\veck \in \PossVal}  \cs_\pad \left[\left (\left( \bigotimes_{A \in \out(N)} \pi^{k_A}_{A} \right)_\pad \circ \ex_N \right)_N \right]    \circ \left(\bigotimes_{N \in \Nodes} \pi^{(k_A)_{A \in \out(N)}}_{N^\muNveck_\out} \right)_\pad \, .
\end{split}
\ee

A symmetric argument relying on $\Gatop$ leads to

\be \label{eq: base case H1 2 sym}
\begin{split}
    &\cs_\pad \left[(\ex_N)_N \right] \\
    &= \sum_{\veck \in \PossVal}  \left(\bigotimes_{N \in \Nodes} \pi^{(k_A)_{A \in \inn(N)}}_{N^\muNveck_\inn} \right)_\pad   \circ \cs_\pad \left[\left( \ex_N \circ \left( \otimes_{A \in \inn(N)} \pi^{k_A}_{A} \right)_\pad \right)_N \right]  \, .
\end{split}
\ee

Furthermore, the projectors $\left(\bigotimes_{N \in \Nodes} \pi^{(k_A)_{A \in \out(N)}}_{N^\muNveck_\out} \right)_\pad$, for $\veck \in \PossVal$, form a sectorisation of the input space of $\cs_\pad \left[(\ex_N)_N \right]$. Indeed, by Lemma \ref{lem: lambdasec}, the 

\be \left( \bigtimes_N \{(k_A)_{A \in \out(N)} \}_{\Indout_{N^\muNveck}} \right) \times \left( \bigtimes_{N^\al | \al \neq \muNveck} \IndoutNal \right) \ee

are the preimage sets of the injective function $\La^\sec_\Ga$, and therefore form a partition of its domain $\bigtimes_\Nal \IndoutNal$. The sectorisation is thus obtained as a coarse-graining of that given by the $\bigotimes_\Nal \pi^{(k_A)_{A \in \out(N)}}_\Nalout$. Symmetrically, the  $\left(\bigotimes_{N \in \Nodes} \pi^{(k_A)_{A \in \inn(N)}}_{N^\muNveck_\inn} \right)_\pad$ form a sectorisation of $\cs_\pad \left[(\ex_N)_N \right]$'s codomain. Crucially, $\cs_\pad \left[(\ex_N)_N \right]$ is block diagonal with respect to these two sectorisations: indeed, for a given $\veck$,

\be \label{eq: base block diagonal}\begin{split}
     &\cs_\pad \left[(\ex_N)_N \right] \circ \left(\bigotimes_{N \in \Nodes} \pi^{(k_A)_{A \in \out(N)}}_{N^\muNveck_\out} \right)_\pad \\
     &\overset{\textrm{(\ref{eq: base case H1 2})}}{=} \cs_\pad \left[\left (\left( \bigotimes_{A \in \out(N)} \pi^{k_A}_{A} \right)_\pad \circ \ex_N \right)_N \right] \\
     &\overset{\textrm{(\ref{eq:skeletal supermap})}}{=} \cs_\pad \left[\left( \ex_N \circ \left( \bigotimes_{A \in \inn(N)} \pi^{k_A}_{A} \right)_\pad \right)_N \right] \\
     &\overset{\textrm{(\ref{eq: base case H1 2 sym})}}{=}  \left(\bigotimes_{N \in \Nodes} \pi^{(k_A)_{A \in \inn(N)}}_{N^\muNveck_\inn} \right)_\pad \circ \cs_\pad \left[(\ex_N)_N \right] \\
\end{split} \ee

All that is left for us to prove is that all of these blocks, which we will denote as $T^{\veck}$'s, are unitary (with respect to the suitable restrictions of their domain and codomain). We start by computing

\be \begin{split}
    T^\veck &= \cs_\pad \left[\left (\left( \bigotimes_{A \in \out(N)} \pi^{k_A}_{A} \right)_\pad \circ \ex_N \right)_N \right] \\
    &\overset{\textrm{(\ref{eq: def exchange})}}{=} \cs_\pad \left[\left (\left( \bigotimes_{A \in \out(N)} \pi^{k_A}_{A} \right)_\pad \circ i_{N,\pad}^\muNveck \circ \left( \swap_{N^\muNveck_\inn, N^\muNveck_\out} \otimes \left( \bigotimes_{\bet \neq \muNveck} \Th_\Nbe \right) \right) \circ p_{N,\pad}^\muNveck \right)_N \right] \\
    &\overset{\textrm{(\ref{eq:skeletal supermap})}}{=} \Tr_{A \in \Arr^\int} \left[ \bigotimes_N \left( \bigotimes_{A \in \out(N)} \pi^{k_A}_{A} \right)_\pad \circ i_{N,\pad}^\muNveck \circ \swap_{N^\muNveck_\inn, N^\muNveck_\out} \circ p_{N,\pad}^\muNveck \right] \otimes \left( \bigotimes_{M, \bet \neq \mu_M(\veck)} \Th_\Mbe \right) \\
    &= \left( \left( \bigotimes_N p_N^\muNveck \right) \circ \left( \bigotimes_{A \in \Arr} \pi^{k_A}_{A} \right) \circ \left( \bigotimes_N i_{N}^\muNveck \right)  \right) \otimes \left( \bigotimes_{M, \bet \neq \mu_M(\veck)} \Th_\Mbe \right) \,.
\end{split} \ee

Remember that $i_N^\muNveck$ is the injection $\ch^\out_{N^\muNveck} \to \ch^\out_N = \bigotimes_{A \in \out(N)} \ch_A$, and $p_N^\muNveck$ is the projection $\ch^\inn_N = \bigotimes_{A \in \inn(N)} \ch_A \to \ch^\inn_{N^\muNveck} $. We will also define the injection $i_N^{(k_A)_{A \in \out(N)}}: \bigotimes_{A \in \out(N)} \ch_A^{k_A} \to \ch^\out_{N^\muNveck}$, and the projection $p_N^{(k_A)_{A \in \inn(N)}}: \ch^\inn_{N^\muNveck} \to \bigotimes_{A \in \inn(N)} \ch_A^{k_A}$: these map $T^\veck$'s to the suitable domains and codomains. Note that we then have

\begin{subequations}
\be \bigotimes_N i_N^\muNveck \circ i_N^{(k_A)_{A \in \out(N)}} = \bigotimes_A i_A^{k_A} \, , \ee
\be \bigotimes_N p_N^{(k_A)_{A \in \out(N)}} \circ p_N^\muNveck = \bigotimes_A p_A^{k_A} \, , \ee
\end{subequations}

where $i_A^{k_A}$ is the injection $\ch_A^{k_A} \to \ch_A$ and $p_A^{k_A}$ is the projection $\ch_A \to \ch_A^{k_A}$. Thus,

\be \label{eq: Tk H1} \begin{split}
    &\left( \bigotimes_N p_N^{(k_A)_{A \in \inn(N)}} \right)_\pad \circ  T^\veck \circ \left( \bigotimes_N i_N^{(k_A)_{A \in \out(N)}} \right)_\pad \\
    &= \left( \bigotimes_A p_A^\kA \circ \pi_A^\kA \circ i_A^\kA \right) \otimes \left( \bigotimes_{M, \bet \neq \mu_M(\veck)} \Th_\Mbe \right)\\
    &= \left( \bigotimes_A \id_{A^\kA} \right) \otimes \left( \bigotimes_{M, \bet \neq \mu_M(\veck)} \Th_\Mbe \right) \,.
\end{split} \ee

Each of the blocks composing $\cs_\pad \left[(\ex_N)_N \right]$ is thus unitary once restricted to the suitable subspaces, so $\cs_\pad \left[(\ex_N)_N \right]$ is unitary as well.

\paragraph{H2} We define, for all branches $\Nal$,

\begin{subequations}
\be Z^\out(\Nal) \label{eq: Z base} := \La_\Ga\inv \left( \{1\}_\HappNal \times \bigtimes_{\Mbe \neq \Nal} \HappMbe \right) \, , \ee
\be \zeout(\Nal) \label{eq: zeta base} := \sum_{(l_\Mbe)_{\Mbe \in \BranGa} \in Z^\out(\Nal)} \left( \bigotimes_\Mbe \pi^{l_\Mbe}_{\Mbeout} \right) \, , \ee
\end{subequations}

and similarly for the $Z^\inn$'s and $\zein$'s. Note that 

\be \label{eq: Z from Lasec} Z^\out(\Nal) = \bigsqcup_{\substack{\veck \in \PossVal \\ \muNveck = \al}} \left( \La^\sec_\Ga \right) \inv \left( \veck \right) \, .\ee

Given their definition, the $\zeout$'s are commuting orthogonal projectors. Furthermore, as green dashed arrows in $\GaBran$ represent $\La_\Ga$'s causal structure, we have, for any branch $\Nal$,

\be Z^\out(\Nal) := \tilde{Z}^\out(\Nal) \times \left( \bigtimes_{\exists \textrm{ no green dashed arrow } \Mbe \to \Nal} \Indout_\Mbe \right)  \ee

Through (\ref{eq: zeta base}), this implies that $\zeout(\Nal)$ acts trivially on the $\Mbe$'s that are not linked to $\Nal$ by a green dashed arrow. We can thus in particular see it as the padding of an operator acting only on $\cp(\Nal)$, or on $\cp^\str(\Nal)$ if $\Nal$ is in a red layer. The same applies symmetrically for the $\zein$'s. Finally, (\ref{eq: base block diagonal}) implies

\be \label{eq: base zein/zeout} \begin{split}
     \cs_\pad \left[(\ex_N)_N \right] \circ \zeout(\Nal) &= \sum_{\substack{\veck \in \PossVal \\ \muNveck = \al}} \cs_\pad \left[(\ex_N)_N \right] \circ \left(\bigotimes_{N \in \Nodes} \pi^{(k_A)_{A \in \out(N)}}_{N^\muNveck_\out} \right)_\pad \\
     &= \sum_{\substack{\veck \in \PossVal \\ \muNveck = \al}} \left(\bigotimes_{N \in \Nodes} \pi^{(k_A)_{A \in \out(N)}}_{N^\muNveck_\out} \right)_\pad \circ \cs_\pad \left[(\ex_N)_N \right] \\
     &= \zein(\Nal) \circ \cs_\pad \left[(\ex_N)_N \right] \, ; \\
\end{split}\ee

thus, $\cv_0[\zeout(\Nal)] = \zein(\Nal)$.

\paragraph{H3}

We will prove that $Z^\out(\Nal) \cap \bar{Z}^\out(\Mbe)$ is of the form $\tilde{Z} \times \Indout_\Mbe$, from which (H3) derives. This set can be computed, using bra-ket notations in $\Rel$, as $\left( \bra{1}_{\HappNal} \otimes \bra{0}_\HappMbe \right)_\pad \circ \La_\Ga$. Yet, one can see from the definition of the $\la^\aug$'s that

\be \bra{0}_{\HappMbe, \pad} \circ \la_M^\aug = \bra{0}_{\HappMbe, \pad} \circ \la_M^\aug \circ \ketbra{\Indout_\Mbe}{\Indout_\Mbe}_{\Indout_\Mbe, \pad} \, ; \ee

thus,

\be \begin{split}
    &\left( \bra{1}_{\HappNal} \otimes \bra{0}_\HappMbe \right)_\pad \circ \La_\Ga \\
    &= \left( \bra{1}_{\HappNal} \otimes \bra{0}_\HappMbe \right)_\pad \circ \SRelGa\left[(\laNaug)_N \right] \\
    &= \left( \bra{1}_{\HappNal} \otimes \bra{0}_\HappMbe \right)_\pad \circ \SRelGa\left[(\laNaug)_{N\neq M} \times \left(\la_M^\aug \circ \ketbra{\Indout_\Mbe}{\Indout_\Mbe}_{\Indout_\Mbe, \pad} \right) \right] \\
    &= \left( \left( \bra{1}_{\HappNal} \otimes \bra{0}_\HappMbe \right)_\pad \circ \SRelGa\left[(\laNaug)_N \right] \circ \ket{\Indout_\Mbe}_{\Indout_\Mbe} \right) \otimes \bra{\Indout_\Mbe}_{\Indout_\Mbe} \, ,
\end{split}  \ee

which shows that indeed $Z^\out(\Nal) \cap \bar{Z}^\out(\Mbe) = \tilde{Z} \times \Indout_\Mbe$. The proof for the $Z^\inn$'s is symmetric.

\paragraph{H4} (H4) comes from the fact that, for $\al \neq \bet$, one has $Z^\out(\Nal) \cap Z^\out(\Nbe) = \emptyset$, which can be derived directly from (\ref{eq: Z from Lasec}).

\paragraph{H5} We take $Q = \{ N^{\al(N)} \, | \, N \in \Tilde{Q}\} \subseteq \BranGa$. Then,

\be \begin{split}
    &\cs_\pad \left[(\ex_N)_N \right] \circ \prod_{N \in \tilde{Q}} \zeout(N^{\al(N)}) \\
    &\overset{\textrm (\ref{eq: Z from Lasec})}{=} \sum_{\substack{\veck \in \PossVal \\ \forall N \in \tilde{Q}, \muNveck = \al(N)}} \cs_\pad \left[(\ex_N)_N \right] \circ \left(\bigotimes_{N \in \Nodes} \pi^{(k_A)_{A \in \out(N)}}_{N^\muNveck_\out} \right)_\pad \\
    &\overset{\textrm (\ref{eq: base case H1 2})}{=} \sum_{\substack{\veck \in \PossVal \\ \forall N \in \tilde{Q}, \muNveck = \al(N)}} \cs_\pad \left[\left (\left( \bigotimes_{A \in \out(N)} \pi^{k_A}_{A} \right)_\pad \circ \ex_N \right)_N \right] \\
    &= \cs_\pad \left[ \left( \ex_N \right)_{N \not\in \tilde{Q}} \times \left ( \sum_{(k_A)_{A \in \out(N)} \in \IndoutNal} \left( \bigotimes_{A \in \out(N)} \pi^{k_A}_{A} \right)_\pad \circ \ex_N \right)_{N \in \tilde{Q}} \right] \\
    &= \cs_\pad \left[ \left( \ex_N \right)_{N \not\in \tilde{Q}} \times \left ( \pi^\al_{N^\out} \circ \ex_N \right)_{N \in \tilde{Q}} \right] \\
    &\overset{\textrm (\ref{eq: def exchange})}{=} \cs_\pad \left[ \left( \ex_N \right)_{N \not\in \tilde{Q}} \times \left ( V_{0,N}^\al \right)_{N \in \tilde{Q}} \right] \,.
\end{split} \ee

\paragraph{H6} We take $\Nal$ in a green layer, and $f \in \Lin \left[ \ch_{\Nalin} \right]$. We then have (note that $\zein(\Nal)$ doesn't act on $\Nalin$)

\be
\begin{split}
    &\cv_{0}^\dagger \left[ f_\pad \right] \circ \zepadout(\Nal) = \cs_\pad \left[(\ex_N)_N \right]^\dagger \circ f_\pad \circ \cs_\pad \left[(\ex_N)_N \right] \circ \zepadout(\Nal) \\
    &\overset{\textrm (\ref{eq: base zein/zeout})}{=} \cs_\pad \left[(\ex_N)_N \right]^\dagger \circ f_\pad \circ  \zepadin(\Nal) \circ \cs_\pad \left[(\ex_N)_N \right] \circ \zepadout(\Nal) \\
    &= \cs_\pad \left[(\ex_N)_N \right]^\dagger \circ  \zepadin(\Nal) \circ f_\pad  \circ \cs_\pad \left[(\ex_N)_N \right] \circ \zepadout(\Nal) \\
    &\overset{\textrm (\ref{eq: base zein/zeout})}{=} \zepadout(\Nal) \circ \cs_\pad \left[(\ex_N)_N \right]^\dagger \circ f_\pad \circ \cs_\pad \left[(\ex_N)_N \right] \circ \zepadout(\Nal) \\
    &\overset{\textrm (\ref{eq: zeta base}), (\ref{eq: Z from Lasec})}{=} \sum_{\substack{\veck, \vecl \in \PossVal \\ \muNveck = \muNvecl = \al}}  \left( T^\vecl \, \right)^\dagger \circ f_\pad \circ T^\veck\, .
\end{split}
\ee

Furthermore, taking $\Mbe \not\in \cpst(\Nal)$, because $\Nal$ is in a green layer we know that there is no red dashed arrow from $\Mbe$ to $\Nal$, and thus $\zein(\Mbe)$ doesn't act on $\Nalin$. We can thus apply the same computation to it as well, which leads to

\be
\begin{split}
    &\cv_{0}^\dagger \left[ f_\pad \right] \circ \zepadout(\Nal) = \cv_{0} \left[ f_\pad \right] \circ \zepadout(\Nal) \circ \left( \zepadout(\Mbe) + \barzepadout (\Mbe) \right)  \\
    &= \zepadout(\Nal) \circ \zepadout(\Mbe) \circ \cv_{0} \left[ f_\pad \right] \circ \zepadout(\Nal) \circ \zepadout(\Mbe)\\
    &+ \zepadout(\Nal) \circ \barzepadout(\Mbe) \circ \cv_{0} \left[ f_\pad \right] \circ \zepadout(\Nal) \circ \barzepadout(\Mbe)\\
    &=  \sum_{\substack{\veck, \vecl \in \PossVal \\ \muNveck = \muNvecl = \al \\
\mu_M(\veck) = \bet \iff \mu_M(\vecl) = \bet}}  \left( T^\vecl \, \right)^\dagger \circ f_\pad \circ T^\veck\, ;
\end{split}
\ee

in other words, in the sum above, the values of $\veck$ and $\vecl$ that lead to attributing different statuses to $\Mbe$ correspond to null terms, so that one can skip them in the summation. More generally, one can apply this reasoning to all branches $\Mbe \not\in \cpst(\Nal)$, leading to

\be \label{eq: base H6 1}
\cv_{0} \left[ f_\pad \right] \circ \zepadout(\Nal) =  \sum_{\substack{\veck, \vecl \in \PossVal \\ \muNveck = \muNvecl = \al \\
\forall \Mbe \not\in \cpst(\Nal), \mu_M(\veck) = \bet \iff \mu_M(\vecl) = \bet}}  \left( T^\vecl \, \right)^\dagger \circ f_\pad \circ T^\veck\, .
\ee

Using (\ref{eq: Tk H1}), we rewrite $T^\veck$, for an arbitrary $\veck$, as 

\be \begin{split}
    T^\veck &= \left( \bigotimes_M {p_M^{(k_A)_{A \in \inn(M)}}} \right)^\dagger_\pad \circ \left( \left( \bigotimes_A \id_{A^{k_A}} \right) \otimes \left( \bigotimes_{M, \bet \neq \mu_M(\veck) }\Theta_\Mbe \right)\right) \\
    &\circ \left( \bigotimes_M {i_M^{(k_A)_{A \in \out(M)}}} \right)^\dagger_\pad \, . 
\end{split}
\ee



Now, we take $\veck, \vecl \in \PossVal$ satisfying the requirements we pinned down earlier; we can then compute

\be \begin{split}
&{T^\vecl}^\dagger \circ f_\pad \circ T^\veck \\
&= \left( \bigotimes_{M}  i_M^{(l_A)_{A \in \out(M)}}  \right)_\pad \circ \left( \left( \bigotimes_A \id_{A^{l_A}} \right) \otimes \left( \bigotimes_{M, \bet \neq \mu_M(\vecl) }\Theta_\Mbe^\dagger \right)\right) \\
&\circ \left( \bigotimes_M {p_M^{(l_A)_{A \in \inn(M)}}}  \right)_\pad \circ f_\pad \circ \left( \bigotimes_M {p_M^{(k_A)_{A \in \inn(M)}}}  \right)^\dagger_\pad \\
&\circ \left( \left( \bigotimes_A \id_{A^{\kA}} \right) \otimes \left( \bigotimes_{M, \bet \neq \mu_M(\veck) }\Theta_\Mbe \right)\right) \circ \left( \bigotimes_M {i_M^{(k_A)_{A \in \out(M)}}}  \right)^\dagger_\pad \\
&= \left( \bigotimes_{M}  i_M^{(l_A)_{A \in \out(M)}}  \right)_\pad \circ \left[ \left( p_N^{(l_A)_{A \in \inn(N)}} \circ f \circ  \left( p_N^{(k_A)_{A \in \inn(N)}} \right)^\dagger \right) \right. \\
&\otimes \left( \bigotimes_{\substack{M \neq N \\ \mu_M(\vecl) \neq \mu_M(\veck)}} \left( \Theta_{M^{\mu_M(\veck)}}^\dagger \circ \left( p_M^{(k_A)_{A \in \inn(M)}} \right)^\dagger \right) \otimes \left( p_M^{(l_A)_{A \in \inn(M)}}  \circ \Theta_{M^{\mu_M(\vecl)}} \right)  \right) \\
&\otimes \left. \left( \bigotimes_{\substack{M \neq N \\ \mu_M(\vecl) = \mu_M(\veck) = 1}} {p_M^{(l_A)_{A \in \inn(M)}}} \circ \left( p_M^{(k_A)_{A \in \inn(M)}} \right)^\dagger \right)  \right]_\pad \circ \left( \bigotimes_M {i_M^{(k_A)_{A \in \out(M)}}}  \right)^\dagger_\pad \, . 
\end{split}
\ee

Note that each of the ${p_M^{(l_A)_{A \in \inn(M)}}} \circ \left(p_M^{(k_A)_{A \in \inn(M)}} \right)^\dagger$ terms, for $M$ such that $M^{\mu_M(\veck)} \not\in \cpst(\Nal) \cup \{\Nal\}$, can be rewritten as $\bigotimes_{A \in \inn(M)} p_A^{l_A} \circ i_A^{k_A}$ (which is the identity if $k_A = l_A \forall A \in \inn(M)$, and zero otherwise).

Now, for any $M$ such that $M^{\mu_M(\veck)} \not\in \cpst(\Nal)$, and for any $O$ such that $O^{\mu_O(\veck)} \in \cpst(\Nal) \cup \{ \Nal \}$, there is no arrow $A \in \Link(M,O)$ such that $\abs{A^\kA} \neq 1$, as that would imply the existence of a solid arrow from $M^{\mu_M(\veck)}$ to $O^{\mu_O(\veck)}$, which would contradict $M^{\mu_M(\veck)} \not\in \cpst(\Nal)$. Thus, all of the non-trivial arrows in $\out(M)$ go to $O$'s such that $O \neq N$ and $\mu_O(\vecl) = \mu_O(\veck)$. Thus this implies that, if one doesn't have $k_A = l_A \forall A \in \out(M)$ then the whole expression is null; while otherwise, the term in square brackets acts trivially on each of the $A \in \out(M)$ -- in other words, the arrows coming out of $M$ are never acted on and simply link $i_M^{(k_A)_{A \in \out(M)}}{}^\dagger$ directly to $i_M^{(l_A)_{A \in \out(M)}}$.  One can thus reorganise this expression (neglecting the existence of all the trivial spaces) as

\be \label{eq: comp H6 base}\begin{split}
&{T^\vecl}^\dagger \circ f_\pad \circ T^\veck =  \left[ \left( \bigotimes_{\substack{M \\ M^{\mu_M(\vecl)} \in \cpst(\Nal)}}  i_M^{(l_A)_{A \in \out(M)}}  \right)_\pad \circ \left[ \vphantom{\bigotimes_{\substack{a \\ b \\ c}}} \left( p_N^{(l_A)_{A \in \inn(N)}} \circ f \circ  \left( p_N^{(k_A)_{A \in \inn(N)}} \right)^\dagger \right) \right. \right. \\
&\otimes \left( \bigotimes_{\substack{M \\ M^{\mu_M(\vecl)}, M^{\mu_M(\veck)} \in \cpst(\Nal) \\ \mu_M(\vecl) \neq \mu_M(\veck)}} \left( \Theta_{M^{\mu_M(\veck)}}^\dagger \circ \left( p_M^{(k_A)_{A \in \inn(M)}} \right)^\dagger \right) \otimes \left( p_M^{(l_A)_{A \in \inn(M)}}  \circ \Theta_{M^{\mu_M(\vecl)}} \right)  \right) \\
&\otimes \left. \left( \bigotimes_{\substack{M \\ M^{\mu_M(\vecl)}, M^{\mu_M(\veck)} \in \cpst(\Nal) \\ \mu_M(\vecl) = \mu_M(\veck)}} \bigotimes_{A \in \inn(M)} p_A^{l_A} \circ i_A^{k_A} \right) \otimes \left( \bigotimes_{\substack{O \\ O^{\mu_O(\veck)} \not\in \cpst(\Nal)}} \bigotimes_{\substack{M \\ M^{\mu_M(\veck)} \in \cpst(\Nal)}} \bigotimes_{A \in \Link(M,O)} p_A^{l_A} \circ i_A^{k_A} \right)  \right]  \\
&\circ \left. \left( \bigotimes_{\substack{M \\ M^{\mu_M(\veck)} \in \cpst(\Nal)}} {i_M^{(k_A)_{A \in \out(M)}}}  \right)^\dagger_\pad \right]_\pad \otimes \left( \left( \bigotimes_{\substack{M \\ M^{\mu_M(\veck)} \not\in \cpst(\Nal)}} \pi^{(\kA)_{A \in \out(M)}}_{M^{\mu_M(\veck)}_\out} \right) \otimes \left( \bigotimes_{\substack{M, \bet \neq \mu_M(\veck) \\ \Mbe \not\in \cpst(\Nal)}} \id_{\Mbeout} \right)\right) \, ;
\end{split}
\ee

note how the action on the $\Mbeout$'s for $\Mbe \not\in \cpst(\Nal)$ now only consists of a projector independent of $f$. Note that in the second bracket of the third line, the condition $M^{\mu_M(\veck)} \in \cpst(\Nal)$ could equivalently have been replaced with $M^{\mu_M(\vecl)} \in \cpst(\Nal)$, because we know that $M^{\mu_M(\veck)} \not\in \cpst(\Nal) \implies M^{\mu_M(\vecl)} = M^{\mu_M(\veck)} \not\in \cpst(\Nal)$, and conversely; so that we have the equivalence $M^{\mu_M(\veck)} \not\in \cpst(\Nal) \iff M^{\mu_M(\vecl)} \not\in \cpst(\Nal)$.

Both bracketed terms in the third line can be rewritten simply as Kronecker deltas, of the form:

\be \left( \prod_{\substack{M \\ M^{\mu_M(\vecl)}, M^{\mu_M(\veck)} \not\in \cpst(\Nal) \\ \mu_M(\vecl) = \mu_M(\veck)}} \prod_{A \in \inn(M)} \delta_{k_A, l_A} \right)  \cdot  \left( \prod_{\substack{O \\ O^{\mu_O(\veck)} \not\in \cpst(\Nal)}} \prod_{\substack{M \\ M^{\mu_M(\veck)} \in \cpst(\Nal)}} \prod_{A \in \Link(M,O)} \delta_{k_A, l_A} \right) \, .\ee

We now take $\vec{q} = (q^\Oga)_{\Oga \in \BranGa}, \vec{r} = (r^\Oga)_{\Oga \in \BranGa} \in \bigtimes_{\Oga \in \BranGa} \Indout_\Oga$. We then have

\be \label{eq: comp H6 base 2}\begin{split}
    &\left( \bigotimes_{\Oga \in \BranGa} \pi^{r^\Oga}_\Ogaout \right) \circ \cv_0^\dagger[f_\pad] \circ \zepadout(\Nal) \circ \left( \bigotimes_{\Oga \in \BranGa} \pi^{q^\Oga}_\Ogaout \right) \\
    &= \left( \bigotimes_{\Oga \in \BranGa} \pi^{r^\Oga}_\Ogaout \right) \circ \cs_\pad \left[(\ex_N)_N \right]^\dagger \circ f_\pad \circ \cs_\pad \left[(\ex_N)_N \right] \circ  \left( \bigotimes_{\Oga \in \BranGa} \pi^{q^\Oga}_\Ogaout \right) \circ \zepadout(\Nal) \\
    &= \left( \bigotimes_{\Oga \in \BranGa} \pi^{r^\Oga}_\Ogaout \right) \circ \left( T^{\La^\sec_\Ga \left(\vec{r} \right)} \right)^\dagger \circ f_\pad \circ T^{\La_\Ga^\sec \left( \vec{q} \right)} \circ  \left( \bigotimes_{\Oga \in \BranGa} \pi^{q^\Oga}_\Ogaout \right) \circ \zepadout(\Nal) \, .
\end{split}
\ee

Note that $\mu_M \circ \La_\Ga^\sec \left( \vec{q} \right)$ denotes the only branch $\bet$ of $M$ such that the $\Mbe$ term of $\La_\Ga \left( \vec{q} \right)$, which we denote $\La^\Mbe_\Ga \left( \vec{q} \right)$, is $1$. By the previous considerations, the term above can thus be non null only if $\La^\Nal_\Ga \left( \vec{q} \right) = \La^\Nal_\Ga \left( \vec{r} \right) = 1$ and if $\La^\Mbe_\Ga \left( \vec{q} \right) = \La^\Mbe_\Ga \left( \vec{r} \right) \forall \Mbe \not\in \cpst(\Nal)$. Furthermore, when this is the case, then we can use (\ref{eq: comp H6 base}) and get 

\be \label{eq: comp H6 base 3}\begin{split}
    &\left( \bigotimes_{\Oga \in \BranGa} \pi^{r^\Oga}_\Ogaout \right) \circ \cv_0^\dagger[f_\pad] \circ \zepadout(\Nal) \circ \left( \bigotimes_{\Oga \in \BranGa} \pi^{q^\Oga}_\Ogaout \right) \\
    &= F^{(r^\Oga)_{\Oga \in \cpst(\Nal)}, (q^\Oga)_{\Oga \in \cpst(\Nal)}} \otimes \left(\bigotimes_{\Mbe \not\in \cpst(\Nal)} \pi^{q^\Mbe}_\Mbeout \right) \\
    &\cdot \prod_{\Oga \not\in \cpst(\Nal)} \delta_{q^\Oga, r^\Oga}\, ,
\end{split}
\ee

where

\be \label{eq: F^rq} \begin{split}
&F^{(r^\Oga)_{\Oga \in \cpst(\Nal)}, (q^\Oga)_{\Oga \in \cpst(\Nal)}} \\
&= \left( \bigotimes_{\Oga \in \cpst(\Nal)} \pi^{r^\Oga}_\Ogaout \right) \circ  \left( \bigotimes_{\substack{\Mbe \in \cpst(\Nal) \\ \La^\Mbe_\Ga \left( \vec{r} \right) =1}}  i_M^{(l_A)_{A \in \out(M)}}  \right)_\pad \circ \left[ \vphantom{\bigotimes_{\substack{a \\ b \\ c}}} \left( p_N^{(l_A)_{A \in \inn(N)}} \circ f \circ  \left( p_N^{(k_A)_{A \in \inn(N)}} \right)^\dagger \right)  \right. \\
&\otimes \left. \left( \bigotimes_{\substack{M \textrm{ s.t.\ } \exists \bet, \bet': \\ \Mbe, M^{\bet'} \in \cpst(\Nal) \\ \La^\Mbe_\Ga \left( \vec{q} \right) = \La^{M^{\bet'}}_\Ga \left( \vec{r} \right) = 1}} \left( \Theta_{\Mbe}^\dagger \circ \left( p_M^{(k_A)_{A \in \inn(M)}} \right)^\dagger \right) \otimes \left( p_M^{(l_A)_{A \in \inn(M)}}  \circ \Theta_{M^{\bet'}} \right)  \right) \right] \\
&\circ \left( \bigotimes_{\substack{\Mbe \in \cpst(\Nal) \\ \La^\Mbe_\Ga \left( \vec{q} \right) =1}} {i_M^{(k_A)_{A \in \out(M)}}}  \right)^\dagger_\pad \circ \left( \bigotimes_{\Oga \in \cpst(\Nal)} \pi^{q^\Oga}_\Ogaout \right) \cdot \left( \prod_{\substack{\Mbe \in \cpst(\Nal) \\ \La^\Mbe_\Ga \left( \vec{r} \right) = \La^\Mbe_\Ga \left( \vec{q} \right) = 1}} \prod_{A \in \inn(M)} \delta_{k_A, l_A} \right)  \\
&\cdot  \left( \prod_{\substack{O \\ \sum_{\Oga \in \cpst(\Nal)} \La^\Oga_\Ga \left( \vec{q} \right) = 0}} \quad \prod_{\substack{M \\ \sum_{\Mbe \in \cpst(\Nal)} \La^\Mbe_\Ga \left( \vec{q} \right) = 1}} \quad \prod_{A \in \Link(M,O)} \delta_{k_A, l_A} \right)\, ,
\end{split}\ee

with $\veck := \La_\Ga^\sec(\vec{q})$, $\vecl := \La_\Ga^\sec(\vec{r})$. Note how $F^{(r^\Oga)_{\Oga \in \cpst(\Nal)}, (q^\Oga)_{\Oga \in \cpst(\Nal)}}$, which comes from the term in square brackets in (\ref{eq: comp H6 base}) and acts only on $P$ and on $\cpst(\Nal)$, doesn't depend on the value of $(q^\Oga)_{\Oga \not\in \cpst(\Nal)}$: indeed, the values of $\La^\Mbe_\Ga \left( \vec{q} \right)$ and of $\La^\Mbe_\Ga \left( \vec{r} \right)$ don't depend on the values of the $q^\Oga$ for $\Oga \not\in \cpst(\Nal)$, as there are no green dashed arrows from these $\Oga$'s to any $\Mbe \in \cpst(\Nal)$.  Similarly, the relevant values of the $k_A$'s are those such that $A \in \out(M)$ for some $\Mbe \in \cpst(\Nal)$ such that $\La^\Mbe_\Ga \left( \vec{q} \right) = 1$; thus they are just equal to the $A$-value of $q^\Mbe$. The same goes for the $l_A$'s.

So far, we proved that (\ref{eq: comp H6 base 3}) holds for $\vec{q}$ and $\vec{r}$ satisfying: $\La^\Nal_\Ga \left( \vec{q} \right) = \La^\Nal_\Ga \left( \vec{r} \right) = 1$ and $\La^\Mbe_\Ga \left( \vec{q} \right) = \La^\Mbe_\Ga \left( \vec{r} \right) \forall \Mbe \not\in \cpst(\Nal)$. We now want to prove that the same holds when the latter condition is not satisfied -- or in other words, that in this case the RHS is also null. We will thus prove that if $\La^\Mbe_\Ga \left( \vec{q} \right) = \La^\Mbe_\Ga \left( \vec{r} \right) \forall \Mbe \not\in \cpst(\Nal)$ does not hold, then (\ref{eq: F^rq}) is null, and thus the RHS in (\ref{eq: comp H6 base 3}) is null as well.

We suppose (\ref{eq: F^rq}) is not null, and take $\Mbe \not\in \cpst(\Nal)$ such that $\La^\Mbe_\Ga \left( \vec{q} \right) = 1$. Taking $A \in \out(M)$, and denoting $O := \head(A)$ and $\ga$ such that $\La^\Oga_\Ga \left( \vec{q} \right) = 1$, we have: either $\Oga \in \cpst(\Nal)$, in which case $k_A = l_A$ by the penultimate term in (\ref{eq: F^rq}); or $\Oga \not\in \cpst(\Nal)$, in which case $k_A = l_A$ by the last term in (\ref{eq: F^rq}). Thus we have $k_A = l_A \forall A \in \out(M)$, and thus $\La^\Mbe_\Ga \left( \vec{r} \right) = 1$. Symmetrically, $\La^\Mbe_\Ga \left( \vec{r} \right) = 1$ implies $\La^\Mbe_\Ga \left( \vec{q} \right) = 1$, so we indeed get $\La^\Mbe_\Ga \left( \vec{r} \right) = \La^\Mbe_\Ga \left( \vec{q} \right)$.







Therefore, (\ref{eq: comp H6 base 3}) holds for $\vec{q}$ satisfying $\La^\Nal_\Ga \left( \vec{q} \right) = 1$.  We can thus finally compute

\be \begin{split}
    &\cv_0^\dagger[f_\pad] \circ \zepadout(\Nal) \\
    &= \sum_{\substack{\vec{q}, \vec{r} \\ \La^\Nal_\Ga \left( \vec{q} \right) = 1}} \left( \bigotimes_{\Oga \in \BranGa} \pi^{r^\Oga}_\Ogaout \right) \circ \cv_0^\dagger[f_\pad] \circ \zepadout(\Nal) \circ \left( \bigotimes_{\Oga \in \BranGa} \pi^{q^\Oga}_\Ogaout \right) \\
    &\overset{\textrm (\ref{eq: comp H6 base 3})}{=} \sum_{\substack{\vec{q}, \vec{r} \\ \La^\Nal_\Ga \left( \vec{q} \right) = 1}} F^{(r^\Oga)_{\Oga \in \cpst(\Nal)}, (q^\Oga)_{\Oga \in \cpst(\Nal)}}\\
    &\otimes \left(\bigotimes_{\Mbe \not\in \cpst(\Nal)} \pi^{q^\Mbe}_\Mbeout \right) \cdot \prod_{\Oga \not\in \cpst(\Nal)} \delta_{q^\Oga, r^\Oga} \\
    &=  \sum_{\substack{\left( q^\Oga \right)_{\Oga \in \cpst(\Nal)} \\ \left( r^\Oga \right)_{\Oga \in \cpst(\Nal)}}} \left( F^{(r^\Oga)_{\Oga \in \cpst(\Nal)}, (q^\Oga)_{\Oga \in \cpst(\Nal)}} \otimes \sum_{\substack{\left( q^\Oga \right)_{\Oga \not\in \cpst(\Nal)} \\ \La^\Nal_\Ga \left( \vec{q} \right) = 1}} \left(\bigotimes_{\Mbe \not\in \cpst(\Nal)} \pi^{q^\Mbe}_\Mbeout \right) \right) \\
    &= \left( \sum_{\substack{\left( q^\Oga \right)_{\Oga \in \cpst(\Nal)} \\  \left( r^\Oga \right)_{\Oga \in \cpst(\Nal)}}}  F^{(r^\Oga)_{\Oga \in \cpst(\Nal)}, (q^\Oga)_{\Oga \in \cpst(\Nal)}} \right) \circ \left( \sum_{\substack{\left( s^\Oga \right)_{\Oga \in \BranGa} \\ \La^\Nal_\Ga \left( \vec{s} \right) = 1}} \bigotimes_{\Mbe \not\in \cpst(\Nal)} \pi^{s^\Mbe}_\Mbeout \right)_\pad  \\
    &= \left( \sum_{\substack{\left( q^\Oga \right)_{\Oga \in \cpst(\Nal)} \\  \left( r^\Oga \right)_{\Oga \in \cpst(\Nal)}}}  F^{(r^\Oga)_{\Oga \in \cpst(\Nal)}, (q^\Oga)_{\Oga \in \cpst(\Nal)}} \right) \circ \zepadout(\Nal) \, ;
\end{split}
\ee

denoting the left-hand factor as $f'$ yields (\ref{eq:(H6)}).

\paragraph{H7} The proof of (H7) is the symmetric of that of (H6).

\subsubsection{Proof of the induction step}

We suppose the induction hypotheses are all satisfied up to step $i$. We write $\Mbe := B(i+1)$, the branch we have to refill in this induction step.

We first consider the case: neither $i$ nor $i+1$ are special steps. Note that, because the branches have been ordered so that all branches of a same layer are next to each other, the fact that $i+1$ is not a special step entails: $\Mbe$ is in a red layer $\implies \cpi(\Mbe) = \emptyset$.

\paragraph*{H1}

Let us first prove H1 at step $i+1$. From (\ref{eq:zetain to zetaout}) and (H5) applied to $Q = \{\Mbe\}$, we have

\be \begin{split}
    \Spad[(V_{N,i})_N] &= \, \zeipadin(\Mbe) \circ \Spad[(V_{N,i})_{N \neq M} \times (V_{M,i}^\bet)] \circ \zeipadout(\Mbe) \\
    &+ \, \barzeipadin(\Mbe) \circ \Spad[(V_{N,i})_{N \neq M} \times (\Bar{V}_{M,i}^\bet)] \circ \barzeipadout(\Mbe) \,.    \end{split}  \ee

Furthermore,

\begin{subequations}
\be \Spad[(V_{N,i})_{N \neq M} \times (V_{M,i+1}^\bet)] = \Tr_{\Mbeout} [U_\pad^\bet \circ \Spad[(V_{N,i})_{N \neq M} \times (V_{M,i}^\bet)] ] \,, \ee
\be \Spad[(V_{N,i})_{N \neq M} \times (\bar{V}_{m,i+1}^\bet)] = \frac{1}{\dim (\Mbe)} \Tr_{\Mbeout} [\Theta_{\Mbe,\pad}^\dagger \circ \Spad[(V_{N,i})_{N \neq M} \times (\bar{V}_{M,i}^\bet)]] \,,  \ee
\end{subequations}

so, because $\zeiin(\Mbe)$ doesn't act on $\Mbein$, and $\zeiout(\Mbe)$ doesn't act on $\Mbeout$,

\be \label{eq: step for H1} \begin{split}
    \Spad[(V_{N,i+1})_N] &= \, \Spad[(V_{N,i})_{N \neq M} \times (V_{M,i+1}^\bet)] + \Spad[(V_{N,i})_{N \neq M} \times (\bar{V}_{M,i+1}^\bet)] \\
    &= \, \zeipadin(\Mbe) \circ \Tr_{\Mbeout} [U_\pad^\bet \circ \Spad[(V_{N,i})_{N \neq M} \times (V_{M,i}^\bet)] ] \circ \zeipadout(\Mbe) \\
    &+ \, \barzeipadin(\Mbe) \circ \frac{1}{\dim (\Mbe)} \Tr_{\Mbeout} [\Theta_{\Mbe,\pad}^\dagger \circ \Spad[(V_{N,i})_{N \neq M} \times (\bar{V}_{M,i}^\bet)]] \circ \barzeipadout(\Mbe) \,.    \end{split}  \ee

Therefore, $\Spad[(V_{N,i+1})_N]$ can be decomposed into two terms: one that can be considered as a linear map from the subspace of $\ch_i^\out$ defined by the projector $\zeipadout(\Mbe)$, to the subspace of $\ch_i^\inn$ defined by the projector $\zeipadin(\Mbe)$; and one that can be considered a linear map from, and to, the subspaces orthogonal to these. We now have to prove that each of these two terms is unitary.

We start with the first term. (H6) implies that $U_\pad^\bet \circ \Spad[(V_{N,i})_{N \neq M} \times (V_{M,i}^\bet)]$ features no causal influence from $\Mbeout$ to $\Mbein$, via the characterisation of causal influence in terms of algebras \cite{Ormrod2022} (note that it makes sense to talk about the factors $\Mbeout$ and $\Mbein$ of its input and output spaces because the $\ze(\Mbe)$'s do not act on these). Therefore, one can find a unitary causal decomposition of it as $W^2 \circ (\swap_{\Mbeout, \Mbein} \otimes \id) \circ W^1$, where $W^1$ doesn't act on $\Mbeout$ and $W^2$ doesn't act on $\Mbein$. The first term in (\ref{eq: step for H1}) -- with its input and output spaces suitably restricted -- is thus of the form $W^2 \circ (U^\bet \otimes \id) \circ W^1$, which is unitary.

As for the second term, one can see from the definition of the $V_{M,i}^\al$'s that $\Spad[(V_{N,i})_{N \neq M} \times (\bar{V}_{M,i}^\bet)]$ is of the form $\Theta_{\Mbe} \otimes W$, with $W$ a unitary (once restricted to the suitable subspaces). Therefore, the term can simply be rewritten as $W$.

We have therefore proven (H1) at rank $i+1$.

\paragraph*{A Lemma.}

Before turning to the other induction hypotheses, we prove a Lemma that we will need to use a few times to compute how $\cv_{i+1}^\dagger$ acts on sufficiently well-behaved linear operators.

\begin{lemma} \label{lem: MovingThrough}
Let $g \in \Lin[\ch_i^\inn]$, not acting (i.e.\ acting trivially) on $\Mbein$, commuting with $\zeipadin(\Mbe)$, and satisfying: $\cv_i^\dagger[g] \circ \barzeipadout(\Mbe)$ doesn't act on $\Mbeout$. We fix an orthonormal (with respect to the Hilbert-Schmidt inner product) basis $(E_m)_{1 \leq m \leq \dim(\Mbeout)^2}$ of $\Lin[\ch_\Mbeout]$, with $E_0 = \id$, and decompose $\cv_i^\dagger[g]$ as

\be \label{eq: g decomposition} \cv_i^\dagger[g] = \sum_m \chi_m \otimes E_m \, ,\ee

with the $\chi_m$'s acting on $\ch_P \otimes \left( \bigotimes_{\Oga > B(i+1)} \ch_\Ogaout  \right)$. With padding, we can also write $\cv_i^\dagger[g] = \sum_m \chi_{m, \pad} \circ E_{m, \pad}$, with the terms commuting. We then have

\be \label{eq: Lemma's result} \cv_{i+1}^\dagger[g] = \chi_{0, \pad} + \sum_{m \neq 0} E'_{m, \pad} \circ \chi_{m, \pad} \, , \ee

where the $E'_m$'s are defined, through the use of (H6) at step $i$, by $\cv_{i}^\dagger[({U^{\bet}}^\dagger E_m U ^\bet)_\pad] \circ \zeipadout(\Mbe) = E'_{m,\pad} \circ \zeipadout(\Mbe)$, with the $E'_m$'s only acting on $\ch_P$ (because $\cpist(\Mbe) = \emptyset$).
\end{lemma}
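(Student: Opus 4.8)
The plan is to pull $g$ back through the single branch $\Mbe := B(i+1)$ that is refilled at this step, combining the block decomposition of $\Spad[(V_{N,i+1})_N]$ obtained while proving (H1) at rank $i+1$ with the defining decomposition $\cv_i^\dagger[g] = \sum_m \chi_m \otimes E_m$. Write $S := \Spad[(V_{N,i+1})_N]$ and $T := \Spad[(V_{N,i})_N]$. From (\ref{eq: step for H1}), together with (H5) applied to $Q = \{\Mbe\}$ (which gives $\Spad[(V_{N,i})_{N\neq M} \times (V^\bet_{M,i})] = T \circ \zeipadout(\Mbe)$) and with (\ref{eq:zetain to zetaout}) (which gives $\zeipadin(\Mbe) \circ T = T \circ \zeipadout(\Mbe)$), one reads off $S = S_+ + S_-$, where $S_+ = \zeipadin(\Mbe) \circ \Tr_\Mbeout[U^\bet_\pad \circ T] \circ \zeipadout(\Mbe)$ maps the $\zeipadout(\Mbe)$-subspace to the $\zeipadin(\Mbe)$-subspace, and $S_- = \barzeipadin(\Mbe) \circ W \circ \barzeipadout(\Mbe)$ maps the complementary subspaces, $\Th_\Mbe \otimes W$ being the form of $\Spad[(V_{N,i})_{N\neq M} \times (\bar V^\bet_{M,i})]$ exhibited in the (H1) step. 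First I would note that the cross terms vanish: since $g$ commutes with $\zeipadin(\Mbe) = \cv_i[\zeipadout(\Mbe)]$ (hence with its complement), $S_+^\dagger g_\pad S_- = S_-^\dagger g_\pad S_+ = 0$, so $\cv_{i+1}^\dagger[g] = S_+^\dagger g_\pad S_+ + S_-^\dagger g_\pad S_-$; moreover $\cv_i^\dagger[g] = T^\dagger g_\pad T$ then commutes with $\cv_i^\dagger[\zeipadin(\Mbe)] = \zeipadout(\Mbe)$, and since $\zeipadout(\Mbe)$ is trivial on $\ch_\Mbeout$, linear independence of the $E_m$ forces every $\chi_m$ to commute with $\zeipadout(\Mbe)$.

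The complement block is the easy one. From $T \circ \barzeipadout(\Mbe) = \barzeipadin(\Mbe) \circ (\Th_\Mbe \otimes W) \circ \barzeipadout(\Mbe)$ and the fact that $g$ does not act on $\Mbein$, one gets $\cv_i^\dagger[g] \circ \barzeipadout(\Mbe) = \barzeipadout(\Mbe) \circ (\id_{\ch_\Mbeout} \otimes (W^\dagger g W)) \circ \barzeipadout(\Mbe)$. Matching this against $\sum_m \chi_{m,\pad} \circ E_{m,\pad}$, the hypothesis that $\cv_i^\dagger[g] \circ \barzeipadout(\Mbe)$ does not act on $\ch_\Mbeout$ gives $\chi_m \circ \barzeipadout(\Mbe) = 0$ for $m \neq 0$, while the $m = 0$ part gives $\chi_{0,\pad} \circ \barzeipadout(\Mbe) = \barzeipadout(\Mbe) \circ W^\dagger g W \circ \barzeipadout(\Mbe)$; hence $S_-^\dagger g_\pad S_- = \barzeipadout(\Mbe) \circ W^\dagger g W \circ \barzeipadout(\Mbe) = \chi_{0,\pad} \circ \barzeipadout(\Mbe)$.

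The substantive step is $S_+^\dagger g_\pad S_+$. Using the expression for $S_+$, the commutation of $g$ with $\zeipadin(\Mbe)$, and $\zeipadin(\Mbe) \circ T = T \circ \zeipadout(\Mbe)$, and moving partial trace through $\dagger$ and through operators not touching $\ch_\Mbeout$, this collapses to $\zeipadout(\Mbe) \circ \Tr_\Mbeout[T^\dagger (U^\bet)^\dagger_\pad] \circ g_\pad \circ \Tr_\Mbeout[U^\bet_\pad \circ T] \circ \zeipadout(\Mbe)$. The two partial traces run over independent copies of $\ch_\Mbeout$; I would introduce a second copy $\ch_\Mbeout{}'$, rewrite the product of the two traces as a single trace over $\ch_\Mbeout \otimes \ch_\Mbeout{}'$, and insert the defining identity $g_\pad = T \circ (\sum_m \chi_{m,\pad} \circ E_{m,\pad}) \circ T^\dagger$. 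Contracting the two copies of $T$ against the two copies of $T^\dagger$, the $(U^\bet)^\dagger_\pad$ and $U^\bet_\pad$ sandwiches turn each $E_m$ into $(U^\bet)^\dagger E_m U^\bet$, leaving $\sum_m \zeipadout(\Mbe) \circ \cv_i^\dagger[((U^\bet)^\dagger E_m U^\bet)_\pad] \circ \zeipadout(\Mbe) \circ \chi_{m,\pad}$. The $m = 0$ term is $\zeipadout(\Mbe) \circ \chi_{0,\pad} \circ \zeipadout(\Mbe) = \chi_{0,\pad} \circ \zeipadout(\Mbe)$ since $E_0 = \id$ and $\chi_0$ commutes with $\zeipadout(\Mbe)$; for $m \neq 0$ one invokes (H6) at step $i$ --- this is exactly where $\cpist(\Mbe) = \emptyset$ enters, so that the $E'_m$ produced acts only on $\ch_P$ --- obtaining $\cv_i^\dagger[((U^\bet)^\dagger E_m U^\bet)_\pad] \circ \zeipadout(\Mbe) = E'_{m,\pad} \circ \zeipadout(\Mbe)$, and, combined with $\chi_m = \chi_m \circ \zeipadout(\Mbe)$ from the first paragraph, this term is $E'_{m,\pad} \circ \chi_{m,\pad}$. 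Thus $S_+^\dagger g_\pad S_+ = \chi_{0,\pad} \circ \zeipadout(\Mbe) + \sum_{m \neq 0} E'_{m,\pad} \circ \chi_{m,\pad}$, and adding the complement block yields $\cv_{i+1}^\dagger[g] = \chi_{0,\pad} + \sum_{m \neq 0} E'_{m,\pad} \circ \chi_{m,\pad}$, which is (\ref{eq: Lemma's result}).

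The \emph{main obstacle} is the bookkeeping in that contraction: keeping track of which of the two copies of $\ch_\Mbeout$ each factor lives on, being consistent about the padding conventions --- in particular about the swap gates inside $T$ that route $\ch_\Mbein$ onto the branch-$\bet$ ancillary wire --- and checking that the sandwiching genuinely produces $(U^\bet)^\dagger E_m U^\bet$ rather than some reshuffled operator. I would also record that the special-step cases do not arise here, the Lemma being applied with $\Mbe$ in a green layer and $\cpist(\Mbe) = \emptyset$; were they to arise, the same argument would be run inside the relevant layer.
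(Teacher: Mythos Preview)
Your overall strategy (block decomposition $S = S_+ + S_-$, cross terms vanish, handle the two blocks separately) is sound and lands on the right answer, and the $S_-$ block and the derivation of $\chi_m \circ \barzeipadout(\Mbe) = 0$ for $m\neq 0$ are correct. However, the paper takes a cleaner, genuinely different route: it never splits $S$ into blocks and never computes $S^\dagger g S$ directly. Instead it works \emph{one-sidedly}, computing $g \circ S$ using the compact form $S = \Tr_\Mbeout[\kappa^\inn_i(\Mbe) \circ T]$, pushing $g$ inside the single trace (where it becomes $T \circ \cv_i^\dagger[g]$), then using cyclicity of the feedback trace to move each $E_m$ to the other side, rewriting $E_m U^\bet = U^\bet \cdot ((U^\bet)^\dagger E_m U^\bet)$ and pushing the $\Mbein$-operator back through $T$ as $T \circ \cv_i^\dagger[((U^\bet)^\dagger E_m U^\bet)_\pad]$. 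Only at that point is (H6) invoked. This yields $g \circ S = S \circ (\chi_0 + \sum_{m\neq 0} E'_m \chi_m)$, and unitarity of $S$ finishes the argument in one line. Everything happens inside \emph{one} copy of $\Mbeout$ and one of $\Mbein$, so the two-copy bookkeeping you flag as the main obstacle simply never arises.

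There is also a genuine imprecision in your $S_+$ block. The phrase ``insert $g_\pad = T \cv_i^\dagger[g] T^\dagger$ and contract the two copies of $T$ against the two copies of $T^\dagger$'' does not type-check as written: at the location of $g$ in your double trace, the space is $\ch_{i+1}^\inn$ (no $\Mbein$ factor), whereas $T \cv_i^\dagger[g] T^\dagger$ lives on $\ch_i^\inn$; and the $T_L^\dagger$ and $T_R$ use \emph{different} copies of $\Mbein$ and $\Mbeout$, so they do not cancel against an inserted $T, T^\dagger$ pair. What actually works is to commute $g$ past $U^\bet_R$ (they act on disjoint factors), so that $g$ sits between $T_R$ and $U^\bet_R$ on $\ch_i^\inn$, then use $g \circ T_R = T_R \circ (\cv_i^\dagger[g])_R$ once --- this is exactly the paper's one-sided move, applied only to the right factor. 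After that, cyclicity on $\Mbeout_R$ and the rewriting $E_m U^\bet = U^\bet((U^\bet)^\dagger E_m U^\bet)$ give your claimed $\sum_m \zeipadout(\Mbe) \circ \cv_i^\dagger[((U^\bet)^\dagger E_m U^\bet)_\pad] \circ \zeipadout(\Mbe) \circ \chi_m$, and the remaining left factor $\Tr_{\Mbeout_L}[T_L^\dagger (U^\bet_L)^\dagger]$ is just $\tilde T^\dagger$, with $\tilde T^\dagger \tilde T \circ \zeipadout(\Mbe) = \zeipadout(\Mbe)$ by (H1). So your endpoint is right, but the mechanism is the paper's single-trace trick, not a pairwise $T$/$T^\dagger$ contraction across copies.
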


\begin{proof}
We will compute $\cv_{i+1}^\dagger[g]$ by looking at how $g$ `moves through' $\Spad[(V_{N,i+1})_N]$. First, we rewrite (\ref{eq: step for H1}) more compactly as

\be \label{eq: compact rewriting} \begin{split}
    &\Spad[(V_{N,i+1})_N] \\
    &= \, \Tr_{\Mbeout} \left[ \left(\zeipadin(\Mbe) \otimes U_\pad^\bet + \barzeipadin(\Mbe) \otimes \frac{1}{\dim (\Mbe)} \Theta_{\Mbe,\pad}^\dagger \right)     \circ \Spad[(V_{N,i})_{N}] \right] \, .
    \end{split}  \ee

Thus (because $g$ doesn't act on $\Mbein$, and commutes with $\zeipadin(\Mbe)$),

\be \label{eq: computation for Lemma} \begin{split}
    & g \circ \Spad[(V_{N,i+1})_N] \\
    &= \, g \circ \Tr_{\Mbeout} \left[ \left( \zeipadin(\Mbe) \circ  U_\pad^\bet + \barzeipadin(\Mbe) \circ  \frac{1}{\dim (\Mbe)} \Theta_{\Mbe,\pad}^\dagger \right)     \circ \Spad[(V_{N,i})_{N}] \right] \\
    &= \, \Tr_{\Mbeout} \left[ \left(\zeipadin(\Mbe) \circ  U_\pad^\bet + \barzeipadin(\Mbe) \circ  \frac{1}{\dim (\Mbe)} \Theta_{\Mbe,\pad}^\dagger \right)     \circ g \circ \Spad[(V_{N,i})_{N}] \right] \\
    &= \, \Tr_{\Mbeout} \left[ \left(\zeipadin(\Mbe) \circ  U_\pad^\bet + \barzeipadin(\Mbe) \circ  \frac{1}{\dim (\Mbe)} \Theta_{\Mbe,\pad}^\dagger \right)      \circ \Spad[(V_{N,i})_{N}] \circ \cv_i^\dagger[g] \right] \, .
    \end{split}  \ee

We now consider the decomposition (\ref{eq: g decomposition}) of $\cv_i^\dagger[g]$, and we look at

\be \begin{split}
    \sum_{m \neq 0} (\chi_{m,\pad} \circ \barzeipadout(\Mbe)) \otimes E_m  &= \sum_{m \neq 0} (\chi_{m} \otimes E_m)_\pad \circ \barzeipadout(\Mbe) \\ &= \cv_i^\dagger[g] \circ \barzeipadout(\Mbe) - (\chi_0 \otimes \id_\Mbeout)_\pad \circ \barzeipadout(\Mbe) \, .
\end{split}\ee

Both terms of the second line's RHS act trivially on $\Mbeout$: the first term by assumption, and the second because it is a composition of operators acting trivially on $\Mbeout$. From the form of the LHS, we can thus deduce: $\forall m \neq 0, \chi_{m,\pad} \circ \barzeipadout(\Mbe) = 0$, which can be rewritten as

\be \label{eq: absorbing zeta} \forall m \neq 0, \chi_{m,\pad} = \chi_{m,\pad} \circ \zeipadout(\Mbe) \, . \ee

In the same way we can prove that $\chi_{m,\pad} = \zeipadout(\Mbe) \circ \chi_{m,\pad}$. We are now in a position to continue the computation started in (\ref{eq: computation for Lemma}); we write $\kappa_i^\inn(\Mbe):= \zeipadin(\Mbe) \otimes U_\pad^\bet + \barzeipadin(\Mbe) \otimes \frac{1}{\dim (\Mbe)} \Theta_{\Mbe,\pad}^\dagger$.

\be \begin{split}
    & g \circ \Spad[(V_{N,i+1})_N] \\
    &\overset{(\ref{eq: absorbing zeta})}{=} \, \Tr_{\Mbeout} \left[ \left(\zeipadin(\Mbe) \circ  U_\pad^\bet + \barzeipadin(\Mbe) \circ  \frac{1}{\dim (\Mbe)} \Theta_{\Mbe,\pad}^\dagger \right)      \circ \Spad[(V_{N,i})_{N}] \circ \chi_{0, \pad} \right] \\
    &+ \, \sum_{m \neq 0} \Tr_{\Mbeout} \left[ \zeipadin(\Mbe) \circ  U_\pad^\bet      \circ \Spad[(V_{N,i})_{N}] \circ \chi_{m, \pad} \circ E_{m, \pad} \right] \\
    &= \, \Spad[(V_{N,i+1})_N] \circ \chi_{0, \pad} \\
    &+ \, \sum_{m \neq 0} \Tr_{\Mbeout} \left[ E_{m, \pad} \circ \zeipadin(\Mbe) \circ  U_\pad^\bet      \circ \Spad[(V_{N,i})_{N}] \circ \chi_{m, \pad} \right] \\
    &= [\ldots] +  \sum_{m \neq 0} \Tr_{\Mbeout} \left[ \zeipadin(\Mbe) \circ U_\pad^\bet      \circ \Spad[(V_{N,i})_{N}] \circ \cv_{i}^\dagger[({U^{\bet}}^\dagger E_m U ^\bet)_\pad] \right] \circ \chi_{m, \pad}\\
    &= [\ldots] +  \sum_{m \neq 0} \Tr_{\Mbeout} \left[ \kappa_{i,\pad}^\inn(\Mbe) \circ \zeipadin(\Mbe)  \circ \Spad[(V_{N,i})_{N}] \circ \cv_{i}^\dagger[({U^{\bet}}^\dagger E_m U ^\bet)_\pad] \right] \circ \chi_{m, \pad}\\
    &= [\ldots] +  \sum_{m \neq 0} \Tr_{\Mbeout} \left[ \kappa_{i,\pad}^\inn(\Mbe)  \circ \Spad[(V_{N,i})_{N}] \circ \cv_{i}^\dagger[({U^{\bet}}^\dagger E_m U ^\bet)_\pad] \circ \zeipadout(\Mbe) \right] \circ \chi_{m, \pad}\\
    &\overset{\textrm{(H6)}}{=} [\ldots] +  \sum_{m \neq 0} \Tr_{\Mbeout} \left[ \kappa_{i,\pad}^\inn(\Mbe)  \circ \Spad[(V_{N,i})_{N}] \circ E'_{m, \pad} \circ \zeipadout(\Mbe) \right] \circ \chi_{m, \pad}\\
    &= \, [\ldots] + \sum_{m \neq 0} \Tr_{\Mbeout} \left[ \kappa_{i,\pad}^\inn(\Mbe)  \circ \Spad[(V_{N,i})_{N}] \right] \circ E'_{m, \pad} \circ \zeipadout(\Mbe) \circ \chi_{m, \pad}\\
    &\overset{(\ref{eq: compact rewriting})}{=} \, [\ldots] +  \sum_{m \neq 0} \Spad[(V_{N,i+1})_N]  \circ E'_{m, \pad}  \circ \zeipadout(\Mbe) \circ  \chi_{m, \pad}\\
    &\overset{(\ref{eq: absorbing zeta})}{=} \, [\ldots] +  \sum_{m \neq 0} \Spad[(V_{N,i+1})_N]  \circ E'_{m, \pad}  \circ \chi_{m, \pad}\\
    &= \, \Spad[(V_{N,i+1})_N] \circ  \left( \chi_{0, \pad} + \sum_{m \neq 0} E'_{m, \pad} \circ \chi_{m, \pad} \right) \,.
    \end{split}\ee
    
In the previous computation, we used (H6) to replace  $\cv_{i}^\dagger[({U^{\bet}}^\dagger E_m U ^\bet)_\pad] \circ \zeipadout(\Mbe)$ with $E'_{m,\pad} \circ \zeipadout(\Mbe)$, with the $E'_m$'s only acting on $\ch_P$ (because $\cpist(\Mbe) = \emptyset$). This then allowed to get the term out of the trace. The computation allows us to conclude that (\ref{eq: Lemma's result}) holds.

\end{proof}

\paragraph*{H2}

We now turn to (H2). We will take: $\forall \Nal > B(i+1), \zeiplin(\Nal):= \zeiin(\Nal)$. There are two things to check in order to ensure that this makes sense. The first is that the $\zeiin(\Nal)$ are indeed all defined, which holds here because $i$ is not a special step. The second thing to check is that for an arbitrary $\Nal$, $\zeiin(\Nal)$ wasn't acting on $\Mbein$. This follows from the fact that $i+1$ is not a special step. Indeed, the way we defined the ordering of the branches ensures that $\Mbe \not\in \cfist(\Nal)$. This ensures that $\zeiin(\Nal)$ doesn't act on $\Mbein$ if $\Nal$ is in a green layer; while if $\Nal$ is in a red layer, then the fact that $i+1$ is not a special step implies that $\Mbe$ is not in this red layer, i.e. that $\Mbe \not\in \cfi(\Nal)$ and thus that $\zeiin(\Nal)$ doesn't act on $\Mbein$.

We then want to define, from there, $\zeiplpadout(\Nal) := \cv_{i+1}^\dagger[\zeiplpadin(\Nal)], \, \forall \Nal$. The fact (which derives from (H1)) that $\cv_{i+1}$ is an isomorphism of operator algebras will then ensure that the $\zeiplpadout$'s are pairwise commuting orthogonal projectors, as the $\zeiplpadin$'s are. What is left to prove is that, fixing an $\Nal$ whose layer is green, $\cv_{i+1}^\dagger[\zeiplpadin(\Nal)]$ (which formally acts on the whole $\ch_{i+1}^\out$) can  indeed be seen as the padding of an operator acting on $\ch_P \otimes \left( \bigotimes_{\Oga \in \cp_{i+1}(\Nal)} \ch_\Ogaout  \right)$ -- i.e., that it acts trivially on other factors; and similarly, that for an $\Nal$ whose layer is red, $\cv_{i+1}^\dagger[\zeiplpadin(\Nal)]$ can be seen as only acting on $\ch_P \otimes \left( \bigotimes_{\Oga \in \cp_{i+1}^\str(\Nal)} \ch_\Ogaout  \right)$.

For this, fixing an $\Nal > B(i+1)$, we will make use of Lemma \ref{lem: MovingThrough} to compute $\cv_{i+1}^\dagger[\zeipadin(\Nal)]$. The latter satisfies the lemma's assumptions: $\zeipadin(\Nal)$ doesn't act on $\Mbein$ and commutes with $\zeipadin(\Mbe)$ by (H1) at step $i$, and $\cv_i^\dagger[\zeipadin(\Nal)] = \zeipadout(\Nal)$, by (H3) at step $i$, satisfies: $\zeipadout(\Nal) \circ \barzeipadout(\Mbe)$ acts trivially on $\Mbeout$.

By Lemma \ref{lem: MovingThrough}, writing $\zeiout(\Nal) = \sum_m \chi_m \otimes E_m $, we can thus conclude

\be \label{eq: result for (H2)} \cv_{i+1}^\dagger[\zeiplpadin(\Nal)] = \chi_{0, \pad} + \sum_{m \neq 0} E'_{m, \pad} \circ \chi_{m, \pad} \, . \ee


If $\Nal$ is in a green layer, then in this expression, the $\chi_{m}$'s act on $\ch_P \otimes \left( \bigotimes_{\Oga \in \cp_{i+1}(\Nal)} \ch_\Ogaout  \right)$ and the $E'_m$'s act on $\ch_P$; thus, $\cv_{i+1}^\dagger [\zeiplpadin(\Nal)]$ only acts non-trivially on $\ch_P \otimes \left( \bigotimes_{\Oga \in \cp_{i+1}(\Nal)} \ch_\Ogaout  \right)$. If $\Nal$ is in a red layer, the same can be said replacing $\cp$'s with $\cp^\str$'s. This concludes the proof of (H2).

\paragraph*{H3}

The proof of (H3) is direct for the $\zeiplin$'s, as they are equal to the $\zeiin$'s. For the $\zeiplout$'s, fixing $\Nal$ and $\Oga$, one can compute $\zeiplpadout(\Nal) \circ \barzeiplpadout(\Oga) = \cv_{i+1}^\dagger[\zeipadin(\Nal) \circ \barzeipadin(\Oga)]$ by once again invoking Lemma \ref{lem: MovingThrough}, writing

\be \zeiout(\Nal) \circ \barzeiout(\Mbe) = \sum_m \xi_m \otimes E_m \, ,\ee

where the $\xi_m$'s act trivially on $\Ogaout$ because $\zeiout(\Nal) \circ \barzeiout(\Mbe)$ does, by (H3) at step $i$. $\zeipadin(\Nal) \circ \barzeipadin(\Oga)$ commutes with $\zeipadin(\Mbe)$ and doesn't act on $\Mbein$; to apply the Lemma, we thus have to prove that $\zeipadout(\Nal) \circ \barzeipadout(\Oga) \circ \barzeipadout(\Mbe)$ acts trivially on $\Mbeout$. This follows from the rewriting

\be \begin{split}
    &\zeipadout(\Nal) \circ \barzeipadout(\Oga) \circ \barzeipadout(\Mbe) \\ 
    &= \left( \zeipadin(\Nal) \circ \barzeipadin(\Mbe) \right)
    \circ \left( \barzeipadout(\Mbe) -  \zeipadout(\Oga) \circ \barzeipadout(\Mbe) \right) \, ,
\end{split}  \ee 

all of the terms in which, one can conclude by (H3) at step $i$, act trivially on $\Mbeout$. Lemma \ref{lem: MovingThrough} thus leads to

\be  \zeiplpadout(\Nal) \circ \barzeiplpadout(\Oga) = \cv_{i+1}^\dagger[\zeipadin(\Nal) \circ \barzeipadin(\Oga)] = \xi_{0, \pad} + \sum_{m \neq 0} E'_{m, \pad} \circ \xi_{m, \pad} \, , \ee

with neither the $\xi_m$'s nor the $E'_m$'s acting on $\Ogaout$, which concludes the proof of (H3).

\paragraph*{H4}

The proof of (H4) at step $i+1$ is immediate as it derives from (H4) at step $i$ for the $\zeiplin$'s, which are equal to the $\zeiin$'s.

\paragraph*{H5}

For the proof of (H5), we take $Q \subseteq \{B(i') \, | \, i' \geq i +1 \}$ a set of branches on different nodes, and $\Tilde{Q} \subseteq \Nodes$ and a function $\al$ such that $Q = \{ N^{\al(N)} \, | \, N \in \Tilde{Q}\}$. We will prove the version of (H5) written with $\zeiplin$'s. We first consider the case $M \in \Tilde{Q}$. Then by (H4) we have $\zeipadin(M^{\al(M)}) = \zeipadin(M^{\al(M)}) \circ \barzeipadin(\Mbe)$, and we can therefore write

\be \begin{split}
    &\prod_{N \in \Tilde{Q}} \zeiplpadin(N^{\al(N)}) \circ \Spad[(V_{N,i+1})_N] \\
    &= \, \prod_{N \in \Tilde{Q}} \zeipadin(N^{\al(N)}) \circ \zeipadin(\Mbe) \circ \Spad[(V_{N,i+1})_N] \\
    &= \, \prod_{N \in \Tilde{Q}} \zeipadin(N^{\al(N)}) \circ \zeipadin(\Mbe) \circ \Spad[(V_{N,i+1})_N] \\
    &= \, \prod_{N \in \Tilde{Q}} \zeipadin(N^{\al(N)}) \circ \zeipadin(\Mbe) \circ \frac{1}{\dim(\Mbe))} \Tr_\Mbeout \left[ \Theta_{\Mbe,\pad}^\dagger \circ \Spad[(V_{N,i})_N] \right] \\
    &= \, \prod_{N \in \Tilde{Q}} \zeipadin(N^{\al(N)}) \circ \frac{1}{\dim(\Mbe))} \Tr_\Mbeout \left[ \Theta_{\Mbe,\pad}^\dagger \circ \Spad[(V_{N,i})_N] \right] \\
    &\overset{\textrm{(H5)}}{=} \, \frac{1}{\dim(\Mbe))} \Tr_\Mbeout \left[ \Theta_{\Mbe,\pad}^\dagger \circ \Spad[(V_{N,i})_{N \not\in \Tilde{Q}} \times (V^{\al(N)}_{N,i})_{N \in \Tilde{Q}}] \right] \\
    &= \, \frac{1}{\dim(\Mbe))} \Tr_\Mbeout \left[ \Theta_{\Mbe,\pad}^\dagger \circ \Theta_{\Mbe,\pad} \circ \Spad[(V_{N,i+1})_{N \not\in \Tilde{Q}} \times (V^{\al(N)}_{N,i+1})_{N \in \Tilde{Q}}] \right] \\
    &= \,  \Spad[(V_{N,i+1})_{N \not\in \Tilde{Q}} \times (V^{\al(N)}_{N,i+1})_{N \in \Tilde{Q}}] \, .
\end{split} \ee

In the case $M \not\in \Tilde{Q}$, then defining $\al(M) := \bet$, we have

\be \begin{split}
    &\prod_{N \in \Tilde{Q}} \zeiplpadin(N^{\al(N)}) \circ \Spad[(V_{N,i+1})_N] \\
    &= \, \Tr_{\Mbeout} \left[ \prod_{N \in \Tilde{Q} \cup \{M\}} \zeipadin(N^{\al(N)}) \circ U_\pad^\bet \circ \Spad[(V_{N,i})_{N}] \right] \\ 
    &+ \,\frac{1}{\dim (\Mbe)} \Tr_{\Mbeout} \left[ \prod_{N \in \Tilde{Q}} \zeipadin(N^{\al(N)}) \circ  \barzeipadin(\Mbe) \circ  \Theta_{\Mbe,\pad}^\dagger \circ \Spad[(V_{N,i})_{N}] \right] \\
    &\overset{\textrm{(H5)}}{=} \, \Tr_{\Mbeout} \left[ U_\pad^\bet \circ \Spad[(V_{N,i})_{N \not\in \Tilde{Q} \cup \{M\}} \times (V_{N,i}^{\al(N)})_{N \in \Tilde{Q} \cup \{M\}}] \right] \\ 
    &+ \, \frac{1}{\dim (\Mbe)} \Tr_{\Mbeout} \left[  \Theta_{\Mbe,\pad}^\dagger \circ \Spad[(V_{N,i})_{N \not\in \Tilde{Q} \cup \{M\}} \times (V_{N,i}^{\al(N)})_{N \in \Tilde{Q}} \times (\bar{V}_{M,i}^\bet)]  \right] \\
    &= \, \Spad[(V_{N,i+1})_{N \not\in \Tilde{Q} \cup \{M\}} \times (V_{N,i+1}^{\al(N)})_{N \in \Tilde{Q}} \times (V_{M,i}^\bet)]  \\ 
    &+ \, \Spad[(V_{N,i+1})_{N \not\in \Tilde{Q} \cup \{M\}} \times (V_{N,i+1}^{\al(N)})_{N \in \Tilde{Q}} \times (\bar{V}_{M,i}^\bet)]\\
    &= \, \Spad[(V_{N,i+1})_{N \not\in \Tilde{Q}} \times (V_{N,i+1}^{\al(N)})_{N \in \Tilde{Q}}] \,.
\end{split} \ee

\paragraph*{H6}


To prove (H6), we fix a branch $\Nal > B(i+1)$ in a green layer, and $f \in \Lin[\ch_\Nalin]$. We first consider the case $\Nal \not\in \cfist(\Mbe)$. $\zeiin(\Mbe)$ then doesn't act on $\Nalin$: indeed, either $\Mbe$ is in a green layer and $\zeiin(\Mbe)$ doesn't act outside of $\cfist(\Mbe)$, or $\Mbe$ is in a red layer and then, because $i+1$ is not a special step, $\Nal$ is not in this layer and thus $\Nal \not\in \cfi(\Mbe)$. Furthermore, as we saw that $\Spad[(V_{N,i})_N)] \circ \barzeipadout(\Mbe)$ was of the form $W \otimes \Theta_\Mbe$, and $f$ doesn't act on $\Mbein$, it follows that $\cv_i^\dagger[f]\circ \barzeipadout(\Mbe)$ doesn't act on $\Mbeout$. We can therefore apply Lemma \ref{lem: MovingThrough} and get

\be \cv_{i+1}^\dagger[f_\pad] = \phi_{0, \pad} + \sum_{m \neq 0} E'_{m, \pad} \circ \phi_{m, \pad} \, \overset{\textrm{(H6)}}{=} \sum_m \cv_{i}^\dagger[({U^{\bet}}^\dagger E_m U ^\bet)_\pad] \circ \phi_{m, \pad} \, , \ee

where $\cv_i^\dagger[f_\pad] = \sum_m \phi_m \otimes E_m$. Furthermore, as we've seen, we have $\zeiplpadout(\Nal) = \sum_n \cv_{i}^\dagger[({U^{\bet}}^\dagger E_n U ^\bet)_\pad] \circ \chi_n $. We are therefore led to

\be \label{eq: comp H6}
\begin{split}
    &\cv_{i+1}^\dagger[f_\pad] \circ \zeiplpadout(\Nal) = \sum_{mn} \phi_m \circ \chi_n \circ  \cv_{i}^\dagger[({U^{\bet}}^\dagger \circ E_m \circ E_n \circ U^\bet)_\pad]\\
    &= \, \left( \sum_{mn} \sigma_{lmn}  \phi_m \circ \chi_n \right) \circ \sum_l \cv_{i}^\dagger[({U^{\bet}}^\dagger \circ E_l \circ U^\bet)_\pad] \, ,
\end{split}
\ee 

where the $\sigma_{lmn}$'s are the structure constants on $\Lin[\Mbeout]$, i.e. $E_m \circ E_n = \sum_l \sigma_{lmn} E_l$. Yet (H6) at step $i$ gives us that there exists $f'$ acting on $P$ and $\cpist(\Nal)$ (and therefore not on $\Mbeout$) such that $\cv_i^\dagger[f_\pad] \circ \zeipadout(\Nal) = {f'}_\pad \circ \zeipadout(\Nal)$, which can be rewritten as

\be \sum_l \left( \sum_{mn} \sigma_{lmn}  \phi_m \circ \chi_n \right) \otimes E_l = \sum_l \left( f'_\pad \circ \chi_l \right) \otimes E_l \, , \ee

leading to 

\be \forall l, \sum_{mn} \sigma_{lmn}  \phi_m \circ \chi_n = f'_\pad \circ \chi_l \, . \ee

Reinserting this into (\ref{eq: comp H6}), we find

\be \begin{split}
    \cv_{i+1}^\dagger[f_\pad] \circ \zeiplpadout(\Nal) &= \, f'_\pad \circ \sum_l \chi_l \circ \cv_{i}^\dagger[({U^{\bet}}^\dagger \circ E_l \circ U^\bet)_\pad] \\
    &= f'_\pad \circ \zeiplpadout(\Nal) \, , 
\end{split} \ee 

where $f'$ acts on $P$ and on $\cpist(\Nal)$, and the latter is equal to $\cp_{i+1}^\str(\Nal)$ as $\Mbe \not\in \cpist(\Nal)$.

We now consider the case $\Nal \in \cfist(\Mbe)$. We will use the fact that (\ref{eq:(H6)}) can be equivalently written as $\zeipadout(\Nal) \circ \cv_i^\dagger[f_\pad] = f'_\pad \circ \zeipadout(\Nal)$; indeed, $\zeipadin(\Nal)$ doesn't act on $\Nalin$, so $f_\pad$ and $\zeipadin(\Nal)$ commute, so $\cv_i^\dagger[f]$ and $\cv_i^\dagger[\zeipadin(\Nal)] = \zeipadout(\Nal)$ commute as well. We write $\kappa_i^\out(\Mbe):= \zeipadout(\Mbe) \otimes U_\pad^\bet + \barzeipadout(\Mbe) \otimes \frac{1}{\dim (\Mbe)} \Theta_{\Mbe,\pad}^\dagger$.

\be \label{eq: H6 computation 1} \begin{split}
    &\zeiplpadout(\Nal) \circ \Spad[(V_{N,i+1})_N]^\dagger \circ f_\pad \\
    &= \, \Spad[(V_{N,i+1})_N]^\dagger \circ \zeipadin(\Nal) \circ f_\pad \\
    &= \, \Tr_\Mbein \left[ \kappa^\out_{i,\pad}(\Mbe)^\dagger \circ \Spad[(V_{N,i})_N]^\dagger \right] \circ \zeipadin(\Nal) \circ f_\pad \\
    &= \, \Tr_\Mbein \left[ \kappa_{i,\pad}^\out(\Mbe)^\dagger \circ \zeipadout(\Nal) \circ \cv_i^\dagger[f_\pad] \circ \Spad[(V_{N,i})_N]^\dagger \right] \\
    &\overset{\textrm{(H6)}}{=} \, \Tr_\Mbein \left[ \kappa_{i,\pad}^\out(\Mbe)^\dagger \circ f'_\pad \circ \zeipadout(\Nal) \circ \Spad[(V_{N,i})_N]^\dagger \right] \\
    &= \, \Tr_\Mbein \left[ \kappa_{i,\pad}^\out(\Mbe)^\dagger \circ f'_\pad \circ \Spad[(V_{N,i})_N]^\dagger \right] \circ \zeiplpadin(\Nal) \, , \\
\end{split} \ee

from which we get

\be \label{eq: H6 computation 2} \begin{split}
    &\zeiplpadout(\Nal) \circ \cv_{i+1}^\dagger[f_\pad] \\
    &= \, \zeiplpadout(\Nal) \circ \Spad[(V_{N,i+1})_N]^\dagger \circ f_\pad \circ \Spad[(V_{N,i+1})_N] \\
    &= \, \Tr_\Mbein \left[ \kappa_{i,\pad}^\out(\Mbe)^\dagger \circ f'_\pad \circ \Spad[(V_{N,i})_N]^\dagger \right] \\
    &\circ \Tr_\Mbein \left[ \Spad[(V_{N,i})_N] \circ {\kappa_{i,\pad}^\out}(\Mbe) \right] \circ \zeiplpadout(\Nal) \, .
\end{split} \ee

We will rewrite the traces in another way, defining $\ket{\Phi^+(\Mbe)} := \sum_k \ket{k}_\Mbein \otimes \ket{k}_{\Mbein'}$, where $\Mbein' \cong \Mbein$, and $(\ket{k})_k$ is an arbitrary orthonormal basis. The above can then be expressed as

\be \label{eq: H6 computation 3}\begin{split}
    &\zeiplpadout(\Nal) \circ \cv_{i+1}^\dagger[f_\pad] \\
    &= \, \bra{\Phi^+(\Mbe)}_\pad \circ \kappa_{i,\pad}^\out(\Mbe)^\dagger \circ f'_\pad \circ \Spad[(V_{N,i})_N]^\dagger \circ \ket{\Phi^+(\Mbe)}_\pad \\
    &\circ \bra{\Phi^+(\Mbe)}_\pad \circ \Spad[(V_{N,i})_N] \circ {\kappa_{i,\pad}^\out}(\Mbe) \circ \ket{\Phi^+(\Mbe)}_\pad  \circ \zeiplpadout(\Nal)\\
    &= \, \bra{\Phi^+(\Mbe)}_\pad \circ \kappa_{i,\pad}^\out(\Mbe)^\dagger \circ f'_\pad \circ \cv_{i, \pad}^\dagger \left[\ket{\Phi^+(\Mbe)}_\pad \bra{\Phi^+(\Mbe)}_\pad \right]   \\
    & \circ {\kappa_{i,\pad}^\out}(\Mbe) \circ \ket{\Phi^+(\Mbe)}_\pad \circ \zeiplpadout(\Nal) \, . \\
\end{split} \ee

In this expression, $\bra{\Phi^+(\Mbe)}$ and $\ket{\Phi^+(\Mbe)}$ act on $\Mbeout$ and $\Mbeout'$; $\kappa^\out(\Mbe)$ acts on $P$, $\Mbeout$ / $\Mbein$ (on its domain/codomain), and $\cpi(\Mbe) \subseteq \cpist(\Nal)$; $f'$ acts on $P$ and $\cpist(\Nal)$; and $\cv_{i, \pad}^\dagger \left[\ket{\Phi^+(\Mbe)}_\pad \bra{\Phi^+(\Mbe)}_\pad \right]$ acts on $P$, $\Mbeout$, $\Mbein'$ and $\cpi(\Mbe) \subseteq \cpist(\Nal)$. Their composition -- which doesn't act on $\Mbeout$ and $\Mbeout'$ as these are explicitly terminated by $\bra{\Phi^+(\Mbe)}$ and $\ket{\Phi^+(\Mbe)}$ -- thus acts trivially outside of $P$ and $\cpist(\Nal) \setminus \{ \Mbe \} = \cp_{i+1}^\str(\Nal)$. Therefore, we can write

\be \zeiplpadout(\Nal) \circ \cv_{i+1}^\dagger[f_\pad] = f''_\pad \circ \zeiplpadout(\Nal) \, , \ee

with $f'' \in \Lin \left[\ch_P \otimes (\bigotimes_{\Oga \in \cp_{i+1}^\str(\Nal)} \ch_\Ogaout) \right]$.

\paragraph*{H7}

We take $\Nal > \Mbe$ in a red layer, and $f \in \Lin[\ch_{\Nalout}]$. Because $\Nal$ is in a red layer and $i+1$ is not a special step, we have $\Nal \not\in \cpi(\Mbe)$. Thus $f$ commutes with $\kappa_i^\out(\Mbe)$, as the latter only acts non trivially on $P$, $\cpi(\Mbe)$ and $\Mbe$. Thus,

\be \label{eq: computation H7 1} \begin{split}
    &f_\pad \circ \Spad[(V_{N,i+1})_N]^\dagger \circ \zeiplpadin(\Nal) \\
    &= \, f_\pad \circ \Spad[(V_{N,i+1})_N]^\dagger \circ \zeipadin(\Nal) \\
    &= \, \Tr_\Mbein \left[ \kappa_{i,\pad}^\out(\Mbe)^\dagger \circ \Spad[(V_{N,i})_N]^\dagger \circ \cv_i^\dagger[f_\pad] \circ \zeipadin(\Nal) \right] \\
    &= \, \Tr_\Mbein \left[ \kappa_{i,\pad}^\out(\Mbe)^\dagger \circ \Spad[(V_{N,i})_N]^\dagger \circ \zeipadin(\Nal) \circ \cv_i^\dagger[f_\pad] \right] \\
    &\overset{\textrm{(H7)}}{=} \, \Tr_\Mbein \left[ \kappa_{i,\pad}^\out(\Mbe)^\dagger \circ \Spad[(V_{N,i})_N]^\dagger \circ \zeipadin(\Nal) \circ f'_\pad  \right] \\
    &= \, \Tr_\Mbein \left[ \kappa_{i,\pad}^\out(\Mbe)^\dagger \circ \Spad[(V_{N,i})_N]^\dagger \right] \circ \zeipadin(\Nal) \circ f'_\pad \\
    &= \, \Spad[(V_{N,i+1})_N]^\dagger \circ \zeipadin(\Nal) \circ f'_\pad \, ,
\end{split} \ee

where $f'$ acts on $\cfist(\Nal) = \cf_{i+1}^\str(\Nal)$. We can use this to find (noting that $\cv_{i+1}[f]$ and $\zeiplpadin(\Nal)$ commute because $f$ and $\zeiplpadout(\Nal)$ do)

\be \label{eq: computation H7 2} \zeiplpadin(\Nal) \circ \cv_{i+1}[f_\pad] = \cv_{i+1}[f_\pad] \circ \zeiplpadin(\Nal) = \zeiplpadin(\Nal) \circ f'_\pad \, .   \ee

\paragraph*{The special steps}

The previous proofs were relying on the assumption that neither $i$ nor $i+1$ were special steps. We now consider the other cases. Note first that the proof of (H1) presented earlier is valid in these cases as well.

We start with the case: $i+1$ is a special step (regardless of the status of $i$). One then only has to define, and check the properties of, the $\ze_{i+1}(\Nal)$'s for $\Nal \in \cl_{i+1}(\Mbe)$. Note that for such $\Nal$'s, the $\ze_{i}(\Nal)$'s were defined at step $i$; indeed, either $i$ is not a special step and all the $\ze_i$'s were defined, or $i$ is a special step, which entails that $\cl_{i+1}(\Mbe) \subset \cl_{i}(B(i))$, and the $\ze_i$'s were defined for elements of this latter set.

One can then follow a proof strategy that is a time-reversed version of the one presented earlier, except that only the $\Nal$'s in $\cl_{i+1}(\Mbe)$ are considered. Namely, we will define, for $\Nal \in \cl_{i+1}(\Mbe)$, $\zeiplout(\Nal) := \zeiout(\Nal)$ and $\zeiplpadin(\Nal) := \cv_i[\zeiplpadout(\Nal)]$, and the rest of the proof can be obtained by following the earlier proof, simply replacing in's with out's, looking at daggered versions of maps, etc. Indeed, the induction hypotheses are fully invariant under time-symmetry, except for one crucial thing: the fact that, when looking in the forward direction, all branches of the layers in the strict past of the branch under consideration have been refilled already, and thus have no $\ze$'s. Here, however, we are only redefining, and proving properties of, the $\ze_{i+1}(\Nal)$'s of the layer under consideration; everything thus goes as if the branches in its strict future didn't exist. Moreover, the fact that $i+1$ is a special step implies that $B(i+1)$ is in a red layer, which means that, when considering things from a time-reversed perspective, $B(i+1)$ is in a green layer and thus neither $i$ not $i+1$ are special steps. Thus, for the purposes of defining these $\ze_{i+1}(\Nal)$'s, the situation is exactly symmetric to the one considered previously.

A final case to consider is: $i$ is a special step but $i+1$ is not. In this case, $B(i)$ and $\Mbe = B(i+1)$ are in the same red layer, but $B(i+2)$ and the rest of the $\Nal > B(i+1)$ aren't. The interpretation is that we just finished filling up the branches of a red layer, a procedure during which we didn't define $\zeta$'s for the branches above it; so that we now have to redefine them.  The strategy for this case is to define the $\zeiplin(\Nal)$'s to be equal, not to the $\zeiin(\Nal)$'s -- which were not defined --, but to the $\zejin(\Nal)$, where $j$ is the latest step that was not special, i.e.\ the latest step at which these were defined. $B(j)$ is then the first branch of the red layer we finished refilling.

One can then follow a strategy similar to the previous proof, now deriving that (H2)-(H7) hold at step $i+1$ from the fact that they hold at step $j$. Let us highlight the main steps. First, from (\ref{eq: step for H1}) holding at all steps between $j$ and $i+1$, we can deduce

\be \label{eq: expression for special step} \Spad[(V_{N,i+1})_N] = \Tr_{B(t)^\inn, \, j < t \leq i+1} \left[ \Spad[(V_{N,j})_N] \circ \left( \prod_{t=j+1}^{i+1} \kappa_{j,\pad}^\out(B(t)) \right) \right] \, , \ee

where we also used the fact that, due to how we defined the $\ze^\out$'s to remain the same when filling up a red layer, we have $\forall t \in \llbracket j+1, i+1 \rrbracket, \kappa_t^\out(B(t)) = \kappa_j^\out(B(t))$.

For (H2), we fix $\Nal > B(i+1)$; note that we then have $\Nal \in \cfist(B(i+1))$, because $B(i+1)$ is the last branch in its (red) layer. As before, we have to prove that, defining $\zeiplpadin(\Nal) := \ze_{j,\pad}^\inn$ and $\zeiplpadout(\Nal) := \cv^{i+1}[\zeiplpadin(\Nal)]$, the latter doesn't act outside of $P$ and $\cp_{i+1}(\Nal)$ (or $\cp_{i+1}^\str(\Nal)$ if $\Nal$ is in a red layer). Using (\ref{eq: expression for special step}) and techniques similar to before, we are led to

\be \label{eq: computation H2 special} \begin{split}
    \cv_{i+1}^\dagger[\ze_{j,\pad}^\inn(\Nal)] &= \left( \prod_{t=j+1}^{i+1} \bra{\phi^+(B(t))}_\pad \right) \circ \left( \prod_{t=j+1}^{i+1} \kappa_{j,\pad}^\out(B(t))^\dagger \right) \\
    &\circ \cv_{j, \pad}^\dagger \left[  \prod_{t=j+1}^{i+1} \ket{\Phi^+(B(t))}_\pad \bra{\Phi^+(B(t))}_\pad  \right] \circ \ze_{j,\pad}^\out(\Nal) \\
    &\circ \left( \prod_{t=j+1}^{i+1} \kappa_{j,\pad}^\out(B(t)) \right) \circ \left( \prod_{t=j+1}^{i+1} \ket{\phi^+(B(t))}_\pad \right) \, ,
\end{split} \ee

in which $\ze_{j,\pad}^\out(\Nal)$ acts only on $P$ and $\cp_j(\Nal)$ ($\cp_j^\str(\Nal)$ if $\Nal$ is in a red layer), and the other terms act only on $\cl(B(i+1))$ and on $P$. Given that all the wires in $\cl(B(i+1))$ are explicitly terminated by the $\phi^+$'s, it follows that $\cv_{i+1}^\dagger[\ze_{j,\pad}^\inn(\Nal)]$ only acts on $P$ and on $\cp_j(\Nal) \setminus \cl(B(i+1)) = \cp_{i+1}(\Nal)$ (or $\cp_j^\str(\Nal) \setminus \cl(B(i+1)) = \cp_{i+1}^\str(\Nal)$ if $\Nal$ is in a red layer).

The proof of (H3) is fully analogous: computing $\cv_{i+1}^\dagger[\ze_{j,\pad}^\inn(\Nal) \circ \bar{\zeta}_{j,\pad}^\inn(\Mbe)]$ leads to (\ref{eq: computation H2 special}) with $\ze_{j,\pad}^\out(\Nal)$ replaced with $\ze_{j,\pad}^\out(\Nal) \circ \bar{\zeta}_{j,\pad}^\out(\Mbe)$, so that invoking (H3) at step $j$ leads to (H3) at step $i+1$. (H4), as before, is direct, and the proof of (H5) is analogous to the one for the non-special cases.

For the proof of (H6), we take $\Nal > B(i+1)$ in a green layer and $f \in \Lin[\ch_{\Nalin}]$. Then, the computation is similar to (\ref{eq: H6 computation 1}), (\ref{eq: H6 computation 2}) and (\ref{eq: H6 computation 3}), yielding

\be \label{eq: computation H6 special} \begin{split}
    &\zeiplpadout(\Nal) \circ \cv_{i+1}^\dagger[f_\pad] \\
    &= \left( \prod_{t=j+1}^{i+1} \bra{\phi^+(B(t))}_\pad \right) \circ \left( \prod_{t=j+1}^{i+1} \kappa_{j,\pad}^\out(B(t))^\dagger \right) \circ \cv_{j, \pad}^\dagger \left[  \prod_{t=j+1}^{i+1} \ket{\Phi^+(B(t))}_\pad \bra{\Phi^+(B(t))}_\pad  \right] \\ &\circ f'_\pad \circ \left( \prod_{t=j+1}^{i+1} \kappa_{j,\pad}^\out(B(t)) \right) \circ \left( \prod_{t=j+1}^{i+1} \ket{\phi^+(B(t))}_\pad \right) \circ \zeiplpadout(\Nal) \, ,
\end{split} \ee

where $f' \in \Lin[\ch_P \otimes (\bigotimes_{\Oga \in \cp_j^\str(\Nal)})]$ was defined through $\ze_{j, \pad}^\out(\Nal) \circ \cv_{j}^\dagger[f_\pad] = f'_\pad \circ \ze_{j, \pad}^\out(\Nal)$ by using (H6) at step $j$. Once again, by looking at where the operators are acting, we can conclude that this defines a $f''$ acting only on $P$ and on $\cp_j^\str(\Nal) \setminus \cl(B(i+1)) = \cp_{i+1}^\str(\Nal)$.

Finally, for (H7), one can follow computations (\ref{eq: computation H7 1}) and (\ref{eq: computation H7 2}), to get

\be \zeiplpadin(\Nal) \circ \cv_{i+1}[f_\pad] =  \ze_{j, \pad}^\inn(\Nal) \circ f'_\pad \, ,   \ee

where $f'$, obtained by the use of (H7) at step $j$, only acts on $\cf_j^\str(\Nal) = \cf_{i+1}^\str(\Nal)$. This concludes the proof of the induction step.

\paragraph*{Conclusion}

As the base case and induction step are true, the induction hypotheses are true at every step up to $n$. In particular, (H1) at step $n$ then reads:

\be \cs[(U_N)_N] \textrm{ is unitary.} \ee
As this was done for $\cs = \cs_\GalaN$ for an arbitrary valid routed graph $\GalaN$, and for an arbitrary collection of routed unitaries $U_N : \chinN \overset{\laN}{\to} \choutN$ following the $\laN$'s, we can invoke Lemma \ref{lem: simplification} and conclude that Theorem \ref{thm:Main Theorem} holds.

\qed


\end{document}